\documentclass[11pt]{article}
\usepackage[latin9]{inputenc}
\usepackage{mathtools}
\usepackage{bm}
\usepackage{amsmath}
\usepackage{amsthm}
%\usepackage{lineno}
%\linenumbers
\usepackage{amssymb}
% \usepackage[unicode=true,pdfusetitle,
%  bookmarks=true,bookmarksnumbered=false,bookmarksopen=false,
%  breaklinks=false,pdfborder={0 0 1},backref=false,colorlinks=true]
%  {hyperref}
\usepackage{xcolor}
\definecolor{utburnt}{HTML}{BF5700}
\definecolor{charcoal}{HTML}{333333}
\definecolor{goldenrod}{HTML}{D99E00}
\definecolor{tealblue}{HTML}{007C91}

% References & hyperlinks
\usepackage[backref=page,bookmarks]{hyperref}
\hypersetup{
  colorlinks=true,
  linkcolor=utburnt,
  citecolor=tealblue,
  urlcolor=tealblue
}
\usepackage{booktabs}
\usepackage[nameinlink,capitalize]{cleveref}
\usepackage[numbers,sort&compress]{natbib}

\makeatletter
%%%%%%%%%%%%%%%%%%%%%%%%%%%%%% Textclass specific LaTeX commands.
\theoremstyle{plain}
\newtheorem{thm}{\protect\theoremname}
\theoremstyle{definition}
\newtheorem{defn}[thm]{\protect\definitionname}
\newtheorem{corollary}[thm]{Corollary}
\theoremstyle{remark}
\newtheorem{claim}[thm]{\protect\claimname}
\theoremstyle{plain}
\newtheorem{lem}[thm]{\protect\lemmaname}
\theoremstyle{plain}

\theoremstyle{remark}
\newtheorem{remark}{\protect\remarkname}
\newcommand{\Dglm}{D_{\mathrm{GLM}}}
%%%%%%%%%%%%%%%%%%%%%%%%%%%%%% User specified LaTeX commands.
\usepackage{epsfig}
\usepackage{color}
\usepackage{times}

\usepackage{fullpage}
\usepackage{enumitem}

\usepackage[linesnumbered,ruled,vlined]{algorithm2e}
\usepackage[noend]{algpseudocode}
\addtolength{\parskip}{.5ex}
% Define a new command for phases

\newcommand{\sigmoidclipped}{\sigmoid_{\text{clipped}}}
\newcommand{\cube}[1]{\{\pm 1\}^{#1}}

\usepackage{tikz}
\usetikzlibrary{positioning,arrows.meta,calc}

\tikzset{
  box/.style   = {rectangle, rounded corners, draw, align=center,
                  minimum width=3.8cm, minimum height=1.2cm},
  arrow/.style = {-{Stealth[length=3mm]}, thick},
  conn/.style  = {-, thick}, % connector without arrowhead
  lab/.style   = {midway, fill=white, inner sep=1.5pt, align=center}
}

%%%%%%%%%%%%%%%%%%%%%%%%%%%%%%%%%%%%%%%%%%%

\newcommand{\spn}[1]{\text{Span}\left( #1 \right)}
\DeclareMathOperator*{\argmaxop}{arg\,max}
\DeclareMathOperator*{\argminop}{arg\,min}
\usepackage{bbm}

\newcommand{\myeqold}[1]{\stackrel{\mathclap{\normalfont\mbox{#1}}}{=}}
\newcommand{\myleqold}[1]{\stackrel{\mathclap{\normalfont\mbox{#1}}}{\leq}}
\newcommand{\mygeqold}[1]{\stackrel{\mathclap{\normalfont\mbox{#1}}}{\geq}}

%%%%%%%%%%%% setting variables

\newcommand{\setalphazero}{C(\log \frac{\Delta}{\eps}+1)}

%%%%%%%%%%%%%%%%%%%%%%%%%%%%%%%%%%%%%%%%

\DeclareMathOperator*{\E}{\mathbb{E}}%

\makeatother

\providecommand{\claimname}{Claim}
\providecommand{\definitionname}{Definition}
\providecommand{\factname}{Fact}
\providecommand{\lemmaname}{Lemma}
\providecommand{\theoremname}{Theorem}
\providecommand{\remarkname}{Remark}

\begin{document}
\global\long\def\red#1{\text{{\color{red}#1}}}%

\global\long\def\dimvc{\dim_{\text{VC}}}%

\global\long\def\spn#1{\text{Span}\left(#1\right)}%

\global\long\def\argmax{\argmaxop}%

\global\long\def\argmin{\argminop}%

\global\long\def\poly{\mathrm{poly}}%

\global\long\def\R{\mathbb{R}}%

\global\long\def\P{\mathbb{P}}%

\global\long\def\C{\mathbb{C}}%

\global\long\def\Z{\mathbb{Z}}%

\global\long\def\F{\mathcal{F}}%

\global\long\def\B{\mathcal{B}}%

\global\long\def\Ep{\mathcal{E}}%

\global\long\def\sign{\mathrm{{sign}}}%

\global\long\def\opt{\mathrm{{opt}}}%

\global\long\def\myeq#1{\myeqold{#1}}%

\global\long\def\myleq#1{\myleqold{#1}}%

\global\long\def\mygeq#1{\mygeqold{#1}}%

\global\long\def\indicator{\mathbbm{1}}%

\global\long\def\norm#1{\left\Vert #1\right\Vert }%

\global\long\def\bra#1{\left\langle #1\right\rangle }%

\global\long\def\abs#1{\left|#1\right|}%

\global\long\def\d{\,d}%

\global\long\def\conditional{\bigg\vert}%

\global\long\def\vect#1{\bm{#1}}%

\global\long\def\mydot#1{\stackrel{\mathbf{.}}{#1}}%

\global\long\def\mydoubledot#1{\stackrel{\mathbf{.}\mathbf{.}}{#1}}%

\global\long\def\mytripledot#1{\stackrel{\mathbf{.}\mathbf{.}\mathbf{.}}{#1}}%

\global\long\def\mybar#1{\overline{#1}}%

\global\long\def\myhat#1{\widehat{#1}}%

\global\long\def\spn#1{\mathop{\text{span}}\left(#1\right)}%

\global\long\def\dtv#1{\mathop{\mathrm{dist}_{\mathrm{TV}}}\left(#1\right)}%

\global\long\def\conv#1{\mathop{\text{conv}}\left(#1\right)}%

\global\long\def\righ{\rightarrow}%

\global\long\def\var{\mathrm{var}}%

\global\long\def\Inf{\mathrm{Inf}}%

\global\long\def\N{\mathcal{N}}%

\global\long\def\dpairs{D}%

\global\long\def\dmarg{\mathcal{D}}%

\global\long\def\setalpha{\frac{\eps^{25}}{C\left(\log\frac{1}{\eps}\right)^{C}}}%

\global\long\def\setk{\frac{C^{0.1}\log^{16}\frac{1}{\eps}}{\eps^{8}}}%

\global\long\def\setSparseDatasetSize{\sqrt{C}\frac{k}{\eps^{2}}\log^{2}\frac{k}{\eps}}%

\global\long\def\setEpsilonForGLM{\frac{C^{0.01}\log^{2}k}{\sqrt{k}}}%

\newcommand{\vhead}{\vect v_{\mathrm{head}}}
\newcommand{\vtail}{\vect v_{\mathrm{tail}}}

\newcommand{\vtailhat}{\widehat{\vect v}_{\mathrm{tail}}}

\newcommand{\learnhalfspace}{\textsc{LearnHalfspace}}

\newcommand{\x}{\vect x}
\newcommand{\tautemp}{t}
\newcommand{\hatv}{\widehat{\vect v}}
\newcommand{\hattau}{\widehat{\tau}}
\newcommand{\hatI}{\widehat{\mathrm{I}}}
\newcommand{\hati}{j}%{\widehat{i}}

\newcommand{\I}{\mathcal{I}}

\newcommand{\findheavycoefficients}{\textsc{FindHeavyCoefficients}}
\newcommand{\findregularcoefficients}{\textsc{FindRegularCoefficients}}

\newcommand{\eps}{\varepsilon}
\newcommand{\clippingthreshold}{\rho}
\newcommand{\Gauss}{\mathcal{N}}
\newcommand{\psiclip}{\psi_{\mathrm{clp}}}

\newcommand{\V}{\mathcal{V}}
\newcommand{\relu}{\mathsf{ReLU}}
\newcommand{\sigmoid}{\sigma^{\mathrm{sig}}}
\newcommand{\ind}{\mathbbm{1}}

\newcommand{\Sinp}{S_{\mathrm{inp}}}
\newcommand{\Sref}{S_{\mathrm{ref}}}
\newcommand{\Sadv}{S_{\mathrm{adv}}}
\newcommand{\Scln}{S_{\mathrm{cln}}}
\newcommand{\Srem}{S_{\mathrm{rem}}}
\newcommand{\Sout}{S_{\mathrm{filt}}}

\newcommand{\Sinplabeled}{\bar{S}_{\mathrm{inp}}}
\newcommand{\Sreflabeled}{\bar{S}_{\mathrm{ref}}}
\newcommand{\Sadvlabeled}{\bar{S}_{\mathrm{adv}}}
\newcommand{\Sclnlabeled}{\bar{S}_{\mathrm{cln}}}
\newcommand{\Sremlabeled}{\bar{S}_{\mathrm{rem}}}
\newcommand{\Soutlabeled}{\bar{S}_{\mathrm{filt}}}

\newcommand{\findregularcontaminated}{\textsc{FindRegularContaminated}}

\newcommand{\hsparse}{h_{\mathrm{sp}}}
\newcommand{\hatvsparse}{\hatv_{\mathrm{sp}}}
\newcommand{\hatvheavy}{\hatv_{\mathrm{hv}}}
\newcommand{\hatvhead}{\hatv_{\mathrm{head}}}
\newcommand{\hatvtail}{\hatv_{\mathrm{tail}}}

\newcommand{\AregGLM}{\mathcal{A}_{\mathrm{regGLM}}}
\newcommand{\Amisspecified}{\mathcal{A}_{\mathrm{misspGLM}}}
\newcommand{\AGLM}{\mathcal{A}_{\mathrm{GLM}}}

\newcommand{\zalphazero}{\alpha}
\newcommand{\etatwo}{\eta}

\title{A Fully Polynomial-Time Algorithm for Robustly \\ Learning Halfspaces over the Hypercube}
\author{
    \begin{tabular}{cc}
        \begin{tabular}{c}
            Gautam Chandrasekaran\thanks{Supported by the NSF AI Institute for Foundations of Machine Learning (IFML).} \\ \texttt{gautamc@cs.utexas.edu} \\ UT Austin
        \end{tabular} &
        \begin{tabular}{c}
            Adam R. Klivans\thanks{Supported by NSF award AF-1909204 and the NSF AI Institute for Foundations of Machine Learning (IFML).}\\ \texttt{klivans@cs.utexas.edu} \\ UT Austin 
        \end{tabular}
        \\\\
        \begin{tabular}{c}
             Konstantinos Stavropoulos\thanks{Supported by the NSF AI Institute for Foundations of Machine Learning (IFML) and by the 2025 Apple Scholars in AI/ML PhD fellowship.} \\ \texttt{kstavrop@utexas.edu} \\ UT Austin
        \end{tabular}
         & 
         \begin{tabular}{c}
              Arsen Vasilyan\thanks{Supported by the NSF AI Institute for Foundations of Machine Learning (IFML).} \\ \texttt{arsenvasilyan@gmail.com} \\ UT Austin
         \end{tabular}
    \end{tabular}
}
\date{}
\maketitle

\begin{abstract}

We give the first fully polynomial-time algorithm for learning halfspaces with respect to the uniform distribution on the hypercube in the presence of contamination, where an adversary may corrupt some fraction of examples and labels arbitrarily.  We achieve an error guarantee of $\eta^{O(1)}+\eps$ where $\eta$ is the noise rate.  Such a result was not known even in the agnostic setting, where only labels can be adversarially corrupted.  All prior work over the last two decades has a superpolynomial dependence in $1/\eps$ or succeeds only with respect to continuous marginals (such as log-concave densities). 

Previous analyses rely heavily on various structural properties of continuous distributions such as anti-concentration.  Our approach avoids these requirements and makes use of a new algorithm for learning Generalized Linear Models (GLMs) with only a polylogarithmic dependence on the activation function's Lipschitz constant.  More generally, our framework shows that supervised learning with respect to discrete distributions is not as difficult as previously thought.

\end{abstract}

\thispagestyle{empty}
\newpage
\setcounter{page}{1}
\section{Introduction}

Distribution-specific supervised learning is a central and long-standing problem in computational learning theory. In the case of \emph{continuous} high-dimensional distributions, a wide range of polynomial-time learning algorithms are known, including~\cite{vempala2010random, awasthi2017power, daniely2015ptas, DiakonikolasKS18a, DiakonikolasKKT21, DiakonikolasZ24}. For example, the celebrated work of~\cite{awasthi2017power} gives a polynomial-time algorithm for agnostically learning linear classifiers under log-concave distributions. These results highlight a mature algorithmic understanding of learning in well-studied continuous settings.

For learning under \emph{discrete} distributions, however, our algorithmic understanding is much less complete. None of the aforementioned results extend to the most basic and well-studied discrete high-dimensional distribution: the uniform distribution over the hypercube $\{\pm 1\}^d$.  In fact, with very few exceptions (such as monotone $O(\log n)$ juntas), the only known function classes that admit polynomial-time learnability with respect to discrete distributions seem to be function classes that can be learned with respect to {\em all} distributions. 

The main reason for this striking gap between the continuous and discrete settings is that the existing algorithms rely crucially on \emph{geometric} tools of high-dimensional distributions---such as the \emph{anti-concentration property} and the \emph{localization technique}~\cite{awasthi2017power}---which fail for high-dimensional discrete distributions. These geometric tools underpin nearly all known polynomial-time learning algorithms for continuous distributions, but they do not hold, even approximately, in discrete domains such as the hypercube.

This motivates the following fundamental question:

\begin{center}
\fbox{
    \parbox{0.7\linewidth}{
        \centering
        \textbf{Is supervised learning under discrete distributions inherently harder than learning under continuous distributions?}
    }
}
\end{center}
In this work, we take a major step toward answering this question by giving the \emph{first polynomial-time algorithm} for arguably the most fundamental problem in supervised learning under discrete high-dimensional distributions --- the problem of agnostically learning halfspaces under the uniform distribution on the Boolean hypercube. Halfspaces constitute one of the most extensively studied concept classes in learning theory, while the uniform distribution on the Boolean cube is among the most canonical and well-analyzed distributions in this context.

Specifically, our algorithm achieves classification error
$
    \opt^{O(1)} + \epsilon,
$
where $\opt$ is the optimal halfspace error. All previous polynomial-time algorithms for related problems in continuous settings relied heavily on anti-concentration properties of the underlying distribution. Our result bypasses this limitation entirely: the uniform distribution on $\{\pm 1\}^d$ is not anti-concentrated, yet our algorithm still achieves a fully polynomial-time guarantee.

The fastest previously known algorithm for this problem was given by~\cite{o2008chow}, which runs in time $\poly(d) \cdot 2^{\poly(1/\epsilon)}$. Our algorithm has an exponentially better runtime while maintaining the  $\opt^{O(1)} + \epsilon$ 
accuracy guarantee. (See Table~\ref{table:hc} for a detailed comparison.)

Attempts to improve on~\cite{o2008chow} have led to significant follow-up research. In particular,~\cite{de2014nearly} developed a faster (though not polynomial-time) algorithm based on \emph{Chow parameter matching}. Their algorithm achieves quasi-polynomial accuracy blow-up $2^{-\Omega(\sqrt[3]{\log(1/\opt)})} + \epsilon$ and runs in quasi-polynomial time $\poly(d)\cdot 2^{O(\log^3(1/\epsilon))}$. Compared to this result, our algorithm achieves a quasi-polynomial improvement in both runtime and accuracy. Moreover,~\cite{de2014nearly} prove that even with improved Chow-reconstruction techniques, an $\opt^{O(1)} + \epsilon$ error bound \emph{cannot} be achieved by any approach based solely on Chow parameter matching. Thus, our results fundamentally go beyond the limitations of prior methods.

To overcome these obstacles, we develop a new technique for learning under high-dimensional discrete distributions. At a high level, our approach combines ideas from \emph{Generalized Linear Models (GLMs)} with the analysis of \emph{high-influence variables}.  As a byproduct of our analysis, we also obtain the first polynomial-time algorithm for the fundamental task of distribution-free learning of \emph{sigmoidal Generalized Linear Models}  whose runtime scales \emph{polylogarithmically} with the Lipschitz constant of the activation function. This is an exponential improvement over the runtime of previous algorithms, all of which scaled \emph{linearly} with the Lipschitz constant.

Finally, our algorithm extends naturally to the more challenging setting of learning under contamination (a.k.a.\ \emph{nasty noise}). When an $\eta$-fraction of the samples are arbitrarily corrupted, our algorithm outputs a halfspace with classification error
$
    \eta^{O(1)} + \epsilon.
$

\begin{table*}[h]\begin{center}
\begin{tabular}{c c c} 
 \toprule
 \textbf{Error Guarantee} & \textbf{Runtime} & \textbf{Reference} \\ \midrule
$\opt + \epsilon$ & $d^{O(1/\epsilon^2)}$ & \cite{kalai2008agnostically} \\ \midrule
$\opt^{\Omega(1)} + \epsilon$ & $\poly(d) \cdot 2^{\poly(1/\epsilon)}$ & \cite{o2008chow} \\ \midrule
$2^{-\Omega(\sqrt[3]{\log(1/\opt)})}+ \epsilon$ & $\poly(d)\cdot 2^{O(\log^3(1/\epsilon))}$ & \cite{de2014nearly}
 \\ \midrule
$\opt^{\Omega(1)} + \epsilon$ & $\poly(d, 1/\epsilon)$ & {\color{utburnt} \textbf{This work}}
 \\ \bottomrule
\end{tabular}
\end{center}
\caption{Upper bounds for agnostic learning of halfspaces over the uniform distribution on $\{\pm 1\}^{d}$.}
\label{table:hc}
\end{table*}

\subsection*{Related Work}

 \paragraph{Polynomial-Time Robust Learning Algorithms under Continuous Distributions.} 
There has been a large body of work on $\poly(d/\epsilon)$-time algorithms for robust learning over continuous high-dimensional distributions, such as the Gaussian distribution and log-concave distributions.

The work of \cite{vempala2010random} gives polynomial-time algorithms for (non-robust) learning intersections of halfspaces with respect to the Gaussian distribution. The work of \cite{awasthi2017power} gives a polynomial-time algorithm for learning halfspaces under log-concave distributions. The work of \cite{diakonikolas2020non} extends the result of \cite{awasthi2017power} to a broader family of high-dimensional continuous distributions satisfying certain well-behavedness assumptions, such as concentration, anti-concentration, and so-called anti-anti-concentration.

The work of \cite{daniely2015ptas} gives a polynomial-time approximation scheme (PTAS) for learning halfspaces under the Gaussian distribution, achieving error $(1+\mu)\opt+\epsilon$ for any small constant $\mu$. The work of \cite{DiakonikolasKKT21} further improves on \cite{daniely2015ptas} by giving a proper PTAS under the Gaussian distribution that outputs a halfspace as the learned hypothesis.

The works of \cite{klivans2009learning, awasthi2017power, DiakonikolasKS18a} give polynomial-time learning algorithms for learning halfspaces over log-concave distributions under contamination (also known as nasty noise). The work of \cite{DiakonikolasKS18a} also applies to learning constant-degree polynomial threshold functions (PTFs) under log-concave distributions with contamination. The more recent work of \cite{diakonikolas2024super} gives a polynomial-time algorithm for the same task with improved robustness under the Gaussian distribution.

There has also been a long line of work on polynomial-time algorithms for learning halfspaces in the Massart noise model under log-concave distributions \cite{awasthi2015efficient, awasthi2016learning, yan2017revisiting, zhang2017hitting, mangoubi2019nonconvex, diakonikolas2020learning}. The algorithm of \cite{diakonikolas2020learning} runs in time polynomial in $d/\epsilon$ and achieves an error of $\opt+\epsilon$.

None of the polynomial-time algorithms mentioned above extend to discrete high-dimensional distributions, such as the uniform distribution over the Boolean cube. This demonstrates a large gap in our understanding between robust PAC learning under continuous and discrete high-dimensional distributions.

\paragraph{Robust Learning of Halfspaces over the Hypercube.}
The algorithm of \cite{kalai2008agnostically,diakonikolas2010bounded} runs in time $d^{\Tilde{O}(1/\epsilon^2)}$ and achieves agnostic learning of halfspaces over the Boolean cube up to error $\opt+\epsilon$. There is a strong evidence that this run-time is optimal: as noted in \cite{kalai2008agnostically} even an improvement in the run-time of $d^{\Tilde{O}(1/\epsilon^{(2-\beta)})}$ for positive constant $\beta$ will yield faster algorithms for the parity with noise problem. See also \cite{dachman2014approximate} for a statistical query lower bound.

As mentioned earlier, the work of
\cite{o2008chow} gives an algorithm that runs in time $\poly(d) \cdot 2^{\poly(1/\epsilon)}$
and achieves an error of $\opt^{O(1)}+\epsilon$ in the agnostic learning setting. 

The subsequent work of \cite{de2014nearly} introduced a faster --- though still not polynomial-time --- algorithm based on \emph{Chow parameter matching}. This algorithm achieves a quasi-polynomial accuracy guarantee of $2^{-\Omega(\sqrt[3]{\log(1/\mathrm{opt})})} + \epsilon$ and operates in quasi-polynomial time $\mathrm{poly}(d) \cdot 2^{O(\log^3(1/\epsilon))}$. 

The limitations of the Chow parameter matching technique were also explored in \cite{de2014nearly}, and it was shown that no method based solely on Chow parameter matching can achieve an error of $\poly(\opt)+\epsilon$. Even more, it was shown that Chow parameter matching cannot yield an accuracy better than $\opt^{\omega(1/(\log\log \opt))}$. Our techniques break through this barrier.

The approach of \cite{diakonikolas2019degree} can be used to extend the result of \cite{de2014nearly} into the setting of learning with contaminated data (nasty noise).

\paragraph{Learning Generalized Linear Models.} Generalized Linear Models have a long history in the literature of machine learning \cite{rosenblatt1958perceptron,nelder1972generalized,dobson2018introduction}. They correspond to the scenario where the input samples $(\x,y)\in {\cal X}\times \cube{}$ are generated by some distribution such that $\E[y|\x] = \sigma(\vect w^*\cdot \x)$ for some known, monotone activation function $\sigma:\R\to [-1,1]$ and some unknown vector $\vect w^*$. Learning halfspaces is a special case of GLM learning with the sign activation, which is, however, discontinuous. Several works study efficient learnability under continuous GLMs \cite{kalai2009isotron,kakade2011efficient,klivans2017learning,chen2020classification}. Many of these and other related results, including ours, make use of the notion of the matching loss, initially proposed by \cite{auer1995exponentially}. Most related to our work, the classical result of \cite{kakade2011efficient} showed that GLMtron --- an algorithmic variant of the perceptron algorithm --- learns any GLM with known, Lipschitz activation with runtime that scales polynomially with $\|\vect w^*\|_2$.
Our results exponentially improve the run-time dependence on $\|\vect w^*\|_2$ compared to the GLMtron \cite{kakade2011efficient}, by establishing, for the first time, that a poly-logarithmic dependence on $\|\vect w^*\|_2$ is achievable for the sigmoidal activation.

A wide range of related results on learning GLMs with continuous activations have been provided in the literature. Several works consider the case where the labels are real-valued and noisy, and the goal of the learner is to output a hypothesis whose error competes with the optimum error achieved by a function of the form $\sigma(\vect w\cdot \x)$ (often referred to as \emph{robustly learning a single neuron}). Such results fall into two main categories, depending on whether the activation $\sigma$ is known \cite{ShalevShwartz2010LearningKH,diakonikolas2020approximation,diakonikolas2022learning,awasthi2023agnostic,wang2023robustly,wang2024sample,Guo2024AgnosticLO} or unknown \cite{pmlr-v75-dudeja18a,gollakota2023agnostically,pmlr-v235-zarifis24a,pmlr-v291-rohatgi25a,Wang2025RobustlyLM}. These results typically make strong assumptions on the marginal distribution on $\x$. There is strong evidence that strong assumptions are necessary when the labels are noisy \cite{diakonikolas22hardness}. In fact, most of the aforementioned results assume that $\x$ follows some continuous distribution, such as the Gaussian. It remains an open question whether our results on robust halfspace learning over the hypercube --- which corresponds to agnostic GLM learning with respect to the sign activation --- can be extended to other activations as well.

\section{Brief Technical Overview}

\paragraph{Algorithm Outline.} Our algorithm receives $m$ i.i.d. examples drawn from some distribution over pairs $(\x,y)\in \cube{d}\times \cube{}$ such that $\x$ follows the uniform distribution over the $d$-dimensional hypercube and there is an unknown halfspace $f^*(\x) = \sign(\vect v^* \cdot \x + \tau^*)$ such that $\Pr[y\neq f^*(\x)] = \opt$. The output of the algorithm is guaranteed to have error $O(\opt^{1/c})+\eps$ for some constant $c$.\footnote{
In our proofs, we assume that $\opt \le \eps^c$ for some absolute constant $c$, which is without loss of generality since one can always redefine $\eps$ as a new value $\eps' \ge \opt^{1/c}$.}$^,$\footnote{The constant we provide is $c=25$, but it is possible that it can be improved.} We proceed in the following phases (see also Algorithm \ref{algorithm:learn-halfspace} for more details).
\begin{itemize}
    \item\label{item:algo-phase-1} \textbf{Phase 1: Finding indices of influential variables.} We first find a set of indices $j_1,j_2,\dots,j_k\in[d]$ that correspond to the $k$ largest (in absolute value) coordinates of some vector $\tilde{\vect v}^*$ such that:
    \begin{equation}
        \Pr[f^*(\x) \neq \sign(\tilde{\vect v}^*\cdot \x + \tau^*)] \le 2k\cdot \opt \label{equation:halfspace-with-known-head}
    \end{equation}
    \item\label{item:algo-phase-2} \textbf{Phase 2: Learn a structured halfspace.} We find a halfspace $h_{\mathrm{r}}(\x) = \sign(\hatv_{\mathrm{head}}\cdot \x + \hatv_{\mathrm{tail}} \cdot \x + \hattau)$, where $\hatv_{\mathrm{head}}$ is supported on $j_1,\dots,j_k$ and $\hatv_{\mathrm{tail}}$ is supported on the remaining coordinates. 
    \begin{enumerate}[label=(\alph*)]
        \item First, we compute $(\hatv_{\mathrm{head}},\hattau)$ through Algorithm \ref{algorithm:find-heavy-coefficients}.
        \item Then, we compute $\hatv_{\mathrm{tail}}$ through Algorithm \ref{algorithm:find-regular-coefficients} using $(\hatv_{\mathrm{head}},\hattau)$ as part of the input.
    \end{enumerate} 
        \item\label{item:algo-phase-3} \textbf{Phase 3: Learn a sparse halfspace.} We attempt to find $\hsparse(\x) = \sign(\hatvsparse\cdot \x+\hattau)$ defined by a coefficient vector $\hatvsparse$ which is only supported on the coordinates $j_1,j_2,\dots,j_k$ that perfectly classifies a set of $\poly(k,1/\eps)$ input examples. This can be done through linear programming.
\end{itemize}

\paragraph{Proof Outline.} Our proof is quite technical and splits into several cases as depicted by the tree of reductions in \Cref{figure:proof-outline-0}. %The total error amplification incurred by these reductions is only polynomial. 
It starts with a robust correspondence between coefficients of large magnitude and the coordinate influences of the unknown halfspace (see \Cref{section:technical-overview-influential} for a more thorough overview). This enables us to reduce the initial halfspace learning problem to one where the coordinates corresponding to the $k$ largest coefficients are known, by incurring an $O(k)$ error amplification (Phase 1).
We proceed with the \textit{critical index} machinery introduced by Servedio \cite{servedio2006} (we use a statement from \cite{gopalan2010fooling}), which asserts that any halfspace over the Boolean hypercube is either close to a sparse halfspace or is {\em structured}, meaning that beyond a small number of \emph{heavy (or head)} coefficients, independent of the input dimension, all remaining weights are of roughly comparable magnitude (the \emph{regular or tail} coefficients). This allows us to decompose the learning task into two subproblems: learning either a structured halfspace (Phase 2) or a sparse halfspace (Phase 3) --- we try both cases in parallel and check afterwards to see which case was successful.

\paragraph{Structured Case.} The structured case comprises the bulk of the technical work of the paper and requires several new ideas. First, we establish a connection between learning structured halfspaces and learning generalized linear models (GLMs), which allows us to learn the heavy coefficients (see \Cref{section:technical-overview-heavy} for more background on GLMs and an overview of our GLM learning results). Concretely, we treat the aggregate contribution of the tail variables as approximately Gaussian noise and represent each example as $(\tilde{\x}, y)$, where $\tilde{\x}$ corresponds to the heavy variables and $\E[y|\tilde{\x}]$ is given by a GLM with (approximately) known activation. This viewpoint is justified by the regularity property of the tail weights, which enables the use of the Berry-Esseen theorem to predict the collective behavior of the inner product between the tail variables and the corresponding (regular) coefficients.

A key observation is that the number of heavy coefficients $k$ can be chosen small enough relative to the input noise level so that the effect of noise becomes negligible when learning the corresponding coefficients. Specifically, if the error of some halfspace ${f}$ is $\eta$, then for $m \ll 1/\eta$ labeled i.i.d. samples, with high probability all labels are consistent with $f$. 

In our analysis, we choose the halfspace $f$ to be the one whose $k$ largest coefficients are known and has error $\eta \le (2k+1)\opt$ (Phase 1, Eq. \eqref{equation:halfspace-with-known-head}). The choice $m\ll 1/\eta$ is feasible because the effective dimension of the problem is $\dim(\tilde{\x}) = k \ll 1/\opt^c$ for some sufficiently large constant $c$. 
Moreover, according to the critical index lemma, if $f$ is structured, then the heavy coefficients coincide with the $k$ largest coefficients of $f$, whose positions are known. This enables us to identify the heavy variables $\tilde{\x}$.

Finally, once the heavy coefficients are fixed, we estimate the tail coefficients robustly and efficiently by minimizing a convex surrogate loss (see \Cref{section:technical-overview-regular} for more details). The crucial observation here is that the tail distribution is anti-concentrated --- again, by Berry-Esseen --- ensuring that the value of the hinge loss is representative of the misclassification error of the optimal halfspace.

\paragraph{Sparse Case.} Once more, we use the fact that the number of relevant variables $k$ in the sparse case can be chosen small enough relative to the input noise level. 
Since the positions of the relevant variables are known (due to Phase 1), we can use a simple linear program to recover a sparse halfspace consistent with the data and guaranteed to be close to the optimum.

\subsection{Learning the Heavy Coefficients: A Connection with GLM Learning}\label{section:technical-overview-heavy}

Recall that in the structured case (Phase 2 of our algorithm), we first learn the heavy coefficients. We will now sketch the main idea behind the algorithm used for this purpose. First, we focus on a seemingly unrelated problem called GLM (Generalized Linear Model) learning, which turns out to be closely related to the problem of learning the heavy coefficients in the structured case. In GLM learning, there is a distribution over pairs $(\x,y)\in \R^d\times\cube{}$, where the conditional expectation $\E[y|\x]$ is specified by some function of the form $\sigma(\vect w^*\cdot \x+ \tau^*)$, where $\sigma:\R\to [-1,1]$ is known, and $\vect w, \tau$ are unknown. The goal is to learn some $\widehat{w},\hattau$ such that:
\[
    \E\Bigr[\bigr(\sigma(\vect w^* \cdot \x + \tau^*) - \sigma(\widehat{\vect w} \cdot \x + \hattau)\bigr)^2\Bigr] \le \eps\,.
\]

One of our main technical contributions is a GLM learning algorithm whose sample complexity is independent of the magnitude $w_{\max}$ of associated coefficients, and whose runtime scales logarithmically with $w_{\max}$. In particular, we obtain the first efficient algorithm for learning sigmoidal GLMs (activation $\sigmoid(t) = (1-e^{-t})/(1+e^{-t})$) with sample complexity independent of the weights:

\begin{thm}\label{theorem:sigmoid-main}
    There is an algorithm that, given sample access to a distribution over pairs $(\x,y)\in \R^d\times \{\pm 1\}$ where $\|\x\|_2 \le \sqrt{d}$ and $\E[y | \x] = \sigmoid(\vect w^* \cdot \x + \tau^*)$ for some unknown $\vect w^*,\tau^*$ with $\|\vect w^*\|_2 + |\tau^*| \le w_{\max}$, outputs $\widehat{\vect w},\hattau$ such that with probability at least $0.99$ we have:
    \[
        \E\Bigr[\bigr(\sigmoid(\vect w^* \cdot \x + \tau^*) - \sigmoid(\widehat{\vect w} \cdot \x + \hattau)\bigr)^2\Bigr] \le \eps\,.
    \]
    The sample complexity of the algorithm is $\tilde{O}(d / \eps^2)$ and its time complexity is $\poly(d, 1/\eps, \log w_{\max})$.
\end{thm}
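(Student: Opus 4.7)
The plan is to minimize a convex matching loss for $\sigmoid$ over the constrained ball $\{(\vect w,\tau):\|(\vect w,\tau)\|_2\le w_{\max}\}$ and to translate the optimization guarantee into the squared-loss bound via Pinsker's inequality. The logistic loss $\ell(y,z)=\log(1+e^{-yz})$ is convex in $z$ and is the matching loss for $\sigmoid$. Since $\E[y\mid\x]=\sigmoid(\vect w^*\cdot\x+\tau^*)$, the population logistic loss $L(\vect w,\tau)=\E[\ell(y,\vect w\cdot\x+\tau)]$ is globally minimized by the linear predictor $(\vect w^*,\tau^*)$, and Pinsker's inequality applied to Bernoulli distributions with parameters $(1+\sigmoid(z))/2$ and $(1+\sigmoid(z^*))/2$ gives
\[
L(\vect w,\tau)-L(\vect w^*,\tau^*)\;\ge\;\tfrac12\,\E\bigl[(\sigmoid(\vect w\cdot\x+\tau)-\sigmoid(\vect w^*\cdot\x+\tau^*))^2\bigr].
\]
Hence any $(\eps/2)$-approximate population minimizer of $L$ inside the ball automatically achieves the claimed squared-loss guarantee.

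For the sample complexity I would use that the class $\mathcal F=\{\x\mapsto\sigmoid(\vect w\cdot\x+\tau)\}$ has pseudo-dimension $O(d)$: since $\sigmoid$ is strictly monotone, the sub-level sets of members of $\mathcal F$ are halfspaces, which have VC dimension $O(d)$. Standard uniform convergence for the bounded squared loss then gives $|Q(f)-\widehat Q(f)|\le\eps$ for every $f\in\mathcal F$ with $m=\tilde{O}(d/\eps^2)$ samples, with no dependence on $w_{\max}$. For the runtime I would minimize the empirical logistic loss $\widehat L(\vect w,\tau)=\frac{1}{m}\sum_i\ell(y_i,\vect w\cdot\x_i+\tau)$ via a cutting-plane (or interior-point) method on the ball of radius $w_{\max}$. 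Since $\widehat L$ is convex and $O(\sqrt d)$-Lipschitz on the ball, with values bounded by $O(w_{\max}\sqrt d)$, such a solver returns an $\eps'$-minimizer in $\poly(d,\log(w_{\max}\sqrt d/\eps'))$ iterations, each of polynomial cost in $d$ and $m$; setting $\eps'$ to a small polynomial of $\eps$ gives total runtime $\poly(d,1/\eps,\log w_{\max})$.

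The main obstacle is the \emph{bridge} step: passing from the empirical optimization guarantee $\widehat L(\widehat{\vect w},\hattau)\le\widehat L(\vect w^*,\tau^*)+\eps'$ to the population bound $L(\widehat{\vect w},\hattau)-L(\vect w^*,\tau^*)\le O(\eps)$. The naive Rademacher bound for the logistic loss class on the norm-$w_{\max}$ ball is $O(w_{\max}\sqrt{d/m})$, which is polynomial (not polylogarithmic) in $w_{\max}$. I would handle this in one of two ways: (a) a localized Rademacher argument controlling the complexity of the sub-class $\{f:L(f)-L(f^*)\le r\}$ by a function of $r$ that shrinks fast enough to yield a self-improving fixed point, or (b) a clipping/truncation argument exploiting the exponential saturation of $\sigmoid$: replace $\ell$ by a convex surrogate that caps the contribution whenever $|z|\gtrsim\log(1/\eps)$, yielding a bounded-range loss whose sub-level sets remain halfspaces and whose Rademacher complexity is therefore $w_{\max}$-free via the same pseudo-dimension argument. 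Combining either route with Pinsker then yields the stated sample complexity $\tilde{O}(d/\eps^2)$ and runtime $\poly(d,1/\eps,\log w_{\max})$.
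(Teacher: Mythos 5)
Your high-level plan — minimize the convex matching loss for the sigmoid (which is the logistic loss), constrain to the $w_{\max}$-ball, and pass from excess matching loss to squared sigmoid distance — is indeed the paper's approach, and your Pinsker inequality is equivalent to the paper's pointwise bound $\int_a^b(\sigma(z)-\sigma(a))\,dz \ge \tfrac12(\sigma(b)-\sigma(a))^2$ for monotone 1-Lipschitz $\sigma$. You also correctly identify the crux: a naive uniform-convergence bound for the logistic loss over the $w_{\max}$-ball scales polynomially in $w_{\max}$, and something must be done about it.

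The gap is in your proposed fixes for the bridge. Route (b) asks for a ``convex surrogate that caps the contribution whenever $|z|\gtrsim\log(1/\eps)$'' — i.e., a loss that is simultaneously convex in $\vect w$ and of bounded range. These two properties are incompatible: capping the matching loss $\ell^\sigma(\vect w;\x,y)$ at a threshold (equivalently, replacing $\vect w\cdot\x$ by its truncation to $[-\zalphazero,\zalphazero]$ before integrating) produces a function whose derivative in $\vect w\cdot\x$ jumps from $0$ up to $-2$ at $-\zalphazero$, so it is not convex. Modifying the \emph{activation} to saturate does not help either: the matching loss for a saturated activation is still unbounded below/above (linear with slope $\pm2$) as $\vect w\cdot\x\to\mp\infty$. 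The paper resolves this with a two-loss argument that you don't reproduce: it \emph{optimizes} the convex, unbounded loss $\ell^{\sigma}$, but proves uniform convergence for a \emph{separate} truncated loss $\ell^{\sigma}_{\mathrm{trunc}}$ that is bounded (range $O(\zalphazero)$) but non-convex. The bridge then goes through because (i) $\ell^{\sigma}_{\mathrm{trunc}} \le \ell^{\sigma}$ pointwise, and (ii) at the \emph{true} parameter $\vect v^*$ under the \emph{clipped} GLM, $\ell^{\sigma}(\vect v^*;\x,y)=\ell^{\sigma}_{\mathrm{trunc}}(\vect v^*;\x,y)$ is uniformly bounded, because whenever $|\vect v^*\cdot\x|>\zalphazero$ the clipped activation makes $y$ deterministic, causing the integrand to vanish beyond $\pm\zalphazero$. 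Step (ii) is exactly why a further reduction to a \emph{clipped} GLM $D_{\text{clipped}}$ via a total-variation argument is required — the true sigmoid does not saturate, so the label is not deterministic, and $\ell^{\sigmoid}(\vect w^*;\cdot)$ has no uniform bound; the paper shows $\mathrm{dist}_{\mathrm{TV}}(D,D_{\text{clipped}})\le e^{-\Omega(\rho)}$ and that the algorithm's behavior on the two is indistinguishable given the sample size. Your route (a), a localized Rademacher argument, is also unlikely to close the gap: the sigmoid's flat tails mean that the sublevel set $\{\vect w : L(\vect w)-L(\vect w^*)\le r\}$ can still contain essentially the whole $w_{\max}$-ball even for small $r$, so the local complexity need not shrink. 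Finally, your ``uniform convergence for the bounded squared loss'' paragraph is beside the point: the algorithm does not minimize squared loss, and a population squared-loss guarantee follows from the population excess logistic loss via Pinsker, not from uniform convergence of the squared loss.
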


The proof of the above theorem combines two key ingredients: (1) a known connection between GLM learning and a class of surrogate losses often referred to as \emph{matching losses} (see, e.g., \cite{kanade2018note}), and (2) the guarantees provided by the ellipsoid algorithm. To ensure that the sample complexity does not depend on $w_{\max}$, we first establish the result for GLMs with \emph{clipped activations} --- that is, activations $\sigma(t)$ whose absolute value is fixed at $1$ whenever $|t|$ exceeds a threshold $\zalphazero$. We then argue that for sufficiently large $\zalphazero$, since the sigmoid approaches $1$ exponentially fast, the distribution induced by the clipped sigmoid GLM is within vanishing total variation distance of the original GLM distribution, relative to the number of samples needed for generalization. Hence, the result for the clipped model directly extends to the original GLM.

A modified version of the above result (see \Cref{cor: the GLM component theorem nonhom}) is crucial in our approach here, but we believe that \Cref{theorem:sigmoid-main} is of independent interest, since, to our knowledge, it is the first GLM learning result to achieve a logarithmic runtime dependence on the magnitude of the coefficients, which essentially quantifies how steep the transition of $\E[y|\x]$ from $-1$ to $1$ is, as $\x$ moves along the direction of $\vect w^*$. See section \Cref{section:heavy} for the proof of \Cref{cor: the GLM component theorem nonhom} and \Cref{appendix:sigmoid} for the proof of \Cref{theorem:sigmoid-main}.

In order to apply the result in our case, we identify GLM learning as a particular subproblem of learning a halfspace $f(\x) = \sign(\tilde{\vect v}^* \cdot \x + \tau^*)$ over the $d$-dimensional hypercube whose $k$-largest coefficients are in known positions. Specifically, due the critical index lemma, one important scenario is that $\tilde{\vect v}^* = \vhead^* + \vtail^*$, where $\vhead^*$ is supported on the $k$ known coordinates (where $k$ does not depend on $d$), and $\vtail^*$ is supported on the remaining coordinates and is regular, i.e., $\|\vtail^*\|_2 = 1$ and $\|\vtail^*\|_4 \ll 1$. By focusing on the coordinates corresponding to $\vhead^*$ --- which are known due to the definition of $\tilde{\vect v}^*$ in Phase 1, Eq. \eqref{equation:halfspace-with-known-head} --- we obtain a new GLM learning problem over a distribution of pairs $(\tilde{\x}, y)$ where $\tilde{\x}\in \R^k$ and:
\begin{equation}
    \E[y|\tilde{\x}] = \E_{\x' \sim \cube{{d-k}}}\Bigr[\sign\bigr(\vect u_{\mathrm{head}}^*\cdot \tilde{\x} + \vect u_{\mathrm{tail}}^*\cdot \x' + \tau^*\bigr)\Bigr]\label{equation:main-glm}\,,
\end{equation}
where $\vect u_{\mathrm{head}}^*$ is the restriction of $\vhead^*$ on the $k$ known coordinates and $\vect u_{\mathrm{tail}}^*$ the restriction of $\vtail^*$ to the remaining coordinates.
We show that a solution $(\hatv_{\mathrm{head}},\hattau)$ to the corresponding GLM problem can be used in place of $(\vhead^*,\tau^*)$, increasing the classification distance from the initial halfspace by only a small amount. There are several additional technical complications. 

First, the GLM defined by Equation \eqref{equation:main-glm} has unknown activation, since $\vtail^*$ is unknown, but the surrogate loss technique requires knowledge of the activation. To address this, we apply Berry-Esseen theorem (\Cref{lemma:berry-esseen}), which implies that the distribution of $\vtail^*\cdot \x'$ is very close to the standard Gaussian. This leads to a sigmoid-like GLM, which has all the required properties to obtain a result similar to \Cref{theorem:sigmoid-main}, but is slightly misspecified.\footnote{The actual activation is $\psi(z)\coloneq \E_{x\sim \Gauss(0,1)}[\sign(z+x)]$, which has similar properties to the sigmoid.} This is because we do not know the activation values exactly, but only approximately. Nevertheless, we show that our technique is robust to such misspecification. 

Second, we still require some bound on $\|\vhead^*\|_2$, since the runtime of our GLM algorithm scales polylogarithmically to the magnitude of the coefficients. Such a bound can be obtained through a classical result in Boolean analysis showing that every halfspace over the $d$-dimensional hypercube can be exactly represented by integer coefficients of magnitude at most $w_{\max} = 2^{O(d \log d)}$ \cite{servedio2006}. Since we obtain a logarithmic dependence on $w_{\max}$, our runtime remains polynomial.

Finally, Equation \eqref{equation:main-glm} holds only when the input distribution is realized by the halfspace $f^*$, yet our input distribution is actually noisy. Nevertheless, since the effective dimension of the corresponding GLM problem is $k \ll 1/\opt^c$ for some constant $c$ (due to the critical index lemma),
%\arsen{ should we explain why?}
the number of samples required to solve the problem is small compared to the label noise rate ($\opt$) and the labels are effectively realized by $f^*$. Therefore, our method is robust to noise. 

\subsection{Learning the Regular Coefficients: Hinge Loss Minimization}\label{section:technical-overview-regular}

Another important ingredient of our approach is an algorithm that learns the tail coefficients, once we have estimates for the heavy coefficients. A simpler version of the result we use in our proof is the following result, which solves a particular case of the noisy halfspace learning problem over the hypercube when the coefficients are regular.

\begin{thm}\label{theorem:hinge-main}
    There is an algorithm that, given sample access to a distribution over pairs $(\x,y)\in \cube{d}\times \cube{}$ where $\x$ is uniform over $\cube{d}$, outputs $\hatv\in \R^d$ such that with probability at least $0.99$ we have:
    \[
        \Pr\Bigr[ \sign(\hatv\cdot \x) \neq y \Bigr] \le \frac{C}{\eps}\cdot \log \frac{1}{\eps}\cdot \opt_\eps+ C\cdot \eps\,, \;\text{ for }\; \opt_\eps := \min_{f\in \mathcal{C}_\eps} \Pr\Bigr[ f(\x) \neq y \Bigr] 
    \]
    where $\mathcal{C}_\eps$ is the class of $\eps$-regular halfspaces, i.e., $f(\x) = \sign(\vect v\cdot \x)$ where $\|\vect v\|_2 = 1$ and $\|\vect v\|_4^2 \le \eps$ and $C$ is some sufficiently large constant. The sample and time complexity of the algorithm is $\poly(d, 1/\eps)$.
\end{thm}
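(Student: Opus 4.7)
The approach is empirical hinge loss minimization on the unit $\ell_2$ ball, combined with uniform convergence and a Berry--Esseen-based analysis of the population hinge loss of the target. Fix a margin parameter $\tau = \Theta(\eps)$, and for any $\vect v$ define $\ell_\tau(\vect v, \x, y) = \max(0, 1 - y(\vect v\cdot \x)/\tau)$ and $L_\tau(\vect v) = \E[\ell_\tau(\vect v, \x, y)]$. Let $\hatv$ be the empirical minimizer of $L_\tau$ over $\{\vect v \in \R^d: \|\vect v\|_2 \le 1\}$. Since $\ell_\tau$ is convex in $\vect v$ and the constraint set is convex, $\hatv$ can be computed in time $\poly(d, 1/\eps)$.

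First I would establish \emph{uniform convergence}: for $\x \in \cube{d}$ and $\|\vect v\|_2\le 1$, the hinge loss is bounded by $1+\sqrt{d}/\tau$, and the class of linear functions $\{\x\mapsto \vect v\cdot \x: \|\vect v\|_2\le 1\}$ has Rademacher complexity $O(\sqrt{d/n})$ since $\E[\|\x\|_2^2]=d$. Composing with the $(1/\tau)$-Lipschitz hinge loss and applying standard Rademacher-based uniform convergence gives that $n = \poly(d, 1/\eps)$ samples suffice so that, with probability at least $0.99$, $|\widehat{L}_\tau(\vect v) - L_\tau(\vect v)| \le \eps$ simultaneously for every $\vect v$ in the unit ball. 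Combined with the ERM property $\widehat L_\tau(\hatv)\le \widehat L_\tau(\vect v^*)$, where $\vect v^*$ is the optimal $\eps$-regular halfspace, this yields $L_\tau(\hatv) \le L_\tau(\vect v^*) + O(\eps)$.

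The heart of the argument, and the main obstacle, is to show $L_\tau(\vect v^*) = O(\eps + (\log(1/\eps)/\eps)\,\opt_\eps)$ using only the $\eps$-regularity of $\vect v^*$. Decomposing according to the sign agreement,
\[
    L_\tau(\vect v^*) \le \Pr\bigl[\,|\vect v^*\cdot \x| \le \tau\,\bigr] + \opt_\eps + \frac{1}{\tau}\E\bigl[\,\ind[\sign(\vect v^*\cdot \x)\neq y]\,|\vect v^*\cdot \x|\,\bigr].
\]
For the first term, Berry--Esseen applied to the Rademacher sum $\vect v^*\cdot \x$ with $\|\vect v^*\|_2=1$ and $\|\vect v^*\|_3^3 \le \|\vect v^*\|_4^2\|\vect v^*\|_2 \le \eps$ (by H\"older) gives $\Pr[|\vect v^*\cdot \x|\le \tau] \le O(\tau)+O(\eps) = O(\eps)$. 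For the last term, a worst-case adversary places the $\opt_\eps$ mass of noise where $|\vect v^*\cdot \x|$ is largest; truncating at $T = \Theta(\sqrt{\log(1/\eps)})$ and using Hoeffding's sub-Gaussian bound $\Pr[|\vect v^*\cdot \x|>t]\le 2e^{-t^2/2}$ shows that the contribution from $|\vect v^*\cdot \x|>T$ is $O(\eps^2/\tau) = O(\eps)$, while the contribution from $|\vect v^*\cdot \x|\le T$ is at most $\opt_\eps\cdot T/\tau = O(\opt_\eps \sqrt{\log(1/\eps)}/\eps)$. This gives the claimed bound (up to absorbing the $\sqrt{\log}$ vs.\ $\log$ discrepancy into the constant $C$).

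To conclude, since $\ind[\sign(\hatv\cdot \x)\neq y] \le \ell_\tau(\hatv, \x, y)$ pointwise,
\[
    \Pr\bigl[\sign(\hatv\cdot \x)\neq y\bigr] \le L_\tau(\hatv) \le L_\tau(\vect v^*) + O(\eps) \le \tfrac{C}{\eps}\log\tfrac{1}{\eps}\cdot \opt_\eps + C\eps,
\]
as required. The main technical difficulty is the worst-case bound on the noise-weighted moment $\E[\ind[\text{wrong}]|\vect v^*\cdot \x|]$: regularity gives approximate Gaussianity of $\vect v^*\cdot\x$ only at constant scale via Berry--Esseen, so the sub-Gaussian tail from Hoeffding must be invoked separately to rule out the adversary concentrating noise on extreme values of $\vect v^*\cdot \x$.
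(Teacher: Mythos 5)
Your proof is correct and follows essentially the same approach as the paper: rescaled hinge-loss minimization, with Berry--Esseen supplying anti-concentration of $\vect v^*\cdot\x$ near zero and the sub-Gaussian Hoeffding tail controlling the adversary's ability to place errors at extreme $|\vect v^*\cdot\x|$, and with Rademacher-complexity uniform convergence for the surrogate loss standing in for the paper's Chebyshev-plus-VC argument on the $0/1$ loss. A minor remark: your truncation $T = \Theta(\sqrt{\log(1/\eps)})$ (with a large enough constant inside the $\Theta$) actually yields the sharper coefficient $\sqrt{\log(1/\eps)}/\eps$ rather than $\log(1/\eps)/\eps$ --- this is not a discrepancy to be ``absorbed into the constant $C$'' but a genuine quantitative improvement (the paper's larger $\log$ factor is forced only by the head-term filter $|\vhead\cdot\x+\tau|\le\log(1/\eps)$ in the more general \Cref{thm: with known head vars and weights}, which is absent in this simplified statement), and since $\sqrt{\log(1/\eps)}\le\log(1/\eps)$ for small $\eps$ it still implies the claimed bound.
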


The algorithm minimizes an $\eps$-rescaled version of the hinge loss, i.e., it minimizes the empirical expected value of the function $\ell_\eps(y,\vect v\cdot \x) = \relu(1 - y(\vect v\cdot \x)/\eps)$ over the input samples $(\x,y)$. It is a well-known fact that the hinge loss is a surrogate for the missclassification error, i.e., $\ind\{y\neq \sign(\vect v\cdot \x)\} \le \ell_\eps(y,\vect v\cdot \x)$. Therefore, achieving a small value for the hinge loss leads to low missclassification error. 

The non-trivial part of our proof is showing that the minimum value of the hinge loss is indeed shrinking with $\opt_\eps$. To this end, we focus on the value of the hinge loss on $\vect v^*_\eps$, where $\vect v^*_\eps$ defines the regular halfspace $f^*_\eps\in \mathcal{C}_\eps$ that achieves $\opt_\eps$. Due to regularity of $f^*_\eps(\x) = \sign(\vect v^*_\eps\cdot \x)$ and uniformity of $\x$, we may apply the Berry-Esseen theorem (\Cref{lemma:berry-esseen}) to show that the distribution of $\vect v^*_\eps\cdot \x$ is anticoncentrated. Moreover, the distribution of $\vect v^*_\eps\cdot \x$ is concentrated, since $\x$ is subgaussian. These two properties suffice to relate $\opt_\eps$ and $\E[\ell_\eps(y,\vect v^*_\eps\cdot \x)]$.

In our proof, we use a slightly more general version of \Cref{theorem:hinge-main} (\Cref{thm: with known head vars and weights}), which enables the learner to fix the heavy coefficients $\vect v_{\mathrm{heavy}}$ and the bias $\tau$ to their estimated values $(\hatv_{\mathrm{heavy}},\hattau)$ (from the GLM component) and only optimize over the regular tail $\vect v_{\mathrm{tail}}$. The additional complication here is that there could be some points $\x$ where $|\widehat{\vect v}_{\mathrm{head}}\cdot \x + \hattau| \gg |\vect v_{\mathrm{tail}}\cdot \x|$ leading to large values of the surrogate loss, even for the optimal choice of $\vect v_{\mathrm{tail}}$. To overcome this obstacle, we may simply filter out the points $\x$ for which $|\widehat{\vect v}_{\mathrm{head}}\cdot \x + \hattau| \gg \log(1/\eps)$, as the missclassification error on them is not influenced by the choice of $\vect v_{\mathrm{tail}}$, because the sign is determined by $\widehat{\vect v}_{\mathrm{head}}\cdot \x + \hattau$. See \Cref{section:regular} for more details.

\subsection{Finding the Positions of Influential Variables}\label{section:technical-overview-influential}

One crucial part of our proof is finding the coordinates $i_1,\dots,i_k\in [d]$ that correspond to coefficients of the optimum halfspace with the $k$ largest magnitudes. This is important for robustness, since it enables us to learn a sparse halfspace in the sparse case, as well as the heavy coefficients in the structured case, by solving appropriate low-dimensional problems (see Phase 1, Eq. \eqref{equation:halfspace-with-known-head}). If the dimensionality of the problem is low compared to the noise rate, then robustness can be guaranteed in a black-box manner, due to a total variation distance-based argument.

The coordinates $i_1,\dots,i_k$, however, cannot be recovered exactly, due to the label noise, as well as sampling errors. Instead, we provide the following robust coordinate identification result, which is sufficient for our purposes.

\begin{thm}[Simplified version of \Cref{theorem:influential}]\label{theorem:influential-main}
    There is an algorithm that, given sample access to a distribution over pairs $(\x,y)\in \cube{d}\times \cube{}$ where $\x$ is uniform over $\cube{d}$ and $\Pr[y\neq \sign(\vect v^*\cdot \x + \tau^*)] \le \eta$, outputs a set of $k$ indices $\{j_1,\dots, j_k\}$ such that with probability at least $0.99$:
    \[
        \Pr\Bigr[ \sign(\vect v^*\cdot \x + \tau^*) \neq \sign(\tilde{\vect v}^*\cdot \x + \tau^*) \Bigr] \le 4\eta k\,,
    \]
    where $\tilde{\vect v}^*$ is such that the $k$ largest (in absolute value) coefficients correspond to the indices $j_1,\dots,j_k$. The algorithm has time and sample complexity $\poly(d,1/\eta)$.
\end{thm}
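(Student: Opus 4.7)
The plan is to estimate the degree-$1$ Chow parameters of the labels, output their top-$k$ indices by magnitude, and exhibit a $\tilde{\vect v}^*$ obtained from $\vect v^*$ by an appropriate sequence of coefficient swaps. The analysis hinges on an exact identity between the coordinate-swap sensitivity of a halfspace and the corresponding Chow-parameter gap.

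First, I would draw $m=\poly(d,1/\eta)$ i.i.d.\ samples and compute $\tilde f(i)=\frac{1}{m}\sum_{\ell=1}^m y_\ell (x_\ell)_i$ for each $i\in[d]$. Hoeffding's inequality with a union bound gives $|\tilde f(i)-\E[y x_i]|\le \eta$ for all $i$ with probability at least $0.99$ once $m\gtrsim \log(d)/\eta^{2}$. Decomposing $\E[y x_i]=\hat f^*(i)+\E[(y-f^*(\x))x_i]$ and using $|y-f^*(\x)|\le 2$ with $\Pr[y\neq f^*(\x)]\le\eta$ yields $|\tilde f(i)-\hat f^*(i)|\le 3\eta$ for all $i$ simultaneously, where $\hat f^*(i)=\E[f^*(\x)x_i]$.

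Next, I would invoke the well-known halfspace identity $|\hat f^*(i)|=\Inf_i(f^*)=\Pr[|\vect v^*_{-i}\cdot \x_{-i}+\tau^*|<|v_i^*|]$ under the uniform distribution on $\cube d$, which is monotone non-decreasing in $|v_i^*|$ for fixed $\vect v^*_{-i}$. Thus the top-$k$ indices of $|\hat f^*|$ coincide with the top-$k$ magnitudes in $\vect v^*$, which I denote $I^*$. The algorithm returns $J=\{j_1,\dots,j_k\}$, the indices of the top-$k$ values of $|\tilde f(\cdot)|$.

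To construct $\tilde{\vect v}^*$, I would pair $I^*\setminus J$ with $J\setminus I^*$ via an arbitrary bijection $\sigma$ and swap the coefficients $v_i^*$ and $v_{\sigma(i)}^*$ for each $i\in I^*\setminus J$; the result has its top-$k$ coordinates exactly at $J$. The core lemma is a single-swap identity: writing $C=\sum_{\ell\notin\{i,\sigma(i)\}}v_\ell^* x_\ell+\tau^*$ and letting $\vect u$ denote the single-swap of $\vect v^*$ at the pair $(i,\sigma(i))$, a direct case analysis over the four joint values of $(x_i,x_{\sigma(i)})\in\cube{2}$ gives
\[
\Pr_\x\bigl[\sign(\vect v^*\cdot \x+\tau^*)\neq \sign(\vect u\cdot \x+\tau^*)\bigr]=\tfrac12\Pr\bigl[|C|<|v_i^*-v_{\sigma(i)}^*|\bigr]=\tfrac12\bigl(|\hat f^*(i)|-|\hat f^*(\sigma(i))|\bigr).
\]
Combined with the empirical inequality $|\tilde f(\sigma(i))|\ge |\tilde f(i)|$ (from the selection rule), the true monotonicity $|\hat f^*(i)|\ge |\hat f^*(\sigma(i))|$, and $|\tilde f-\hat f^*|\le 3\eta$, this forces $|\hat f^*(i)|-|\hat f^*(\sigma(i))|\le 6\eta$, so each swap contributes at most $3\eta$ to the classification distance. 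Composing the $s\le k$ disjoint swaps via the triangle inequality (each intermediate vector is obtained from the previous by a single coordinate transposition under which the uniform distribution on $\cube d$ is invariant) produces the bound $\sum_{i\in I^*\setminus J}\tfrac12(|\hat f^*(i)|-|\hat f^*(\sigma(i))|)\le 4\eta k$, matching the claim (with some slack in the constant to absorb the $3\eta$ vs.\ $4\eta$ discrepancy).

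The main technical point is the exact single-swap identity, which is where all the structure of halfspaces on the uniform hypercube enters; the crucial observation is that the four-way case split over $(x_i,x_{\sigma(i)})$ lines up the "flip region" for the swap exactly with the symmetric difference of the Chow-defining intervals. Once this is in hand, everything else is a careful but routine accounting of concentration and triangle inequalities.
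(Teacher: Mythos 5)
Your proposal is correct for the statement as written and takes a genuinely different, arguably cleaner, route than the paper's. You observe an \emph{exact} single-swap identity, $\Pr[\sign(\vect v\cdot\x+\tau)\neq\sign(\vect u\cdot\x+\tau)]=\tfrac12\bigl|\Inf_i[f]-\Inf_j[f]\bigr|$ for a transposition of coordinates $i,j$, whereas the paper's Claim~\ref{claim: swapping weights vs influences} only records the factor-two-looser inequality $\Pr[\cdot]\le|\Inf_i[f]-\Inf_j[f]|$ (both correct). More importantly, your composition uses \emph{disjoint} transpositions pairing $I^*\setminus J$ with $J\setminus I^*$: because each swap touches a fresh pair of coordinates, the joint law of the restriction $C_m$ and the two swapped weights at each intermediate vector equals its law at $\vect v^*$ (a simple measure-preserving relabeling of the hypercube), so the single-swap identity applies verbatim at every step and the triangle inequality closes the bound with no bookkeeping. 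The paper's Algorithm~\ref{algorithm:permutation-procedure}, by contrast, executes potentially \emph{overlapping} swaps (place $i_1$ at $\widehat i_1$, then $i_2$ at $\widehat i_2$, and so on), which is what forces the inductive maintenance of Invariant~C via Claim~\ref{claim: swapping almost-sorted sequences of integers}. What the paper's heavier machinery buys is the \emph{ordered} guarantee $\pi(\widehat i_j)=i_j$ for all $j\in[k]$, i.e.\ the $j$-th largest magnitude of $\tilde{\vect v}^*$ lands exactly at $\widehat i_j$ for every prefix length $j$; your construction (with an arbitrary bijection $\sigma$) only delivers the \emph{set} guarantee. The set guarantee matches the literal wording of \Cref{theorem:influential-main}, but when \Cref{theorem:influential} is invoked in the proof of \Cref{thm: main theorem}, the critical-index lemma is applied with the prefix $H_\Delta=\{\widehat i_1,\dots,\widehat i_\Delta\}$ for every $\Delta\le k$, which requires the ordered property. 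So your argument is a clean self-contained proof of the stated simplification, but it would need to be augmented (e.g.\ by additionally sorting within $J$ via further controlled swaps) before it could replace Lemma~\ref{lem: almost-maximal influences are ok} in the downstream pipeline.
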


The above result is essentially the strongest possible statement one could hope for, at least from a qualitative perspective. It implies that for all purposes, we may assume that the returned indices $j_1,\dots,j_k$ correspond to the coefficients of largest magnitudes, by only incurring an error amplification factor of $4k$. Moreover, the critical index lemma (\Cref{lem: critical index}) precisely identifies the large-magnitude coefficients with the relevant coordinates in the sparse case and the heavy coefficients in the structured case.

The proof of the above theorem is based on carefully combining the following three main ingredients. First, we use the fact that the ordering of the coordinate influences of a halfspace over the hypercube coincides with the ordering of the absolute values of its coefficients. Second, we show that the absolute difference between the influences of two coordinates $i,j$ gives an upper bound on the disagreement between two halfspaces that correspond to coefficient vectors $\vect v, \vect v'$, where $\vect v$ and $\vect v'$ are the same in all coordinates except $i,j$, where the magnitudes are swapped, i.e., $v_i' = |v_j|\cdot \sign(v_i)$ and $v_j' = |v_i|\cdot \sign(v_j)$. Finally, we observe that the coordinate influences can be efficiently estimated through the Chow parameters (see \cite{o2008chow}). See \Cref{section:chow-estimation} for more details.

\subsection{Tolerating Contamination}

Our method works even in the more challenging setting where both the features and the labels are adversarially contaminated for a bounded fraction of the input examples. In particular, our results apply to the following model called bounded contamination (or nasty noise), initially defined by \cite{BSHOUTY2002255}.

\begin{defn}[Learning with Bounded Contamination \cite{BSHOUTY2002255}]\label{definition:bounded-contamination}
            We say that a set $\Sinp$ of $N$ labeled samples is an $\eta$-contaminated (uniform) sample with respect to some class $\mathcal{C} \subseteq\{\cube{d} \to \cube{}\}$, where $N\in \mathbb{N}$ and $\eta\in(0,1)$, if it is formed by an adversary as follows.
    \begin{enumerate}
        \item The adversary receives a set of $N$ clean i.i.d. labeled examples $\Scln$, drawn from the uniform distribution over $\cube{d}$ and labeled by some unknown concept $f^*$ in $\mathcal{C}$. \item The adversary removes an arbitrary set $\Srem$ of $\lfloor\eta N\rfloor$ labeled examples from $\Scln$ and substitutes it with an adversarial set of $\lfloor\eta N\rfloor$ labeled examples $\Sadv$.
    \end{enumerate} 
    In this model, given $\epsilon\in(0,1)$ and $\Sinp$ for large enough $N$, 
    the goal of the learner is to output, with probability $0.95$, a hypothesis $h:\cube{d}\to \cube{}$ such that $\Pr_{\x\sim\cube{d}}[h(\x) \neq f^*(\x)] \le \poly(\eta) + \epsilon$.
\end{defn}

Once more, we obtain a polynomial-time algorithm for the problem based on a similar approach as the one for the label noise case, using the following additional ingredients. See \Cref{section:contamination} for more details.

\paragraph{Equivalence Between Adaptive and Oblivious Adversaries.} It follows from results in \cite{blanc2025adaptive} that learning some class $\cal C$ with bounded contamination over the hypercube is computationally equivalent to learning using samples from some distribution $D$ over $\cube{d}\times \cube{}$ whose total variation distance from the clean distribution $D_{\mathrm{clean}}$ is $\eta$. The marginal of $D_{\mathrm{clean}}$ on $\cube{d}$ is uniform and the clean labels are generated by some $f\in \cal C$. This is useful for our analysis, since it allows us to consider the input samples independent.

\paragraph{Finding Positions of Influential Variables under Contamination.} We provide an analogue of \Cref{theorem:influential-main} for the setting of (oblivious) contamination. The crucial observation here is that the way we estimate the positions of influential variables is by estimating the expected value of the quantities $y x_i$ for $i\in [d]$, whose absolute value is bounded by $1$ for any $y\in\cube{}, \x = (x_1,\dots,x_d)\in \cube{d}$. Therefore, the estimation error is bounded by the total variation distance between the input distribution and the clean distribution.

\paragraph{Learning the Heavy Coefficients or a Sparse Halfspace under Contamination.}  Recall that in both the sparse case and in learning the heavy coefficients of the structured case, the required sample size $m$ can be chosen to be vanishingly small compared to the noise rate. This enables safely ignoring the label noise through a total variance argument between the joint distribution of $m$ samples with or without noise. The same argument works even in the presence of contamination and, therefore, the noise can be ignored in this case as well.

\paragraph{Learning the Regular Coefficients under Contamination.} To learn the regular coefficients, we once more use an appropriate hinge loss minimization routine. However, we must ensure that the minimum value of the hinge loss remains bounded by a function of the contamination rate $\eta$. This requires two properties of the input distribution: (i) concentration in every direction, and (ii) anti-concentration along the direction of the optimal tail coefficient vector. Regarding the anti-concentration property, we use the fact that the input distribution is close to the uniform distribution over the hypercube, and that the optimum tail coefficient vector is regular (since we are in the structured case). To ensure concentration, we perform an outlier removal procedure on the input samples, ensuring that the moments of constant degree are bounded along any direction. Such procedures have been widely used in the literature of learning with contamination \cite{klivans2009learning,awasthi2017power,diakonikolas2018learning,goel2024tolerant,klivans2024learningac0}. Here, we use the formulation of \cite{klivans2024learningac0} as a black box, and obtain an algorithm for learning the regular coefficients that is robust to contamination.

\section{Notation and Roadmap}
\label{section:roadmap}

\paragraph{Notation.} We use bold letters to denote vectors. For a vector $\x\in\R^d$, we denote with $x_i$ the value of the $i$-th coordinate, and for $H\subseteq[d]$, we denote with $\x_H$ the vector in $\R^{|H|}$ corresponding to the restriction of $\x$ on the coordinates in $H$, i.e., $\x_{H} = (x_i)_{i\in H}$. For vectors that are denoted with subscript notation, e.g., $\x_{\mathrm{example}}$, we use the notation $(\x_{\mathrm{example}})_H$ to denote their restriction.

We use $D$ to denote a labeled distribution over pairs $(\x,y)\in \cube{d}\times \cube{}$, and $D_X$ to denote the marginal of $D$ on $\cube{d}$. Moreover, we use the notation $D_{y|\x}$ to denote the conditional distribution of $y$ given $\x$ when $(\x,y)$ is drawn according to $D$. We use the notation $(\x,y)\sim D$ to assert that the pair $(\x,y)$ is drawn according to the distribution $D$, and for a set $S$ of points in $\cube{d}\times \cube{}$, we use the notation $(\x,y)\sim S$ to assert that $(\x,y)$ is drawn from the uniform distribution over the elements in $S$.

We denote with $\sign:\R \to \cube{}$ the sign function and with $\ind_E\in\{0,1\}$ the indicator of the event $E$.

\paragraph{Roadmap.} Our main result is achieved by Algorithm \ref{algorithm:learn-halfspace}.
In the following sections, we provide a proof for our main result \Cref{thm: main theorem}, following the structure of Algorithm \ref{algorithm:learn-halfspace}, as well as the proof outline presented in \Cref{figure:proof-outline-0}. In particular:
\begin{itemize}
    \item In \Cref{section:chow-estimation} we provide an important tool (\Cref{theorem:influential}) that reduces the problem of learning an arbitrary halfspace over the hypercube to learning a halfspace which is slightly more noisy, but whose large-magnitude coefficients are in known positions.
    \item In \Cref{section:heavy} we show how to learn the heavy coefficients of a structured halfspace, i.e., a halfspace such that, except from a small number of heavy coefficients whose locations are known, has coefficients of roughly comparable magnitudes.
    \item In \Cref{section:regular} we prove that the regular coefficients of a structured halfspace can be learned robustly and efficiently, as long as we are given accurate estimates for the heavy coefficients.
    \item Finally, in \Cref{sec:combineall} we show how all these ingredients can be combined with the critical index lemma (\Cref{lem: critical index}) to obtain our main result of \Cref{thm: main theorem}.
\end{itemize}
Moreover, in \Cref{section:contamination} we show how to extend our results to the setting of learning under bounded contamination, where a bounded fraction of both datapoints $\x$ and their labels $y$ can be adversarially corrupted.

\begin{algorithm}[htbp]
\caption{$\learnhalfspace(\eps, D)$}\label{algorithm:learn-halfspace}
    \KwIn{Parameter $\eps\in (0,1)$, an oracle to independent samples $(\x, y)\sim D$}
\KwOut{A halfspace $\widehat{h}(\x) = \sign(\widehat{\vect v}\cdot \x +\widehat{\tau})$ described by $\hatv \in \R^d, \hattau\in \R$}
\BlankLine
\BlankLine
\tcc{Initialization}
\BlankLine
Let $C\ge 1$ be a sufficiently large universal constant\;
Set $k = C^{0.1} \log^{16} (1/\eps) / \eps^8$ and $\etatwo = \eps^{25}/C$; \Comment{$\etatwo$ is an upper bound for $\opt$}\\
\BlankLine
\BlankLine

\tcc{Finding Indices of Influential Variables (see Section \ref{section:chow-estimation})}
\BlankLine
Let $S_1$ be a set of  $C\sqrt{\log d} / \etatwo^2$ i.i.d. samples from $D$\; 
Compute the quantities $\hatI_i = \E_{(\x,y)\sim S_1}[y x_i] $ for $i\in[d]$\;
Let $H_k = \{\hati_1, \hati_2, \dots, \hati_k\}$ be such that $|\hatI_{i}|\le |\hatI_{j}|$ for all $i\in[d]$, $j\in H_k$;\Comment{$k$ largest values $|\hatI_i|$}\\
\BlankLine
\BlankLine

\tcc{Finding Structured Halfspace}
\BlankLine
\For{$\Delta = 0,1,2,\dots, k$}{
    \BlankLine
    \tcc{Part I: Heavy Coefficients (see Section \ref{section:heavy})}
    \BlankLine
    Let $H_{\Delta} = \{\hati_1,\hati_2,\dots,\hati_{\Delta}\}$ and let $D_\Delta$ the distribution of $(\x_{H_{\Delta}} , y)$ where $(\x,y)\sim D$\;
    Set $\eps_{\mathrm{hv}} = C^{0.01} \log^2 (k) / \sqrt{k}$ and $u_{\max} = d2^{20d \log d}$\;
    Let $(\widehat{\vect u}, \hattau_\Delta)$ be the output of $\findheavycoefficients(\eps_{\mathrm{hv}}, \Delta, u_{\max}, D_\Delta)$\;
    Let ${\hatvheavy}\in\R^d$ such that $({\hatvheavy})_{H_\Delta} = \widehat{\vect u}$ and $({\hatvheavy})_{[d]\setminus H_\Delta} = 0$\;
    \BlankLine
    \tcc{Part II: Regular Coefficients (see Section \ref{section:regular})}
    \BlankLine
    Set $\eps_{\mathrm{reg}} = \eps/C^{0.01}$\;
    Let $\hatv_\Delta$ be the output of $\findregularcoefficients(\eps_{\mathrm{reg}},d, H_\Delta, {\hatvheavy}, \hattau_\Delta, D)$\;
    Let $h_\Delta(\x) = \sign(\hatv_{\Delta} \cdot \x + \hattau_\Delta)$\;
}
\BlankLine
\BlankLine
\tcc{Finding Sparse Halfspace (see Section \ref{sec:combineall}). }
\BlankLine
Let $S_2$ be a set of $\sqrt{C} \frac{k}{\eps^2} \log^2 \frac{k}{\eps}$ i.i.d. samples from $D$\;
Use linear programming to attempt to find a halfspace ${\hsparse}(\x) = \sign({\hatvsparse} \cdot \x + \hattau)$, where ${\hatvsparse}\in\R^d$ is supported on the indices in $H_k$ and $\hattau\in \R$, such that $(\x,{\hsparse}(\x)) = (\x,y)$ for all $(\x,y)\in S_2$\;
If the linear program is infeasible then set ${\hsparse} \equiv +1$. 
\BlankLine
\BlankLine
\tcc{Selecting the Best Hypothesis. }
\BlankLine
Let $S_3$ be a set of $C \sqrt{\log k} / \eps^2$ i.i.d. samples from $D$\;
Choose $\widehat{h}$ to be the hypothesis in $\{h_0,\ldots,h_k\} \cup \{{\hsparse}\}$ that minimizes the empirical error on $S_3$\; 
\end{algorithm}

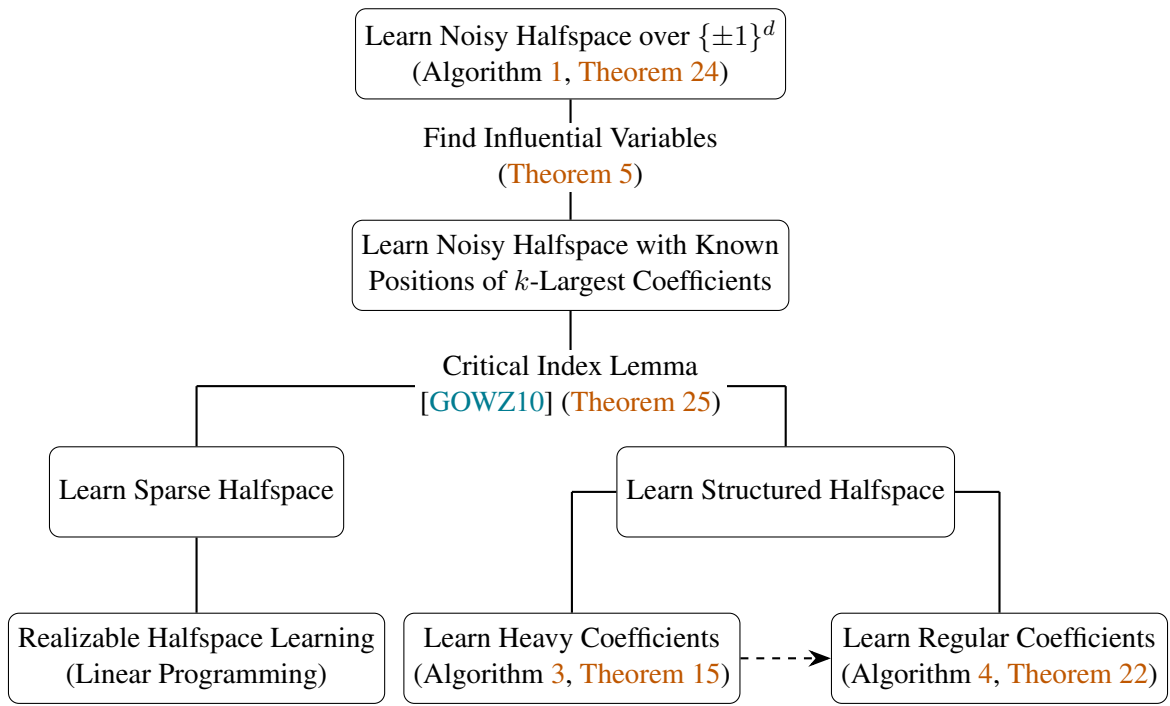
\begin{figure}[ht ]
    \centering
    \begin{tikzpicture}[node distance=1.6cm]

    % --- Top box
    \node[box] (top0) {Learn Noisy Halfspace over $\cube{d}$ \\ (Algorithm \ref{algorithm:learn-halfspace}, \Cref{thm: main theorem})};

    \node[box, below=1.6cm of top0] (top) {Learn Noisy Halfspace with Known \\ Positions of $k$-Largest Coefficients};

    \draw[conn] (top0.south) -- node[lab, below=-0.5cm of top0] {Find Influential Variables \\ (\Cref{theorem:influential})} (top.north);

    % Left / Right boxes
    \node[box, below left=1.8cm and 0.1cm of top]  (L1) {Learn Sparse Halfspace};
    \node[box, below right=1.8cm and -2.3cm of top] (R1) {Learn Structured Halfspace};

    % Junction point (directly under top)
    \path (top.south) ++(0,-1cm) coordinate (J);

    % Projections: points horizontally above each child (same y as J)
    \path (L1.north |- J) coordinate (Lsplit);
    \path (R1.north |- J) coordinate (Rsplit);

    % 1) Vertical from top to junction (no arrowhead)
    \draw[conn] (top.south) -- (J);

    % 2) Horizontal branches (no arrowheads), with labels
    \draw[conn] (J) -- (Lsplit);
    \draw[conn] (J) -- (Rsplit);

    % 3) Vertical arrows down to the boxes (only arrowheads here)
    \draw[conn] (Lsplit) -- (L1.north);
    \draw[conn] (Rsplit) -- (R1.north);

    % 4) Label at the junction, covering the meeting point
    \node[fill=white, inner sep=2pt, align=center] at (J) {Critical Index Lemma \\ \cite{gopalan2010fooling} (\Cref{lem: critical index})};

    %%%%%%%%%%%%%%%%%%%%%%%%%%%%%%%%%%%%%%%%%%

    % --- Left continuation (one box below)
    \node[box, below=1cm of L1] (L2) {Realizable Halfspace Learning \\ (Linear Programming)};
    \draw[conn] (L1.south) -- (L2.north);

    % --- Right continuation (two boxes in sequence)
    \node[box, below right=1cm and -1.65cm of R1] (R2) {Learn Regular Coefficients \\ (Algorithm \ref{algorithm:find-regular-coefficients}, \Cref{thm: with known head vars and weights})};
    \node[box, below left=1cm and -1.65cm of R1] (R3) {Learn Heavy Coefficients \\ (Algorithm \ref{algorithm:find-heavy-coefficients}, \Cref{cor: the GLM component theorem nonhom})};

    \path (R3.north |- R1) coordinate (R3up);
    \path (R2.north |- R1) coordinate (R2up);

    \draw[conn] (R1.west) -- (R3up);
    \draw[conn] (R1.east) -- (R2up);

    \draw[conn] (R3up) -- (R3.north);
    \draw[conn] (R2up) -- (R2.north);
    \draw[arrow, dashed] (R3.east) -- (R2.west);

    \end{tikzpicture}
    \caption{Reduction tree for the proof of our main result (\Cref{thm: main theorem}). Each internal node corresponds to a learning problem that reduces to solving all of its children. The dashed arrow indicates a dependency between two leaf nodes, where the output of one serves as input to the other.}
    \label{figure:proof-outline-0}
\end{figure}

\section{Finding Indices of Influential Variables}\label{section:chow-estimation}

In this section, we prove the following, slightly more technical version of \Cref{theorem:influential-main}, which reduces the problem of learning an arbitrary noisy halfspace over the hypercube to learning a halfspace that is slightly more noisy, but whose large magnitude coefficients are in known positions.

\begin{thm}\label{theorem:influential}
    Let $D$ be a distribution over pairs $(\x,y)\in \cube{d}\times \cube{}$ where $\x$ is uniform over $\cube{d}$ and $\Pr[y\neq \sign(\vect v^*\cdot \x + \tau^*)] \le \eta$. Let $H_k = \{j_1,\dots,j_k\}$ be such that:
    \[
        \Bigr|\E_{(\x,y)\sim S}[y x_{i}]\Bigr|\le \Bigr|\E_{(\x,y)\sim S}[y x_{j}] \Bigr| \;,\text{ for all }i\in[d]\setminus H_k\text{ and }j\in H_k\,,
    \]
    where $S$ is a set of $C\sqrt{\log d} / \eta^2$ i.i.d. samples from $D$ for some sufficiently large universal constant $C\ge 1$. Then, with probability at least $0.99$, we have:
    \[
        \Pr_{\x\sim \cube{d}}\Bigr[ \sign(\vect v^*\cdot \x + \tau^*) \neq \sign(\tilde{\vect v}^*\cdot \x + \tau^*) \Bigr] \le 4\eta k\,,
    \]
    where $\tilde{\vect v}^*$ is such that the $k$ largest (in absolute value) coefficients correspond to the indices $j_1,\dots,j_k$. The algorithm has time and sample complexity $\poly(d,1/\eta)$.
\end{thm}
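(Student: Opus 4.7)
The plan is to construct $\tilde{\vect v}^*$ by a sequence of sign-preserving magnitude swaps applied to $\vect v^*$ that relocate the $k$ largest magnitudes to positions $j_1,\dots,j_k$, and then to bound the disagreement with $f^*$ via a telescoping union bound, with each summand controlled by a difference of Chow parameters. First, I would apply Hoeffding's inequality plus a union bound over $i \in [d]$ to show that, with probability at least $0.99$, the empirical Chow estimates $\widehat{\chi}_i := \E_{(\x,y)\sim S}[y x_i]$ satisfy $|\widehat{\chi}_i - \chi_i| \le \varepsilon_0 = O(\eta)$ uniformly in $i$, where $\chi_i := \E_D[y x_i]$. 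Since $y - f^*(\x) \in \{0,\pm 2\}$ vanishes outside an event of probability $\le \eta$, we also have $|\chi_i - \widehat{f^*}(i)| \le 2\eta$, where $\widehat{f^*}(i) := \E[f^*(\x) x_i]$ denotes the $i$-th Chow parameter of $f^*$; thus $|\widehat{\chi}_i - \widehat{f^*}(i)| = O(\eta)$ for all $i$. I would also rely on two classical facts about halfspaces: the ordering of $\{|v^*_i|\}$ coincides with that of $\{|\widehat{f^*}(i)|\}$, and $\sign(\widehat{f^*}(i)) = \sign(v^*_i)$.

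Next I would prove a swap lemma: for a halfspace $f(\x) = \sign(\vect v \cdot \x + \tau)$ and the halfspace $f'$ obtained from $f$ by swapping the magnitudes $|v_i|,|v_j|$ at coordinates $i,j$ while preserving their signs, one has
\begin{equation*}
\Pr_{\x \sim \cube{d}}[f(\x) \ne f'(\x)] \;=\; \tfrac{1}{2}\Pr_{\x}\bigl[|a| \le \bigl||v_i|-|v_j|\bigr|\bigr] \;=\; \tfrac{1}{2} \bigl| |\widehat{f}(i)| - |\widehat{f}(j)| \bigr|,
\end{equation*}
where $a := \sum_{k\ne i,j} v_k x_k + \tau$. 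The first equality follows by direct case analysis on the four configurations of $(x_i,x_j)$, using the observation that $f$ and $f'$ agree whenever $\sign(v_i) x_i = \sign(v_j) x_j$. The second equality is obtained by computing $\widehat{f}(i)$ and $\widehat{f}(j)$ via the same case analysis. The sign-preserving magnitude swap (as opposed to a bare coordinate permutation of the halfspace) is crucial here, since the latter does not yield as clean an identity when $v_i,v_j$ have opposite signs.

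Finally, let $T_k$ be the true top-$k$ coordinates of $\{|v^*_i|\}_{i\in[d]}$, which by the ordering fact also coincides with the top-$k$ of $\{|\widehat{f^*}(i)|\}_{i\in[d]}$. Set $A = T_k \setminus H_k$, $B = H_k \setminus T_k$ (so $|A| = |B| \le k$), fix any bijection $\sigma: A \to B$, and construct $\tilde{\vect v}^*$ by applying the sign-preserving magnitude swap iteratively over the pairs $\{(i,\sigma(i))\}_{i \in A}$; the resulting $\tilde{\vect v}^*$ has its $k$ largest magnitudes at $H_k$, as required. A telescoping union bound combined with the swap lemma then gives $\Pr[f^* \ne \tilde f^*] \le \sum_{i \in A}\tfrac{1}{2}\bigl||\widehat{f^*}(i)| - |\widehat{f^*}(\sigma(i))|\bigr|$ up to additive corrections coming from intermediate halfspaces. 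For each pair, $i \in T_k$ and $\sigma(i) \notin T_k$ give $|\widehat{f^*}(i)| \ge |\widehat{f^*}(\sigma(i))|$, while $\sigma(i) \in H_k$ and $i \notin H_k$ give $|\widehat{\chi}_{\sigma(i)}| \ge |\widehat{\chi}_i|$; combined with $|\widehat{\chi}_p - \widehat{f^*}(p)| = O(\eta)$ from the first step, this forces $|\widehat{f^*}(i)| - |\widehat{f^*}(\sigma(i))| = O(\eta)$ per pair, and summing over $|A| \le k$ swaps yields $\Pr[f^* \ne \tilde f^*] \le 4\eta k$. The main obstacle is controlling the additive corrections in the telescoping bound: each single-swap probability is stated in terms of the \emph{current} intermediate halfspace's Chow parameters rather than those of $f^*$, but since the swaps only rearrange coefficients within the size-$\le 2k$ set $A \cup B$, the resulting perturbations of the remainder variable $a$ in the swap lemma affect only $O(k)$ coordinates and can be absorbed into the $O(\eta)$ slack.
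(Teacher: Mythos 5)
Your overall strategy is genuinely different from the paper's and, if the key step is justified correctly, is actually cleaner. The paper orders the indices $\widehat{i}_1,\dots,\widehat{i}_k$ and runs an iterative permutation procedure whose swaps are \emph{not} disjoint, which forces it to maintain a delicate "almost-sorted influences" invariant (\Cref{claim: swapping almost-sorted sequences of integers} plus Invariant C of \Cref{lem: almost-maximal influences are ok}) to control each swap. You instead pick an arbitrary bijection $\sigma : A \to B$ between $A = T_k \setminus H_k$ and $B = H_k \setminus T_k$ and apply \emph{disjoint} sign-preserving magnitude swaps, which lets you work with a set rather than an ordered list and avoids the invariant machinery entirely. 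Your single-swap identity $\Pr[f\neq f'] = \tfrac{1}{2}\bigl||\widehat{f}(i)|-|\widehat{f}(j)|\bigr|$ is also correct (and a factor of $2$ tighter than the paper's \Cref{claim: swapping weights vs influences}); it can be checked exactly as the paper checks its inequality, by conditioning on $x_\ell$, $\ell\neq i,j$.

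However, there is a genuine gap in your final paragraph. You identify the right obstacle — the swap lemma at step $m$ is stated in terms of the intermediate halfspace $f^{(m)}$, not $f^*$ — but your proposed resolution ("perturbations of $a$ affect only $O(k)$ coordinates and can be absorbed into the $O(\eta)$ slack") does not work as stated. The perturbed coordinates can carry arbitrarily large weights, so the change in $a^{(m)} - a^{(0)}$ need not be small in magnitude, and the event $\{|a| < t\}$ you care about has width $t = O(\eta)$. A naive bound $|\Inf_\ell[f^{(m)}] - \Inf_\ell[f^*]| \le 2\Pr[f^{(m)} \neq f^*]$ fed into the telescoping recursion gives $\gamma_{m+1} \le 3\gamma_m + O(\eta)$, i.e.\ an exponential-in-$k$ blowup, not $O(k\eta)$.

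The fix exists precisely because your swaps are disjoint, and it is an exact statement, not an approximation. For step $m$ acting on the pair $(i,\sigma(i))$, the remainder is $a^{(m)} = \sum_{\ell\neq i,\sigma(i)} v_\ell^{(m)} x_\ell + \tau$. The previous $m-1$ swaps are disjoint transpositions not touching $i$ or $\sigma(i)$, so the multiset of magnitudes $\{|v_\ell^{(m)}| : \ell\neq i,\sigma(i)\}$ equals $\{|v_\ell^*| : \ell\neq i,\sigma(i)\}$. Since $\x$ is uniform on $\cube{d}$, each $v_\ell x_\ell \stackrel{d}{=} |v_\ell| x_\ell$, and the distribution of a sum of independent such terms depends only on the multiset of magnitudes; hence $a^{(m)} \stackrel{d}{=} a^{(0)}$ exactly. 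The same reasoning gives $\Inf_i[f^{(m)}] = \Inf_i[f^*]$ and $\Inf_{\sigma(i)}[f^{(m)}] = \Inf_{\sigma(i)}[f^*]$ exactly. So each swap costs precisely $\tfrac{1}{2}\bigl||\widehat{f^*}(i)| - |\widehat{f^*}(\sigma(i))|\bigr| \le 2\eta$, there are no additive corrections at all, and summing over $|A| \le k$ pairs gives the stated bound. You should replace the "absorbed into $O(\eta)$ slack" sentence with this distributional-invariance argument; without it the telescoping step is not justified.
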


We give a more explicit construction for the vector $\tilde{\vect v}^*$ of \Cref{theorem:influential}. To this end, we will make use of the following definitions.

\begin{defn}
For any $\vect v\in\R^{d}$ and a pair of distinct $i,j\in[d]$, let
$\vect v_{i\leftrightarrow j}$ be defined as the vector one obtains
by swapping the absolute values of the weights $v_{i}$ and $v_{j}$
but keeping their signs. Formally:
\[
\left(\vect v_{i\leftrightarrow j}\right)_{\ell}=\begin{cases}
v_{\ell} & \text{if \ensuremath{\ell\notin\left\{ i,j\right\} ,}}\\
\sign(v_{i})\abs{v_{j}} & \text{if }\ell=i,\\
\sign(v_{j})\abs{v_{i}} & \text{if }\ell=j.
\end{cases}
\]
If $i=j$, then $\vect v_{i\leftrightarrow j}$ is defined to be equal
to $\vect v$. 
\end{defn}

The following definition extends the notion of $\vect v_{i\leftrightarrow j}$
to larger permutations on $[d]$:
\begin{defn}
\label{def: permuting abs values of weights}Let $\pi:\left[d\right]\rightarrow[d]$
be a permutation, then for a vector $\vect v$ we define a vector
$\vect v^{\pi}$ as follows 
\[
\left(\vect v^{\pi}\right)_{i}=\sign(v_{i})\abs{v_{\pi(i)}}.
\]
For a linear threshold function $g(\vect x)=\sign(\vect v\cdot  \vect x +\tau)$ we define
\[
g^\pi(\vect x)=\sign(\vect v^\pi\cdot \vect x +\tau)
\]
\end{defn}

To state our result, we will need some observations regarding weights of a halfspace and influences of the coordinates:
\begin{defn}
For a vector $\vect v\in\R^{d}$, we denote $\Inf_{i}\left[\vect v\right]$
the influence of the function $\sign\left(\vect v\cdot\vect x\right)$
defined as 
\[
\Inf_{i}\left[\vect v\right]=\Pr_{\vect x\sim\left\{ \pm1\right\} ^{d}}\left[\sign\left(\vect v\cdot\vect x\right)\neq\sign\left(\vect v\cdot\vect x^{\otimes i}\right)\right],
\]
where $\vect x^{\otimes i}$ denotes the vector $\vect x$ with the
$i$-th coordinate negated. More generally, for a function $g:\{\pm 1\}^d\rightarrow \{\pm 1\}$ we define
\[
\Inf_{i}\left[g\right]=\Pr_{\vect x\sim\left\{ \pm1\right\} ^{d}}\left[g(\vect x)\neq g\left(\vect x^{\otimes i}\right)\right].
\]
\end{defn}

We provide the following result, as we will show shortly, implies \Cref{theorem:influential}.

\begin{lem}
\label{lem: almost-maximal influences are ok}Let $\etatwo\in(0,1)$,
let $\vect v$ be a vector on $\R^{d}$, let $i_{1},i_{2},\cdots i_{k}\in[d]$
be the indices with the $k$ largest weights for the linear classifier $g(\vect x)=\sign\left(\vect v\cdot\vect x+\tau\right)$,
i.e.
\[
i_{j}\leftarrow\argmax_{i\in[d]\setminus\left\{ i_{1},i_{2},\cdots i_{j-1}\right\} }\left[\abs{v_{i}}\right],
\]
and let $\widehat{i}_{1},\widehat{i}_{2},\cdots\widehat{i}_{k}\in[d]$ satisfy
\[
\Inf_{\widehat{i}_{j}}\left[g\right]\geq\max_{i\in[d]\setminus\left\{ \widehat{i}_{1},\widehat{i}_{2},\cdots\widehat{i}_{j-1}\right\} }\Inf_{i}\left[g\right]-\etatwo,
\]
then there exists a permutation $\pi:\left[d\right]\rightarrow[d]$
such that:
\begin{enumerate}
\item $\pi$ permutes only indices inside $\left\{ i_{1},i_{2},\cdots i_{k}\right\} \cup \{ \widehat{i}_{1},\widehat{i}_{2},\cdots\widehat{i}_{k}\}$.
\item For all $j\in[k]$, it holds that $\pi\left(\widehat{i}_{j}\right)=i_{j}$
and therefore 
\begin{equation}
\left(\vect v^{\pi}\right)_{\widehat{i}_{j}}=\sign\left(v_{\widehat{i}_{j}}\right)\abs{v_{i_{j}}}.\label{eq: weights after permuting}
\end{equation}
\item We have $\Pr_{\vect x\sim\left\{ \pm1\right\} ^{d}}\left[g(\vect x)\neq g^\pi(\vect x)\right]\leq\etatwo k$.
\end{enumerate}
\end{lem}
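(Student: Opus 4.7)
The plan is to construct $\pi$ as a product of at most $k$ transpositions via an ``insertion-sort'' procedure, and to bound the disagreement incurred by each transposition using the relation between coordinate influences and the disagreement of halfspaces whose weight vectors differ only by a swap of magnitudes. Initialize $\vect v^{(0)}=\vect v$. For $j=1,2,\ldots,k$ in order, let $p_{j}$ be the position in $\vect v^{(j-1)}$ holding the absolute weight $\abs{v_{i_{j}}}$ (starting from $p_{j}=i_{j}$ at step $0$ and updated according to prior swaps). If $p_{j}=\widehat{i}_{j}$, set $\vect v^{(j)}=\vect v^{(j-1)}$; otherwise set $\vect v^{(j)}=\bigl(\vect v^{(j-1)}\bigr)_{\widehat{i}_{j}\leftrightarrow p_{j}}$ and record the transposition $\tau_{j}=(\widehat{i}_{j},p_{j})$. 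Define $\pi$ to be the product of the recorded transpositions, so that $\vect v^{(k)}=\vect v^{\pi}$.

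By induction on $j$ one checks the invariant that, after step $j$, the positions $\widehat{i}_{1},\ldots,\widehat{i}_{j}$ hold the absolute weights $\abs{v_{i_{1}}},\ldots,\abs{v_{i_{j}}}$ respectively, and are never touched by any later step (since any later $p_{\ell}$ tracks a distinct weight $\abs{v_{i_{\ell}}}$). This implies $\pi(\widehat{i}_{j})=i_{j}$ for every $j$, which is condition 2; condition 1 holds because every swap involves $\widehat{i}_{j}\in\{\widehat{i}_{1},\ldots,\widehat{i}_{k}\}$ and $p_{j}\in\{i_{1},\ldots,i_{k}\}\cup\{\widehat{i}_{1},\ldots,\widehat{i}_{k}\}$ (as $p_{j}$ starts in the former set and any prior swap can only move it into the latter). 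For condition 3, I use the key fact (mentioned earlier in the technical overview) that for any halfspace $h$ and indices $i,j$,
\[
\Pr_{\vect x\sim\cube{d}}\bigl[h(\vect x)\neq h_{i\leftrightarrow j}(\vect x)\bigr]\leq\bigl|\Inf_{i}[h]-\Inf_{j}[h]\bigr|,
\]
and telescope along $g=g^{(0)},g^{(1)},\ldots,g^{(k)}=g^{\pi}$ to obtain $\Pr[g\neq g^{\pi}]\leq\sum_{j=1}^{k}\bigl|\Inf_{\widehat{i}_{j}}[g^{(j-1)}]-\Inf_{p_{j}}[g^{(j-1)}]\bigr|$.

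The main step is to bound each term of this sum by $\etatwo$. The key structural point is that the influence of a coordinate in a halfspace depends only on its absolute weight and on the multiset of the remaining absolute weights (together with $\tau$); since swaps preserve the overall multiset, $\Inf_{\ell}[g^{(j-1)}]=\Inf_{\ell'}[g]$ whenever $\bigl|v^{(j-1)}_{\ell}\bigr|=\abs{v_{\ell'}}$. By the invariant and the definition of $p_{j}$, $\bigl|v^{(j-1)}_{p_{j}}\bigr|=\abs{v_{i_{j}}}$, so $\Inf_{p_{j}}[g^{(j-1)}]=\Inf_{i_{j}}[g]=:\rho_{j}$. A short induction (tracing the swap history at the position $\widehat{i}_{j}$) shows that the weight at $\widehat{i}_{j}$ in $\vect v^{(j-1)}$ equals $\abs{v_{\widehat{i}_{j_{0}}}}$ for some $j_{0}\leq j$: if $\widehat{i}_{j}$ has never been the target of an earlier swap take $j_{0}=j$, otherwise trace to the latest step $\ell<j$ with $p_{\ell}=\widehat{i}_{j}$ and invoke the inductive hypothesis at step $\ell$. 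Consequently, $\Inf_{\widehat{i}_{j}}[g^{(j-1)}]=\Inf_{\widehat{i}_{j_{0}}}[g]$. Since the top $j-1$ absolute weights already sit at $\widehat{i}_{1},\ldots,\widehat{i}_{j-1}$, the weight at $\widehat{i}_{j}$ is at most $\abs{v_{i_{j}}}$, so by monotonicity of influence in the coordinate's weight $\Inf_{\widehat{i}_{j}}[g^{(j-1)}]\leq\rho_{j}$; the greedy hypothesis at step $j_{0}$ gives $\Inf_{\widehat{i}_{j_{0}}}[g]\geq\rho_{j_{0}}-\etatwo\geq\rho_{j}-\etatwo$ (using $j_{0}\leq j$, which relies on the fact that $\{i_{1},\ldots,i_{j_{0}}\}\setminus\{\widehat{i}_{1},\ldots,\widehat{i}_{j_{0}-1}\}$ is nonempty and thus contains some coordinate of influence at least $\rho_{j_{0}}$). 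Each term of the telescoping sum is therefore at most $\etatwo$, so $\Pr[g\neq g^{\pi}]\leq\etatwo k$. The most delicate piece is the trace-back argument pinning down $j_{0}$; it is essentially bookkeeping, and ties in the absolute weights can be handled by fixing an arbitrary tie-breaking convention throughout.
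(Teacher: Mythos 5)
Your proof is correct, and the transposition-decomposition and telescoping bound are the same as the paper's; the difference is in how the per-swap cost is shown to be at most $\etatwo$. The paper maintains an explicit invariant (its Invariant~C) that the intermediate halfspaces $g^{(j)}$ remain $\etatwo$-nearly-sorted in influence, proved preserved under each swap via a separate combinatorial lemma about swaps in almost-sorted sequences (\Cref{claim: swapping almost-sorted sequences of integers}). You sidestep that lemma by making explicit the structural fact that a coordinate's influence depends only on its own magnitude, the multiset of the remaining magnitudes, and $\tau$; this lets you identify every influence of $g^{(j-1)}$ with an influence of the original $g$, trace the weight currently at $\widehat{i}_j$ back to an original index $\widehat{i}_{j_0}$ with $j_0\le j$, and then invoke the greedy hypothesis on $g$ directly at step $j_0$. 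The paper also uses the same magnitude-multiset fact, but only once (to get $\Inf_{a_j}[g^{j-1}]=\Inf_{i_j}[g]$ from its Invariant~A), not as the organizing principle. Your route trades the near-sortedness lemma for slightly fussier trace-back bookkeeping; both are valid, and the ideas overlap heavily. One wording nit: in your parenthetical ``(using $j_0\le j$, which relies on the fact that $\{i_1,\dots,i_{j_0}\}\setminus\{\widehat{i}_1,\dots,\widehat{i}_{j_0-1}\}$ is nonempty\dots)'', the nonemptiness fact justifies $\Inf_{\widehat{i}_{j_0}}[g]\ge\rho_{j_0}-\etatwo$, while $j_0\le j$ comes from the trace-back; the two justifications are independent, even though the surrounding inequality chain is correct.
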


Before proving \Cref{lem: almost-maximal influences are ok}, we state the following observation regarding influences of a
halfspace:
\begin{claim}
For any $\vect v\in\R^{d}$, $\tau\in \R$ and $i,j\in[d]$, if it is the case that
$\abs{v_{i}}\geq\abs{v_{j}}$ then the halfspace $f=\sign(\vect v \cdot \vect x +\tau)$ satisfies
\[
\Inf_{i}\left[f\right]\geq\Inf_{j}\left[ f\right]
\]
\label{claim: swapping weights vs influences}
Furthermore, it holds that 
\[
\Pr_{\vect x\sim\left\{ \pm1\right\} ^{d}}\left[\sign\left(\vect v\cdot\vect x+\tau\right)\neq\sign\left(\vect v_{i\leftrightarrow j}\cdot\vect x+\tau\right)\right]\leq\abs{\Inf_{i}\left[ f\right]-\Inf_{j}\left[ f\right]}.
\]
\end{claim}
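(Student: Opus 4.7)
My plan is to reduce both parts of the claim to simple statements about the remainder random variable $R := \sum_{\ell \neq i,j} v_\ell x_\ell + \tau$, which is independent of $x_i$ and $x_j$, and then compare the two expressions via elementary interval arithmetic. To make the weights symmetric I would first perform the sign-absorbing substitution $y_i := \sign(v_i)\,x_i$ and $y_j := \sign(v_j)\,x_j$; both remain uniform on $\cube{}$ and independent of $R$. Writing $u := \abs{v_i}$ and $w := \abs{v_j}$ (with $u \ge w$ by hypothesis), this rewrites
\[
\vect v \cdot \vect x + \tau = u\, y_i + w\, y_j + R, \qquad \vect v_{i\leftrightarrow j}\cdot \vect x + \tau = w\, y_i + u\, y_j + R.
\]

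For the disagreement bound, I would condition on $(y_i, y_j)$. When $y_i = y_j$ both expressions equal $\pm(u+w) + R$, so the two halfspaces agree. When $y_i \neq y_j$ the two sides become $\pm(u-w) + R$ and $\mp(u-w) + R$, which have opposite signs iff $\abs{R} < u - w$. Averaging over the $\tfrac{1}{2}$-probability event $\{y_i \neq y_j\}$ then gives
\[
\Pr\bigl[\sign(\vect v \cdot \vect x + \tau) \neq \sign(\vect v_{i\leftrightarrow j}\cdot \vect x + \tau)\bigr] \;=\; \tfrac{1}{2}\Pr\bigl[\abs{R} < u - w\bigr].
\]

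For the influences I would use the standard identity $\Inf_i[f] = \Pr\bigl[\abs{R + v_j x_j} < \abs{v_i}\bigr]$: the sign of $\vect v \cdot \vect x + \tau$ flips under $x_i \mapsto -x_i$ exactly when the rest of the sum is within $\abs{v_i}$ of zero. Conditioning on $y_j$ expresses $\Inf_i[f]$ as the average of $\Pr[R \in (-(u+w),\, u-w)]$ and $\Pr[R \in (-(u-w),\, u+w)]$; symmetrically, conditioning on $y_i$ expresses $\Inf_j[f]$ as the average of $\Pr[R \in (-(u+w),\, -(u-w))]$ and $\Pr[R \in (u-w,\, u+w)]$. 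Since $u \ge w$, each of the two intervals in $\Inf_j$ is contained in its counterpart in $\Inf_i$, and the two set-differences cover $(-(u-w),\, u-w)$ twice. Subtracting then yields
\[
\Inf_i[f] - \Inf_j[f] \;\ge\; \Pr\bigl[\abs{R} < u - w\bigr] \;=\; 2\,\Pr[\text{disagree}] \;\ge\; 0,
\]
which simultaneously proves both $\Inf_i[f] \ge \Inf_j[f]$ and the desired bound $\Pr[\text{disagree}] \le \abs{\Inf_i[f] - \Inf_j[f]}$.

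The main (and really only) obstacle is the bookkeeping of open versus closed interval endpoints in the four-interval decomposition, together with the tie-breaking convention for $\sign(0)$. These affect only atoms of $R$ at $\pm(u-w)$, and tracking the signs carefully shows that these atoms contribute nonnegatively to $\Inf_i - \Inf_j$ beyond the $\Pr[\abs{R} < u-w]$ term, so the inequality is preserved.
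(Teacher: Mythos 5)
Your proposal is correct, and it is a genuinely different route from the paper's. The paper proves the claim by conditioning on the coordinates \emph{other} than $i,j$, reducing to the case $d=2$, and then enumerating the (at most five) two-variable linear threshold functions that $\sign(v_1 x_1 + v_2 x_2 + \tau)$ with $v_1 \ge v_2 \ge 0$ can represent, verifying the two inequalities by brute force and then averaging back up. You instead condition on $(x_i, x_j)$ (after the sign-absorbing change of variables) and carry out interval arithmetic on the distribution of the remainder $R$, expressing each of $\Inf_i$, $\Inf_j$, and the disagreement probability as probabilities of $R$ landing in explicit intervals determined by $u = |v_i|$ and $w = |v_j|$. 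Both are elementary, but your approach is arguably cleaner and, once a single consistent half-open convention is fixed for $\sign(0)$, it does not merely give the inequality: the set differences work out exactly so that
\[
\Inf_i[f] - \Inf_j[f] \;=\; \Pr\bigl[R \in [-(u-w),\, u-w)\bigr] \;=\; 2\,\Pr\bigl[\sign(\vect v \cdot \vect x + \tau) \neq \sign(\vect v_{i\leftrightarrow j}\cdot \vect x + \tau)\bigr],
\]
an identity that is a factor of two stronger than the paper's bound. The boundary bookkeeping you flagged as the ``main obstacle'' is in fact a non-issue: with the half-open flip intervals $[-u,u)$ (respectively $[-w,w)$) forced by the sign convention, the four intervals line up so that the two set-differences each equal $[-(u-w), u-w)$ exactly, with no leftover atoms. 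So your proof is sound as written and yields a sharper conclusion than the paper's; the one thing worth making explicit in a final write-up would be the degenerate case $v_j = 0$ (where $\sign(v_j)$ is conventional), which is trivial since then either $v_i = 0$ too or $\vect v_{i\leftrightarrow j}$ just moves the weight $|v_i|$ onto coordinate $j$, and the interval computation goes through with $w = 0$.
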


\begin{proof}
Without loss of generality, assume that $i=1$, $j=2$ and $v_1 \geq v_2 \geq 0$. 

First, we consider the case when $d=2$. By inspecting all functions $\{\pm 1\}^2\rightarrow \{\pm 1\}$, we see that $f(x_1,x_2)=\sign(v_1 x_1+v_2 x_2+\tau)$ is one of the following functions:
\begin{itemize}
    \item $f_1$ that equals to  $-1$ on all of $\{\pm 1\}^2$.
    \item $f_2$ that equals to  $+1$ on $\{+1,+1\}$ and equals to  $-1$ on all other points of $\{\pm 1\}^2$.
    \item $f_3$ that equals to  $+1$ on $\{+1,+1\}$ and $\{+1,-1\}$, and equals to  $-1$ on all other points of $\{\pm 1\}^2$. 
     \item $f_4$ that equals to  $+1$ on $\{+1,+1\}$, $\{+1,-1\}$ and $\{-1,+1\}$, and equals to  $-1$ on $\{-1,-1\}$.
     \item $f_5$ that equals to  $-1$ on $\{+1,+1\}$, $\{+1,-1\}$ and $\{-1,+1\}$, and equals to  $-1$ on all of $\{\pm 1\}^2$.
\end{itemize}
Note that $\sign(v_1 x_1+v_2 x_2+\tau)$ with $v_1 \geq v_2 \geq 0$ cannot represent  the function $g$ that equals to  $+1$ on $\{+1,+1\}$ and $\{-1,+1\}$, and equals to  $-1$ on all other points of $\{\pm 1\}^2$. This is the case because $g(-1,+1)>g(+1,-1)$ whereas $\sign(-v_1 +v_2+\tau )\leq \sign(v_1 -v_2+\tau )$ because $v_1\geq v_2\geq 0$. Thus, $f(x_1,x_2)$ has to be one of the five functions above.

We see that for each function $f$ among the functions in $\{f_1,f_2,f_3,f_4,f_5\}$ we have 
\begin{equation}
\label{eq: influence ineq for two-d functions}
\Inf_{1}[f]=
\Pr_{\vect x\sim\left\{ \pm1\right\} ^{2}}\left[f(\vect x) \neq f(\vect x^{\otimes 1})\right]
\geq 
\Pr_{\vect x\sim\left\{ \pm1\right\} ^{2}}\left[f(\vect x) \neq f(\vect x^{\otimes 2})\right]
=\Inf_{2}[f]
,
\end{equation}
and, again by considering all functions in $\{f_1,f_2,f_3,f_4,f_5\}$ that $f(x)=\sign(v_1x_2+v_2x_2+\tau)$ can represent, we check that for each of them
it holds that 
\begin{align}
\label{eq: flipping weights for 2-d}
\Pr_{\vect x\sim\left\{ \pm1\right\} ^{2}}\left[\sign\left(v_1x_1+v_2x_2+\tau\right)\neq\sign\left(v_2x_1+v_1x_2+\tau\right)\right] &=\Pr_{\vect x\sim\left\{ \pm1\right\} ^{2}}\left[f(x_1,x_2)\neq f(x_2,x_1)\right] \notag\\
&\leq
\Inf_{1}\left[ f\right]-\Inf_{2}\left[ f\right].
\end{align}
finishing the proof for $d=2$.

Now we consider the case of general $d$ (but still, without loss of generality, assume $i=1$, $j=2$ and $v_1\geq v_2 \geq 0$). Let $f^{x_3,\cdots x_d}$ be the
restriction of $f$ with specific settings of $x_3,\cdots x_d$ (while keeping $x_1$ and $x_2$ free). We have
\[
\Inf_{1}[f]
=\E_{x_3,\cdots x_d\sim\{\pm 1\}^{d-2}}
[
\Inf_{1}\left[f^{x_3,\cdots x_d}\right]
]
\]
\[
\Inf_{2}[f]
=\E_{x_3,\cdots x_d\sim\{\pm 1\}^{d-2}}
[
\Inf_{2}\left[f^{x_3,\cdots x_d}\right]
]
\]
Applying Equation \ref{eq: influence ineq for two-d functions} to each two-dimensional linear threshold function $f^{x_3,\cdots x_d}$ we get $\Inf_{1}[f]\geq \Inf_{2}[f]$ as required.

Combining the two equalities above, and using Equation \ref{eq: flipping weights for 2-d}
we further have 
\begin{align*}
\Inf_{1}[f]-\Inf_{2}[f]
&=\E_{x_3,\cdots x_d\sim\{\pm 1\}^{d-2}}
[
\Inf_{1}\left[f^{x_3,\cdots x_d}
\right]-
\Inf_{2}\left[f^{x_3,\cdots x_d}\right]
]
\\
&\ge \E_{x_3,\cdots x_d\sim\{\pm 1\}^{d-2}}
\left[
\Pr_{\vect x\sim\left\{ \pm1\right\} ^{2}}\left[f^{x_3,\cdots x_d}(x_1,x_2)\neq f^{x_3,\cdots x_d}(x_2,x_1)\right]
\right]\\
&=\Pr_{\vect x\sim\left\{ \pm1\right\} ^{d}}\left[\sign\left(\vect v\cdot\vect x+\tau\right)\neq\sign\left(\vect v_{i\leftrightarrow j}\cdot\vect x+\tau\right)\right],
\end{align*}
finishing the proof of the claim.
\end{proof}
The claim above allows us to use an alternative characterization of
$i_{1},i_{2},\cdots i_{k}$ in Lemma \ref{lem: almost-maximal influences are ok}
\[
i_{j}=\argmax_{i\in[d]\setminus\left\{ i_{1},i_{2},\cdots i_{j-1}\right\} }\left[\Inf_{i}\left[ g\right]\right].
\]

To prove Lemma \ref{lem: almost-maximal influences are ok},
we also will need the following basic observation:
\begin{claim}
\label{claim: swapping almost-sorted sequences of integers}Suppose
we are given a sequence of positive numbers $\left\{ b_{1},b_{2},\cdots,b_{d}\right\} $
such that for every $\ell\in[k]$ we have 
\begin{equation}
b_{\ell}\geq\max_{i\in[d]\setminus\left\{ 1,2,\cdots,\ell-1\right\} }\left[b_{i}\right]-\etatwo.\label{eq: invariant C inductive-1}
\end{equation}
suppose we obtain a new sequence $\left\{ b_{1}',b_{2}',\cdots,b_{d}'\right\} $
by swapping a pair $b_{j_{1}}$ and \textbf{$b_{j_{2}}$ }for which
$j_{2}>j_{1}$ and $b_{j_{2}}\geq b_{j_{1}}$ (and keeping all other
values the same). Then for every $\ell$ in $\ell\in[k]$ we again
have 
\begin{equation}
b_{\ell}'\geq\max_{i\in[d]\setminus\left\{ 1,2,\cdots,\ell-1\right\} }\left[b_{i}'\right]-\etatwo.\label{eq: invariant C inductive-1-1}
\end{equation}
\end{claim}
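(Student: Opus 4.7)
The plan is to proceed by case analysis on the position of $\ell \in [k]$ relative to the swapped indices $j_1 < j_2$. For each case I will compare (i) $b_\ell'$ to $b_\ell$ and (ii) the suffix maximum $M_\ell' := \max_{i \geq \ell} b_i'$ to $M_\ell := \max_{i \geq \ell} b_i$, and then chain these comparisons with the hypothesis $b_\ell \geq M_\ell - \etatwo$ to deduce the new invariant $b_\ell' \geq M_\ell' - \etatwo$.

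If $\ell \leq j_1$, then the swap lies entirely within the suffix $\{\ell, \ldots, d\}$, so the suffix multiset is unchanged and $M_\ell' = M_\ell$; moreover $b_\ell' \geq b_\ell$, with equality unless $\ell = j_1$, in which case $b_\ell' = b_{j_2} \geq b_{j_1} = b_\ell$ by the assumption on the swap. If $j_1 < \ell < j_2$, then $b_\ell' = b_\ell$ and the multiset $\{b_i'\}_{i \geq \ell}$ differs from $\{b_i\}_{i \geq \ell}$ only in that $b_{j_2}$ is replaced by the smaller value $b_{j_1}$, so $M_\ell' \leq M_\ell$. If $\ell > j_2$, then the swap does not touch the suffix $\{\ell,\ldots,d\}$ nor position $\ell$ itself, so both $b_\ell'$ and $M_\ell'$ are unchanged. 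In each of these three cases the conclusion follows immediately from the original hypothesis.

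The one subtle case is $\ell = j_2$, which only needs to be handled when $j_2 \leq k$. Here $b_\ell' = b_{j_1}$, while $M_\ell' = \max\bigl(b_{j_1}, \max_{i > j_2} b_i\bigr)$. Against the first term of this maximum, the bound $b_\ell' \geq b_{j_1} - \etatwo$ is trivial. To handle the second term, I will invoke the original hypothesis at the index $j_1$, which is valid because $j_1 < j_2 \leq k$ guarantees $j_1 \in [k]$; it yields $b_{j_1} \geq \max_{i \geq j_1} b_i - \etatwo \geq \max_{i > j_2} b_i - \etatwo$, giving the remaining bound. This observation --- that the almost-maximality guarantee at the earlier index $j_1$ transfers the needed slack to position $j_2$ after the swap --- is the main (and only) step requiring any care; the rest reduces to bookkeeping of which multisets and which positional values are affected by the transposition.
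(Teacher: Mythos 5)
Your proof is correct and follows essentially the same case analysis as the paper (split on the position of $\ell$ relative to $j_1$ and $j_2$, with the only nontrivial case $\ell = j_2$ handled by invoking the hypothesis at index $j_1$). The only minor differences are organizational: you merge $\ell < j_1$ and $\ell = j_1$ into one case via a clean multiset argument, and you explicitly note that $j_1 \in [k]$ follows from $j_1 < j_2 \leq k$, a point the paper leaves implicit.
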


\begin{proof}
The claim follows by considering the following cases for $\ell\in[k]$
:
\begin{itemize}
\item If $\ell>j_{2}$ or $\ell<j_{1}$, both sides of Equation \ref{eq: invariant C inductive}
are the same as for Equation \ref{eq: invariant C inductive-1}.
\item If $\ell=j_{1}$, we have
\[
b_{\ell}'=b_{j_{1}}'=b_{j_{2}}\geq b_{j_{1}}\geq\max_{i\in[d]\setminus\left\{ 1,2,\cdots,\ell-1\right\} }\left[b_{i}\right]-\etatwo\geq\max_{i\in[d]\setminus\left\{ 1,2,\cdots,\ell\right\} }\left[b_{i}'\right]-\etatwo,
\]
which implies 
\[
b_{\ell}'\geq\max_{i\in[d]\setminus\left\{ 1,2,\cdots,\ell-1\right\} }\left[b_{i}'\right]-\etatwo.
\]
\item If $\ell\in\left(j_{1},j_{2}\right)$ we have 
\[
b_{\ell}'=b_{\ell}\geq\max_{i\in[d]\setminus\left\{ 1,2,\cdots,\ell-1\right\} }\left[b_{i}\right]-\etatwo\geq\max_{i\in[d]\setminus\left\{ 1,2,\cdots,\ell-1\right\} }\left[b_{i}'\right]-\etatwo.
\]
\item If $\ell=j_{2}$, we have 
\[
b_{j_{2}}'=b_{j_{1}}\geq\max_{i\in[d]\setminus\left\{ 1,2,\cdots,j_{1}-1\right\} }\left[b_{i}\right]-\etatwo\geq\max_{i\in[d]\setminus\left\{ 1,2,\cdots,j_{2}-1\right\} }\left[b_{i}\right]-\etatwo\geq\max_{i\in[d]\setminus\left\{ 1,2,\cdots,j_{2}-1\right\} }\left[b_{i}'\right]-\etatwo.
\]
\end{itemize}
We have concluded the proof of \Cref{claim: swapping almost-sorted sequences of integers}.
\end{proof}
Now, we are ready to to prove \Cref{lem: almost-maximal influences are ok}.
\begin{proof}[Proof of Lemma \ref{lem: almost-maximal influences are ok}.]
Without loss of generality, assume that $\widehat{i}_{j}=j$. We will
use the procedure \ref{algorithm:permutation-procedure} to form $\pi$.

\begin{algorithm}[ht]
\caption{Permutation Procedure}\label{algorithm:permutation-procedure}
    \KwIn{Vector $\vect v\in \R^d$, set of indices $\{i_1,\dots,i_k\}\subseteq [d]$}
\KwOut{Vectors $\vect v^1,\dots, \vect v^k\in \R^d$}
\BlankLine

Set $\vect v^0 \leftarrow \vect v$\;
Set $i_\ell^0 \leftarrow i_\ell$ for all $\ell \in [k]$; \Comment{These values keep track of where we mapped indices in $\left\{ i_{1},\cdots,i_{k}\right\} $}\\

\BlankLine

\For{$j = 1,2,\dots, k$}{
Set $a_{j}\leftarrow i_{j}^{j-1}$; \Comment{This is where the permutation which
we built so far maps $i_{j}$}\\
Set $\vect v^{j}\leftarrow\left(\vect v^{j-1}\right)_{a_{j}\leftrightarrow j}$\;
Set $i_{j}^{j}\leftarrow j$; \textbf{If} $j=i_{\ell}^{j-1}$ for some $\ell$ \textbf{then} set $i_{\ell}^{j}\leftarrow a_{j}$; \Comment{Keeping track of the swap}\\
For all $\ell\in[k]\setminus\left\{ a_{j},j\right\} $, set $i_{\ell}^{j}\leftarrow i_{\ell}^{j-1}$\;
}
\end{algorithm}
Procedure \ref{algorithm:permutation-procedure} implicitly defines a sequence of permutations
$\pi^{0},\pi^{1},\cdots,\pi^{k}$ for which $\vect v^{j}=\vect v^{\pi^{j}}.$ We also have a sequence of linear classifiers $g^0, g^1, \cdots g^k$ for which
\[
g^j( \vect x )=\sign(\vect v^j \cdot \vect x +\tau)=g^{\pi^j}(\vect x).
\]
From the construction it is immediate that the procedure only permutes
indices inside $\left\{ i_{1},i_{2},\cdots i_{k}\right\} \cup\left\{ 1,2,\cdots,k\right\} ,$
and does not permute any other indices.

We claim that the following invariants hold throughout the process
above:
\begin{itemize}
\item Invariant A: for every step $j$ and index $\ell\in[k]$ we have $i_{\ell}^{j}=\pi^{j}(\ell)$.
\item Invariant B: At every step $j$, for $\ell\in\left\{ 1,\cdots,j\right\} $
we have $i_{\ell}^{j}=\ell$, it is the case that
\[
\left(\vect v^{j}\right)_{\ell}=\sign\left(v_{\ell}\right)\abs{v_{i_{\ell}}},
\]
and additionally for $\ell\in\left\{ j+1,\cdots,k\right\} $ we have
$i_{\ell}^{j}>\ell$.
\item Invariant C: At every step $j$, for every $\ell\in[k]$ we have 
\begin{equation}
\Inf_{\ell}\left[g^{j}\right]\geq\max_{i\in[d]\setminus\left\{ 1,2,\cdots,\ell-1\right\} }\left[\Inf_{i}\left[g^{j}\right]\right]-\etatwo.\label{eq: invariant C}
\end{equation}
\end{itemize}
Invariants A and B follow directly by inspecting the procedure and
induction on $j$. Invariant B in particular implies that for all
$j\in[k]$, it holds that $\pi^{j}\left(j\right)=i_{j}$ as required
in part (2) of the claim (then Equation \ref{eq: weights after permuting}
follows by Definition \ref{def: permuting abs values of weights}). 

Invariant C also can be shown by induction as follows. First of all,
the base case $j=0$ holds by the premise of the claim. At every step
$j$ we obtain $\vect v^{j}$ as $\left(\vect v^{j-1}\right)_{a_{j}\leftrightarrow j}$.
Invariant A allows us to conclude that $\Inf_{a}\left[g^{j-1}\right]=\Inf_{i_{j}}\left[\vect g\right]$
and the inductive assumption implies that for every $\ell\in[k]$
we have 
\begin{equation}
\Inf_{\ell}\left[g^{j-1}\right]\geq\max_{i\in[d]\setminus\left\{ 1,2,\cdots,\ell-1\right\} }\left[\Inf_{i}\left[g^{j-1}\right]\right]-\etatwo.\label{eq: invariant C inductive}
\end{equation}
Since at step $j$ we have $a_{j}=i_{j}^{j-1},$ invariants A and
B (together with the definition of $i_{j}$ in the claim statement)
imply that 
\begin{equation}
\Inf_{a_{j}}\left[g^{j-1}\right]=\Inf_{i_{j}}\left[g\right]=\max_{i\in[d]\setminus\left\{ 1,2,\cdots,j-1\right\} }\left[\Inf_{i}\left[g^{j-1}\right]\right],\label{eq: a-j has largest influence in the tail}
\end{equation}
so in particular since Invariant B implies that $a_{j}\geq j$ we
have
\[
\Inf_{a_{j}}\left[g^{j-1}\right]\geq\Inf_{j}\left[g^{j-1}\right].
\]
The three observations that (a) $a_{j}\geq j$ (b) $\Inf_{a_{j}}\left[g^{j-1}\right]\geq\Inf_{j}\left[g^{j-1}\right]$
and (c) $\vect v^{j}\leftarrow\left(\vect v^{j-1}\right)_{a_{j}\leftrightarrow j}$
allow us to use Claim \ref{claim: swapping almost-sorted sequences of integers}
to conclude that Equation \ref{eq: invariant C} holds (note that
for the case $a_{j}=j$ this follows trivially because then $\vect v^{j}=\vect v^{j-1}$). This establishes Invariant C.

Claim \ref{claim: swapping weights vs influences} implies that for
every $j$ in $[k]$ we have
\[
\Pr_{\vect x\sim\left\{ \pm1\right\} ^{d}}\left[
g^{j-1}\left(\vect x\right)
\neq
g^{j}\left(\vect x\right)
\right]\leq\abs{\Inf_{j}\left[g^{j-1}\right]-\Inf_{a_{j}}\left[g^{j-1}\right]}.
\]
By Equation \ref{eq: a-j has largest influence in the tail}: $\Inf_{a_{j}}\left[\vect v^{j-1}\right]\geq\Inf_{j}\left[\vect v^{j-1}\right]$. By Invariant C we have $\Inf_{j}\left[g^{j-1}\right]\geq\Inf_{a_{j}}\left[g^{j-1}\right]-\etatwo$,
which implies that 
\[
\Pr_{\vect x\sim\left\{ \pm1\right\} ^{d}}\left[g^{j-1}\left(\vect x\right)
\neq
g^{j}\left(\vect x\right)\right]\leq\etatwo.
\]
Summing over $j$, we obtain:
\begin{align*}
\Pr_{\vect x\sim\left\{ \pm1\right\} ^{d}}
\left[g\left(\vect x\right)\neq  g^{\pi}\left(\vect x\right)\right]
&=
\Pr_{\vect x\sim\left\{ \pm1\right\} ^{d}}
\left[g^{\pi^0}\left(\vect x\right)\neq  g^{\pi^k}\left(\vect x\right)\right]\\
&\le
\sum_{j=1}^{k}\Pr_{\vect x\sim\left\{ \pm1\right\} ^{d}}\left[g^{j-1}\left(\vect x \right)\neq g^{j}\left(\vect x \right)\right] \\
&\leq k\etatwo,
\end{align*}
which concludes the proof of \Cref{lem: almost-maximal influences are ok}.
\end{proof}

We are now ready to prove \Cref{theorem:influential}, by using \Cref{lem: almost-maximal influences are ok}.

\begin{proof}[Proof of \Cref{theorem:influential}]
    By the premise of the theorem, we have that
for all $i$
\[
\hatI_{i}:= \E_{(\x,y)\sim S}[yx_i]=\E_{(\vect x,y)\sim D}[f^*(\vect x)x_{i}]\pm2\etatwo\,.
\]
The following claim is a standard result that can be found, for example, in \cite{o2014analysis}:

\begin{claim}
\label{claim: influences as chow coefficients}For any linear threshold function $g(\vect x)=\sign(\vect v\cdot \x+\tau)$
and $i\in[d]$, it is the case that 
\[
\E_{\vect x\sim\left\{ \pm1\right\} ^{d}}\left[x_{i}\cdot g(\vect x)\right]=\sign\left(v_{i}\right)\Inf_{i}\left[ g\right].
\]
\end{claim}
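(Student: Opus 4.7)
My plan is to use the standard "flip one coordinate" symmetrization technique together with the monotonicity of a halfspace in each of its coordinate directions.

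First, I would rewrite the expectation using the symmetry of the uniform distribution on $\cube{d}$ under negating the $i$-th coordinate. Writing $\x^{\otimes i}$ for $\x$ with the $i$-th coordinate flipped and observing that the map $\x \mapsto \x^{\otimes i}$ is a measure-preserving involution, I get
\[
\E_{\x}\left[x_i \cdot g(\x)\right] \;=\; \tfrac{1}{2}\,\E_{\x}\left[x_i\bigl(g(\x) - g(\x^{\otimes i})\bigr)\right].
\]
The integrand vanishes on the non-influential set $\{\x : g(\x) = g(\x^{\otimes i})\}$, so the expectation is supported entirely on the influential set, and its magnitude will be bounded in terms of $\Inf_i[g]$.

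Next, the key step: determining the sign. Because $g(\x) = \sign(\vect v \cdot \x + \tau)$ is monotone in $x_i$ in the direction of $\sign(v_i)$, whenever $g$ actually flips under a $x_i$-flip, the larger value of $g$ must occur at the side where $x_i$ has sign $\sign(v_i)$. Concretely, on the influential set we have $g(\x) = \sign(v_i)\cdot x_i$, and therefore $x_i \cdot g(\x) = \sign(v_i)$ pointwise on this set (for $v_i \neq 0$; the case $v_i = 0$ is trivial since then both sides of the claim are zero using the convention $\sign(0)=0$). Combining this with the previous display,
\[
\E_{\x}[x_i g(\x)] \;=\; \sign(v_i)\cdot \Pr_{\x}\bigl[g(\x)\neq g(\x^{\otimes i})\bigr] \;=\; \sign(v_i)\cdot \Inf_i[g],
\]
as claimed.

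The argument is short and has no real obstacle; the only thing to be careful about is the sign determination on the influential set, which I would formalize by a two-case check ($v_i>0$ vs.\ $v_i<0$): in either case, moving $x_i$ in the direction of $\sign(v_i)$ weakly increases $\vect v\cdot \x + \tau$, so it can only change $g$ from $-1$ to $+1$, forcing $g(\x) = \sign(v_i)\cdot x_i$ on every $\x$ for which $g$ is sensitive to $x_i$.
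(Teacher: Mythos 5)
Your proof is correct and is the standard textbook derivation (flip-symmetrization over $\x\mapsto\x^{\otimes i}$ followed by the observation that, on the sensitive set, unateness of a halfspace in the direction $\sign(v_i)e_i$ forces $g(\x)=\sign(v_i)\,x_i$ pointwise). The paper gives no proof of this claim and simply cites \cite{o2014analysis}, so your argument is a correct self-contained substitute. One small notational caveat: the paper defines $\sign:\R\to\{\pm 1\}$, so $\sign(0)\neq 0$ here; nevertheless the $v_i=0$ case still holds trivially, because then $g$ is independent of $x_i$, hence $\Inf_i[g]=0$ and $\E_\x[x_i g(\x)]=\E[x_i]\E[g(\x)]=0$, so both sides vanish regardless of the convention for $\sign(0)$.
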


Therefore, we have the following:
\[
\abs{\hatI_{i}}=\Inf_{i}\left[f^*\right]\pm2\etatwo,
\]
which by the definition of $j_1,\dots,j_k$ tells
us that for every $t\in\left\{ 0,\cdots,k\right\} $
\[
\Inf_{\hati_{t}}\left[f^*\right]\geq\max_{i\in[d]\setminus\left\{ \hati_{1},\hati_{2},\cdots\hati_{t-1}\right\} }\left[\Inf_{i}\left[ f^*\right]\right]-4\etatwo.
\]
The proof follows by applying \Cref{lem: almost-maximal influences are ok} on the choice $(\widehat{i}_1,\dots,\widehat{i}_k) = (j_1,\dots,j_k)$.
\end{proof}

\section{Learning the Heavy Coefficients: The GLM component}\label{section:heavy}

 In this section, we will present the algorithm and proof for the task of learning Generalized Linear Models (GLMs) with run-time scaling polylogarithmically in the magnitude of weights. This is an important sub-routine in our final algorithm for learning halfspaces: it is the main tool we use to find the heavy coefficients (see \Cref{figure:proof-outline-0}). Using the same techniques, we also prove a theorem of independent interest on learning sigmoidal activations with run-time scaling polylogarithmically in the magnitudes of the weights (see \Cref{theorem:sigmoid-main}).
\subsection{Algorithm and Theorem Statement}
Recall the definition of a generalized linear model.
\label{defn:GLM}
\begin{defn}[Generalized linear model (GLM)]
    A distribution $\Dglm$ on $\R^{\Delta}\times \cube{}$ is called a generalized linear model if there exists a monotone function $\sigma$ and a vector $\vect w^*\in \R^{\Delta}$ such that 
    \[
    \E_{(\vect x,y)\sim D}[y\mid \vect x]= \sigma(\vect w^*\cdot \vect x).
    \]
\end{defn}
We consider the class of $\epsilon$-regular GLM's which are important for our application of learning halfspaces. 
\begin{defn}
\label{def: regular activation}We say that a function $\sigma^{\text{reg}
}:\R\rightarrow\left[-1,1\right]$
is $\eps$-regular if for some integer $d$ and unit vector $\vect u\in\R^{d}$
satisfying $\norm{\vect u}_{2}=1$ and $\norm{\vect u}_{4}^2\leq\eps$
we have
\[
\sigma^{\text{reg}
}(z)=\E_{\vect x\sim\left\{ \pm1\right\} ^{d}}\left[\sign\left(z+\vect x\cdot\vect u\right)\right]
\]
\end{defn}

Note that $d$ in the definition above can be arbitrarily large.
The goal of this section is to show the following theorem.
\begin{thm}
\label{cor: the GLM component theorem nonhom}There exists an algorithm ${{\findheavycoefficients}}(\eps,\Delta,w_{\max},D)$
that takes as inputs $\eps\in(0,1)$, positive integer $\Delta$,
a value $w_{\max}\in\R_{\geq1}$, and a sample access to a distribution
$D$ supported on $\left\{ \vect x\in\R^{\Delta}:\norm{\vect x}_2\leq\sqrt{\Delta}\right\} \times\left\{ \pm1\right\} $,
satisfying the following:
\begin{itemize}
\item The algorithm ${{\findheavycoefficients}}$ takes $O\left(\frac{\Delta}{\eps^2}\cdot \left(\log \frac{\Delta}{\eps}\right)^3\right)$
samples from $D$ and runs in time $\poly\left(\frac{\Delta}{\eps}\log w_{\max}\right)$.
\item When $D$ is a GLM with an arbitrary $\R^{\Delta}$-marginal supported
on $\left\{ \vect x\in\R^{\Delta}:\norm{\vect x}_2\leq\sqrt{\Delta}\right\} $,
and an activation $\sigma^{\text{reg}
}(\vect{v^{*}}\cdot\vect x+\tau^*),$ where $\sigma^{\text{reg}
}$
is a $\eps$-regular function (Definition \ref{def: regular activation}), and $\norm{\vect v^{*}}_2+|\tau^*|\leq w_{\max}$. Then
with probability at least $0.95$ ${\findheavycoefficients}$
will outputs $\vect v_{0}\in \R^{\Delta}$ and $t_0 \in \R$ for which 
\[
\E_{(x,y)\sim D}\left[\left(\sigma^{\text{reg}
}(\vect{v^{*}}\cdot\vect x+\tau^*)-\sigma^{\text{reg}
}(\vect v_{0}\cdot\vect x+t_0)\right)^{2}\right]\leq O\left(\eps \cdot  \log (\Delta/\eps)\right).
\]
\end{itemize}
\end{thm}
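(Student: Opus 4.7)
The plan is to reduce the problem to GLM learning with a \emph{clipped} activation and then minimize a convex matching loss via the ellipsoid method. First, define $\psiclip(z)=\sigma^{\text{reg}}(z)$ for $|z|\le \zalphazero$ and $\psiclip(z)=\sign(z)$ otherwise, with $\zalphazero=\Theta\bigl(\sqrt{\log(\Delta/\eps)}+\sqrt{\log n}\bigr)$ and $n$ the eventual sample budget. Because $\sigma^{\text{reg}}(z)=\E_{\vect{x}\sim\cube{d}}[\sign(z+\vect{u}\cdot\vect{x})]$ with $\|\vect{u}\|_2=1$, Hoeffding gives $|\sigma^{\text{reg}}(z)-\sign(z)|\le 2e^{-z^2/2}$, so at the chosen $\zalphazero$ the per-sample conditional distributions of $y$ under the two GLMs differ by at most $\eps/(n\poly(\Delta))$ in total variation. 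A coupling across $n$ samples then makes the two joint distributions $o(1)$-close, so I may act as if the data comes from $\E[y\mid\vect{x}]=\psiclip(\vect{v}^*\cdot\vect{x}+\tau^*)$.

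Second, I will exploit that $\psiclip$ is non-decreasing to set up the matching-loss surrogate
\[
\ell(\vect{v},\tau;\vect{x},y)=\int_{0}^{\vect{v}\cdot\vect{x}+\tau}\bigl(\psiclip(s)-y\bigr)\,ds\,,
\]
which is convex in $(\vect{v},\tau)$ and admits the simple subgradient $(\psiclip(\vect{v}\cdot\vect{x}+\tau)-y)\cdot(\vect{x},1)$ of norm at most $2\sqrt{\Delta+1}$, independently of the size of $(\vect{v},\tau)$. The population loss $\bar{L}(\vect{v},\tau)$ is stationary at $(\vect{v}^*,\tau^*)$ and its Bregman divergence equals $\E[\int_{z^*}^{z}(\psiclip(s)-\psiclip(z^*))\,ds]$, with $z=\vect{v}\cdot\vect{x}+\tau$ and $z^*=\vect{v}^*\cdot\vect{x}+\tau^*$. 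A case analysis on whether the signs of $z,z^*$ agree will let me lower bound this divergence by $\Omega(1/\log(\Delta/\eps))\cdot \E[(\psiclip(z)-\psiclip(z^*))^2]$: on the saturated region the integrand is linear in a mismatch direction, while on $[-\zalphazero,\zalphazero]$ the derivative of $\psiclip$ is bounded below by the Berry--Esseen-adjusted Gaussian density at $\zalphazero$, i.e.\ by $\Omega(e^{-\zalphazero^{2}/2})$.

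Third, I will minimize the empirical matching loss over the compact convex set $\{(\vect{v},\tau):\|\vect{v}\|_2+|\tau|\le w_{\max}\}$ using the ellipsoid method. With initial radius $w_{\max}$, subgradient norm $O(\sqrt{\Delta})$, and target accuracy $\eps$, ellipsoid terminates in $\poly(\Delta\log(w_{\max}/\eps))$ iterations, each of which estimates the subgradient from samples. Rather than doing uniform convergence over the whole ball (which would scale with $w_{\max}$), I will keep the $\poly(\Delta\log(w_{\max}/\eps))$ iterates produced by ellipsoid as candidates, evaluate each on fresh samples via the bounded quantity $(\psiclip(\vect{v}\cdot\vect{x}+\tau)-y)^2\in[0,4]$, and pick the empirically best candidate by Hoeffding together with a union bound, using only $\tilde{O}(\Delta/\eps^2)$ samples in total. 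Because $(\vect{v}^*,\tau^*)$ itself has small population matching loss (the integrand of its Bregman divergence from itself is zero), the selected candidate inherits a small matching-loss excess.

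The hardest step will be the second one: converting the matching-loss Bregman divergence into an $L^2$ bound on predictions while losing only a $\log(\Delta/\eps)$ factor. The danger is that $\psiclip'$ is exponentially small near the clipping boundary, which would normally destroy a strong-convexity bound. I plan to resolve this by exploiting that saturation \emph{helps}: whenever $|z^*|>\zalphazero$, any prediction mismatch $\psiclip(z)\neq \psiclip(z^*)$ forces $z$ to lie on the opposite side of $0$, in which case the Bregman integrand has width $\Omega(\zalphazero)$ with integrand value $\Omega(1)$; on $|z|,|z^*|\le\zalphazero$ the derivative lower bound above does the job. Combined with the Berry--Esseen approximation of $\sigma^{\text{reg}}$ by the erf function (enabled by $\eps$-regularity of $\vect{u}$), this should deliver the claimed accuracy $O(\eps\log(\Delta/\eps))$ required by the theorem.
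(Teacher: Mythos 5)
Your high-level plan (clip the activation, control the TV distance between the clipped and unclipped GLM, minimize a matching loss via ellipsoid) matches the paper's route, but two of your steps have genuine problems.

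First, you define $\psiclip(z)=\sigma^{\text{reg}}(z)$ for $|z|\le\zalphazero$. The activation $\sigma^{\text{reg}}$ is \emph{unknown} to the algorithm: it is determined by the hidden regular vector $\vect u$ in Definition~\ref{def: regular activation}, and $\Delta$-many samples give you no way to evaluate it. Your matching-loss subgradient $(\psiclip(\vect v\cdot\vect x+\tau)-y)\cdot(\vect x,1)$ is therefore not computable, so the ellipsoid step cannot be run as described. The paper's algorithm instead clips the \emph{known} Gaussian-erf activation $\psi(z)=\E_{x\sim\Gauss(0,1)}[\sign(z+x)]$, which depends on nothing hidden, and absorbs the Berry--Esseen error $\sup_z|\psi(z)-\sigma^{\text{reg}}(z)|\le 2\eps$ through a separate \emph{misspecified} GLM lemma (\Cref{thm: misspecified truncated GLM}). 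You mention the erf approximation only as an afterthought; it has to be wired into the definition of the surrogate, not invoked at the end.

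Second, your Bregman-divergence lower bound does not close. You claim
\[
\E\!\left[\int_{z^*}^{z}\bigl(\psiclip(s)-\psiclip(z^*)\bigr)\,ds\right]\;\gtrsim\;\frac{1}{\log(\Delta/\eps)}\,\E\!\left[\bigl(\psiclip(z)-\psiclip(z^*)\bigr)^2\right],
\]
but the ingredient you supply for the region $|z|,|z^*|\le\zalphazero$ is a derivative lower bound $\psiclip'\gtrsim e^{-\zalphazero^2/2}$, which with your choice $\zalphazero=\Theta(\sqrt{\log(\Delta/\eps)})$ is only $\poly(\eps/\Delta)$ --- far weaker than the $1/\log$ constant you assert, and weak enough to destroy the final bound. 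The fix is that no derivative lower bound is needed at all: for any monotone non-decreasing $1$-Lipschitz $\sigma$ and $a\le b$ one has
\[
\int_a^b\bigl(\sigma(s)-\sigma(a)\bigr)\,ds\;\ge\;\tfrac12\bigl(\sigma(b)-\sigma(a)\bigr)^2,
\]
which is exactly the inequality the paper uses in Eq.~\eqref{eq: kanade notes step-1}; the $\log(\Delta/\eps)$ factor in the theorem's error bound comes from the $\zalphazero$-scale of the truncated surrogate and of the misspecification term, not from any strong-convexity constant. Your "saturation helps" case analysis is unnecessary, and it is also the wrong place to look for the logarithm. On the candidate-selection step: note that the paper does prove uniform convergence, but only for the \emph{truncated} matching loss, whose range is $[-2\zalphazero,2\zalphazero]$ independent of $w_{\max}$ (Claim after Eq.~\eqref{eq: truncated surrate loss at most surrogate loss-1}); so the uniform-convergence cost you were trying to avoid simply does not scale with $w_{\max}$, and the extra machinery of storing ellipsoid iterates and re-scoring them on squared loss is not needed.
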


\begin{algorithm}[ht]
\caption{$\findheavycoefficients(\eps, \Delta, w_{\max}, {D})$}\label{algorithm:find-heavy-coefficients}
    \KwIn{Parameters $\eps\in (0,1), w_{\max} > 0$, and an oracle to independent examples $(\x,y)\sim { D}$}
\KwOut{A pair $(\hatv,\hattau)$ where $\hatv \in \R^\Delta, \hattau\in \R$}
\BlankLine
\tcc{Initialization}
\BlankLine
Let $C\ge 1$ be a sufficiently large universal constant and set $\clippingthreshold = C + C \log\Delta/\eps$\;
Let $S$ be a set of $C \frac{\Delta \clippingthreshold^2}{\eps^2} \log(\Delta \clippingthreshold /\eps) + C$ i.i.d. examples from $D$\;
Let $\psi(z) = \E_{x\sim \Gauss(0,1)}[\sign(x+z)]$ and let $\psiclip: \R \to [-1,1]$ be the function that can be computed as follows for any given $z\in \R$:
\[
    \psiclip(z) =
    \begin{cases}
        \psi(z)\,, \text{ if } |z| \le \clippingthreshold - 1 \\
        \psi(\clippingthreshold - 1) + (1 - \psi(\clippingthreshold - 1)) (z - \clippingthreshold + 1)\,, \text{ if }z\in (\clippingthreshold-1, \clippingthreshold] \\
        \psi(-\clippingthreshold + 1) - (1 + \psi(-\clippingthreshold + 1)) (z - \clippingthreshold + 1)\,, \text{ if }z\in [-\clippingthreshold, -\clippingthreshold+1) \\
        \sign(z)\,, \text{ if } |z|> \clippingthreshold
    \end{cases}
\]
\BlankLine
\tcc{Optimization via Ellipsoid Algorithm}
\BlankLine
Let $\tilde{S} = \{((\x,1) , y) : (\x,y) \in S\}$\;
Compute the solution $\widehat{\vect w} \in \R^{\Delta+1}$ of the following convex program using the ellipsoid algorithm:
\[
    \argmin_{\vect w: \|\vect w\|_2 \le w_{\max}} \sum_{(\tilde{\vect x}, y)\in \tilde{S}} \int_{0}^{\vect w \cdot \tilde{\vect x}} (\psiclip(z) - y) \, dz
\]\\
Set $(\hatv ,\hattau) = \widehat{\vect w}$\;
\end{algorithm}

We highlight a few distinctions between our result and the standard realizable GLM learning results from literature. The most important feature of our algorithm is that run-time is polynomial in $\log w_{\max}$ as opposed to polynomial
in $w_{\max}$. We note that sample complexity does not depend on $w_{\max}$. Another strength of our algorithm is that it does not need to know the function $\sigma$ in advance: it works universally over the whole family of $\eps$-regular $\sigma$. We will exploit this strength in our halfspace learning algorithm. A final distinction is that the activations we consider are not Lipschitz (although we will see that in a strong sense they are well-approximated by a Lipschitz activation). The overall GLM learning algorithm can be found in Algorithm~\ref{algorithm:find-heavy-coefficients}.

\subsection{Proof}
 We first state a homogeneous (no bias term) version of the result. \Cref{cor: the GLM component theorem nonhom} will immediately follow due to distribution free nature of the result.
\begin{thm}
\label{thm: the GLM component theorem}There exists an algorithm ${\AregGLM}(\eps,\Delta,w_{\max},D)$
that takes as inputs $\eps\in(0,1)$, positive integer $\Delta$,
a value $w_{\max}\in\R_{\geq1}$, and sample access to a distribution
$D$ over $\{ \vect x\in\R^{\Delta}:\norm{\vect x}_2\leq\sqrt{\Delta}\} \times\left\{ \pm1\right\} $,
satisfying the following:
\begin{itemize}
\item The algorithm ${\AregGLM}$ takes $O\left(\frac{\Delta}{\eps^2}\cdot \left(\log \frac{\Delta}{\eps}\right)^3\right)$
samples from $D$ and runs in time $\poly\left(\frac{\Delta}{\eps}\log w_{\max}\right)$.
\item When $D$ is a GLM with an arbitrary $\R^{\Delta}$-marginal supported
on $\left\{ \vect x\in\R^{\Delta}:\norm{\vect x}_2\leq\sqrt{\Delta}\right\} $,
and an activation $\sigma^{\text{reg}
}(\vect{v^{*}}\cdot\vect x),$ where $\sigma^{\text{reg}
}$
is $\eps$-regular and $\norm{\vect v^{*}}_2\leq w_{\max}$. Then
with probability at least $0.95$ ${\AregGLM}$
will output a vector $\vect v_{0}$ for which 
\[
\E_{(x,y)\sim D}\left[\left(\sigma^{\text{reg}
}(\vect{v^{*}}\cdot\vect x)-\sigma^{\text{reg}
}(\vect v_{0}\cdot\vect x)\right)^{2}\right]\leq O\left(\eps \cdot  \log (\Delta/\eps)\right).
\]
\end{itemize}
\end{thm}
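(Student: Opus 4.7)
The plan is to reduce the $\eps$-regular GLM learning problem to minimizing a \emph{convex} matching loss for a known, Lipschitz, clipped surrogate activation, which can then be solved by the ellipsoid method in time polylogarithmic in $w_{\max}$. Three ingredients drive the analysis: approximating the unknown $\sigma^{\text{reg}}$ by a known proxy, clipping the proxy to make it Lipschitz, and matching-loss minimization via ellipsoid together with uniform convergence.

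For the approximation step, Definition \ref{def: regular activation} writes $\sigma^{\text{reg}}(z)=\E_{\vect x'\sim\cube{d'}}[\sign(z+\vect u\cdot\vect x')]$ with $\|\vect u\|_2=1$ and $\|\vect u\|_4^2\le\eps$. Applying the Berry-Esseen theorem to $\vect u\cdot\vect x'$ yields $|\sigma^{\text{reg}}(z)-\psi(z)|=O(\eps)$ uniformly in $z$, where $\psi(z)=\E_{x\sim\Gauss(0,1)}[\sign(z+x)]$. Since $1-|\psi(z)|=O(e^{-z^2/2})$, for $\rho=\Theta(\log(\Delta/\eps))$ the clipped function $\psiclip$ defined in \Cref{algorithm:find-heavy-coefficients} agrees with $\psi$ up to $\eps^{\Theta(1)}$ everywhere, is globally $O(1)$-Lipschitz, and is computable. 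Combining, $|\sigma^{\text{reg}}(z)-\psiclip(z)|=O(\eps)$ uniformly in $z$. Crucially, in the tail $|z|\ge\rho$ both $\sigma^{\text{reg}}(z)$ and $\psiclip(z)$ coincide with $\sign(z)$ up to error $e^{-\Omega(\rho^2)}$, by Hoeffding's inequality applied to $\vect u\cdot\vect x'$.

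For the optimization step, the algorithm minimizes the empirical matching loss $\hat L(\vect w)=\frac{1}{|S|}\sum_{(\vect x,y)\in S}\int_0^{\vect w\cdot\vect x}(\psiclip(z)-y)\,dz$ over the ball $\|\vect w\|_2\le w_{\max}$. Monotonicity of $\psiclip$ makes $\hat L$ convex; its subgradients are bounded by $O(\sqrt\Delta)$; and $O(\log(w_{\max}/\eps))$ bits of precision suffice, so the ellipsoid method reaches additive accuracy $\eps^{\Theta(1)}$ of the minimum in $\poly(\Delta,\log w_{\max},\log(1/\eps))$ time. Uniform convergence $|\hat L - L|\le\eps^{\Theta(1)}$ follows from a standard Rademacher-complexity argument for $O(1)$-Lipschitz losses composed with bounded-norm linear functions of $\vect x$ with $\|\vect x\|_2\le\sqrt\Delta$, giving sample complexity $\tilde O(\Delta/\eps^2)$ with polylogarithmic factors matching the statement. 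The classical matching-loss identity (see \cite{auer1995exponentially,kakade2011efficient,kanade2018note}) gives, for any $\vect w,\vect w'$, the Bregman-type expression $L(\vect w)-L(\vect w')=\E[D_\Psi(\vect w\cdot\vect x,\vect w'\cdot\vect x)]+\E[(\psiclip(\vect w'\cdot\vect x)-\sigma^{\text{reg}}(\vect v^*\cdot\vect x))(\vect w-\vect w')\cdot\vect x]$, where $\Psi(t)=\int_0^t\psiclip(s)\,ds$ and $D_\Psi(a,b)\gtrsim(\psiclip(a)-\psiclip(b))^2$ because $\psiclip$ is $O(1)$-Lipschitz.

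The main obstacle is the \emph{misspecification} between the algorithm's surrogate activation $\psiclip$ and the true activation $\sigma^{\text{reg}}$: the population minimizer $\vect w^\dagger$ of $L$ need not equal $\vect v^*$, and a crude Cauchy-Schwarz bound on their discrepancy pays a factor of $w_{\max}\sqrt\Delta$, spoiling the guarantee. To handle this, I will introduce the idealized loss $L^*(\vect w)=\E[\int_0^{\vect w\cdot\vect x}(\sigma^{\text{reg}}(z)-y)\,dz]$, whose unique minimizer is $\vect v^*$, and show $\sup_{\vect w}|L(\vect w)-L^*(\vect w)|\le O(\eps\log(\Delta/\eps))$ by splitting the integrand at $|z|=\rho$: inside the window the uniform bound $|\sigma^{\text{reg}}-\psiclip|=O(\eps)$ contributes $O(\eps\rho)$, while the tail contribution is dominated by $\int_\rho^\infty e^{-z^2/2}\,dz=O(e^{-\rho^2/2})\ll\eps$, giving a uniform bound that is crucially \emph{independent} of $w_{\max}$. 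Propagating this slack through the Bregman lower bound applied at $\vect v^*$ yields $\E[(\psiclip(\vect v_0\cdot\vect x)-\psiclip(\vect v^*\cdot\vect x))^2]=O(\eps\log(\Delta/\eps))$, and a final triangle inequality with the uniform bound $|\sigma^{\text{reg}}-\psiclip|=O(\eps)$ converts this to the claimed $L^2$-guarantee for $\sigma^{\text{reg}}$ itself.
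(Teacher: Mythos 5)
Your high-level plan is close to the paper's: approximate the unknown activation by $\psi$ via Berry--Esseen, clip it to a Lipschitz surrogate $\psiclip$, and minimize the matching loss over $\{\|\vect w\|_2 \le w_{\max}\}$ with the ellipsoid method, so that $w_{\max}$ enters only logarithmically. The step introducing $L^*$ and bounding $\sup_{\vect w}|L(\vect w)-L^*(\vect w)|$ by splitting at $|z|=\rho$ is a nice, correct way to handle misspecification and is a genuinely different decomposition than the paper's. However, there is a genuine gap in your treatment of concentration.

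You assert that a ``standard Rademacher-complexity argument for $O(1)$-Lipschitz losses'' gives $|\hat{L}-L|\le \eps^{\Theta(1)}$ uniformly over $\|\vect w\|_2\le w_{\max}$ with $\tilde O(\Delta/\eps^2)$ samples. This is false: the per-sample matching loss $\ell^{\psiclip}(\vect w;\vect x,y)=\int_0^{\vect w\cdot\vect x}(\psiclip(z)-y)\,dz$ is unbounded on that ball — when $y=-\sign(\vect w\cdot\vect x)$, its magnitude grows linearly with $|\vect w\cdot\vect x|$, which can be as large as $w_{\max}\sqrt{\Delta}$. Clipping the \emph{activation} does not clip the \emph{loss}. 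Both the Rademacher term (diameter of the linear class is $w_{\max}$) and the boundedness term in the generalization bound scale with $w_{\max}$, giving sample complexity $\Omega(w_{\max}^2\Delta/\eps^2)$, not $\tilde O(\Delta/\eps^2)$. There is a second, related problem: even concentration of $\hat L(\vect v^*)$ around $L(\vect v^*)$ fails pointwise, because under $D$ (with activation $\sigma^{\text{reg}}$ that only approaches $\pm 1$, never equals it) a tiny fraction of labels satisfies $y=-\sign(\vect v^*\cdot\vect x)$ even when $|\vect v^*\cdot\vect x|$ is huge, and on those rare events the loss is of order $w_{\max}\sqrt\Delta$.

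The paper's fix addresses both points and is the ingredient you are missing. It introduces a \emph{truncated} surrogate loss $\ell^{\sigma}_{\mathrm{trunc}}(\vect v;\vect x,y)=\int_0^{\mathrm{trunc}(\vect v\cdot\vect x)}(\sigma(z)-y)\,dz$ with $\mathrm{trunc}(z)=\max(-\rho,\min(\rho,z))$. This loss is pointwise bounded by $2\rho$ and satisfies $\ell^{\sigma}_{\mathrm{trunc}}\le\ell^{\sigma}$ everywhere (Claim~\ref{claim:surrogate_loss_properties}), so uniform convergence for $\ell^{\sigma}_{\mathrm{trunc}}$ with $\tilde O(\Delta\rho^2/\eps^2)=\tilde O(\Delta/\eps^2)$ samples \emph{is} legitimate. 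Moreover, the paper does not analyze the algorithm on $D$ directly but on the clipped distribution $D_{\mathrm{clipped}}$ whose activation $\sigma^{\text{reg}}_{\text{clipped}}$ equals $\sign$ exactly outside $[-\rho,\rho]$; this makes $\ell^{\sigma}(\vect v^*;\vect x,y)=\ell^{\sigma}_{\mathrm{trunc}}(\vect v^*;\vect x,y)\in[-2\rho,2\rho]$ almost surely (Claim~\ref{claim: truncation does nothing for ground truth surrogate loss-1}), so Hoeffding applies at $\vect v^*$. The transfer back to $D$ is then a TV-distance argument, since $\mathrm{dist}_{\mathrm{TV}}(D,D_{\mathrm{clipped}})\le e^{-\Omega(\rho^2)}$. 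The chain $L_{\mathrm{trunc}}(\vect v_0)\le \hat L_{\mathrm{trunc}}(\vect v_0)+\eps/4 \le \hat L(\vect v_0)+\eps/4\le \hat L(\vect v^*)+\eps/4\le L(\vect v^*)+\eps/2= L_{\mathrm{trunc}}(\vect v^*)+\eps/2$, followed by the Bregman bound applied to the \emph{truncated} loss, is what replaces your uniform-convergence claim. Without truncating the loss (and reducing to $D_{\mathrm{clipped}}$) the argument does not close with $w_{\max}$-independent sample complexity.
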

We will now complete the proof of \Cref{cor: the GLM component theorem nonhom} given the above theorem. 
\begin{proof}[Proof of \Cref{cor: the GLM component theorem nonhom}]
Define a new distribution $D'$ supported on $\R^{\Delta+1}\times \{\pm 1\}$ and formed from $D$ as follows: sample $(\vect x,y)\sim D$ and output $((\vect x,1),y)$. Clearly, $D'$ is a GLM with an $\eps$-regular activation $\sigma(\vect z)\coloneq \sigma^{{\text{reg}}}(\vect w^*\cdot \vect z)$ where $\vect w^*=(\vect v^*,\tau^*)$. The algorithm $\findheavycoefficients(\eps,\Delta,w_{\mathrm{max}}, D)$ (see Algorithm \ref{algorithm:find-heavy-coefficients}) implements the ${\AregGLM}(\eps,\Delta,w_{\mathrm{max}}, D')$ from \Cref{thm: the GLM component theorem}.
Now, applying Theorem~\ref{thm: the GLM component theorem} to $D'$, we obtain a vector $\widehat{\vect w}=(\widehat{\vect v},\widehat{t})$ such that:
\begin{align*}
    \E_{(z,y)\sim D'}[(\sigma(\widehat{\vect w}\cdot \vect z)-\sigma(\vect w^*\cdot \vect z))^2] &=\E_{(x,y)\sim D}\left[\left(\sigma^{\text{reg}
}(\vect{v^{*}}\cdot\vect x+\tau^*)-\sigma^{\text{reg}
}(\widehat{\vect v}\cdot \vect x+\widehat{t})\right)^{2}\right] \\
&\leq O(\eps\cdot \log (\Delta/\eps))
\end{align*}
The bound on runtime and sample complexity follows from the analogous bounds in Theorem~\ref{thm: the GLM component theorem}. 
\end{proof}

In order to prove \Cref{thm: the GLM component theorem}, we prove the following theorem on misspecified GLM learning corresponding to Lipschitz, truncated activations. 
\begin{thm}
\label{thm: misspecified truncated GLM}
Let $\zalphazero\in\R_{\geq1}$
and suppose $\sigma$ is a an activation function $\R\rightarrow[-1,1]$
that satisfies the following three conditions:
\begin{itemize}
\item $\sigma$ is monotone non-decreasing.
\item $\sigma$ is 1-Lipschitz.
\item $\sigma(z)=\sign(z)$ whenever $\abs z\geq\zalphazero$.
\end{itemize}
Then, there exists an algorithm $\Amisspecified^{\sigma}\left(\eps,\zalphazero,w_{\max},D\right)$
that
\begin{itemize}
\item Draws $O\left(\frac{\Delta\zalphazero^{2}}{\eps^{2}}\ln\frac{\Delta\zalphazero}{\eps}\right)$
samples $\left\{ x_{i}\right\} $ from a distribution $D$ over $\left\{ \vect x\in\R^{\Delta}:\norm{\vect x}_2\leq\sqrt{\Delta}\right\} \times\left\{ \pm1\right\} $
and outputs a vector $\vect v\in\R^{\Delta}$.
\item Runs in time $\poly\left(\frac{\zalphazero\Delta}{\eps}\log w_{\max}\right)$.
\item When $D$ is a GLM with an arbitrary $\R^{\Delta}$-marginal supported
on $\left\{ \vect x\in\R^{\Delta}:\norm{\vect x}_2\leq\sqrt{\Delta}\right\} $,
and a monotone activation $\sigma'(\vect{v^{*}}\cdot\vect x),$ where
\begin{itemize}
\item $\norm{\vect v^{*}}_2\leq w_{\max}$, 
\item $\sigma'(z)=\sign(z)$ whenever $\abs z\geq\zalphazero$,
\item for all $z\in\R$ we have $\abs{\sigma'(z)-\sigma(z)}\leq\eps$
\end{itemize}
\item[] then with probability at least $0.99$ the algorithm $\Amisspecified^{\sigma}\left(D,\eps,\zalphazero,w_{\max}\right)$
will output a vector $\vect v_{0}$ satisfying 
\begin{equation}
\E_{\vect x\sim D_{X}}\left[\left(\sigma'(\vect x\cdot\vect{v^{*}})-\sigma'(\vect x\cdot\vect v_{0})\right)^{2}\right]\leq24\zalphazero\eps\label{eq: goal of thm for slightly mis-specified sigma}
\end{equation}
where $D_{X}$ is the $\R^{\Delta}$-marginal of $D$. (Note that no assumption
is made on $D_{X}$.)
\end{itemize}
\end{thm}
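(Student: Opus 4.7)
The plan is to perform empirical risk minimization of the matching loss associated with $\sigma$, solved via the ellipsoid method. Define
\[
    L_\sigma(\vect w;(\vect x,y)) \;=\; \int_0^{\vect w\cdot \vect x}(\sigma(z)-y)\,dz\,,
\]
which is convex in $\vect w$ because the univariate primitive $t\mapsto \int_0^t(\sigma(z)-y)\,dz$ is convex whenever $\sigma$ is non-decreasing. Its subgradient $(\sigma(\vect w\cdot\vect x)-y)\vect x$ has norm at most $2\sqrt{\Delta}$, and the function value can be evaluated to arbitrary precision in time $\poly(\Delta,\zalphazero,\log w_{\max})$ by numerical integration against the piecewise-defined $\sigma$. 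The algorithm draws $m=\Theta\bigl(\tfrac{\Delta\zalphazero^2}{\eps^2}\log\tfrac{\Delta\zalphazero}{\eps}\bigr)$ i.i.d.\ samples $S$ and runs the ellipsoid method on the empirical average $\hat L(\vect w)=|S|^{-1}\sum_{(\vect x,y)\in S}L_\sigma(\vect w;(\vect x,y))$ over the feasible set $\{\vect w:\|\vect w\|_2\le w_{\max}\}$, returning a near-minimizer $\vect v_0$. The standard guarantee for ellipsoid on a convex program with a polynomially bounded separation oracle over a ball of radius $w_{\max}$ gives $\poly(\Delta,\log(w_{\max}/\eps))$ iterations, meeting the claimed runtime $\poly(\Delta\zalphazero/\eps\cdot\log w_{\max})$.

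The correctness analysis rests on the classical matching-loss identity. Writing $a=\vect v^*\cdot\vect x$, $b=\vect w\cdot\vect x$ and using $\E[y\mid\vect x]=\sigma'(a)$, one has
\[
    \E[L_\sigma(\vect w)-L_\sigma(\vect v^*)] \;=\; \E_{\vect x}\!\bigl[D_\sigma(a,b)\bigr] \;+\; \E_{\vect x}\!\bigl[(\sigma(a)-\sigma'(a))(b-a)\bigr]\,,
\]
where $D_\sigma(a,b)=\int_a^b(\sigma(z)-\sigma(a))\,dz$ is the Bregman divergence of the antiderivative of $\sigma$. Since $\sigma$ is monotone and $1$-Lipschitz, one obtains the pointwise inequality $D_\sigma(a,b)\ge \tfrac12(\sigma(b)-\sigma(a))^2$, so any bound on the matching-loss gap translates into a bound on the expected squared error with respect to $\sigma$. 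For the misspecification term, truncation is crucial: the factor $\sigma(a)-\sigma'(a)$ vanishes on $\{|a|>\zalphazero\}$ (where both activations equal $\sign$), so only the event $\{|a|\le\zalphazero\}$ contributes, and the pointwise bound $|\sigma(a)-\sigma'(a)|\le\eps$ combined with an AM--GM split absorbs half of the $(b-a)$ factor into the Bregman divergence, yielding an additive overhead of $O(\zalphazero\eps)$ at $\vect w=\vect v^*$. Finally, uniform concentration of $\hat L(\vect w)-\hat L(\vect v^*)$ around $\E[L_\sigma(\vect w)-L_\sigma(\vect v^*)]$ on the $w_{\max}$-ball is obtained via the bounded-subgradient property, using that the truncation confines the informative variation of the loss to a window of width $O(\zalphazero)$; this gives the stated sample complexity $\tilde O(\Delta\zalphazero^2/\eps^2)$. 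Combining the pieces, $\vect v_0$ satisfies $\E[(\sigma(\vect v_0\cdot\vect x)-\sigma(\vect v^*\cdot\vect x))^2]=O(\zalphazero\eps)$, and one further application of $|\sigma-\sigma'|\le\eps$ converts this into the required $24\zalphazero\eps$ bound for $\sigma'$.

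The principal difficulty is that the matching loss $L_\sigma(\vect w;(\vect x,y))$ is \emph{not} uniformly bounded on the $w_{\max}$-ball: for $y=-1$ and $\vect w\cdot\vect x$ a large positive number, the loss scales linearly with $\vect w\cdot\vect x$ and can be as large as $\Omega(w_{\max}\sqrt\Delta)$. A naive Hoeffding or Rademacher argument over the raw loss class would therefore introduce a linear dependence on $w_{\max}$ in the sample complexity, which we cannot afford. The fix is to concentrate on the \emph{difference} $L_\sigma(\vect w)-L_\sigma(\vect v^*)$ rather than the loss itself: the Bregman identity shows that in expectation this difference is non-negative and controlled, while pointwise it telescopes through the (bounded) fluctuations of the gradient rather than through the (unbounded) loss values, enabling a concentration bound that scales with $\zalphazero$ instead of $w_{\max}$. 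Making this localization precise, and combining it with the AM--GM absorption to simultaneously control the misspecification term without incurring any $w_{\max}$ factor, is the main technical step in the proof.
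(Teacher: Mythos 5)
You correctly identify the approach (matching loss minimized via ellipsoid) and the central difficulty (the surrogate loss is not uniformly bounded over the $w_{\max}$-ball, so naive concentration would cost a $w_{\max}$ factor). But the proposed fix---concentrating on the pointwise \emph{difference} $L_\sigma(\vect w)-L_\sigma(\vect v^*)$---does not resolve this. That difference equals $\int_{\vect v^*\cdot\vect x}^{\vect w\cdot\vect x}(\sigma(z)-y)\,dz$, and the endpoint $\vect w\cdot\vect x$ is still unrestricted: if, say, $\vect v^*\cdot\vect x>\zalphazero$ (so realizability forces $y=+1$) while $\vect w\cdot\vect x\le-\zalphazero$, the integrand equals $-2$ on an interval of length about $|\vect w\cdot\vect x|$, which can be $\Omega(w_{\max}\sqrt{\Delta})$. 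So the pointwise fluctuations of the difference still scale with $w_{\max}$, and ``telescoping through the bounded gradient'' does not give a $\zalphazero$-scaled concentration bound.

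The paper's actual device is an explicitly \emph{truncated} surrogate loss $\ell_{\text{trunc}}^{\sigma}(\vect w;\vect x,y)=\int_0^{\text{trunc}(\vect w\cdot\vect x)}(\sigma(z)-y)\,dz$ with $\text{trunc}(z)=\min(\zalphazero,\max(-\zalphazero,z))$. This quantity is uniformly bounded by $2\zalphazero$, and, because it can be written as an integral of halfspace indicators, it satisfies uniform convergence at rate $\tilde O(\zalphazero\sqrt{\Delta/|S|})$ via a standard VC argument. Crucially, two facts tie it to the untruncated loss your ERM minimizes: (i) $\ell_{\text{trunc}}^{\sigma}(\vect v^*;\vect x,y)=\ell^{\sigma}(\vect v^*;\vect x,y)$ pointwise, because realizability with a truncated activation forces $y=\sign(\vect v^*\cdot\vect x)$ whenever $|\vect v^*\cdot\vect x|>\zalphazero$, at which point the integrand vanishes on the truncated tail; and (ii) $\ell_{\text{trunc}}^{\sigma}(\vect w;\vect x,y)\le\ell^{\sigma}(\vect w;\vect x,y)$ for every $\vect w,\vect x,y$. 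These let one bound the population truncated loss at the ERM solution by its empirical untruncated loss, which in turn is controlled by the (bounded, concentrating) loss at $\vect v^*$, without ever having to concentrate the raw or differenced surrogate.

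This truncation is also what saves the misspecification step. You propose an AM--GM absorption of the term $\eps\,|b-a|$ into the Bregman divergence, but with $a=\vect v^*\cdot\vect x$, $b=\vect w\cdot\vect x$ unrestricted this fails: e.g.\ if both $a,b>\zalphazero$ then $D_\sigma(a,b)=0$ while $|b-a|$ can be $\Theta(w_{\max}\sqrt\Delta)$. The paper instead writes the misspecification term as an integral over $[\text{trunc}(a),\text{trunc}(b)]$, an interval of length at most $2\zalphazero$, and simply uses the pointwise bound $|\sigma-\sigma'|\le\eps$ to get the additive $2\zalphazero\eps$ overhead. So the truncated loss is not a detail you can defer to ``making the localization precise''---it is the missing ingredient that makes both the concentration argument and the misspecification bound go through.
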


The proof of \Cref{thm: misspecified truncated GLM} follows the lines of Varun Kanade's notes \cite{kanade2018note} on GLM learning through minimizing a special kind of surrogate loss, typically called the matching loss. Our approach deviates from this paradigm in that (1) we use the ellipsoid method for the corresponding convex minimization problem instead of iterative descent to obtain a logarithmic runtime dependence on the parameter $w_{\max}$, and (2) our analysis crucially uses a truncated version of the surrogate loss in order to provide a sample complexity bound that does not depend on $w_{\max}$. 

\begin{proof}
We define the surrogate losses $\ell^{\sigma}$
and $\ell^{\sigma'}$ as follows:

\[
\ell^{\sigma}(\vect v;\vect x,y)=\int_{0}^{\vect v\cdot\vect x}(\sigma(z)-y)\d z.
\]

\[
\ell^{\sigma'}(\vect v;\vect x,y)=\int_{0}^{\vect v\cdot\vect x}(\sigma(z)-y)\d z.
\]
The surrogate loss is convex because $\sigma$ is monotone. The algorithm
$\Amisspecified^{\sigma}\left(\eps,\zalphazero,w_{\max},\dpairs\right)$
does the following:
\begin{itemize}
\item Draw a collection $S$ of $C\frac{\Delta\zalphazero^{2}}{\eps^{2}}\ln\frac{\Delta\zalphazero}{\eps}+C$
samples from $D$, where $C$ is a sufficiently large absolute constant.
\item Using the Ellipsoid algorithm, find
\[
\vect v_{0}\leftarrow\argmin_{\vect v:\ \norm{\vect v}_2\leq w_{\max}}\E_{(\vect x,y)\in S}\left[\ell^{\sigma}(\vect v;\vect x,y)\right]
\]
\item Output $\vect v_{0}$.
\end{itemize}
The run-time is dominated by the second step, which runs in time $\poly\left(\frac{\zalphazero\Delta}{\eps}\log w_{\max}\right)$.
To analyze the algorithm, we define the following auxilliary functions:
\[
\text{trunc}(z):=\min(\zalphazero,\max(-\zalphazero,z))
\]
\[
\ell_{\text{trunc}}^{\sigma}(\vect v;\vect x,y)=\int_{0}^{\text{trunc}(\vect v\cdot\vect x)}(\sigma(z)-y)\d z.
\]

Since $\sigma(z)=\sign(z)$ and $\sigma'(z)=\sign(z)$ whenever $\abs z\geq\zalphazero$,
for every $z\in\R$ we have
\begin{equation}
\sigma(z)=\sigma(\text{trunc}(z))=\sigma'(z).\label{eq: sigma compatible with truncation-1}
\end{equation}
We now make the following observation:
\begin{claim}
\label{claim: truncation does nothing for ground truth surrogate loss-1}For
every $\left(\vect x,y\right)$ in the support of $D$ we have
\[
\ell^{\sigma}(\vect{v^{*}};\vect x,y)=\ell_{\text{trunc}}^{\sigma}(\vect{v^{*}};\vect x,y)\in[-2\zalphazero,2\zalphazero]
\]
\end{claim}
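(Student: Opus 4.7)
The plan is to prove both the identity $\ell^{\sigma}(\vect{v^*};\vect x,y) = \ell_{\text{trunc}}^{\sigma}(\vect{v^*};\vect x,y)$ and the range bound separately, handling the identity by splitting on whether $\vect{v^*}\cdot\vect x$ lies inside or outside $[-\zalphazero,\zalphazero]$.

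For the identity, I would write the difference as
\[
\ell^{\sigma}(\vect{v^*};\vect x,y) - \ell_{\text{trunc}}^{\sigma}(\vect{v^*};\vect x,y) = \int_{\text{trunc}(\vect{v^*}\cdot\vect x)}^{\vect{v^*}\cdot\vect x}(\sigma(z)-y)\,dz
\]
and show each term vanishes. If $|\vect{v^*}\cdot\vect x|\leq\zalphazero$ the two endpoints coincide, so the integral is $0$ trivially. The main content is the outside case: suppose $\vect{v^*}\cdot\vect x > \zalphazero$ (the other sign is symmetric). Since $D$ is a GLM with activation $\sigma'$ and $\sigma'(z)=\sign(z)$ for $|z|\geq\zalphazero$, we have $\E[y\mid\vect x]=\sigma'(\vect{v^*}\cdot\vect x)=1$, which, combined with $y\in\{\pm 1\}$, forces $y=1$ almost surely; hence every $(\vect x,y)$ in the support of $D$ satisfies $y=1$ in this case. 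On the integration interval $[\zalphazero,\vect{v^*}\cdot\vect x]$ we have $\sigma(z)=\sign(z)=1$ as well (using the assumed property of $\sigma$), so the integrand $\sigma(z)-y$ vanishes identically and the integral is $0$. The $\vect{v^*}\cdot\vect x<-\zalphazero$ case is entirely analogous and yields $y=-1$ and $\sigma(z)=-1$ on the relevant interval.

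For the range bound, I would simply note that in $\ell_{\text{trunc}}^{\sigma}(\vect{v^*};\vect x,y)$ the integration interval has length at most $\zalphazero$ (since $\text{trunc}(\vect{v^*}\cdot\vect x)\in[-\zalphazero,\zalphazero]$ and we integrate from $0$), and the integrand satisfies $|\sigma(z)-y|\leq 2$ because $\sigma\in[-1,1]$ and $y\in\{\pm 1\}$; multiplying yields $|\ell_{\text{trunc}}^{\sigma}(\vect{v^*};\vect x,y)|\leq 2\zalphazero$.

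The only subtle step is establishing that $y$ is deterministic on the ``saturated'' region, i.e.\ that being in the support of $D$ truly forces $y=\sign(\vect{v^*}\cdot\vect x)$ when $|\vect{v^*}\cdot\vect x|\geq\zalphazero$; everything else is a one-line computation. This subtlety is mild because the GLM hypothesis and the saturation property of $\sigma'$ together pin down the conditional mean to $\pm 1$, and a $\pm 1$-valued variable with mean $\pm 1$ is almost-surely constant, so no genuine obstacle arises.
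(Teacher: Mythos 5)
Your proposal is correct and follows essentially the same case analysis as the paper (interior case trivial, saturated cases force $y=\sign(\vect v^*\cdot\vect x)$ because the GLM's conditional mean is $\pm 1$, and then $\sigma(z)-y$ vanishes on the extra piece of the integration interval; the range bound comes from interval length $\leq\zalphazero$ and integrand $\leq 2$). You are in fact slightly more careful than the paper's one-line remark that the misspecified case is ``identical'' to the realizable one, since you explicitly note that the GLM's label is governed by $\sigma'$ while the integrand involves $\sigma$, and that both saturate to $\sign$ outside $[-\zalphazero,\zalphazero]$ — which is exactly the property needed for the argument to transfer.
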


\begin{proof}
The proof is deferred to the appendix where it is identical to the proof of Claim \ref{claim: truncation does nothing for ground truth surrogate loss}.
\end{proof}
Now, since $\abs{\ell^{\sigma}(\vect{v^{*}};\vect x,y)}\leq2\zalphazero$,
we can use the Hoeffding bound to conclude that with probability 0.99 over the
choice of our sample set $S$ we have

\[
\E_{(\vect x,y)\sim S}\left[\ell^{\sigma}(\vect{v^{*}};\vect x,y)\right]=\E_{(\vect x,y)\sim D}\left[\ell^{\sigma}(\vect{v^{*}};\vect x,y)\right]\pm O\left(\frac{\zalphazero}{\sqrt{|S|}}\right),
\]
Since $v_{0}$ minimizes the empirical surrogate loss $\ell^{\sigma}$
among all vectors whose norm is at most $w_{\max}$, and $\norm{\vect v^{*}}_2\leq w_{\max}$
we also have 
\begin{align}
\E_{(\vect x,y)\sim S}\left[\ell^{\sigma}(\vect v_{0};\vect x,y)\right]&\leq\E_{(\vect x,y)\sim S}\left[\ell^{\sigma}(\vect{v^{*}};\vect x,y)\right] \notag\\
&=\E_{(\vect x,y)\sim D}\left[\ell^{\sigma}(\vect{v^{*}};\vect x,y)\right]\pm O\left(\frac{\zalphazero}{\sqrt{|S|}}\right) \notag\\
&=\E_{(\vect x,y)\sim D}\left[\ell^{\sigma}(\vect{v^{*}};\vect x,y)\right]\pm\frac{\eps}{4},\label{eq: the minimizing vector has good empirical surrogate loss-1}
\end{align}
where in the last step we substituted $|S|=C\frac{\Delta\zalphazero^{2}}{\eps^{2}}\ln\frac{\Delta\zalphazero}{\eps}+C$
and took $C$ to be a sufficiently large absolute constant.
We now show some useful properties of the truncated surrogate loss.
\begin{claim}
\label{claim:surrogate_loss_properties}
For every $\vect v,\vect x,y$, the following hold:
   \begin{enumerate}
\item 
$\abs{\ell_{\text{trunc}}^{\sigma}(\vect v;\vect x,y)}\leq\int_{0}^{\text{trunc}(\vect v\cdot\vect x)}\abs{\sigma(z)-y}\d z\leq2\zalphazero$
\item 
$\ell_{\text{trunc}}^{\sigma}(\vect v;\vect x,y)\leq\ell^{\sigma}(\vect v;\vect x,y)\label{eq: truncated surrate loss at most surrogate loss-1}$.
\end{enumerate}
\end{claim}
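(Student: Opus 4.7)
Both assertions follow directly from the definitions of $\ell^{\sigma}$ and $\ell_{\text{trunc}}^{\sigma}$ together with the hypotheses that $\sigma$ takes values in $[-1,1]$, equals $\sign$ outside $[-\zalphazero,\zalphazero]$, and that $y \in \{\pm 1\}$. The plan is to dispatch each bound in turn; no serious obstacle is expected.

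For part (1), I would apply the triangle inequality to push the absolute value inside the integral, reading the integral symmetrically when $\text{trunc}(\vect v \cdot \vect x) < 0$ (that is, as the integral over $[\text{trunc}(\vect v \cdot \vect x), 0]$ with the appropriate sign). The integrand $|\sigma(z) - y|$ is bounded pointwise by $2$, since $\sigma(z) \in [-1,1]$ and $y \in \{\pm 1\}$, while the length of the interval of integration is at most $|\text{trunc}(\vect v \cdot \vect x)| \le \zalphazero$. Multiplying these two bounds yields the claimed $2\zalphazero$.

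For part (2), I would rewrite the difference as $\ell^{\sigma}(\vect v; \vect x, y) - \ell_{\text{trunc}}^{\sigma}(\vect v; \vect x, y) = \int_{\text{trunc}(\vect v \cdot \vect x)}^{\vect v \cdot \vect x} (\sigma(z) - y)\d z$ and split into three cases. If $|\vect v \cdot \vect x| \le \zalphazero$, then $\text{trunc}(\vect v \cdot \vect x) = \vect v \cdot \vect x$ and the difference is zero. If $\vect v \cdot \vect x > \zalphazero$, the range of integration is $[\zalphazero, \vect v \cdot \vect x]$, on which the hypothesis gives $\sigma(z) = 1$, hence $\sigma(z) - y = 1 - y \ge 0$ and the integral is nonnegative. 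If instead $\vect v \cdot \vect x < -\zalphazero$, the range is $[\vect v \cdot \vect x, -\zalphazero]$ traversed in reverse, and on it $\sigma(z) = -1$, so $\sigma(z) - y = -1 - y \le 0$; after accounting for the reversal of orientation, the integral is again nonnegative. Thus $\ell_{\text{trunc}}^{\sigma}(\vect v; \vect x, y) \le \ell^{\sigma}(\vect v; \vect x, y)$ in every case.

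The only subtlety worth flagging, which I would handle explicitly in the writeup, is the orientation of the integration interval when $\vect v \cdot \vect x$ is negative; apart from this bookkeeping, every step uses only the definition of truncation, the hypothesis that $\sigma$ agrees with $\sign$ outside $[-\zalphazero,\zalphazero]$, and the trivial bound $|\sigma(z) - y| \le 2$. The claim simply records the two elementary properties of the truncated surrogate that the remainder of the proof of \Cref{thm: misspecified truncated GLM} needs: uniform boundedness of $\ell_{\text{trunc}}^{\sigma}$ (so that Hoeffding's inequality applies with a sample complexity independent of $w_{\max}$) and the fact that $\ell_{\text{trunc}}^{\sigma}$ lower-bounds $\ell^{\sigma}$ (so that the ellipsoid-based empirical minimizer of $\ell^{\sigma}$ can be compared to the truncated population quantity).
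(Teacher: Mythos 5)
Your proof is correct and follows essentially the same route as the paper: part (1) is the pointwise bound $|\sigma(z)-y|\le 2$ over an interval of length at most $\zalphazero$, and part (2) is the same three-way case split on the size of $\vect v\cdot\vect x$, using that $\sigma$ agrees with $\sign$ beyond $\pm\zalphazero$ to see that the tail integral has the right sign. Your explicit note about orientation when $\vect v\cdot\vect x<0$ is a small improvement in rigor over the paper's terse statement of part (1), but the argument is the same.
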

\begin{proof} 
The first property follows immediately from definition of $\text{trunc}$ and the fact that $\sigma$ and $y$ are bounded by $1$. We show the second property in cases:
\begin{itemize}
\item If $\vect v\cdot\vect x\in[-\zalphazero,\zalphazero]$, we have $\text{trunc}(\vect v\cdot\vect x)=\vect v\cdot\vect x$
and we have $\ell_{\text{trunc}}^{\sigma}(\vect v;\vect x,y)=\ell^{\sigma}(\vect v;\vect x,y)$.
\item If $\vect v\cdot\vect x>\zalphazero$ we have 
\[
\ell^{\sigma}(\vect v;\vect x,y)=\int_{0}^{\text{trunc}(\vect v\cdot\vect x)}(\sigma(z)-y)\d z+\int_{\text{trunc}(\vect v\cdot\vect x)}^{\vect v\cdot\vect x}(1-y)\d z\geq\int_{0}^{\text{trunc}(\vect v\cdot\vect x)}(\sigma(z)-y)\d z
\]
\item If $\vect v\cdot\vect x<-\zalphazero$ we have 
\[
\ell^{\sigma}(\vect v;\vect x,y)=\int_{0}^{\text{trunc}(\vect v\cdot\vect x)}(\sigma(z)-y)\d z+\int_{\text{trunc}(\vect v\cdot\vect x)}^{\vect v\cdot\vect x}(-1-y)\d z\geq\int_{0}^{\text{trunc}(\vect v\cdot\vect x)}(\sigma(z)-y)\d z
\]
\end{itemize}
\end{proof}
We also need the following claim.
\begin{claim}
For a sufficiently large value of the absolute constant $C$, with
probability at least $0.999$ over the choice of the set $S$, we
have the following uniform convergence bound for $\ell_{\text{trunc}}^{\sigma}$:
\begin{equation}
\max_{\vect v'\in\R^{d}}\abs{\E_{(\vect x,y)\sim D}\left[\ell_{\text{trunc}}^{\sigma}(\vect v';\vect x,y)\right]-\E_{(\vect x,y)\sim S}\left[\ell_{\text{trunc}}^{\sigma}(\vect v';\vect x,y)\right]}\leq\frac{\eps}{4}\label{eq: uniform convergence for truncated surrogate loss-1}
\end{equation}
\end{claim}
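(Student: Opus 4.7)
My plan is to prove this statement as an application of a standard uniform convergence bound for real-valued function classes of bounded pseudo-dimension. The first step is to record the structural form
\[
\ell_{\text{trunc}}^{\sigma}(\vect v'; \vect x, y) = \phi_y(\text{trunc}(\vect v' \cdot \vect x)), \qquad \phi_y(t) := \int_0^t (\sigma(z)-y)\d z.
\]
For each fixed $y \in \{\pm 1\}$, $\phi_y$ is monotone (since $\sigma(z) - y$ has a constant sign as $\sigma$ is $[-1,1]$-valued) and $2$-Lipschitz, so the composition $\phi_y \circ \text{trunc}$ is monotone, $2$-Lipschitz, and constant outside $[-\zalphazero, \zalphazero]$. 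By \Cref{claim:surrogate_loss_properties}(1), each value of $\ell_{\text{trunc}}^{\sigma}(\vect v'; \vect x, y)$ lies in $[-2\zalphazero, 2\zalphazero]$.

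Next, I would bound the pseudo-dimension of the class $\mathcal{F} = \{(\vect x, y) \mapsto \ell_{\text{trunc}}^{\sigma}(\vect v'; \vect x, y) : \vect v' \in \R^{\Delta}\}$ by $\Delta + 1$. Indeed, for any threshold $r \in \R$ and any point $(\vect x, y)$, the condition $\phi_y(\text{trunc}(\vect v' \cdot \vect x)) \ge r$ either holds for every $\vect v'$, for no $\vect v'$, or (by the monotonicity of $\phi_y \circ \text{trunc}$) reduces to a single half-space constraint of the form $\vect v' \cdot \vect x \gtreqless c(r, \vect x, y)$. Since affine half-spaces in $\R^{\Delta}$ have VC-dimension $\Delta + 1$, the pseudo-dimension of $\mathcal{F}$ is at most $\Delta + 1$. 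Then I would invoke a standard uniform convergence bound: for a class of $[-B, B]$-valued functions with pseudo-dimension $D$, a sample of size $N = \Omega\bigl( \tfrac{B^2}{\eta^2}(D \log(B/\eta) + \log(1/\delta)) \bigr)$ suffices to guarantee $\sup_{\vect v'} |\E_{(\vect x,y)\sim S}[\ell_{\text{trunc}}^{\sigma}(\vect v';\vect x,y)] - \E_{(\vect x,y)\sim D}[\ell_{\text{trunc}}^{\sigma}(\vect v';\vect x,y)]| \le \eta$ with probability at least $1 - \delta$. Plugging in $B = 2\zalphazero$, $D = \Delta + 1$, $\eta = \eps/4$, $\delta = 10^{-3}$ gives the required sample size $O\bigl( \tfrac{\Delta \zalphazero^2}{\eps^2} \log(\Delta \zalphazero / \eps) \bigr)$, which matches $|S|$ for a sufficiently large universal constant $C$.

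The main obstacle is that the supremum in the claim is taken over all $\vect v' \in \R^{\Delta}$ without any norm bound, so a direct $\eps$-net argument on a bounded ball does not suffice. The key observation that bypasses this is that the truncation operation together with the monotonicity of $\phi_y$ renders the loss insensitive to $\vect v'$ beyond its sign pattern against a half-space, yielding a pseudo-dimension that depends only on $\Delta$ and is completely independent of $w_{\max}$. This dimension-only dependence is precisely what enables the subsequent runtime and sample complexity to be polynomial in $\log w_{\max}$ rather than $w_{\max}$ itself, which is the central quantitative goal of this section.
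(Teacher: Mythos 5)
Your proof is correct, and it takes a genuinely different route from the paper's. The paper proceeds by rewriting the truncated loss as a difference of two integrals of halfspace indicators, namely
\[
\int_{0}^{\text{trunc}(\vect v'\cdot\vect x)}(\sigma(z)-y)\d z
= \int_{0}^{\zalphazero}(\sigma(z)-y)\ind_{z\leq\vect v'\cdot\vect x}\d z
- \int_{-\zalphazero}^{0}(\sigma(z)-y)\ind_{z>\vect v'\cdot\vect x}\d z ,
\]
swapping the integral and expectations, and then invoking VC uniform convergence for the binary class $\{(\vect x,y)\mapsto \ind_{z\le \vect v'\cdot\vect x}\}_{\vect v',z}$ (and its variants weighted by $y$), with a final multiplication by $\zalphazero$ to account for the integration length. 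Your argument skips the integral rewrite and instead controls the real-valued class $\{(\vect x,y)\mapsto \ell_{\text{trunc}}^\sigma(\vect v';\vect x,y)\}_{\vect v'}$ directly, using the observation that for each fixed $y$ the map $t\mapsto \phi_y(\text{trunc}(t))$ is monotone so that all superlevel sets in $\vect v'$-space are halfspaces, hence the pseudo-dimension is $O(\Delta)$, and a standard generalization bound for bounded classes of bounded pseudo-dimension finishes the job. The quantitative outcome is the same $O\bigl(\tfrac{\Delta\zalphazero^2}{\eps^2}\log\tfrac{\Delta\zalphazero}{\eps}\bigr)$ sample bound, and, crucially, both routes avoid any $\eps$-net over a $w_{\max}$-sized ball, which is the point you correctly highlight. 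The paper's argument is more elementary and self-contained (four applications of the binary VC bound); yours packages the same combinatorial content into the pseudo-dimension and is slightly cleaner conceptually. One small imprecision: the clause ``since affine half-spaces in $\R^\Delta$ have VC-dimension $\Delta+1$, the pseudo-dimension of $\mathcal{F}$ is at most $\Delta+1$'' is shorthand for the fact that the shattering sets in $\vect v'$-space are halfspaces (so the arrangement bound $\sum_{j=0}^\Delta\binom{m}{j}<2^m$ for $m>\Delta$ limits shattering), not that the subgraph sets of $\mathcal{F}$ are themselves halfspaces in $(\vect x,y,r)$; the conclusion is still right, and the constant difference between $\Delta$ and $\Delta+1$ is immaterial.
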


\begin{proof}
The proof is exactly the same as in the case of Equation \ref{eq: uniform convergence for truncated surrogate loss} and is deferred to the appendix.
\end{proof}

For the GLM $D$,
define $D_{y\vert \x}$ to be the  distribution of the label $y$ conditioned on a specific point $\vect x \in \R^d$. 
We now compare the value of the truncated surrogate loss for a candidate vector $\vect v$ and the optimal vector $\vect v^*$ 
as follows (this argument is similar to the argument in \cite{kanade2018note}):
\begin{align}
\E_{y\sim D_{y\vert \x}}&\left[\ell_{\text{trunc}}^{\sigma}(\vect v;\vect x,y)\right]-\E_{y\sim D_{y\vert \x}}\left[\ell_{\text{trunc}}^{\sigma}(\vect{v^{*}};\vect x,y)\right]=\E_{y\sim D_{y\vert \x}}\left[\int_{\text{trunc}(\vect{v^{*}}\cdot\vect x)}^{\text{trunc}(\vect v\cdot\vect x)}(\sigma(z)-y)\d z\right] \notag\\
&=\int_{\text{trunc}(\vect{v^{*}}\cdot\vect x)}^{\text{trunc}(\vect v\cdot\vect x)}(\sigma(z)-\sigma'(\vect{v^{*}}\cdot\vect x))\d z=\int_{\text{trunc}(\vect{v^{*}}\cdot\vect x)}^{\text{trunc}(\vect v\cdot\vect x)}(\sigma(z)-\sigma'(\text{trunc}(\vect{v^{*}}\cdot\vect x)))\d z\notag\\
&\geq\int_{\text{trunc}(\vect{v^{*}}\cdot\vect x)}^{\text{trunc}(\vect v\cdot\vect x)}(\sigma(z)-\sigma(\text{trunc}(\vect{v^{*}}\cdot\vect x)))\d z-\int_{\text{trunc}(\vect{v^{*}}\cdot\vect x)}^{\text{trunc}(\vect v\cdot\vect x)}\abs{\sigma(\text{trunc}(\vect{v^{*}}\cdot\vect x))-\sigma'(\text{trunc}(\vect{v^{*}}\cdot\vect x))}\d z\notag\\
&\geq\int_{\text{trunc}(\vect{v^{*}}\cdot\vect x)}^{\text{trunc}(\vect v\cdot\vect x)}(\sigma(z)-\sigma(\text{trunc}(\vect{v^{*}}\cdot\vect x)))\d z-2\zalphazero\eps\notag\\
&\geq\frac{1}{2}\left(\sigma(\text{trunc}(\vect v\cdot\vect x))-\sigma(\text{trunc}(\vect{v^{*}}\cdot\vect x)))\right)^{2}-2\zalphazero\eps=\frac{1}{2}\left(\sigma(\vect v\cdot\vect x)-\sigma(\vect{v^{*}}\cdot\vect x))\right)^{2}-2\zalphazero\eps.\label{eq: kanade notes step-1}
\end{align}
The first equality is due to linearlity of expectation and the definition
of $\ell_{\text{trunc}}^{\sigma}$.
 The second equality is because $D_{y\vert \x}$ is the conditional probability distribution of a GLM with vector $\vect{v^{*}}$
and activation $\sigma'$.
The third equality is due to Equation \ref{eq: sigma compatible with truncation-1}.
The fourth inequality is due to the triangle inequality.
The fifth inequality uses the premise that $\abs{\sigma(z')-\sigma'(z')}\leq\eps$
for all $z'\in\R$.
 The inequality $\int_{\text{trunc}(\vect{v^{*}}\cdot\vect x)}^{\text{trunc}(\vect v\cdot\vect x)}(\sigma(z)-\sigma(\text{trunc}(\vect{v^{*}}\cdot\vect x)))\d z\geq\frac{1}{2}\left(\sigma(\text{trunc}(\vect v\cdot\vect x))-\sigma(\text{trunc}(\vect{v^{*}}\cdot\vect x)))\right)^{2}$
is due to $\sigma$ being monotonically increasing and $1$-Lipschitz.
 The last equality is due to Equation \ref{eq: sigma compatible with truncation-1}.
 
We therefore have for any vector $\vect v\in\R^{d}$
\begin{align*}
\frac{1}{2}\E_{(\vect x,y)\sim D}\left[\left(\sigma(\vect v\cdot\vect x)-\sigma(\vect{v^{*}}\cdot\vect x))\right)^{2}\right]-2\zalphazero\eps&\leq\E_{(\vect x,y)\sim D}\left[\ell_{\text{trunc}}^{\sigma}(\vect v;\vect x,y)-\ell_{\text{trunc}}^{\sigma}(\vect{v^{*}};\vect x,y)\right]\\
&=\E_{(\vect x,y)\sim D}\left[\ell_{\text{trunc}}^{\sigma}(\vect v;\vect x,y)-\ell^{\sigma}(\vect{v^{*}};\vect x,y)\right]\\
&\leq\E_{(\vect x,y)\sim S}\left[\ell_{\text{trunc}}^{\sigma}(\vect v;\vect x,y)\right]-\E_{(\vect x,y)\sim D}\left[\ell^{\sigma}(\vect{v^{*}};\vect x,y)\right]+\eps/4\\
&\leq\E_{(\vect x,y)\sim S}\left[\ell^{\sigma}(\vect v;\vect x,y)\right]-\E_{(\vect x,y)\sim D}\left[\ell^{\sigma}(\vect{v^{*}};\vect x,y)\right]+\eps/4.
\end{align*}
The first step uses Equation \ref{eq: kanade notes step-1}. The second uses Claim \ref{claim: truncation does nothing for ground truth surrogate loss-1}. The third uses Equation \ref{eq: uniform convergence for truncated surrogate loss-1} and the fourth step uses Claim~\ref{claim:surrogate_loss_properties} (2).
Finally, combining with Equation \ref{eq: the minimizing vector has good empirical surrogate loss-1},
we have
\begin{equation}
\label{eq: closeness in sigma distance  non-misspecified}
\E_{\vect x\sim D_{X}}\left[\left(\sigma(\vect x\cdot\vect{v^{*}})-\sigma(\vect x\cdot\vect v_{0})\right)^{2}\right]\leq8\zalphazero\eps
\end{equation}
Recalling that
for all $z\in\R$ we have $\abs{\sigma'(z)-\sigma(z)}\leq\eps$, which implies that for every $z_1$ and $z_2$ in $\R$ we have 
\begin{align*}
|
(
\sigma'(z_1)-&\sigma'(z_2)
)^2
-
\left(
\sigma(z_1)-\sigma(z_2)
\right)^2
|
\\
&\leq 2\abs{
\sigma'(z_1)-\sigma(z_1)
-(\sigma'(z_2)-\sigma(z_2))
}
\cdot
\abs{
\sigma'(z_1)-\sigma'(z_2)
+(\sigma(z_1)-\sigma(z_2))
}
\leq 
16 \eps. 
\end{align*}
Setting $z_1=\vect x\cdot\vect{v^{*}}$, $z_2=\vect x\cdot\vect v_{0}$, and
combining with Equation \ref{eq: closeness in sigma distance  non-misspecified}
yields the desired Equation \ref{eq: goal of thm for slightly mis-specified sigma}.
\end{proof}
\paragraph{Putting it all together: the case of  $\eps$-regular activation $\sigma^{\text{reg}}$.}
We are now ready to prove \Cref{thm: the GLM component theorem}. In \Cref{thm: misspecified truncated GLM}, we gave a polynomial time learning algorithm for GLMs which were misspecified, truncated and had weights with magnitued potentially exponential in dimension. Recall that the goal of this section was to learn $\epsilon$-regular GLMs, which naturally arose in the context of our application to halfspace learning. We now argue that learning $\epsilon$-regular GLMs can be reduced to learning mis-specified GLMs with activation $\psi(z)\coloneq \E_{x\sim \Gauss(0,1)}[\sign(z+x)]$.
\begin{proof}
Recall that in the setting of Theorem \ref{thm: the GLM component theorem}, the algorithm is given inputs $\eps\in(0,1)$, positive integer $\Delta$,
a value $w_{\max}\in\R_{\geq1}$, and a sample access to a distribution
$D$ supported on $\{ \vect x\in\R^{\Delta}:\norm{\vect x}_2\leq\sqrt{\Delta}\} \times\left\{ \pm1\right\} $. 
The distribution $D$ is a GLM with an arbitrary $\R^{\Delta}$-marginal supported
on $\{ \vect x\in\R^{\Delta}:\norm{\vect x}_2\leq\sqrt{\Delta}\} $,
and an activation $\sigma^{\text{reg}
}(\vect{v^{*}}\cdot\vect x),$ where $\sigma^{\text{reg}
}$
is $\eps$-regular and $\norm{\vect v^{*}}_2\leq w_{\max}$. With probability at least $0.95$ ${\AregGLM}$
the algorithm should output a vector $\vect v_{0}$ for which 
\begin{equation}
    \label{eq: goal for GLM component}
\E_{(x,y)\sim D}\left[\left(\sigma^{\text{reg}
}(\vect{v^{*}}\cdot\vect x)-\sigma^{\text{reg}
}(\vect v_{0}\cdot\vect x)\right)^{2}\right]\leq O\left(\eps \cdot \log (\Delta/\eps)\right).
\end{equation}

We now state the algorithm ${\AregGLM}$ we use for Theorem \ref{thm: the GLM component theorem}. The algorithm ${\AregGLM}(\eps,\Delta,w_{\max},D)$ will use the algorithm $\Amisspecified$ from Theorem \ref{thm: misspecified truncated GLM}, specifically running $\Amisspecified^{\psiclip}\left(4\eps,\rho,w_{\max},D\right)$ with the choice of activation $\sigma=\psiclip$ and parameter $\rho$ to be defined below. Afterwards, the algorithm ${\AregGLM}$ outputs the same vector $\widehat{v}$ that it received from $\Amisspecified$.

Below, we describe the function $\psiclip: \R \rightarrow[-1,1]$ used in our algorithm and prove Theorem \ref{thm: the GLM component theorem}. 

Recall that, since $\sigma^{\text{reg}}$ is $\eps$-regular (Definition \ref{def: regular activation}), for some integer $d$ and unit vector $\vect u\in\R^{d}$
satisfying $\norm{\vect u}_{2}=1$ and $\norm{\vect u}_{4}^2\leq\eps$
we have
\begin{equation}
\label{eq: epsilon-regular activation}
\sigma^{\text{reg}
}(z)=\E_{\vect x\sim\left\{ \pm1\right\} ^{d}}\left[\sign\left(z+\vect x\cdot\vect u\right)\right]
\end{equation}
In addition to activation  $\sigma^{\text{reg}}$, we define the following function
\[
\psi(z)=\E_{x\sim \N(0,1)}\left[\sign\left(z+x\right)\right].
\]
Since $\sigma^{\text{reg}}$ is an $\eps$-regular activation, the Berry-Esseen theorem (Fact \ref{fact: Berry-Esseen for boolean}) implies that
\begin{equation}
    \label{eq: Berry-Esseen for reg activation}
    \max_{z
    \in \R}
    \abs{\psi(z)-\sigma^{\text{reg}}(z)} \leq 2\eps
\end{equation}

Now, we are ready to define the function $\psiclip$ that is used in our algorithm. Let $\clippingthreshold = \setalphazero$ be a real-valued parameter, where $C$ is a sufficiently large absolute constant. Define the following ``clipped'' versions of $\sigma^{\text{reg}}$ and  $\psi$:
\begin{equation}
    \sigma^{\text{reg}}_{\text{clipped}}(z)
    =
    \begin{cases}
        \sigma^{\text{reg}}(z) & \text{if} \abs{z}\leq \clippingthreshold \\
        \sign(z) & \text{if} \abs{z} > \clippingthreshold 
    \end{cases}
\end{equation}
\begin{equation}
    \psiclip(z)
    =
    \begin{cases}
        \psi(z) & \text{if} \abs{z}\leq \clippingthreshold-1\\
         \psi(\clippingthreshold-1)+(1-\psi(\clippingthreshold-1))\cdot (z-(\clippingthreshold-1))
         & \text{if } z\in(\clippingthreshold-1, \clippingthreshold]\\
         \psi(-(\clippingthreshold-1))-(1+\psi(-(\clippingthreshold-1)))\cdot (-z-(\clippingthreshold-1))
         & \text{if } z\in[-\clippingthreshold, -\clippingthreshold+1)\\
        \sign(z) &\text{if} \abs{z} > \clippingthreshold 
    \end{cases}
\end{equation}
Additionally, recall that 

The function $\sigma^{\text{reg}}_{\text{clipped}}(z)$ equals to $\sigma^{\text{reg}}(z)$ in $[-(\clippingthreshold-1),(\clippingthreshold-1)]$ and equals to $\sign(z)$ outside of this interval. The function $\psiclip(z)$ likewise equals to $\psi(z)$ in $[-(\clippingthreshold-1),(\clippingthreshold-1)]$ and equals $\sign(z)$  in $\R \setminus[-\clippingthreshold,\clippingthreshold]$. In the intervals $[-\clippingthreshold, -\clippingthreshold+1]$ and $[\clippingthreshold-1, \clippingthreshold]$, the function $\psiclip(z)$ interpolates linearly between the values of $\psi(z)$ and $\sign(z)$ on the respective edges of these two intervals.

Also, note that the function $\psiclip(z)$ is defined without any reference to $\sigma^{\text{reg}}$, and thus one can run the algorithm $\Amisspecified^{\psiclip}\left(4\eps,\rho,w_{\max},D\right)$ without knowing  $\sigma^{\text{reg}}$ in advance. Below, we will show that for any $\eps$-regular $\sigma^{\text{reg}}$, with probability at least $0.95$ the resulting vector $\vect{v}_0$ will satisfy Equation \ref{eq: goal for GLM component}.

Recall, that the distribution $D$ is a GLM with an arbitrary $\R^{\Delta}$-marginal supported
on $\{ \vect x\in\R^{\Delta}:\norm{\vect x}_2\leq\sqrt{\Delta}\} $,
and an activation $\sigma^{\text{reg}
}(\vect{v^{*}}\cdot\vect x)$. 

From Hoeffding inequality and the standard Gaussian tail bound respectively, we have the following two inequalities
\begin{equation}
    \label{eq: tail bound for Gaussian}
    \Pr_{\vect x\sim\left\{ \pm1\right\} ^{d}}\left[\abs{\vect x\cdot\vect u}\geq \clippingthreshold-1\right] \leq 2e^{-(\clippingthreshold-1)^2/2},
\end{equation}
\begin{equation}
    \label{eq: tail bound for Boolean}
    \Pr_{x\sim \N(0,1)}\left[|x|\geq \clippingthreshold-1\right]
    \leq
    2e^{-(\clippingthreshold-1)^2/2},
\end{equation}
which allow us to conclude that
\begin{equation}
    \label{eq: clipping does not change regular activation much}
    \max_{z
    \in \R}
    \abs{\sigma^{\text{reg}}_{\text{clipped}}(z)-\sigma^{\text{reg}}(z)} \leq 2e^{-(\clippingthreshold-1)^2/2}
\end{equation}
\begin{equation}
    \label{eq: clipping does not change Gaussian activation much}
    \max_{z
    \in \R}
    \abs{\psiclip(z)-\psi(z)} \leq 2e^{-(\clippingthreshold-1)^2/2}.
\end{equation}

We define $D_{\text{clipped}}$ to be a GLM with the same $\R^{\Delta}$-marginal as $D$ but activation $\sigma^{\text{reg}}_{\text{clipped}}$. 
We are now ready to use Theorem \ref{thm: misspecified truncated GLM} to prove the following claim about $D_{\text{clipped}}$:
\begin{claim}
\label{claim: running A regular on truncated distribution}
    Let $\vect{v}_0$ be obtained by running $\Amisspecified^{\psiclip}(4\eps,\rho,w_{\max},D_{\text{clipped}})$. Then, with probability at least $0.99$, the vector  $\vect{v}_0$  satisfies Equation \ref{eq: goal for GLM component}.
\end{claim}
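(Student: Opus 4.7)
My plan is to apply \Cref{thm: misspecified truncated GLM} directly to the GLM $D_{\text{clipped}}$, taking the algorithm's activation to be $\sigma = \psiclip$ and the ground-truth activation (in the sense of the theorem) to be $\sigma' = \sigma^{\text{reg}}_{\text{clipped}}$, which by definition is the activation of $D_{\text{clipped}}$. The key observation is that for $\rho = \setalphazero$ with a sufficiently large constant $C$, the two activations differ pointwise by at most $4\eps$ and both coincide with $\sign(z)$ outside $[-\rho,\rho]$, so the hypotheses of \Cref{thm: misspecified truncated GLM} are met with misspecification parameter $4\eps$, matching the call to $\Amisspecified^{\psiclip}(4\eps,\rho,w_{\max},D_{\text{clipped}})$.

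\textbf{Verifying the hypotheses.} For $\psiclip$: monotonicity is immediate since $\psi$ is increasing (it equals $2\Phi - 1$ for the Gaussian CDF $\Phi$) and both linear segments have non-negative slopes $1-\psi(\rho-1)$ and $1+\psi(-(\rho-1))$. The function $\psi$ is $1$-Lipschitz because $\psi'(z) = 2\phi(z) \le 2/\sqrt{2\pi} < 1$ where $\phi$ is the standard Gaussian density; the interpolating slopes are bounded by $2e^{-(\rho-1)^2/2} < 1$ via the Gaussian tail bound in Equation~\ref{eq: tail bound for Boolean}, and the pieces join continuously, so $\psiclip$ is globally $1$-Lipschitz. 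The sign-clipping condition $\psiclip(z) = \sign(z)$ for $|z|>\rho$ holds by construction. For $\sigma^{\text{reg}}_{\text{clipped}}$: by \Cref{def: regular activation}, $\sigma^{\text{reg}}$ is monotone non-decreasing (as an expectation of functions $z \mapsto \sign(z + \vect u\cdot \vect x)$), and clipping to $\sign(z)$ outside $[-\rho,\rho]$ preserves monotonicity since $\sigma^{\text{reg}}$ takes values in $[-1,1]$. A three-step triangle inequality combined with Equations~\ref{eq: Berry-Esseen for reg activation}, \ref{eq: clipping does not change regular activation much}, and \ref{eq: clipping does not change Gaussian activation much} gives, for every $z\in\R$,
\[
|\psiclip(z) - \sigma^{\text{reg}}_{\text{clipped}}(z)| \le |\psiclip(z) - \psi(z)| + |\psi(z) - \sigma^{\text{reg}}(z)| + |\sigma^{\text{reg}}(z) - \sigma^{\text{reg}}_{\text{clipped}}(z)| \le 2\eps + 4e^{-(\rho-1)^2/2} \le 4\eps,
\]
for $C$ large enough in the definition of $\rho$.

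\textbf{Applying the theorem and concluding.} Invoking \Cref{thm: misspecified truncated GLM} with these choices yields, with probability at least $0.99$, a vector $\vect v_0$ satisfying
\[
\E_{\vect x \sim (D_{\text{clipped}})_X}\bigl[\bigl(\sigma^{\text{reg}}_{\text{clipped}}(\vect v^*\cdot \vect x) - \sigma^{\text{reg}}_{\text{clipped}}(\vect v_0\cdot \vect x)\bigr)^2\bigr] \le 24 \rho \cdot 4\eps = O(\eps \log(\Delta/\eps)).
\]
To transfer the bound from $\sigma^{\text{reg}}_{\text{clipped}}$ back to $\sigma^{\text{reg}}$, Equation~\ref{eq: clipping does not change regular activation much} and the triangle inequality give $|\sigma^{\text{reg}}(z_1) - \sigma^{\text{reg}}(z_2)| \le |\sigma^{\text{reg}}_{\text{clipped}}(z_1) - \sigma^{\text{reg}}_{\text{clipped}}(z_2)| + 4e^{-(\rho-1)^2/2}$; using $(a+b)^2 \le 2a^2 + 2b^2$ adds only an $O(e^{-(\rho-1)^2})$ term that is negligible compared to $\eps \log(\Delta/\eps)$. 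Finally, since $D$ and $D_{\text{clipped}}$ share the same $\R^{\Delta}$-marginal by construction, the expectation in Equation~\ref{eq: goal for GLM component} is identical whether taken over $D$ or $D_{\text{clipped}}$, so the conclusion follows. The only step requiring genuine care is the $1$-Lipschitz verification of $\psiclip$ at the transition points $\pm(\rho-1)$ and $\pm\rho$, where one must confirm that the interpolating segment slopes are dominated by $1$; this however reduces immediately to the Gaussian tail estimate already used above, so no significant obstacle arises.
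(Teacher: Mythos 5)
Your proof is correct and follows essentially the same route as the paper's: verify the hypotheses of \Cref{thm: misspecified truncated GLM} for $\sigma = \psiclip$ and $\sigma' = \sigma^{\text{reg}}_{\text{clipped}}$, invoke the theorem to get the bound in terms of $\sigma^{\text{reg}}_{\text{clipped}}$, and then transfer back to $\sigma^{\text{reg}}$ via \Cref{eq: clipping does not change regular activation much}. The only stylistic difference is the final squaring step, where you use $(a+b)^2 \le 2a^2+2b^2$ (which doubles the leading term, not just adding the exponentially small one as your phrasing suggests) while the paper uses $|(a+b)^2-a^2| \le 2|ab|+b^2$ together with the boundedness of $\sigma^{\text{reg}}$; both are absorbed by the $O(\eps\log(\Delta/\eps))$ target and so this does not affect correctness.
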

\begin{proof}
We see that the choice of $\sigma=\psiclip$ satisfies the requirements on $\sigma$ in Theorem \ref{thm: misspecified truncated GLM}. Indeed,
\begin{itemize}
    \item The function $\psiclip$ is monotone non-decreasing, because so is $\psi$. (Monotonicity of $\psi$ follows directly from the definition $\psi$.)
    \item The function $\psiclip$ is $1$-Lipschitz because, 
    \begin{itemize}
        \item For $z$ in $[-(\clippingthreshold-1),\clippingthreshold-1]$ we have $\psiclip(z)=\psi(z)$ and we have
        \[
        \frac{d}{\d z} \psi(z)
        =
        \frac{1}{\sqrt{2\pi}}
        e^{-z^2/2} \leq 1,
        \]
        and hence $\psiclip$ is $1$-Lipschitz in this interval.
        \item In the interval $(\clippingthreshold-1,\clippingthreshold]$, the function $\psiclip$ is linear, with slope $1-\psi(\clippingthreshold-1)$. Since $\clippingthreshold\geq 1$, it follows from the definition of $\psi$ that $\psi(\clippingthreshold-1)\geq 0$, and thus $\psiclip$ is $1$-Lipschitz in the interval  $(\clippingthreshold-1,\clippingthreshold]$. The case of interval $[-\clippingthreshold, \clippingthreshold-1)$ is fully analogous.
        \item Outside the interval $[-\clippingthreshold, \clippingthreshold]$ we have $\psiclip(z)=\sign(z)$, which is trivially $1$-Lipschitz.
    \end{itemize}
    \item By construction, we have $\psiclip(z)=\sign(z)$ whenever $z$ is not in $[-\clippingthreshold, \clippingthreshold]$. 
\end{itemize}
By combining Equation \ref{eq: clipping does not change regular activation much}, Equation \ref{eq: clipping does not change Gaussian activation much} and Equation \ref{eq: Berry-Esseen for reg activation}, using the triangle inequality and substituting $\clippingthreshold=\setalphazero$ we see that for sufficiently large absolute constant $C$ we have 
\[
\max_{z
    \in \R}
    \abs{\psiclip(z)-\sigma^{\text{reg}}_{\text{clipped}}(z)} \leq 2\eps + 4e^{-(\clippingthreshold-1)^2/2} \leq 4\eps.
\]

Overall, we see that all the premises of Theorem \ref{thm: misspecified truncated GLM} are satisfied for $\sigma=\psiclip$ and $\sigma'=\sigma^{\text{reg}}_{\text{clipped}}$. Therefore, with probability at least $0.99$, the vector $\vect{v}_0$, obtained by running $\Amisspecified^{\psiclip}(4\eps,\rho,w_{\max},D_{\text{clipped}})$, satisfies 
\[
\E_{\vect x\sim D_{X}}\left[\left(\sigma^{\text{reg}}_{\text{clipped}}(\vect x\cdot\vect{v^{*}})-\sigma^{\text{reg}}_{\text{clipped}}(\vect x\cdot\vect v_{0})\right)^{2}\right]\leq 96 \rho \eps.
\]
Combining the inequality above with Equation \ref{eq: clipping does not change regular activation much}, using the basic inequality $\abs{(a+b)^2-a^2}\leq 2\abs{a b}+b^2$, and recalling that $\abs{\sigma^{\text{reg}}(z)}\leq 1$ we see that 
\[
\E_{\vect x\sim D_{X}}\left[\left(\sigma^{\text{reg}}(\vect x\cdot\vect{v^{*}})-\sigma^{\text{reg}}(\vect x\cdot\vect v_{0})\right)^{2}\right]\leq 96 \rho\eps
+8e^{-(\clippingthreshold-1)^2/2}+4e^{-(\clippingthreshold-1)^2}.
\]
Substituting $\clippingthreshold=\setalphazero$ we see that for sufficiently large absolute constant $C$ the inequality above implies Equation \ref{eq: goal for GLM component}, which finishes the proof. 
\end{proof}
Finally, we compare the distributions $D$ and $D_{\text{clipped}}$. First, for every specific $z$, the TV distance between a pair of $\{\pm 1\}$-valued random variables with expectations $\sigma^{\text{reg}}_{\text{clipped}}(z)$ and $\sigma^{\text{reg}}(z)$ respectively is bounded by $\abs{\sigma^{\text{reg}}_{\text{clipped}}(z)-\sigma^{\text{reg}}(z)}/2$, which is at most $e^{-(\clippingthreshold-1)^2/2}$ by Equation \ref{eq: clipping does not change regular activation much}. Comparing this with the definitions of $D$ and $D_{\text{clipped}}$ we see that 
\[
\mathrm{dist}_{\mathrm{TV}}
(D,D_{\text{clipped}})
\leq \frac{1}{2} \max_{z
    \in \R}
    \abs{\sigma^{\text{reg}}_{\text{clipped}}(z)-\sigma^{\text{reg}}(z)} \leq e^{-(\clippingthreshold-1)^2/2}.
\]
Theorem \ref{thm: misspecified truncated GLM} implies that the algorithm  $\Amisspecified^{\psiclip}$ uses at most $O\left(\frac{\Delta\rho^{2}}{\eps^{2}}\ln\frac{\Delta\rho}{\eps}\right)$
samples from the distribution it accesses. This, together with the data-processing inequality and the bound above, lets us conclude that
\begin{align*}
\mathrm{dist}_{\mathrm{TV}}
&\left(\Amisspecified^{\psiclip}(4\eps,\rho,w_{\max},D),
\Amisspecified^{\psiclip}(4\eps,\rho,w_{\max},D_{\text{clipped}})
\right)\\
&\leq
O\left(\frac{\Delta\rho^{2}}{\eps^{2}}\ln\frac{\Delta\rho}{\eps}\right)
\mathrm{dist}_{\mathrm{TV}}
(D,D_{\text{clipped}})
\leq 
O\left(\frac{\Delta\rho^{2}}{\eps^{2}}\ln\frac{\Delta\rho}{\eps}\right)
\cdot 
e^{-(\clippingthreshold-1)^2/2}.
\end{align*}
Substituting $\clippingthreshold=\setalphazero$, we see that for sufficiently large absolute constant $C$ the expression above is at most $0.05$. Combining this with Claim \ref{claim: running A regular on truncated distribution}, we see that the vector $\vect{v}_0$ given by $\Amisspecified^{\psiclip}(4\eps,\rho,w_{\max},D)$ with probability at least $0.99-0.05=0.95$ satisfies  Equation \ref{eq: goal for GLM component}, finishing the proof of correctness for the algorithm ${\AregGLM}(\eps,\Delta,w_{\max},D)$ we use for Theorem \ref{thm: the GLM component theorem} (recall ${\AregGLM}(\eps,\Delta,w_{\max},D)=\Amisspecified^{\psiclip}(4\eps,\rho,w_{\max},D)$). 

Finally, we check the required sample complexity and run-time bounds. Referring to Theorem \ref{thm: misspecified truncated GLM}, and substituting $\clippingthreshold=\setalphazero$, we see that the sample complexity of the resulting algorithm is $O\left(\frac{\Delta}{\eps^2}\cdot \left(\log \frac{\Delta}{\eps}\right)^3\right)$, as required by Theorem \ref{thm: the GLM component theorem}. The resulting run-time is $\poly\left(\frac{\Delta}{\eps}\log w_{\max}\right)$, as required by Theorem \ref{thm: the GLM component theorem}.
\end{proof}

\section{Learning the Regular Coefficients}\label{section:regular}

In this section, we give an algorithm for learning the regular tail coefficients once we have a good enough estimate for the head coefficients in the structured case (see \Cref{figure:proof-outline-0}). We provide the following result.

\begin{thm}
\label{thm: with known head vars and weights}
There exists an algorithm
$\findregularcoefficients(\eps,d,H,\vect v_{\mathrm{head}},\tau,D)$ (Algorithm \ref{algorithm:find-regular-coefficients})
that takes as inputs $\eps\in(0,1)$, positive integer $d$, a
set $H\subset[d]$, a vector $\vect v_{\mathrm{head}}\in\R^{d}$ that is supported
on $H$, and a sample access to a distribution $D$ supported on $\left\{ \pm1\right\} ^{d}\times\left\{ \pm1\right\} $
whose $\left\{ \pm1\right\} ^{d}$-marginal is the uniform distribution.
The algorithm runs in time $\poly(d/\eps)$ and outputs, with probability at least $0.995$, a vector
$\widehat{\vect v}\in\R^{d}$ for which 
\begin{equation}
\Pr_{(\vect x,y)\sim D}\left[\sign(\widehat{\vect v}\cdot\vect x+\tau)\neq y\right]\leq O\left(\frac{\opt_{\mathrm{reg}}}{\eps}\log\frac{1}{\eps}+\eps\right),\label{eq: bound with known head vars and weights}
\end{equation}
where $\opt_{\mathrm{reg}}=\opt_{\mathrm{reg}}(H,\eps)$ is defined as follows for ${\cal V}(H,\eps) = \{\vect v\in \R^d: \|\vect v\|_2 = 1, \|\vect v\|_4^2 \le \eps, \vect v_H = 0\}$:
\begin{equation}
\opt_{\mathrm{reg}}:=\min_{\vect v\in {\cal V}(H,\eps)}\Pr_{(\vect x,y)\sim D}\left[\sign(\vect v_{\mathrm{head}}\cdot\vect x+\vect v\cdot\vect x+\tau)\neq y\right]\label{eq: def of opt reg tail}
\end{equation}
\end{thm}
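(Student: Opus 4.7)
The plan is to compute $\widehat{\vect v}$ by minimizing the $\eps$-rescaled hinge loss
\[
\ell_\eps(\vect v;\x,y) := \relu\Bigl(1 - \tfrac{y(\vect v_{\mathrm{head}}\cdot \x + \vect v\cdot \x + \tau)}{\eps}\Bigr)
\]
on the empirical distribution over samples falling in the ``filter'' event $E := \{\x : |\vect v_{\mathrm{head}}\cdot \x + \tau| \le R\}$ for $R = C\log(1/\eps)$ with $C$ a sufficiently large constant, where the optimization variable $\vect v$ ranges over $\mathcal{W}:=\{\vect v\in\R^d : \vect v_H = 0,\ \|\vect v\|_2\le 1\}$. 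The final output is $\widehat{\vect v} = \vect v_{\mathrm{head}} + \widehat{\vect v}_{\mathrm{tail}}$. Because $\ell_\eps$ is convex in $\vect v$ and $\mathcal{W}$ is a Euclidean ball, this is a convex program solvable in $\poly(d/\eps)$ time on a sample of size $\poly(d/\eps)$. Standard Rademacher bounds, using the fact that $\ell_\eps$ is $O(\sqrt{d}/\eps)$-bounded on $E$ and $1/\eps$-Lipschitz in $\vect v\cdot\x$, give uniform convergence of $\E[\ell_\eps(\vect v;\x,y)\ind_E]$ over $\vect v\in\mathcal{W}$ up to an additive $O(\eps)$ slack, with probability $0.999$.

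The core of the analysis is to bound $\E[\ell_\eps(\vect v^*;\x,y)\ind_E]$, where $\vect v^*\in\mathcal{V}(H,\eps)$ attains $\opt_{\mathrm{reg}}$, by splitting the sample space into three cases. In the \emph{large-margin correct} case, where $y(\vect v_{\mathrm{head}}\cdot\x + \vect v^*\cdot\x+\tau)\ge \eps$, the hinge loss vanishes. In the \emph{small-margin} case, $|\vect v_{\mathrm{head}}\cdot\x+\vect v^*\cdot\x+\tau|<\eps$, the hinge loss is at most $1$; regularity $\|\vect v^*\|_2=1$ and $\|\vect v^*\|_4^2\le\eps$ together with Berry-Esseen (\Cref{lemma:berry-esseen}) applied to $\vect v^*\cdot\x$ conditionally on $\x_H$ bound the probability of this event by $O(\eps)$. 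In the \emph{misclassification} case, $y(\vect v_{\mathrm{head}}\cdot\x + \vect v^*\cdot\x+\tau)<0$, the hinge loss equals $1 + |\vect v_{\mathrm{head}}\cdot\x + \vect v^*\cdot\x+\tau|/\eps \le 1 + R/\eps + |\vect v^*\cdot\x|/\eps$ on $E$; the event has probability $\le \opt_{\mathrm{reg}}$, and sub-Gaussian tails for $\vect v^*\cdot\x$ over $\x\sim\cube{d}$ (obtained by cutting at $T=\sqrt{2\log(1/\opt_{\mathrm{reg}})}$ and using $\Pr[|\vect v^*\cdot\x|>t]\le 2e^{-t^2/2}$) yield $\E[|\vect v^*\cdot\x|\,\ind_{\text{case (c)}}] = O(\opt_{\mathrm{reg}}\sqrt{\log(1/\opt_{\mathrm{reg}})})$. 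Summing the three cases gives $\E[\ell_\eps(\vect v^*;\x,y)\ind_E] = O(\opt_{\mathrm{reg}}\log(1/\eps)/\eps + \eps)$, and since $\ell_\eps$ dominates the missclassification indicator, the same bound transfers to the error of $\widehat{\vect v}$ on $E$ after uniform convergence.

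To handle $E^c$, observe that $\|\widehat{\vect v}_{\mathrm{tail}}\|_2\le 1$ and $\|\vect v^*\|_2=1$, so Hoeffding over uniform $\x$ gives $\Pr[|\widehat{\vect v}_{\mathrm{tail}}\cdot\x|>R] \le 2e^{-R^2/2}\le \eps$ and similarly for $\vect v^*\cdot\x$; hence on all but an $\eps$-measure subset of $E^c$ the head term $\vect v_{\mathrm{head}}\cdot\x+\tau$ (of magnitude $>R$) dominates both tails, so $\sign(\widehat{\vect v}\cdot\x+\tau) = \sign(\vect v_{\mathrm{head}}\cdot\x+\tau) = \sign(\vect v_{\mathrm{head}}\cdot\x+\vect v^*\cdot\x+\tau)$, and the missclassification error of $\widehat{\vect v}$ on $E^c$ is at most that of $\vect v^*$ plus $O(\eps)$, contributing $\opt_{\mathrm{reg}} + O(\eps)$. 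Combining the two regions proves \eqref{eq: bound with known head vars and weights}. The main obstacle is the tight control of the hinge loss on misclassified samples, where the $1/\eps$ rescaling could blow up; this is tamed only by the simultaneous use of the head-filter cutoff $R=\Theta(\log(1/\eps))$ and the sub-Gaussian tail of $|\vect v^*\cdot\x|$ against the small misclassification event, which together prevent the $|\vect v^*\cdot\x|/\eps$ contribution from dominating.
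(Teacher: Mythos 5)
Your proposal is correct and follows essentially the same approach as the paper's proof: filter out samples with $|\vect v_{\mathrm{head}}\cdot\x+\tau| > \Theta(\log\tfrac1\eps)$, minimize the $\eps$-rescaled hinge loss over the unit-ball tail directions, bound the population hinge loss of the optimal $\vect v^*$ by combining Berry--Esseen anti-concentration for the small-margin mass with sub-Gaussian concentration of $\vect v^*\cdot\x$ for the large-tail mass, and then argue separately on $E^c$ that the head term determines the sign.

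The one noticeable (but harmless) deviation is in how you control the large-tail contribution: the paper thresholds $|\vect v^*\cdot\x|$ at $B=\log\tfrac1\eps$ and bounds the resulting term \emph{unconditionally} via the Hoeffding tail, getting something negligible; you instead intersect the tail with the misclassification event and truncate at $T=\sqrt{2\log(1/\opt_{\mathrm{reg}})}$, producing the term $O\bigl(\opt_{\mathrm{reg}}\sqrt{\log(1/\opt_{\mathrm{reg}})}/\eps\bigr)$. This is fine, but it leaves a small routine step implicit: you must check that this quantity is $O(\opt_{\mathrm{reg}}\log(1/\eps)/\eps+\eps)$, which requires the case split ``$\log(1/\opt_{\mathrm{reg}})\le\log^2(1/\eps)$ versus not'' (in the latter case $\opt_{\mathrm{reg}}^{1/2}\le\eps^2$ for $\eps$ small, so the term is $O(\eps)$). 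Also, your ``small-margin'' case bound of $1$ should be $2$, and a remark that your three cases are not mutually exclusive but still yield an upper bound would tighten the writing; neither affects correctness. Finally, the paper uses Chebyshev on $\ell_\eps(\vect v^*)$ plus a VC-dimension bound for halfspace error, whereas you use a Rademacher bound on the filtered hinge class --- equivalent in the $\poly(d/\eps)$ regime.
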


\begin{algorithm}[ht]
\caption{$\findregularcoefficients(\eps, d, H, \vect v_{\mathrm{head}}, \tau, D)$}\label{algorithm:find-regular-coefficients}
    \KwIn{Parameters $\eps\in (0,1), d\in \mathbb{N}$, $ H\subseteq [d], \vect v_{\mathrm{head}}\in \R^d, \tau\in \R$ where $(\vect v_{\mathrm{head}})_{[d]\setminus H} = 0$, and an oracle to independent samples $(\x,y)\sim D$.}
\KwOut{A vector $\hatv\in \R^d$}
\BlankLine
\tcc{Initialization}
\BlankLine
Let $C\ge 1$ be a sufficiently large universal constant\;
Let $S$ be a set of $C(d/\eps)^C$ i.i.d. samples from $D$\;
Let $S' = \{(\x,y)\in S : |\vect v_{\mathrm{head}} \cdot \x + \tau| \le \log 1/\eps\}$\;
\BlankLine
\tcc{Hinge Loss minimization}
\BlankLine
Let $\V = \{\vect v\in\R^d : \|\vect v\|_2 \le 1, \vect v_{H} = 0\}$\;
Compute the solution $\widehat{\vect v}_{\mathrm{tail}} \in \R^{d}$ of the following convex program:
\[
    \argmin_{\vect v\in \V} \sum_{(\x, y)\in S'} \relu(\eps - y (\vect v_{\mathrm{head}}\cdot \x + \vect v \cdot \x + \tau) )
\]\\
Set $\hatv = \widehat{\vect v}_{\mathrm{tail}}+ \vect v_{\mathrm{head}}$\;
\end{algorithm}

\begin{proof}[Proof of \Cref{thm: with known head vars and weights}]
We define the following loss (which
can be interpreted as a modified hinge loss):
\[
\lambda(z)=\begin{cases}
0 & \text{if }z\geq\eps,\\
1-\frac{z}{\eps} & \text{otherwise}.
\end{cases}
\]
Observe that $\lambda(z) = \frac{1}{\eps} \relu(\eps - z)$.
Moreover, we let $\phi:\R^d \to \R$ be the function defined as follows. 
\begin{equation}
    \phi(\vect x) = \vhead\cdot \vect x + \tau\label{equation:definition-phi}
\end{equation}
\begin{claim}
\label{claim: bound on hinge-like loss for ground truth tail}Let
$\vtail^{*}$ be a minimizer in Equation \ref{eq: def of opt reg tail},
then with probability at least $0.999$, it is the case that 
\begin{equation}
\frac{1}{|S|}\sum_{(\vect x,y)\in {\Sout}}\left[\lambda\left((\phi(\vect x)+\vtail^{*}\cdot\vect x)y\right)\right]\leq O\left(\frac{\opt_{\mathrm{reg}}}{\eps}\log\frac{1}{\eps}+\eps\right).\label{eq: bound on hingle-like loss}
\end{equation}
\end{claim}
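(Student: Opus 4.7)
The plan is to first bound the population expectation
\[
\E_{(\x,y)\sim D}\bigl[\lambda(y(\phi(\x)+\vtail^{*}\cdot\x))\cdot\ind\{|\phi(\x)|\le\log(1/\eps)\}\bigr],
\]
and then transfer this bound to the empirical average over $\Sout$ by a standard Hoeffding-type concentration argument. The transfer is straightforward because on the filter event $\{|\phi(\x)|\le\log(1/\eps)\}$ we have $|\vtail^{*}\cdot\x|\le\|\vtail^{*}\|_1\le\sqrt{d}$, so the summand is bounded by $\poly(d/\eps)$, and the sample size $|S|=\poly(d/\eps)$ is more than large enough to give deviation at most $\eps$ with probability $0.999$.

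For the population bound, set $z=y(\phi(\x)+\vtail^{*}\cdot\x)$ and $W=\vtail^{*}\cdot\x$, and split according to the sign of $z$. When $z\ge\eps$, $\lambda(z)=0$. When $0\le z<\eps$, $\lambda(z)\le 1$, so this region contributes at most $\Pr_D[-\eps<\phi(\x)+W<\eps]$. Since $\vtail^{*}\in\mathcal{V}(H,\eps)$ has unit $2$-norm with $\|\vtail^{*}\|_4^2\le\eps$, and since $W$ is independent of $\x_H$ (hence of $\phi(\x)$) under the uniform marginal, Berry-Esseen yields that conditionally on $\x_H$ the variable $W$ falls in any interval of length $2\eps$ with probability $O(\eps)$, so the marginal probability is also $O(\eps)$. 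When $z<0$, $\lambda(z)=1+|z|/\eps$, giving a contribution of $\Pr_D[z<0]+\frac{1}{\eps}\E_D[\ind\{z<0\}\,|z|]$. The first term equals $\opt_{\mathrm{reg}}$ by definition of $\vtail^{*}$. On the filtered set $|z|\le\log(1/\eps)+|W|$, so the second term is at most $\frac{\log(1/\eps)}{\eps}\opt_{\mathrm{reg}}+\frac{1}{\eps}\E[\ind\{z<0\}\,|W|]$. Splitting the residual expectation at $T=C\sqrt{\log(1/\eps)}$ for large $C$, the range $|W|\le T$ contributes at most $T\,\opt_{\mathrm{reg}}/\eps$, and for $|W|>T$ the subgaussianity of $W$ (by Hoeffding, since $\|\vtail^{*}\|_2=1$ and $\x$ is uniform on $\{\pm1\}$) gives $\frac{1}{\eps}\E[\ind\{|W|>T\}\,|W|]\le\eps$. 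Summing all contributions yields $O(\opt_{\mathrm{reg}}/\eps\cdot\log(1/\eps)+\eps)$, as desired.

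The main subtlety is that the argument simultaneously exploits anti-concentration of $W$ near zero (to control the narrow-margin band $0\le z<\eps$) and concentration of $W$ in the tails (to tame the linear growth of $\lambda$ when the classifier errs). Both stem from $\vtail^{*}$ being $\eps$-regular, but through different inequalities: Berry-Esseen with error $O(\|\vtail^{*}\|_4^2)=O(\eps)$ for the former, Hoeffding subgaussianity for the latter. The filtering step $|\phi(\x)|\le\log(1/\eps)$ is essential, since otherwise the cross term $|\phi(\x)|$ in $|z|$ could be arbitrarily large and break the $\log(1/\eps)$ dependence in the final bound; inside the filter, the cost of being on the wrong side of the classifier is at most $\log(1/\eps)$ plus a controllable subgaussian tail in $W$.
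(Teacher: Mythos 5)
Your proof is correct and follows essentially the same route as the paper's: both rely on the bound $\lambda(z)\le 1+|z|/\eps$, Berry--Esseen anti-concentration of $\vtail^*\cdot\x$ near zero to handle the $z\in[0,\eps)$ band, Hoeffding subgaussianity of $\vtail^*\cdot\x$ to tame the tail of $\lambda$ on the error region, and a standard concentration argument to pass from the population to the sample. The only cosmetic differences are the organization of the case split (you partition by the value of $z$ and then by $|W|$ inside the error region, whereas the paper partitions by $|W|\le B$ vs.\ $|W|>B$ first with $B=\log(1/\eps)$), and you invoke Hoeffding for the empirical transfer using the crude bound $|\vtail^*\cdot\x|\le\sqrt d$ on the filter event, while the paper instead bounds the second moment of the summand by $O(\log^2(1/\eps)/\eps^2)$ and applies Chebyshev, avoiding any dependence on $d$ in the deviation bound.
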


\begin{proof}
We see that $\lambda(z)\leq\frac{\abs z}{\eps}+1$, and therefore
\begin{align*}
\left(\indicator_{\abs{\phi(\vect x)}\leq\log\frac{1}{\eps}}\cdot\lambda\left((\phi(\vect x)+\vtail^{*}\cdot\vect x)y\right)\right)^{2} &\leq\indicator_{\abs{\phi(\vect x)}\leq\log\frac{1}{\eps}}O\left(\frac{\left(\phi(\vect x)\right)^{2}+\left(\vtail^{*}\cdot\vect x\right)^{2}}{\eps^{2}}\right)\\
&\le O\left(\frac{\left(\log\frac{1}{\eps}\right)^{2}+\left(\vtail^{*}\cdot\vect x\right)^{2}}{\eps^{2}}\right).
\end{align*}
Taking the expectations of both sides and recalling that $\norm{\vtail^{*}}_{2}=1$,
we see that 
\[
\E_{(\vect x,y)\sim D}\left[\left(\indicator_{\abs{\phi(\vect x)}\leq\log\frac{1}{\eps}}\cdot\lambda\left((\phi(\vect x)+\vtail^{*}\cdot\vect x)y\right)\right)^{2}\right]=O\left(\frac{\left(\log\frac{1}{\eps}\right)^{2}+1}{\eps^{2}}\right),
\]
which allows us to use the Chebyshev's inequality to conclude that
if $C$ is a sufficiently large absolute constant the $C\left(\frac{d}{\eps}\right)^{C}$-sized
sample set $S$ with probability at least $0.999$ satisfies
\[
\frac{1}{|S|}\sum_{(\vect x,y)\in {\Sout}}\left[\lambda\left((\phi(\vect x)+\vtail^{*}\cdot\vect x)y\right)\right]=\E_{(\vect x,y)\sim D}\left[\indicator_{\abs{\phi(\vect x)}\leq\log\frac{1}{\eps}}\cdot\lambda\left((\phi(\vect x)+\vtail^{*}\cdot\vect x)y\right)\right]\pm\eps.
\]

Picking $B\geq1$ to be a positive parameter to be set later, we can
write 
\begin{multline}
\E_{(\vect x,y)\sim D}\left[\indicator_{\abs{\phi(\vect x)}\leq\log\frac{1}{\eps}}\cdot\lambda\left((\phi(\vect x)+\vtail^{*}\cdot\vect x)y\right)\right]\leq\\
\E_{(\vect x,y)\sim D}\left[\indicator_{\abs{\vtail^{*}\cdot\vect x}\leq B}\indicator_{\abs{\phi(\vect x)}\leq\log\frac{1}{\eps}}\cdot\lambda\left((\phi(\vect x)+\vtail^{*}\cdot\vect x)y\right)\right]\\
+\E_{(\vect x,y)\sim D}\left[\indicator_{\abs{\vtail^{*}\cdot\vect x}\geq B}\indicator_{\abs{\phi(\vect x)}\leq\log\frac{1}{\eps}}\cdot\lambda\left((\phi(\vect x)+\vtail^{*}\cdot\vect x)y\right)\right]\label{eq: breaking expected hinge loss into two parts}
\end{multline}
We now bound the two terms above as follows. To bound the first term,
we make the following three observations:
\begin{enumerate}
\item From the definition of $\lambda$, we see that $\lambda(z)\leq\frac{\abs z}{\eps}+1$,
and that $\lambda(z)=0$ if $z\geq\eps$.
\item By Berry-Esseen (Fact \ref{fact: Berry-Esseen for boolean}) 
and the fact that $\norm{\vtail^{*}}_{4}^2\leq\eps$
and $\norm{\vtail^{*}}_{2}=1$, for any fixed $a\in\R$ we
have 
\[
\Pr_{\vect x\sim\left\{ \pm1\right\} ^{d}}\left[\abs{\vtail^{*}\cdot\vect x+a}\leq\eps\right]\leq O(\eps)
\]
Further recalling that $\vtail^{*}$ and $\vhead$
are supported on disjoint indices in $[d]$, we see that $\vtail^{*}\cdot\vect x$
and $\phi(\vect x)$ are indipendent random variables
when $\vect x$ is drawn uniformly from $\left\{ \pm1\right\} ^{d}$.
This allows us to average over $a=\phi(\vect x)$ to
conclude that 
\[
\Pr_{\vect x\sim\left\{ \pm1\right\} ^{d}}\left[\abs{\vtail^{*}\cdot\vect x+\phi(\vect x)}\leq\eps\right]\leq O(\eps)
\]
\item We see that $(\phi(\vect x)+\vtail^{*}\cdot\vect x)y<0$
if and only if $\sign(\vhead \cdot \vect x+\vtail^{*}\cdot \vect x+ \tau)\neq y$, since we have defined $\phi(\vect x) = \vhead \cdot \vect x + \tau$.
\end{enumerate}
The three observations above, together with the observation that $\lambda(z)=0$ whenever $z>\eps$, allow us to conclude that 

\begin{align}
\E_{(\vect x,y)\sim D}&\left[\indicator_{\abs{\vtail^{*}\cdot\vect x}\leq B}\indicator_{\abs{\phi(\vect x)}\leq\log\frac{1}{\eps}}\cdot\lambda\left((\phi(\vect x)+\vtail^{*}\cdot\vect x)y\right)\right]\leq \notag\\
&\le \Pr_{\vect x\sim\left\{ \pm1\right\} ^{d}}\left[y\left(\vtail^{*}\cdot\vect x+\phi(\vect x)\right)\in[0,\eps]\right]+\left(\frac{B}{\eps}+1\right)\cdot\Pr_{(\vect x,y)\sim D}\left[(\phi(\vect x)+\vtail^{*}\cdot\vect x)y<0\right] \notag \\
&= O(\eps)+\left(\frac{B}{\eps}+1\right)\cdot\Pr_{(\vect x,y)\sim D}\left[\sign(\phi(\vect x)+\vtail^{*}\cdot \vect x)\neq y\right]\notag \\
&\le O(\eps)+\left(\frac{B}{\eps}+1\right)\opt_{\mathrm{reg}},\label{eq: bounding first term of hinge loss}
\end{align}
where the first inequality follows from the first inequality above,
the second inequality from the other two observations, and the last
equality is due to $\vtail^{*}$ being a minimizer in Equation
\ref{eq: def of opt reg tail}. 

Using again the inequality $\lambda(z)\leq\frac{\abs z}{\eps}+1$,
we bound the second term in Equation \ref{eq: breaking expected hinge loss into two parts}
as follows:
\begin{align*}
\E_{(\vect x,y)\sim D}&\left[\indicator_{\abs{\vtail^{*}\cdot\vect x}\geq B}\indicator_{\abs{\phi(\vect x)}\leq\log\frac{1}{\eps}}\cdot\lambda\left((\phi(\vect x)+\vtail^{*}\cdot\vect x)y\right)\right]\leq\\
&\le \E_{(\vect x,y)\sim D}\left[\indicator_{\abs{\vtail^{*}\cdot\vect x}\geq B}\left(\frac{1}{\eps}\left(\log\frac{1}{\eps}+\abs{\vtail^{*}\cdot\vect x}\right)+1\right)\right]\\
&= \left(\frac{1}{\eps}\log\frac{1}{\eps}+1\right)\Pr_{(\vect x,y)\sim D}\left[\abs{\vtail^{*}\cdot\vect x}\geq B\right]+\frac{1}{\eps}\E_{(\vect x,y)\sim D}\left[\indicator_{\abs{\vtail^{*}\cdot\vect x}\geq B}\abs{\vtail^{*}\cdot\vect x}\right].
\end{align*}
Combining this with the Hoeffding inequality, we see
\[
\Pr_{(\vect x,y)\sim D}\left[\abs{\vtail^{*}\cdot\vect x}\geq B\right]\leq2\exp\left(-2B^{2}\right),
\]
\begin{align*}
\E_{(\vect x,y)\sim D}\left[\indicator_{\abs{\vtail^{*}\cdot\vect x}\geq B}\abs{\vtail^{*}\cdot\vect x}\right]& \leq B\cdot\Pr\left[\abs{\vtail^{*}\cdot\vect x}\geq B\right]+\int_{z=B}^{\infty}\Pr\left[\abs{\vtail^{*}\cdot\vect x}\geq z\right]\d z \\
& \leq2B\exp\left(-2B^{2}\right)+2\int_{z=B}^{\infty}\exp\left(-2z^{2}\right)\d z \\
& \leq 2B\exp\left(-2B^{2}\right)+2\int_{z=B}^{\infty}z\exp\left(-2z^{2}\right)\d z \\
& \leq O\left(B\exp\left(-2B^{2}\right)\right).
\end{align*}
therefore
\begin{align*}
\E_{(\vect x,y)\sim D}\left[\indicator_{\abs{\vtail^{*}\cdot\vect x}\geq B}\indicator_{\abs{\phi(\vect x)}\leq\log\frac{1}{\eps}}\cdot\lambda\left((\phi(\vect x)+\vtail^{*}\cdot\vect x)y\right)\right]\leq
O\left(\frac{B}{\eps}\log\frac{1}{\eps}\exp\left(-2B^{2}\right)\right),
\end{align*}
which combined with Equations \ref{eq: breaking expected hinge loss into two parts}
and \ref{eq: bounding first term of hinge loss} yields
\[
\E_{(\vect x,y)\sim D}\left[\indicator_{\abs{\phi(\vect x)}\leq\log\frac{1}{\eps}}\cdot\lambda\left((\phi(\vect x)+\vtail^{*}\cdot\vect x)y\right)\right]\leq \left(\frac{B}{\eps}+1\right)\opt_{\mathrm{reg}}+O\left(\eps + \frac{B}{\eps}\log\left(\frac{1}{\eps}\right)e^{-2B^{2}}\right).
\]
Substituting $B=\log\frac{1}{\eps}$ yields the desired Equation
\ref{eq: bound on hingle-like loss}.
\end{proof}
Using Claim \ref{claim: bound on hinge-like loss for ground truth tail},
and the observation that $\lambda\left((\phi(\vect x)+\vtailhat\cdot\vect x)y\right)\geq\indicator_{\sign\left(\phi(\vect x)+\vtailhat\cdot\vect x\right)\neq y}$,
we see that:
\begin{align}
O\left(\frac{\opt_{\mathrm{reg}}}{\eps}\log\frac{1}{\eps}+\eps\right) &\geq \frac{1}{|S|}\sum_{(\vect x,y)\in {\Sout}}\left[\lambda\left((\phi(\vect x)+\vtail^{*}\cdot\vect x)y\right)\right] \notag\\
&\ge \frac{1}{|S|}\sum_{(\vect x,y)\in {\Sout}}\left[\lambda\left((\phi(\vect x)+\vtailhat\cdot\vect x)y\right)\right] \notag\\
&\ge \frac{1}{|S|}\sum_{(\vect x,y)\in {\Sout}}\ind_{\sign\left(\phi(\vect x)+\vtailhat\cdot\vect x\right)\neq y} \notag\\
&= \Pr_{(\vect x,y)\in S}\left[\left(\abs{\phi(\vect x)}\leq\log\frac{1}{\eps}\right)\text{\ensuremath{\land}}\left(\sign\left(\phi(\vect x)+\vtailhat\cdot\vect x\right)\neq y\right)\right]\label{eq: bound on loss inside the band}
\end{align}
Since $\norm{\vtail^{*}}_{2}$ and $\norm{\vtailhat}_{2}$
are bounded by $1$, we can conclude (using the Hoeffding inequality
for $\vect x\cdot\vtail^{*}$ and $\vtailhat\cdot\vect x$
respectively) that:
\begin{multline*}
\opt_{\mathrm{reg}}\geq\Pr_{(\vect x,y)\sim D}\left[\left(\abs{\phi(\vect x)}>\log\frac{1}{\eps}\right)\land\left(\sign(\phi(\vect x)+\vtail^{*}\cdot\vect x)\neq y\right)\right]=\\
\Pr_{(\vect x,y)\sim D}\left[\left(\abs{\phi(\vect x)}>\log\frac{1}{\eps}\right)\land\left(\sign(\phi(\vect x))\neq y\right)\right]\pm\underbrace{\Pr_{(\vect x,y)\sim D}\left[\abs{\vtail^{*}\cdot\vect x}\geq\log\frac{1}{\eps}\right]}_{\leq O(\eps)},
\end{multline*}
and by the same token

\begin{multline*}
\Pr_{(\vect x,y)\sim D}\left[\left(\abs{\phi(\vect x)}>\log\frac{1}{\eps}\right)\land\left(\sign(\phi(\vect x)+\vtailhat\cdot\vect x)\neq y\right)\right]=\\
\Pr_{(\vect x,y)\sim D}\left[\left(\abs{\phi(\vect x)}>\log\frac{1}{\eps}\right)\land\left(\sign(\phi(\vect x))\neq y\right)\right]\pm\underbrace{\Pr_{(\vect x,y)\sim D}\left[\abs{\vtailhat\cdot\vect x}\geq\log\frac{1}{\eps}\right]}_{\leq O(\eps)},
\end{multline*}
which together tell us that 
\[
\Pr_{(\vect x,y)\sim D}\left[\left(\abs{\phi(\vect x)}>\log\frac{1}{\eps}\right)\land\left(\sign(\phi(\vect x)+\vtailhat\cdot\vect x)\neq y\right)\right]\leq\opt_{\mathrm{reg}}+O(\eps).
\]
Using the above together with Chebyshev's inequality we see that if
$C$ is a sufficiently large absolute constant, then the $C\left(\frac{d}{\eps}\right)^{C}$-sized
sample set $S$ with probability at least $0.999$ satisfies
\[
\Pr_{(\vect x,y)\sim S}\left[\left(\abs{\phi(\vect x)}>\log\frac{1}{\eps}\right)\land\left(\sign(\phi(\vect x)+\vtailhat\cdot\vect x)\neq y\right)\right]\leq\opt_{\mathrm{reg}}+O(\eps),
\]
which together with \ref{eq: bound on loss inside the band} and the definition of $\phi$ \eqref{equation:definition-phi}
implies that 
\begin{equation}
\Pr_{(\vect x,y)\sim S}\left[\left(\sign(\vhead\cdot\vect x+\vtailhat\cdot\vect x + \tau)\neq y\right)\right]\leq O\left(\frac{\opt_{\mathrm{reg}}}{\eps}\log\frac{1}{\eps}+\eps\right).\label{eq: hypothesis good on sample set}
\end{equation}
Finally, a standard uniform-convergence VC-dimension argument for
halfspaces implies that if $C$ is a sufficiently large absolute
constant, then the $C\left(\frac{d}{\eps}\right)^{C}$-sized sample
set $S$ with probability at least $0.999$ satisfies the following inequality, uniformly over $\vtail\in\R^{d}$:
\[
\abs{\Pr_{(\vect x,y)\sim S}\left[\left(\sign(\vhead\cdot \vect x+\vtail\cdot\vect x + \tau)\neq y\right)\right]-\Pr_{(\vect x,y)\sim D}\left[\left(\sign(\vhead\cdot \vect x+\vtail\cdot\vect x + \tau)\neq y\right)\right]}\leq\eps,
\]
which together with Equation \ref{eq: hypothesis good on sample set}
implies Equation \ref{eq: bound with known head vars and weights}
finishing the proof of Theorem \ref{thm: with known head vars and weights}.
\end{proof}

\section{Putting everything together} \label{sec:combineall}

In this section, we complete the proof of our main result by combining all the ingredients we obtained in the previous sections. In particular, we will prove the following theorem.

\begin{thm}\label{thm: main theorem}
    There is an algorithm that, given access to a distribution $D$ over $\cube{d}\times \cube{}$ whose marginal on $\cube{d}$ is the uniform distribution, outputs, with probability at least $0.8$, a hypothesis $\widehat{h}:\cube{d} \to \cube{}$ such that
    \[
        \Pr_{(\x,y)\sim D}\Bigr[\widehat{h}(\x) \neq y\Bigr] \le O(\opt^{1/25}) + \eps\,, \text{ where }\opt = \min_{\substack{\vect v\in \R^d \\ \tau \in \R}} \Pr_{(\x,y)\sim D}\Bigr[\sign(\vect v\cdot \x+ \tau) \neq y \Bigr]\,.
    \]
    The algorithm has time and sample complexity $\poly(d, 1/\eps)$.
\end{thm}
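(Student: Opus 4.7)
The plan is to analyze Algorithm \ref{algorithm:learn-halfspace} by tracing its four stages and combining the guarantees of \Cref{theorem:influential}, \Cref{cor: the GLM component theorem nonhom}, \Cref{thm: with known head vars and weights}, and the critical index lemma \Cref{lem: critical index}. A preliminary reduction lets me assume $\opt \le \etatwo = \eps^{25}/C$: if $\opt$ exceeds this threshold, then $O(\opt^{1/25})$ already dominates $\eps$, so one rescales by rerunning with $\eps' = \Theta(\opt^{1/25})$.

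The first step is to apply \Cref{theorem:influential} to $S_1$: with probability at least $0.99$, the computed set $H_k = \{j_1,\dots,j_k\}$ admits a vector $\tilde{\vect v}^*$ whose $k$ largest-magnitude coordinates are exactly $H_k$ and such that $\Pr[\sign(\tilde{\vect v}^*\cdot\x+\tau^*)\neq y]\le (4k+1)\opt$. I would then apply the critical index lemma to $\tilde{\vect v}^*$, producing two cases. In the \emph{sparse} case, $\tilde f$ is within $\eps$ of a halfspace supported on $H_k$; since $(4k+1)\opt \ll 1/|S_2|$, a total-variation argument shows that with high probability every example of $S_2$ is labeled by that sparse halfspace, so the linear program finds a consistent $\hsparse$ and $k$-dimensional VC generalization yields $\Pr[\hsparse(\x)\neq y]\le O(k\opt)+\eps$. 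In the \emph{structured} case, the lemma yields a critical index $\Delta\le k$ with $\tilde{\vect v}^* = \vhead^* + \vtail^*$, where $\vhead^*$ is supported on $H_\Delta$ and $\vtail^*$ is $\eps_{\mathrm{hv}}$-regular; since the algorithm iterates over every $\Delta\in\{0,\dots,k\}$, it suffices to analyze the correct one.

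For this $\Delta$, the marginal on $(\x_{H_\Delta},y)\sim D_\Delta$ is, up to label noise, an $\eps_{\mathrm{hv}}$-regular GLM with activation $\sigma^{\text{reg}}(z)=\E_{\x'}[\sign(z+\vtail^*\cdot\x')]$; regularity follows from Berry--Esseen and the regularity of $\vtail^*$. The label noise is absorbed by a TV argument since the $\tilde O(\Delta/\eps_{\mathrm{hv}}^2)$ samples required see no noisy example with high probability at rate $(4k+1)\opt$, and $\|\vhead^*\|_2\le u_{\max} = d\cdot 2^{20d\log d}$ follows from the classical integer-weight representation of halfspaces. \Cref{cor: the GLM component theorem nonhom} then produces $(\hatvheavy,\hattau_\Delta)$ with small squared-activation error, which I would convert into a classification bound via a band-and-anticoncentration argument: split on whether $|\hatvheavy\cdot\x+\hattau_\Delta|$ is large or small, and use Berry--Esseen for the regular tail to anticoncentrate $\vtail^*\cdot\x$. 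Feeding $(\hatvheavy,\hattau_\Delta)$ into \Cref{thm: with known head vars and weights} yields $\hatv_\Delta$ with $\Pr[\sign(\hatv_\Delta\cdot\x+\hattau_\Delta)\neq y]\le \poly(\opt)+\eps$.

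The final step is a Hoeffding-based selection on $S_3$, which guarantees with probability at least $0.95$ that the chosen $\widehat{h}\in\{h_0,\dots,h_k,\hsparse\}$ has true error within $\eps$ of the best candidate; a union bound over all failure events drives the overall success probability above $0.8$. The main obstacle will be carefully composing the error amplifications: Phase 1 multiplies $\opt$ by $O(k)$, the GLM step loses roughly $\Theta(1/\sqrt{k})$ in the regularity parameter, and the hinge-loss step loses an additional factor of $1/\eps_{\mathrm{reg}}$. The parameter choices $k=\Theta(\log^{16}(1/\eps)/\eps^8)$, $\eps_{\mathrm{hv}}=\Theta(\log^2 k/\sqrt{k})$, and $\eps_{\mathrm{reg}}=\Theta(\eps)$ in Algorithm \ref{algorithm:learn-halfspace} are calibrated so these losses compose, under the preliminary reduction $\opt\le\eps^{25}/C$, to the claimed $O(\opt^{1/25})+\eps$ bound.
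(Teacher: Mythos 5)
Your proposal follows exactly the paper's proof structure: the WLOG reduction to $\opt\le\eps^{25}/C$, \Cref{theorem:influential} to identify influential coordinates, the critical index lemma \Cref{lem: critical index} to case-split into sparse and structured, LP plus VC generalization in the sparse case, GLM learning (\Cref{cor: the GLM component theorem nonhom}) followed by hinge-loss minimization (\Cref{thm: with known head vars and weights}) in the structured case, and Hoeffding-based hypothesis selection. The one imprecise spot is your conversion of GLM activation error into a misclassification bound for $\opt_{\mathrm{reg}}$: you describe a ``band-and-anticoncentration'' argument split on $|\hatvheavy\cdot\x+\hattau_\Delta|$, whereas the paper uses the exact identity $\Pr_{z}\bigl[\sign(a+z)\neq\sign(b+z)\bigr]=\frac{1}{2}\bigl|\sigma(a)-\sigma(b)\bigr|$ for the regular activation $\sigma(t)=\E_{z}[\sign(t+z)]$ where $z=\vtail^*\cdot\x_{\mathrm{tail}}/\|\vtail^*\|_2$, which converts the $L^1$ activation error directly into the disagreement probability without any loss from saturation of $\sigma$.
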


If $\opt \le \eps^{25}/C$ for some sufficiently large constant $C$, then Algorithm \ref{algorithm:learn-halfspace}, run on input $\eps$, achieves the guarantee of \Cref{thm: main theorem}. Otherwise, we may choose the input parameter $\eps'$ to be such that $(\eps')^{25}/{(2C)}\le \opt \le (\eps')^{25}/C$ by trying all possible estimates for $\opt$ within a grid of appropriate size and choose the one that leads to the hypothesis with the minimum error.

\begin{remark}
    The probability of success can be amplified to $1-\delta$ through $O(\log 1/\delta)$ repetitions and selection of the hypothesis with the best accuracy.
\end{remark}

One key ingredient of our proof is the, by now well-known, critical index lemma which states that any halfspace over the boolean hypercube is either close to a sparse halfspace, or most of its coefficients are roughly of the same magnitude. Here, we use the following version of the critical index lemma appearing in \cite{gopalan2010fooling}.

\begin{lem}[\cite{gopalan2010fooling}]\label{lem: critical index}
    Let $\sign\left(\vect v\cdot\vect x+\tau\right)$ be a halfspace over $\left\{ \pm1\right\} ^{d}$. Define for every $j\in[d]$
\[
i_{j}=\argmax_{i\in[d]\setminus\left\{ i_{1},i_{2},\cdots i_{j-1}\right\} }\left[\abs{v_{i}}\right].
\]
Let $k$ be a positive integer. Then, one of the following two cases
must hold:
\begin{enumerate}
    \item {(Sparse).} There is a halfspace $\sign\left(\vect v_{\mathrm{sp}}\cdot\vect x
+\tau
\right)$ with $\vect v_{\mathrm{sp}}$ 
supported on at most $k$ indices $\left\{ i_{1},i_{2},\cdots i_{k}\right\} $ for which
\[
\Pr_{\vect x\sim\left\{ \pm1\right\} ^{d}}\left[\sign\left(\vect v\cdot\vect x
+\tau
\right)\neq\sign\left(\vect v_{\mathrm{sp}}\cdot\vect x
+\tau
\right)\right]\leq\frac{1}{k^{100}}.
\]
    \item {(Structured).} There exists some $\Delta\in\left\{ 0,1,\cdots,k-1\right\} $
for which $\frac{\norm{\vect u_{\mathrm{tail}}}_{4}^2}{\norm{\vect u_{\mathrm{tail}}}_{2}^2}\leq O\left(\frac{\log^2 k}{\sqrt{k}}\right)$,
where $\vect u_{\mathrm{tail}}$ denotes the restriction of $\vect v$
onto $[d]\setminus\left\{ i_{1},i_{2},\cdots i_{\Delta}\right\} $. 
\end{enumerate}
\end{lem}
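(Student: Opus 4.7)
The plan is to apply a classical \emph{critical index} dichotomy. Define
\[
\Delta^{\star} := \min\left\{\Delta \ge 0 \;:\; \frac{\|\vect u^{(\Delta)}_{\mathrm{tail}}\|_4^2}{\|\vect u^{(\Delta)}_{\mathrm{tail}}\|_2^2} \le c\,\frac{\log^2 k}{\sqrt{k}}\right\},
\]
where $\vect u^{(\Delta)}_{\mathrm{tail}}$ denotes the restriction of $\vect v$ to $[d]\setminus\{i_1,\dots,i_\Delta\}$ and $c>0$ is a sufficiently large absolute constant, chosen both to match the $O(\cdot)$ in the lemma statement and to make the sparse branch below work. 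If $\Delta^{\star} \le k-1$, taking $\Delta := \Delta^{\star}$ immediately yields the structured case (item~2) of the lemma. The substantive work is the case $\Delta^{\star} \ge k$, where the goal is to produce the sparse approximator promised by item~1.

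The backbone of the sparse case will be a geometric decay of the tail $\ell_{2}$-norm. Since the indices $i_j$ are listed in decreasing order of $|v_{i_j}|$, we have $\|\vect u^{(\Delta)}_{\mathrm{tail}}\|_\infty = |v_{i_{\Delta+1}}|$, and hence
\[
\|\vect u^{(\Delta)}_{\mathrm{tail}}\|_4^4 \;\le\; v_{i_{\Delta+1}}^2\,\|\vect u^{(\Delta)}_{\mathrm{tail}}\|_2^2.
\]
Combined with the failure of regularity at every $\Delta < k$ (by definition of $\Delta^{\star}$), this rearranges to $v_{i_{\Delta+1}}^2 > (c^2\log^4 k / k)\,\|\vect u^{(\Delta)}_{\mathrm{tail}}\|_2^2$. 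Using $\|\vect u^{(\Delta+1)}_{\mathrm{tail}}\|_2^2 = \|\vect u^{(\Delta)}_{\mathrm{tail}}\|_2^2 - v_{i_{\Delta+1}}^2$ and iterating for $\Delta = 0,1,\dots,k-1$, I get $\|\vect u^{(k)}_{\mathrm{tail}}\|_2 \le e^{-\Omega(c^2 \log^4 k)}\,\|\vect v\|_2$, which is super-polynomially small in $k$ once $c$ is chosen large enough.

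To finish, I would set $\vect v_{\mathrm{sp}} := \vect v|_{\{i_1,\dots,i_k\}}$ and bound the disagreement probability directly. A disagreement at $\vect x$ forces $|\vect v_{\mathrm{sp}}\cdot\vect x + \tau| \le |\vect u^{(k)}_{\mathrm{tail}}\cdot\vect x|$. Hoeffding's inequality bounds the right-hand side by $t := O(\sqrt{\log k})\,\|\vect u^{(k)}_{\mathrm{tail}}\|_2 \le k^{-100}\|\vect v\|_2$ except on a set of measure $1/(2k^{100})$. The main obstacle is the complementary anti-concentration estimate $\Pr_{\vect x}[|\vect v_{\mathrm{sp}}\cdot\vect x + \tau| \le t] \le 1/(2k^{100})$, which does not follow from generic considerations, because an adversarial $\tau$ could in principle align with an atom of the discrete random variable $\vect v_{\mathrm{sp}}\cdot\vect x$. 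I would handle it by first invoking a Muroga-style integer representation to assume without loss of generality that $\vect v \in \mathbb{Z}^d$ with $\|\vect v\|_\infty \le 2^{O(d\log d)}$, so that $\vect v_{\mathrm{sp}}\cdot \vect x$ is integer-valued; a small generic perturbation of $\vect v_{\mathrm{sp}}$ then makes its $2^k$ attained values distinct (each occurring with probability exactly $2^{-k}$), and since $t$ is super-polynomially smaller than the spacing between distinct atoms after the perturbation, at most one atom can lie within $t$ of $-\tau$, giving $2^{-k} \le 1/(2k^{100})$ as soon as $k$ exceeds an absolute constant (the finitely many small values of $k$ are handled by a direct calculation). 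A union bound with the Hoeffding estimate then closes the sparse case.
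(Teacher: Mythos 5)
The paper does not prove this lemma; it imports it verbatim from \cite{gopalan2010fooling}, so there is no in-paper argument to compare against. Judged on its own terms, your proposal gets the overall architecture right --- the dichotomy on the critical index, and the geometric decay $\norm{\vect u^{(\Delta+1)}_{\mathrm{tail}}}_2^2 \le (1 - c^2\log^4 k/k)\norm{\vect u^{(\Delta)}_{\mathrm{tail}}}_2^2$ derived from $\norm{\vect u}_4^4 \le \norm{\vect u}_\infty^2\norm{\vect u}_2^2$ together with the failure of regularity --- and the structured branch and the decay computation are correct. This is indeed how the result is proved in the literature.

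The gap is in the anti-concentration step of the sparse branch, and the proposed fix does not close it. First, the comparison ``$t$ is super-polynomially smaller than the spacing between distinct atoms'' conflates two incomparable quantities: your bound is $t \le k^{-100}\norm{\vect v}_2$, whereas for a Muroga integer representation the spacing between \emph{distinct} atoms of $\vect v_{\mathrm{sp}}\cdot\vect x$ is an absolute constant ($\ge 2$), while $\norm{\vect v}_2$ can be as large as $\sqrt{d}\,2^{\Theta(d\log d)}$; for $d$ large relative to $k$ one has $t \gg 2$, so many atoms can fall within $t$ of $-\tau$. Second, the generic perturbation is circular: if two atoms coincide because $\sum_{i\in S}v_i=0$, a perturbation $\vect\delta$ separates them by at most $2\norm{\vect\delta}_1$, but replacing $\vect v_{\mathrm{sp}}$ by $\vect v_{\mathrm{sp}}+\vect\delta$ also inflates the disagreement scale from $t$ to $t+\norm{\vect\delta}_1$, so the bad interval is always at least as wide as the gaps you created, and ``at most one atom in the interval'' fails. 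Note also that the concentration can genuinely be as large as $\Theta(1/\sqrt{m})$ if $m$ head weights are (nearly) equal, so some structural input beyond ``distinct atoms'' is unavoidable. The standard repair --- and the one underlying \cite{gopalan2010fooling} --- is to use the geometric decay \emph{inside the head}: since $\abs{v_{i_j}} \ge \frac{c\log^2 k}{\sqrt{k}}\norm{\vect u^{(j-1)}_{\mathrm{tail}}}_2$ and $\norm{\vect u^{(j)}_{\mathrm{tail}}}_2$ halves every $O(k/\log^4 k)$ indices, one can extract a super-decreasing subsequence of $\Omega(\log^4 k)$ head weights whose consecutive gaps all dominate $t$; conditioning on the remaining coordinates, any interval of length $2t$ captures at most one of the $2^{\Omega(\log^4 k)}$ equiprobable values of that sub-sum, giving $\Pr[\abs{\vect v_{\mathrm{sp}}\cdot\vect x+\tau}\le t] \le 2^{-\Omega(\log^4 k)} \ll k^{-100}$. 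Your write-up is missing exactly this scale-separation argument.
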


We are now ready to prove \Cref{thm: main theorem}.

\begin{proof}[Proof of \Cref{thm: main theorem}]
     As we mentioned after the statement of \Cref{thm: main theorem}, without loss of generality, we may assume that $\opt \le \etatwo$ for  $\etatwo=\eps^{25}/C$.

     Recall that $D$ is a distribution over $\left\{ \pm1\right\} ^{d}\times\left\{ \pm 1\right\} $
that has a uniform $\left\{ \pm1\right\} ^{d}$-marginal and for some
halfspace $f^*(\vect x) = \sign\left(\vect v^{*}\cdot\vect x + \tau^*\right)$ satisfies
\begin{equation}
\Pr_{(\vect x,y)\sim D}\left[f^*(\vect x)\neq y\right]\leq\etatwo=\eps^{25}/C.\label{eq: u star has good accuracy}
\end{equation}

Without loss of generality $\vect v^{*}$ has integer weights whose
values are bounded by $2^{10d\log d}$ \cite{servedio2006}. The algorithm
computes 
\[
\hatI_{i}=\E_{(\vect x,y)\sim D}[y x_i]\pm\etatwo
\]
and obtains $H=\{\hati_{1},\hati_{2},\cdots\hati_{k}\}$ as 
\begin{equation}
\hati_{t}\leftarrow\argmax_{i\in[d]\setminus\left\{ \hati_{1},\hati_{2},\cdots\hati_{t-1}\right\} }\left[\abs{\hatI_{i}}\right].\label{eq: how i hat are defined}
\end{equation}

This allows us to apply \Cref{theorem:influential}
to obtain a halfspace $g^*(\vect x)=\sign(\tilde{\vect v}^{*} \cdot \vect x +\tau)$ so that the vector $\tilde{\vect v}^{*}$ satisfies 
\begin{enumerate}
\item The top $k$ weights of $\tilde{\vect v}^{*}$ (in absolute value) are located
respectively at the indices $\hati_{1},\hati_{2},\cdots\hati_{k}$.
\item All weights of $\tilde{\vect v}^{*}$are integers with weights bounded by
$2^{10d\log d}$ (as are the weights of $\vect v^{*}$).
\item It is the case that 
\[
\Pr_{\vect x\sim\left\{ \pm1\right\} ^{d}}\left[
g^*(\vect x)
\neq
f^*(\vect x)\right]\leq4 \etatwo k.
\]
\end{enumerate}
The last equation together with Equation \ref{eq: u star has good accuracy}
implies that 
\begin{equation}
\Pr_{(\vect x,y)\sim D}\left[g^*(\vect x)\neq y\right]\leq4(k+1)\etatwo\label{eq: v star has a good accuracy}
\end{equation}
Now, we can apply Lemma \ref{lem: critical index} with the halfspace defined by $\tilde{\vect v}^{*}$.
Suppose we are in case (1), then for some vector $\vect v_{\mathrm{sp}}$ supported of $\{\widehat{i}_1,\cdots, \widehat{i}_k\}$
we have
\[
\Pr_{\vect x\sim\left\{ \pm1\right\} ^{d}}\left[g^*(\vect x)\neq\sign\left(\vect v_{\mathrm{sp}}\cdot\vect x+\tau\right)\right]\leq\frac{1}{k^{100}},
\]
which together with the previous inequality implies that 
\[
\Pr_{(\vect x,y)\sim D}\left[\sign\left(\vect v_{\mathrm{sp}}\cdot\vect x+\tau\right)\neq y\right]\leq\frac{1}{k^{100}}+4(k+1)\etatwo.
\]
A union bound implies that with all examples in $S_2$
will be consistent with $\sign\left(\vect v_{\mathrm{sp}}\cdot\vect x+\tau\right)$
with probability 
\[
\left(\frac{1}{k^{100}}+4(k+1)\etatwo\right)\abs{S_2}.
\]
Substituting $k=\setk$ and $\etatwo=\eps^{25}/C$ and $S_2=\setSparseDatasetSize$,
we see that the above is at most $0.001$ for a sufficiently large
absolute constant $C$. Thus, the set $S_2$
will be linearly separable, and the algorithm will succeed in finding
a classifier $\hsparse(\vect x)=\sign(\hatvsparse\cdot x+\widehat{\tau})$ that perfectly classifies
the set $S_2$. Given the size of $S_2$,
the standard VC bound for linear classifiers implies that (for a
sufficiently large absolute constant $C$) with probability at least
0.999 we have
\[
\Pr_{(\vect x,y)\sim D}\left[\sign\left(\hatvsparse\cdot\vect x+\widehat{\tau}\right)\neq y\right]\leq\eps,
\]
concluding our consideration of the sparse case of Lemma \ref{lem: critical index}.

For the rest of the section we will consider the second case of Lemma
\ref{lem: critical index}, where $\tilde{\vect v}^* = \vhead^* + \vtail^*$ for some vector $\vtail^*$ whose projection outside $H_\Delta$ is regular. We will focus on the iteration $\Delta$
of our algorithm, fixing $\Delta$ to be the value given to us by
Lemma \ref{lem: critical index}. We tacitly will equate $\left\{ \pm1\right\} ^{d}$
with $\left\{ \pm1\right\} ^{\Delta}\times\left\{ \pm1\right\} ^{d-\Delta}$
using the bijective map
\[
\vect x\leftrightarrow\left(\x_{\mathrm{head}},\x_{\mathrm{tail}}\right),\text{ where }
\x_{\mathrm{head}}=\left[x_{\widehat{i}_{1}},x_{\widehat{i}_{2}},\cdots,x_{\widehat{i}_{\Delta}}\right]
\]
and $\x_{\mathrm{tail}}$ is the coordinates of $\vect x$ on the
rest of $[d]$ (ordered in some arbitrary but fixed way). In the notation
of our main algorithm, $\x_{\mathrm{head}} = \x_{H_\Delta}$. Moreover, we let $\vect u_{\mathrm{head}}^{*} = (\tilde{\vect v}^*)_{H_\Delta}$ and $\vect u_{\mathrm{tail}}^{*}= (\tilde{\vect v}^*)_{[d]\setminus H_\Delta}$. We have:

\begin{align}
g^*(\vect x) &=
\sign\left(\tilde{\vect v}^{*}\cdot\vect x
+\tau^*
\right)=\sign\left(\vect u_{\mathrm{head}}^{*}\cdot\x_{\mathrm{head}}+\vect u_{\mathrm{tail}}^{*}\cdot\x_{\mathrm{tail}}
+\tau^*
\right) \notag\\
&= \sign\left(\frac{\vect u_{\mathrm{head}}^{*}\cdot\x_{\mathrm{head}}}{\norm{\vect u_{\mathrm{tail}}^{*}}_{2}}+\frac{\vect u_{\mathrm{tail}}^{*}\cdot\x_{\mathrm{tail}}}{\norm{\vect u_{\mathrm{tail}}^{*}}_{2}}
+\frac{\tau^*}{\norm{\vect u_{\mathrm{tail}}^{*}}_{2}}
\right),\label{eq: breaking v star into head and tail}
\end{align}
where $\norm{\vect u_{\mathrm{tail}}^{*}}_{2}\neq0$ because otherwise
we would fall under the first case of Lemma \ref{lem: critical index}.
Since $\vect u_{\mathrm{tail}}^{*}$ has integer weights bounded by
$2^{10d\log d}$, the vector $\frac{\vect u_{\mathrm{head}}^{*}}{\norm{\vect u_{\mathrm{tail}}^{*}}_{2}}$
will likewise have weights bounded by $2^{10d\log d}$. Since the
second case of Lemma \ref{lem: critical index} holds, we see that
\begin{equation}
\frac{\norm{\vect u_{\mathrm{tail}}^{*}}_{4}^2}{\norm{\vect u_{\mathrm{tail}}^{*}}_{2}^2}\leq O\left(\frac{\log^{2}k}{\sqrt{k}}\right)\label{eq: v tail star is regular}
\end{equation}
and therefore Fact \ref{fact: Berry-Esseen for boolean} implies
that $\vect u_{\mathrm{tail}}^{*}$ is such that for all $\tau\in\R$
\begin{equation}
\Pr_{\vect x\sim\left\{ \pm1\right\} ^{d}}\left[\frac{\vect u_{\mathrm{tail}}^{*}\cdot\x_{\mathrm{tail}}}{\norm{\vect u_{\mathrm{tail}}^{*}}_{2}}\geq \tau\right]=\Pr_{z\sim\N(0,1)}\left[z\geq \tau\right]\pm O\left(\frac{\log^{2}k}{\sqrt{k}}\right),\label{eq: tail is similar to Gaussian}
\end{equation}
which implies that for any fixed value of $\x_{\mathrm{head}}$ we have:
\begin{multline}
\Pr_{\x_{\mathrm{tail}}\sim\left\{ \pm1\right\} ^{d-\Delta}}\left[\sign\left(
\frac{\vect u_{\mathrm{head}}^{*}\cdot\x_{\mathrm{head}}}{\norm{\vect u_{\mathrm{tail}}^{*}}_{2}}+\frac{\vect u_{\mathrm{tail}}^{*}\cdot\x_{\mathrm{tail}}}{\norm{\vect u_{\mathrm{tail}}^{*}}_{2}}
+
\frac{\tau^*}{\norm{\vect u_{\mathrm{tail}}^{*}}_{2}}
\right)=1\right] \\
= \Pr_{z\sim\N(0,1)}\left[\frac{\vect u_{\mathrm{head}}^{*}\cdot\x_{\mathrm{head}}}{\norm{\vect u_{\mathrm{tail}}^{*}}_{2}}+
\frac{\tau^*}{\norm{\vect u_{\mathrm{tail}}^{*}}_{2}}
+z\geq0\right]\pm O\left(\frac{\log^{2}k}{\sqrt{k}}\right)\label{eq: how the sign is distributed conditional on head}
\end{multline}

Define $\mathcal{D_{\mathrm{GLM}}}$ to be the probability distribution over $\R^{\Delta}\times \{\pm 1\}$ distributed as the tuple
of random variables $\left(\x_{\mathrm{head}},g^*(\x)
\right)$, where $\x$ is uniform on $\cube{d}$, and recall that $D_\Delta$ is the distribution of pairs of the form $(\x_{\mathrm{head}}, y)$, where $\x_{\mathrm{head}} = \x_{H_\Delta}$ and $(\x,y)\sim D$.
We will first consider what happens when we run $\findheavycoefficients(\eps_{\mathrm{hv}},\Delta,u_{\max},\mathcal{D_{\mathrm{GLM}}})$
and then use it to make conclusions about $\findheavycoefficients(\eps_{\mathrm{hv}}, \Delta, u_{\max}, D_\Delta)$. Recall that, in Algorithm \ref{algorithm:learn-halfspace}, we make the following choices for the input parameters of $\findheavycoefficients$: $\eps_{\mathrm{hv}} = C^{0.01} \log^2 (k) / \sqrt{k}$ and $u_{\max} = d2^{20d \log d}$.

The distribution $\mathcal{D_{\mathrm{GLM}}}$ is a generalized linear model such that:
\[
    \E_{(\x_{\mathrm{head}}, y) \sim \mathcal{D_{\mathrm{GLM}}}}\Bigr[y \;\Bigr|\; \x_{\mathrm{head}} \Bigr] = \sigma\left(
\frac{\vect u_{\mathrm{head}}^{*}\cdot\x_{\mathrm{head}}}{\norm{\vect u_{\mathrm{tail}}^{*}}_{2}}
+\frac{\tau^*}{\norm{\vect u_{\mathrm{tail}}^{*}}_{2}}
\right)\,,
\]
    where the activation function $\sigma(\cdot)$ is defined as follows:
\begin{align}
\sigma(\tautemp) &=\E_{x_{\text{tail}}\sim\{\pm 1\}^{d-\Delta}}\left[
\sign\left(\frac{\vect u_{\mathrm{tail}}^{*}\cdot\x_{\mathrm{tail}}}{\norm{\vect u_{\mathrm{tail}}^{*}}_{2}}+\tautemp
\right)
\right]
\notag \\
&= 2\Biggr(
\Pr_{x_{\text{tail}}\sim\{\pm 1\}^{d-\Delta}}\left[\frac{\vect u_{\mathrm{tail}}^{*}\cdot\x_{\mathrm{tail}}}{\norm{\vect u_{\mathrm{tail}}^{*}}_{2}}
+
\tautemp\geq0\right]
-1
\Biggr).\label{eq: definition of sigma for our case}
\end{align}
Recalling the definition of a regular activation function (Definition
\ref{def: regular activation}), we conclude that Equation \ref{eq: v tail star is regular}
implies that the function $\sigma$ is $O({\log^{2}(k)} / {\sqrt{k}})$-regular.
Recalling that we use a regularity parameter $\eps_{\mathrm{hv}} = C^{0.01} \log^2 (k) / \sqrt{k}$
when running $\findheavycoefficients$, we see that $\sigma$
is indeed $\eps_{\mathrm{hv}}$-regular when $C$ is a sufficiently
large absolute constant. Additionally, as we noted earlier, each coefficient
of ${\vect u_{\mathrm{head}}^{*}}/{\norm{\vect u_{\mathrm{tail}}^{*}}_{2}}$
is bounded by $2^{10d\log d}$ in absolute value. Thus, in this case
we can apply Corollary \ref{cor: the GLM component theorem nonhom}\footnote{ Without loss of generality, the bias $\tau^*$ of the ground truth half-space is bounded by $d 2^{10d\log d}$ as the inputs are from $\{\pm 1\}^d$.} to conclude
that with probability at least $0.95$ the algorithm $\findheavycoefficients(\eps_{\mathrm{hv}},\Delta,u_{\max},\mathcal{D_{\mathrm{GLM}}})$
will output a vector $\widehat{\vect u}$ and a scalar $\widehat{\tau}_\Delta$
for which 
\[
\E_{(\x_{\mathrm{head}},y)\sim\mathcal{D_{\mathrm{GLM}}}}\left[\left(
\sigma\left(
\widehat{\vect u}\cdot\x_{\mathrm{head}}+\widehat{\tau}_\Delta \right)-\sigma\left(\frac{\vect u_{\mathrm{head}}^{*}\cdot\x_{\mathrm{head}}}{\norm{\vect u_{\mathrm{tail}}^{*}}_{2}}
+\frac{\tau^*}{\norm{\vect u_{\mathrm{tail}}^{*}}_{2}}
\right)
\right)^{2}\right]\leq O\left(\eps_{\mathrm{hv}} \log(\Delta/\eps_{\mathrm{hv}})\right),
\]
which in particular implies (via Jensen's inequality)
\begin{equation}
\E_{\tilde{\x}\sim\left\{ \pm1\right\} ^{\Delta}}\left[\abs{
\sigma\left(
\widehat{\vect u}\cdot\tilde{\x}+\widehat{\tau}_\Delta \right)-\sigma\left(\frac{\vect u_{\mathrm{head}}^{*}}{\norm{\vect u_{\mathrm{tail}}^{*}}_{2}}\cdot\tilde{\x}
+\frac{\tau^*}{\norm{\vect u_{\mathrm{tail}}^{*}}_{2}}
\right)
}\right]\leq O\left(\frac{C^{0.005}\log^{1.5} (k/\eps)}{k^{1/4}}\right)\label{eq: v head is good}
\end{equation}

Furthermore, Equations \ref{eq: breaking v star into head and tail}
and \ref{eq: v star has a good accuracy} give us a bound of $4(k+1)\etatwo$
on the statistical distance between $\mathcal{D_{\mathrm{GLM}}}$ and
a sample $\left(\x_{\mathrm{head}},y\right)$ with $\left(\x_{\mathrm{head}},\x_{\mathrm{tail}},y\right)\sim D$
(in our main algorithm we call this distribution $D_{\Delta}$). Using
the bound on the sample complexity $m_{\mathrm{hv}}$ of $\findheavycoefficients$
allows us to bound the total variation distance between the output of $\findheavycoefficients(\eps_{\mathrm{hv}}, \Delta, u_{\max}, D_\Delta)$ and the output of $\findheavycoefficients(\eps_{\mathrm{hv}}, \Delta, u_{\max}, {\cal D}_{\mathrm{GLM}})$ by the following quantity:
\begin{align*}
    m_{\mathrm{hv}}\, \mathrm{dist}_{\mathrm{TV}}(D_\Delta, {\cal D}_{\mathrm{GLM}}) \le \tilde{O}(\Delta/\eps_{\mathrm{hv}}^2) (k+1) \etatwo \le \etatwo \, \tilde{O}(k^3)
\end{align*}
Thus, with probability at least $0.95-\etatwo \, \tilde{O}(k^3) \ge 0.94$, the output $(\widehat{\vect u},\widehat{\tau}_{\Delta})$
of $\findheavycoefficients$, obtained in iteration $\Delta$ of our main algorithm, satisfies
Equation \ref{eq: v head is good}.

Now, we observe that the definition of the activation function $\sigma$
(i.e. Equation \ref{eq: definition of sigma for our case}) implies that for any pair of
fixed values $a$ and $b$ in $\R$ with $a\leq b$ we have

\begin{align*}
\Pr_{\x_{\mathrm{tail}}\sim\{ \pm1\} ^{d-\Delta}} &\Bigr[\sign\Bigr(a+\frac{\vect u_{\mathrm{tail}}^{*}\cdot\x_{\mathrm{tail}}}{\norm{\vect u_{\mathrm{tail}}^{*}}_{2}}\Bigr)\neq\sign\Bigr(b+\frac{\vect u_{\mathrm{tail}}^{*}\cdot\x_{\mathrm{tail}}}{\norm{\vect u_{\mathrm{tail}}^{*}}_{2}}\Bigr)\Bigr]=
\\
&=\Bigr|{
\Pr_{\x_{\mathrm{tail}}\sim\{ \pm1\} ^{d-\Delta}}\Bigr[-b\leq\frac{\vect u_{\mathrm{tail}}^{*}\cdot\x_{\mathrm{tail}}}{\norm{\vect u_{\mathrm{tail}}^{*}}_{2}}<-a\Bigr]}\Bigr|
\\
&=\Bigr|{
\Pr_{x_{\text{tail}}\sim\{\pm 1\}^{d-\Delta}}\Bigr[\frac{\vect u_{\mathrm{tail}}^{*}\cdot\x_{\mathrm{tail}}}{\norm{\vect u_{\mathrm{tail}}^{*}}_{2}}+a\geq0\Bigr]
-
\Pr_{x_{\text{tail}}\sim\{\pm 1\}^{d-\Delta}}\Bigr[\frac{\vect u_{\mathrm{tail}}^{*}\cdot\x_{\mathrm{tail}}}{\norm{\vect u_{\mathrm{tail}}^{*}}_{2}}+b\geq0\Bigr]
}\Bigr| \\
&=\frac{1}{2}\abs{\sigma(a)-\sigma(b)}.
\end{align*}
Analogously, we also see that the same expression holds when $a\geq b$. We can
thus apply the identity above together with Equation \ref{eq: v head is good}
to obtain 
\begin{multline*}
\Pr_{\substack{
\x_{\mathrm{head}}\sim \{\pm 1\} ^{\Delta},\\ \x_{\mathrm{tail}}\sim \{\pm 1\} ^{d-\Delta}}
}\Bigr[\sign\Bigr(
\frac{\vect u_{\mathrm{head}}^{*}\cdot\x_{\mathrm{head}}}{\norm{\vect u_{\mathrm{tail}}^{*}}_{2}}+\frac{\vect u_{\mathrm{tail}}^{*}\cdot\x_{\mathrm{tail}}}{\norm{\vect u_{\mathrm{tail}}^{*}}_{2}}
+\frac{\tau^*}{\norm{\vect u_{\mathrm{tail}}^{*}}_{2}}
\Bigr)\neq
\sign\Bigr(
\widehat{\vect u}\cdot\x_{\mathrm{head}}+\frac{\vect u_{\mathrm{tail}}^{*}\cdot\x_{\mathrm{tail}}}{\norm{\vect u_{\mathrm{tail}}^{*}}_{2}}
+\widehat{\tau}_{\Delta}
\Bigr)\Bigr]=\\
\E_{\x_{\mathrm{head}}\sim\{\pm 1\} ^{\Delta}}\Bigr[\Bigr|{\sigma\Bigr(\frac{\vect u_{\mathrm{head}}^{*}\cdot\x_{\mathrm{head}}}{\norm{\vect u_{\mathrm{tail}}^{*}}_{2}}
+\frac{\tau^*}{\norm{\vect u_{\mathrm{tail}}^{*}}_{2}}
\Bigr)-\sigma\Bigr(\widehat{\vect u}\cdot\x_{\mathrm{head}}
+\widehat{\tau}_\Delta
\Bigr)}\Bigr|\Bigr]\leq O\Bigr(\frac{C^{0.005}\log k}{k^{1/4}}\Bigr),
\end{multline*}
which together with Equations \ref{eq: breaking v star into head and tail}
and \ref{eq: v star has a good accuracy} gives us 
\begin{align}
\Pr_{\vect x,y\sim D}\left[\sign\left(\widehat{\vect u}\cdot\x_{\mathrm{head}}+\frac{\vect u_{\mathrm{tail}}^{*}\cdot\x_{\mathrm{tail}}}{\norm{\vect u_{\mathrm{tail}}^{*}}_{2}}
+\widehat{\tau}_\Delta
\right)\neq y\right] &\leq O\left(\frac{C^{0.005}\log^{1.5} (k/\eps)}{k^{1/4}}\right)+(k+1)\etatwo \notag \\ 
&\leq\frac{\eps^{2}}{C^{0.05}\log^{1.5}\frac{1}{\eps}},\label{eq: v head hat is good}
\end{align}
where the last step holds for a sufficiently large value of $C$
when we substitute the values of $k,\etatwo$. 

Now, we will analyze the vector $\widehat{\vect v}_{\Delta}$ obtained as
the output of the algorithm for finding regular coefficients $\findregularcoefficients(\eps_{\mathrm{reg}},d, H_\Delta, {\hatvheavy}, \hattau_\Delta, D)$.
Recalling Theorem \ref{thm: with known head vars and weights} that
analyzes $\findregularcoefficients$, we see that it is
stated in terms of the following quantity, where we define ${\cal V}(H_{\Delta},{\eps_{\mathrm{reg}}}) = \{\vect v\in \R^{d}: \|\vect v\|_2=1, \|\vect v\|_4^2 \le \eps_{\mathrm{reg}}, \vect v_{H_\Delta} = 0\}$:
\begin{equation}
\opt_{\mathrm{reg}}:=\min_{\vect v \in {\cal V}(H_{\Delta},{\eps_{\mathrm{reg}}})} \Pr_{(\vect x,y)\sim D}\left[\sign(\hatv_{\Delta}\cdot\vect x+\vect v\cdot\x_{\mathrm{tail}}+\widehat{\tau}_\Delta)\neq y\right]\label{eq: def of opt reg tail-1}
\end{equation}
Comparing Equation \ref{eq: def of opt reg tail-1} and Equation \ref{eq: v head hat is good},
we observe that, 
taking $\vect v={\vect v_{\mathrm{tail}}^{*}}/{\norm{\vect v_{\mathrm{tail}}^{*}}_{2}}$,
using Equation \ref{eq: v tail star is regular} and substituting the value of $k$, we have
\[
\frac{\norm{\vect v_{\mathrm{tail}}^{*}}_{4}^2}{\norm{\vect v_{\mathrm{tail}}^{*}}_{2}^2}\leq O\left(\frac{\log^{2}k}{\sqrt{k}}\right)=O\left(C^{-0.05}\eps^{4}\right)\leq\eps/C^{0.01},
\]
where the last inequality holds for a sufficiently large absolute
constant $C$. Therefore, comparing Equation \ref{eq: def of opt reg tail-1}
and Equation \ref{eq: v head hat is good}, we see that $\opt_{\mathrm{reg}}\leq {\eps^{2}}/({C^{0.05}\log^{1.5}({1}/{\eps})})$.
This allows us to use Theorem \ref{thm: with known head vars and weights}
to conclude that when we invoke 
\[
\widehat{\vect v}_{\Delta}\leftarrow\findregularcoefficients\left(\eps_{\mathrm{reg}},d,H_{\Delta},\hatv_{\mathrm{hv}},\hattau_\Delta, D\right)
\]
with probability at least $0.95$ we obtain a vector $\widehat{\vect v}_\Delta$ satisfying
\begin{align*}
\Pr_{(\vect x,y)\sim D}\left[\sign(\widehat{\vect v}_{\Delta}\cdot\vect x
+\widehat{\tau}_\Delta
)\neq y\right] &\leq O\left(C^{0.01}\frac{\opt_{\mathrm{reg}}}{\eps}\log\frac{C^{0.01}}{\eps}+\frac{\eps}{C^{0.01}}\right) \\
&\le O\left(\frac{\eps}{C^{0.04}\log^{1.5}\frac{1}{\eps}}\log\frac{C^{0.01}}{\eps}+\frac{\eps}{C^{0.01}}\right)\leq\frac{\eps}{10},
\end{align*}
where the last inequality follows by taking $C$ to be a sufficiently
large absolute constant. 

Finally, we see that with overall probability of at least $0.81$,
in the iteration $\Delta$ of our main algorithm (with $\Delta$ given
by the second case of Lemma \ref{lem: critical index}) we add a hypothesis
$h_{\Delta}(\vect x)=\sign(\widehat{\vect v}_{\Delta}\cdot\vect x
+\widehat{\tau}_\Delta
)$ into the set $V$ whose prediction error
is at most $\eps/10$. The standard Hoeffding bound implies that
for a sufficiently large value of $C$, each estimate $\widehat{\text{err}}(h)$
produced by our algorithm is indeed $\eps/4$-accurate (with probability at least $0.995$). Since
there exists a hypothesis $h_{\Delta}$ in $V$ with error
at most $\eps/10$, we see that
\[
\min_{\vect h\in V}\widehat{\text{err}}(h)\leq\frac{3\eps}{4}
\]
and therefore, the minimizer $\widehat{h}$ that our algorithm
obtains satisfies
\[
\Pr_{(\vect x,y)\sim D}\left[\widehat{h}(\vect x)\neq y\right]\leq\widehat{\text{err}}(\widehat{h})+\frac{\eps}{4}\leq\eps,
\]
finishing the proof.
\end{proof}

\section{Learning with Contamination}\label{section:contamination}

In this section, we will show that our main result extends to a much more challenging noise model, called bounded contamination (or nasty noise) --- see \Cref{definition:bounded-contamination}.
While we aim to capture the bounded contamination model, it will be more convenient to phrase our proofs in terms of learning under a seemingly weaker noise model, which was recently shown to be equivalent to bounded contamination when the size of the domain is finite \cite{blanc2025adaptive}.

\begin{defn}[Learning with Oblivious Contamination]\label{definition:oblivious-contamination}
    We say that an algorithm learns a class $\mathcal{C}$ under oblivious contamination of rate $\eta\in(0,1)$ with respect to the uniform distribution over $\cube{d}$ if it satisfies the following specifications. The algorithm is given $\eps\in(0,1)$ and access to a large enough set of i.i.d. examples from some distribution $D$ over $\cube{d}\times\cube{}$ such that $\mathrm{dist}_{\mathrm{TV}}(D,D_{\mathrm{clean}}) \le \eta$ where $D_{\mathrm{clean}}$ is a distribution over samples $(\x,y)$ such that $\x$ is the uniform over $\cube{d}$ and $y = f^*(\x)$ for some unknown $f^*\in \mathcal{C}$. The output of the algorithm is some hypothesis $h:\cube{d}\times \cube{}$ such that, with probability at least $0.95$, we have:
    \[
        \Pr_{(\x, f^*(\x)) \sim D_{\mathrm{clean}}}[h(\x) \neq f^*(\x)] \le \poly(\eta) + \eps
    \]
\end{defn}

In particular, the following result from \cite{blanc2025adaptive} shows that learning under bounded contamination is essentially equivalent to learning under oblivious contamination of the same rate.

\begin{thm}[Special Case of Theorem 2 from \cite{blanc2025adaptive}]\label{theorem:adaptive-oblivious-equivalent}
    If a class $\mathcal{C}\subseteq\{\cube{d} \times \cube{}\}$ can be learned under oblivious contamination of rate $\eta$ up to error $\gamma = \gamma(\eta)$ with sample complexity $m$ and in time $T$, then it can be learned under bounded contamination of rate $\eta$ up to error $\gamma(\eta)+ \eps$ with sample complexity $\poly(m,d,1/\eps)$ in time $\poly(m,d,1/\eps) + T$.\footnote{The equivalence preserves the failure probability for any choice $\delta\in(0,1)$. This is important, since amplifying the success probability under oblivious contamination is straightforward, but the same is not true for bounded contamination, where the samples are not independent.}
\end{thm}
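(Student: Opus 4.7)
The plan is to reduce bounded contamination to oblivious contamination by requesting a large bounded contamination sample, randomizing its order, and then extracting a smaller subsample whose joint law is close in total variation to iid draws from a distribution within TV distance $\eta$ of $D_{\mathrm{clean}}$. Concretely, I would request $N = \poly(m,d,1/\eps)$ bounded contamination samples $\Sinp = ((\x_1, y_1), \ldots, (\x_N, y_N))$, apply a uniformly random permutation $\pi : [N] \to [N]$ to produce $\pi(\Sinp)$, and then feed the first $m$ samples of $\pi(\Sinp)$ to the oblivious contamination learner for $\mathcal{C}$.

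By Definition~\ref{definition:bounded-contamination}, $\Sinp$ is obtained by an adversary choosing a corrupted index set $A \subseteq [N]$ with $|A| = \lfloor \eta N\rfloor$ and placing adversarial samples at those positions, both possibly depending on the clean sample $\Scln$. Applying a uniformly random permutation makes the corrupted positions a uniformly random subset of $[N]$ of size $\lfloor \eta N\rfloor$, independent of the clean values. Restricting to the first $m$ coordinates and using hypergeometric concentration, the fraction of corrupted positions in this restricted set is $\eta \pm \eps$ with probability $1 - \exp(-\Omega(\eps^2 m))$, and the non-corrupted positions carry fresh iid clean samples. For $N$ polynomially larger than $m$, the resulting $m$-sample is close in total variation (within $\eps$) to $m$ iid draws from the mixture $D = (1-\eta)\,D_{\mathrm{clean}} + \eta\, \bar{D}_{\mathrm{adv}}$, where $\bar{D}_{\mathrm{adv}}$ denotes the adversary's effective sample distribution averaged over the clean sample. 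By construction, $\mathrm{dist}_{\mathrm{TV}}(D, D_{\mathrm{clean}}) \le \eta$, matching the requirement of Definition~\ref{definition:oblivious-contamination}.

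Given the coupling, I then run the oblivious contamination learner of $\mathcal{C}$ on the constructed $m$-sample. The $\eps$ TV-distance between our simulated samples and true iid oblivious samples from $D$ translates into at most $\eps$ additive loss in both the failure probability and the accuracy guarantee of the learner, so with probability at least $0.95 - \eps$ the returned hypothesis $h$ satisfies $\Pr_{(\x, y) \sim D_{\mathrm{clean}}}[h(\x) \neq f^*(\x)] \le \gamma(\eta) + \eps$. The sample complexity is $N = \poly(m, d, 1/\eps)$ and the runtime is $\poly(m, d, 1/\eps)$ for the preprocessing plus $T$ for the single invocation of the oblivious learner, matching the bounds in the theorem.

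The main obstacle is making the coupling step rigorous: although the permutation renders the corrupted positions uniformly random over $[N]$, the adversarial values inserted at those positions can depend on the \emph{full} clean sample --- including clean values at positions that end up outside the first $m$ coordinates --- and this induces subtle dependencies between the selected and unselected clean samples that must be carefully controlled. Formalizing this via an explicit TV-coupling between the bounded and oblivious processes is exactly the content of Theorem~2 of~\cite{blanc2025adaptive}, from which the present statement follows as a specialization to the finite domain $\cube{d} \times \cube{}$ with uniform $\x$-marginal.
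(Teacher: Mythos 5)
The paper does not prove this statement --- it is presented as a citation, a special case of Theorem 2 of \cite{blanc2025adaptive}, and no argument is given beyond the reference. Your proposal sketches a natural reduction (request a large bounded-contamination sample, apply a uniformly random permutation, subsample the first $m$ entries, and argue TV-closeness to $m$ iid draws from a mixture $D$ with $\mathrm{dist}_{\mathrm{TV}}(D, D_{\mathrm{clean}}) \le \eta$) and, to your credit, flags exactly where it breaks: because the adversary in Definition~\ref{definition:bounded-contamination} sees the entire clean sample before choosing what to insert, the inserted values at the selected positions can correlate with the unselected clean entries, so the restricted $m$-tuple is not in general the product law of $m$ iid draws from any fixed $D$, and the plain permute-and-subsample coupling does not yield the claimed TV bound. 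You then correctly defer to Theorem 2 of \cite{blanc2025adaptive} for the rigorous decoupling, which is the same route the paper itself takes. So relative to the paper there is no gap --- both you and the paper cite the external result --- but your sketch, read on its own, is not a proof; the adaptive-to-oblivious reduction is precisely where \cite{blanc2025adaptive} does nontrivial work, and your argument stops at the point where that work would need to begin.
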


We will show the following theorem, which, together with Theorem \ref{theorem:adaptive-oblivious-equivalent}, yields a $\poly(d, 1/\eps)$-time algorithm in the bounded contamination case with error $\poly(\eta) + \eps$.

\begin{thm}\label{theorem:contaminated-oblivious}
    There is an algorithm that learns the class $\mathcal{C} = \{f(\x) = \sign(\vect v\cdot \x+ \tau) : \vect v\in \R^d, \tau\in \R\}$ under oblivious contamination of rate $\eta$ with respect to the uniform distribution over $\cube{d}$ up to error $O(\eta^{1/c})+ \eps$ for some sufficiently large constant $c\ge 1$. The algorithm has time and sample complexity $\poly(d, 1/\eps)$.
\end{thm}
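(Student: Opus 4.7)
The plan is to adapt each of the three phases of Algorithm \ref{algorithm:learn-halfspace} to the oblivious contamination model, arguing that the guarantees degrade only by an additive $\poly(\eta)$ term. Write $D_{\mathrm{clean}}$ for the clean distribution realized by $f^*$ and note $\mathrm{dist}_{\mathrm{TV}}(D, D_{\mathrm{clean}}) \le \eta$. Since the theorem only asks for error $\eta^{1/c}+\eps$, I may assume $\eta \le \eps^{25}/C$; otherwise the bound is trivial. I will reuse the constants $k$, $\eta$ and $u_{\max}$ from the clean algorithm, replacing $\opt$ in the analysis by $\eta$ throughout.

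For Phase 1, the estimator $\hatI_i = \E_{(\x,y)\sim S}[y x_i]$ is bounded in $[-1,1]$, so $|\E_{D}[yx_i] - \E_{D_{\mathrm{clean}}}[yx_i]| \le 2\eta$. The Hoeffding step in \Cref{theorem:influential} then goes through with an extra additive $O(\eta)$ error in each coordinate estimate, and Claim \ref{claim: influences as chow coefficients} identifies these estimates with coordinate influences of $f^*$. The conclusion is an analogue of \Cref{theorem:influential-main} guaranteeing $\Pr[\sign(\vect v^*\cdot\x+\tau^*) \neq \sign(\tilde{\vect v}^*\cdot\x+\tau^*)] \le O(\eta k)$ over the uniform distribution. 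For the heavy-coefficient and sparse-halfspace phases, the key observation (already highlighted in the introduction) is that the required sample size $m$ satisfies $m\eta \ll 1$: the total variation distance between $m$ i.i.d. samples from $D$ and from $D_{\mathrm{clean}}$ is at most $m\eta$, so Algorithm \ref{algorithm:find-heavy-coefficients} and the sparse LP step succeed with the same guarantees on $D$ as on $D_{\mathrm{clean}}$, up to a $O(m\eta)$ loss in success probability.

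The main obstacle is the regular-coefficient phase, because the hinge-loss analysis in \Cref{thm: with known head vars and weights} relied on Hoeffding (for concentration of $\vtail^*\cdot\x$) and Berry-Esseen (for anti-concentration), both of which use that $\x$ is exactly uniform over $\cube d$. Under contamination the marginal is only $\eta$-close in TV to uniform, so the concentration bound can fail on the adversarial fraction. I will preprocess $S$ with the outlier-removal procedure from \cite{klivans2024learningac0}, used as a black box, to obtain a filtered set $\Sout$ on which every constant-degree directional moment is bounded; this restores the Hoeffding-type concentration needed to bound the tail term $\E[\ind_{|\vtail^*\cdot\x|\ge B}|\vtail^*\cdot\x|]$ in the proof of Claim \ref{claim: bound on hinge-like loss for ground truth tail}, losing only an additive $\poly(\eta)$. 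Anti-concentration of $\vtail^*\cdot\x$ transfers automatically: since $\vtail^*$ is $\eps$-regular, Berry-Esseen under the uniform measure gives $\Pr_{\mathrm{unif}}[|\vtail^*\cdot\x+a|\le \eps] \le O(\eps)$, and the same bound holds under $D$ up to an additive $\eta$. The resulting hinge-loss inequality becomes $\E_{D}[\lambda(\cdot)] \le O(\opt_{\mathrm{reg}}/\eps\cdot\log(1/\eps) + \eps + \poly(\eta))$, and the remainder of the argument proceeds unchanged.

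Finally I combine the three phases exactly as in \Cref{sec:combineall}: use the critical index lemma on the halfspace produced by Phase 1, run both the structured and sparse branches, and select $\widehat h$ minimizing the empirical error on a fresh sample $S_3$ from $D$. Since $|\Pr_{D}[h(\x)\neq y] - \Pr_{D_{\mathrm{clean}}}[h(\x)\neq f^*(\x)]|\le \eta$ for any hypothesis $h$, empirical-error minimization on $D$ approximates the clean classification error up to $O(\eta)$. Tracking the $\poly(1/\eps)$ error amplifications through each reduction and taking $c$ sufficiently large, the final error is bounded by $O(\eta^{1/c})+\eps$, which is what \Cref{theorem:contaminated-oblivious} requires. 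The overall sample and time complexity remain $\poly(d,1/\eps)$ because outlier removal and the other modifications each cost only $\poly(d,1/\eps)$.
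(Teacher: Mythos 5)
Your proposal follows the same strategy as the paper: bounded influence estimators for Phase 1 (the paper's \Cref{theorem:influential-contaminated}), a total-variation coupling argument to neutralize contamination in the low-sample sparse and heavy-coefficient subroutines, and outlier removal from \cite{klivans2024learningac0} to restore directional-moment concentration in the hinge-loss phase (the paper's \Cref{thm: with known head vars and weights contaminated}). The details you leave implicit — the adjusted constants (the paper uses $\eta\le\eps^{49}/C$ and a correspondingly larger $k$), the replacement of the $\log(1/\eps)$ thresholds by $\poly(1/\eps)$ ones (e.g.\ $\phi_{\max}=C/\sqrt\eps$, $B=C/\eps^2$), and the Cauchy--Schwarz/Markov chaining in place of Hoeffding tails, leading to the $O(\eta/\eps^3+\eps)$ bound — are exactly what the paper works out, but your sketch correctly identifies that a larger final exponent $c$ absorbs all of these losses.
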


The algorithm is essentially the same as the one used for the label noise case (Algorithm \ref{algorithm:learn-halfspace}), with the only difference being that in place of $\findregularcoefficients$, we use the algorithm with the specifications of the following theorem.

\begin{thm}
\label{thm: with known head vars and weights contaminated}
There exists an algorithm
$\findregularcontaminated(\eps,d,H,\vect v_{\mathrm{head}},\tau,D)$
that satisfies the following specifications. The algorithm takes as inputs $\eps\in(0,1)$, positive integer $d$, a
set $H\subset[d]$, a vector $\vect v_{\mathrm{head}}\in\R^{d}$ that is supported
on $H$, a real-valued $\tau$, and a sample access to a distribution $D$ supported on $\left\{ \pm1\right\} ^{d}\times\left\{ \pm1\right\} $
such that $\mathrm{dist}_{\mathrm{TV}}(D, D_{\mathrm{clean}})\leq \eta$, where for $(\vect{x}, y)\sim D_{\mathrm{clean}}$ we have (i) $\vect x$ is uniformly random on $\left\{ \pm1\right\} ^{d}$ (ii) $y=\sign(\vect v_{\mathrm{head}}\cdot\vect x+\vect v_{\mathrm{tail}}^*\cdot\vect x+\tau)$ where $
\norm{\vect v_{\mathrm{tail}}^*}_{2}=1$, for all  $i\in H$ we have $\left(\vect v_{\mathrm{tail}}^*\right)_{i}=0$ and $\norm{\vect v_{\mathrm{tail}}^*}_{4}^2\leq\eps$. 
The algorithm runs in time $\poly(d/\eps)$  and
outputs, with probability at least $0.995$, a vector
$\widehat{\vect v}\in\R^{d}$ for which 
\begin{equation}
\Pr_{(\vect x,y)\sim D}\left[\sign(\widehat{\vect v}\cdot\vect x+\tau)\neq y\right]\leq O\left(\frac{\eta}{\eps^3}+\eps\right),\label{eq: bound with known head vars and weights contaminated}
\end{equation}
\end{thm}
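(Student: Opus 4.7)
The plan is to follow the strategy of the proof of Theorem~\ref{thm: with known head vars and weights}, but with two main modifications that account for the fact that the marginal of $D$ is no longer uniform: (i) a preprocessing step that enforces bounded directional moments on the sample, and (ii) a decomposition of the expected hinge loss of the optimal tail $\vect v^*_{\mathrm{tail}}$ into a contribution under $D_{\mathrm{clean}}$ (handled exactly as in the noiseless case) and a contribution due to the $\eta$-fraction of contamination (bounded via the new moment control).

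First, I would invoke an outlier-removal routine in the style of \cite{klivans2024learningac0} on the input sample set. This yields a filtered subset together with the guarantee that, for every unit vector $\vect u\in \R^{d}$, the empirical moments of $\vect u\cdot \vect x$ up to some constant degree are bounded by universal constants. Because $\mathrm{dist}_{\mathrm{TV}}(D, D_{\mathrm{clean}}) \le \eta$ and the uniform distribution on $\cube{d}$ is subgaussian in every direction, such a filter discards at most an $O(\eta)$-fraction of points and, in particular, removes at most $O(\eta)$ mass from $D_{\mathrm{clean}}$. As in the clean proof, I would additionally restrict to the subset $S'$ of surviving samples satisfying $|\vect v_{\mathrm{head}}\cdot \vect x + \tau|\le \log(1/\eps)$, since outside this band the classification is determined by the head and thus unaffected by the choice of tail coefficients.

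Second, I would bound the empirical rescaled hinge loss of $\vect v^*_{\mathrm{tail}}$ on $S'$ by writing
\[
    \E_{D}[\lambda] \;=\; \E_{D_{\mathrm{clean}}}[\lambda] \;+\; \bigl(\E_{D}[\lambda] - \E_{D_{\mathrm{clean}}}[\lambda]\bigr).
\]
The first term is $O(\eps)$ by essentially repeating the argument of Claim~\ref{claim: bound on hinge-like loss for ground truth tail}: under $D_{\mathrm{clean}}$ the halfspace defined by $\vect v_{\mathrm{head}}+\vect v^*_{\mathrm{tail}}$ and bias $\tau$ has zero error, while Berry-Esseen anti-concentration of $\vect v^*_{\mathrm{tail}}\cdot \vect x$ (granted by $\|\vect v^*_{\mathrm{tail}}\|_4^2 \le \eps$, $\|\vect v^*_{\mathrm{tail}}\|_2 = 1$) bounds the mass in the $\eps$-band around the decision boundary where the hinge loss is not already zero. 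The second term is controlled via the $\eta$-fraction of adversarial mass times an upper bound on the hinge loss after filtering, which is $O(\poly(1/\eps))$ thanks to the outlier-removal moment bounds applied to $\vect v^*_{\mathrm{tail}}\cdot \vect x$ together with the restriction $|\vect v_{\mathrm{head}}\cdot \vect x + \tau|\le \log(1/\eps)$. Combining gives an empirical hinge loss bound of $O(\eta/\eps^{3}+\eps)$ for $\vect v^*_{\mathrm{tail}}$; since $\widehat{\vect v}_{\mathrm{tail}}$ is an empirical minimizer over the unit ball supported outside $H$, the same bound transfers to it. A standard VC uniform convergence argument for halfspaces, identical in spirit to the one at the end of Theorem~\ref{thm: with known head vars and weights}, then turns the empirical hinge loss bound into the population $0/1$ error bound claimed in Equation~\eqref{eq: bound with known head vars and weights contaminated}.

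The main obstacle is the second step: the hinge loss is unbounded in general, so a naive TV-based comparison between $D$ and $D_{\mathrm{clean}}$ yields nothing. The outlier-removal preprocessing is precisely what restores enough moment control on $D$ to replace the missing $L^{\infty}$ bound by a quantitative $\poly(1/\eps)$ control, so that the $\eta$ fraction of adversarial mass contributes at most an additive $O(\eta/\eps^{3})$ error instead of an uncontrolled term. Carefully tracking this interplay --- together with verifying that the surviving filter retains the anti-concentration property of $D_{\mathrm{clean}}$ needed to mimic Claim~\ref{claim: bound on hinge-like loss for ground truth tail} --- is where the technical weight of the proof lies.
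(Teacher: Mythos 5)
Your high-level plan coincides with the paper's: draw $\poly(d/\eps)$ samples, run outlier removal in the style of \cite{klivans2024learningac0} to enforce bounded second moments in every direction, restrict to a band where $|\vhead\cdot\x+\tau|$ is small, bound the filtered empirical hinge loss of $\vtail^*$ by $O(\eta/\eps^3+\eps)$, transfer to the empirical minimizer $\vtailhat$, and close with a VC uniform-convergence argument. All the right ingredients are there, including the key observation that moment control from outlier removal is what replaces the missing $L^\infty$ bound on the hinge loss.

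The place where the proposal as written would not go through is the decomposition $\E_D[\lambda]=\E_{D_{\mathrm{clean}}}[\lambda]+(\E_D[\lambda]-\E_{D_{\mathrm{clean}}}[\lambda])$. This is a population-level identity, but the outlier-removal guarantee of \Cref{lemma:outlier-removal} is a statement about the \emph{empirical} filtered sample $S'$, namely $\sum_{\x\in S'}(\vtail^*\cdot\x)^{2}\le 8m$; it gives you no handle on $\E_D[(\vtail^*\cdot\x)^2]$, which can be arbitrarily large since an adversary controlling an $\eta$-fraction of $D$ can plant mass at any $\x\in\cube{d}$. Consequently $\E_D[\lambda]$ itself need not be finite-degree-bounded and the difference term is not controllable this way. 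The paper instead works directly with the empirical filtered sum $\frac{1}{|S|}\sum_{(\x,y)\in\Sout}\lambda(\cdot)$, splitting it not by ``clean versus adversarial distribution'' but by ``$[0,\eps]$-band points'' (controlled by Berry--Esseen anti-concentration plus TV-distance plus a standard concentration bound) versus ``misclassified points'' (whose fraction is $\approx\eta$ by TV distance, and whose aggregate hinge-loss contribution is then tamed by further splitting at a threshold $B=C/\eps^2$ on $|\vtail^*\cdot\x|$ and applying Cauchy--Schwarz against the empirical second-moment bound). Your mention that the filtered hinge loss is ``$O(\poly(1/\eps))$'' reads as a pointwise claim, which is also not what the outlier removal gives; the control is only on the aggregate. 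So the skeleton is the same, but to make your second step correct you need to (i) phrase the decomposition over the empirical filtered sample rather than over $D$, and (ii) spell out the Cauchy--Schwarz against the moment bound after truncating $|\vtail^*\cdot\x|$ at a suitable $B$, which is exactly where the $\eta/\eps^3$ emerges. Your choice of band threshold $\log(1/\eps)$ instead of the paper's $C/\sqrt{\eps}$ is immaterial since the $B/\eps$ term dominates in any case.
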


Compared to $\findregularcoefficients$, the algorithm $\findregularcontaminated$ performs an additional outlier removal step to ensure that the uniform distribution over the input samples is concentrated in every direction. In particular, we use the following outlier removal guarantee which originates to classical results from the literature of learning with contamination. We use a direct corollary of a version of the outlier removal guarantee appearing in the recent work of \cite{klivans2024learningac0}.

\begin{lem}[Outlier removal \cite{klivans2009learning,diakonikolas2018learning,klivans2024learningac0}]\label{lemma:outlier-removal}
    There is an algorithm that satisfies the following specifications. The algorithm is given $\delta\in(0,1)$, $r\in \mathbb{N}$, and a set $S$ of $m$ i.i.d. examples from some distribution over $\cube{d}$ whose total variation distance from the uniform is at most $\eta$. If $m \ge C {(3d)^{4r}} \log(1/\delta)$ for some sufficiently large universal constant $C\ge 1$, then the algorithm outputs a set $S' \subseteq S$ such that:
    \begin{enumerate}
        \item With probability at least $1-\delta$, we have $|S\setminus S'| \le 4\eta m$.
        \item For any $A> 0$ and any $\vect v\in \R^d$ such that $\E_{\x \sim \cube{d}}[(\vect v\cdot \x)^{2r}] \le A$, we have:
        \[
            \sum_{\x \in S'} (\vect v\cdot \x)^{2r} \le 8Am\; \text{ with probability at least }1-\delta.
        \]
    \end{enumerate}
\end{lem}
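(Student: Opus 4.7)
\textbf{Proof proposal for \Cref{lemma:outlier-removal}.} The plan is to establish this outlier-removal lemma in three steps: decompose the contaminated sample into a clean subsample plus a small adversarial correction, prove a uniform $2r$-th moment concentration bound on the clean subsample, and then run an iterative peeling procedure that keeps the moments bounded while discarding few points. Morally this follows the classical outlier-removal template of \cite{klivans2009learning,diakonikolas2018learning,klivans2024learningac0}.

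First I would couple the $m$ samples from the noisy distribution with an i.i.d. uniform sample $S_{\mathrm{clean}}$ on $\cube{d}$. Since $\mathrm{dist}_{\mathrm{TV}}(D,U)\le\eta$, a standard maximal coupling together with a Hoeffding bound on the number of disagreements gives, with probability at least $1-\delta/3$, a decomposition $S=(S_{\mathrm{clean}}\cup S_{\mathrm{adv}})\setminus S_{\mathrm{rem}}$ with $|S_{\mathrm{adv}}|,|S_{\mathrm{rem}}|\le 2\eta m$. Then I would establish the following uniform convergence claim: with probability at least $1-\delta/3$, for every $\vect v\in\R^d$ with $\E_{\x\sim U}[(\vect v\cdot\x)^{2r}]\le A$,
\[
\sum_{\x\in S_{\mathrm{clean}}}(\vect v\cdot\x)^{2r}\le 2Am.
\]
The key observation is that after reducing via $x_i^{2}=1$, $(\vect v\cdot\x)^{2r}$ is a multilinear polynomial of degree at most $2r$, which lives in the linear space spanned by at most $\sum_{j\le 2r}\binom{d}{j}\le(3d)^{2r}$ monomials. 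Standard Hoeffding combined with a union bound over a basis of this linear space (or equivalently, estimating every monomial expectation to relative constant factor) yields the claim as soon as $m\gtrsim (3d)^{4r}\log(1/\delta)$.

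Next I would describe the algorithm as an iterative peeling procedure. Starting with $S^{(0)}=S$, at each step test whether there exists $\vect v$ with $\E_{U}[(\vect v\cdot\x)^{2r}]\le A$ but $\sum_{\x\in S^{(t)}}(\vect v\cdot\x)^{2r}>8Am$; if such a $\vect v^{(t)}$ is found, remove the point $\x^{\star}\in S^{(t)}$ that maximizes $(\vect v^{(t)}\cdot\x)^{2r}$. The search for $\vect v^{(t)}$ can be recast as maximizing a quadratic form (when $r=1$) or a convex program in the coefficients of degree-$2r$ polynomials subject to the uniform-moment bound, and it terminates in polynomially many rounds because each round strictly decreases $|S^{(t)}|$.

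The main step is the termination analysis, which bounds the number of removed points by $4\eta m$. Whenever the algorithm removes $\x^{\star}$ along $\vect v^{(t)}$, the uniform convergence guarantee implies that the clean points in $S^{(t)}$ contribute at most $2Am$ to the $2r$-th moment along $\vect v^{(t)}$, so the contribution from $S^{(t)}\cap S_{\mathrm{adv}}$ strictly exceeds $6Am$. Since $\x^{\star}$ achieves the maximum of $(\vect v^{(t)}\cdot\x)^{2r}$ over $S^{(t)}$, a charging argument (comparing the sum of the top $|S_{\mathrm{adv}}\cap S^{(t)}|$ values to the contribution of $S^{(t)}\cap S_{\mathrm{adv}}$) shows that every two removed points charge at least one adversarial point from $S_{\mathrm{adv}}$, bounding the total removals by $2\cdot 2\eta m=4\eta m$. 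Once the loop halts, every direction $\vect v$ with $\E_{U}[(\vect v\cdot\x)^{2r}]\le A$ satisfies $\sum_{\x\in S'}(\vect v\cdot\x)^{2r}\le 8Am$, as required. The hardest part will be making the charging argument rigorous uniformly over the sequence of directions $\vect v^{(t)}$ encountered during peeling; standard treatments handle this by working with a single ``worst'' direction per round and invoking the uniform convergence step only finitely many times.
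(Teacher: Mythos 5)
The paper does not actually prove \Cref{lemma:outlier-removal}; it invokes it as a black box, citing \cite{klivans2024learningac0} as the source of the precise formulation. So there is no in-paper argument to compare against, and you are effectively reproving a result from prior work. Your high-level template (couple the noisy sample to a clean uniform sample, prove uniform concentration of $2r$-th moments over the clean sample, then iteratively filter) is indeed the standard strategy from the cited literature and is the right shape for the proof.

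There are two concrete gaps. First, in the uniform-convergence step, estimating ``every monomial expectation to relative constant factor'' is not enough: writing $(\vect v\cdot\x)^{2r}=\sum_T c_T M_T(\x)$ over the multilinear basis, one has $c_{\emptyset}=\E_U[(\vect v\cdot\x)^{2r}]\le A$ but $\|c\|_1$ can be exponentially larger than $A$ (e.g.\ $(v\cdot\x)^{2r}$ for $\vect v$ spread over many coordinates), so additive control of each $\E[M_T]$ does not bound $\sum_T c_T\widehat{\mu}_T$. What actually works is a spectral argument: $(\vect v\cdot\x)^{2r}=q(\x)^2$ where $q$ is multilinear of degree $\le r$, and one shows the empirical Gram matrix of the degree-$\le r$ monomial basis is close to identity in operator norm; the $(3d)^{4r}$ sample bound comes from controlling all $(3d)^{2r}\times(3d)^{2r}$ Gram entries. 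Second, the termination/charging argument for ``remove the single max point and every two removals charge one adversarial point'' is not justified and I do not believe it is correct as stated: the max point along $\vect v^{(t)}$ may well be clean, and because the offending direction changes from round to round, there is no global budget that caps the number of clean removals for this one-point-per-round scheme. The standard fix (used in \cite{klivans2009learning,diakonikolas2018learning,klivans2024learningac0}) is to remove \emph{all} points above a threshold $\theta_t$ along $\vect v^{(t)}$, and to choose $\theta_t$ so that the total score of removed clean points is at most, say, one third of the total score of removed adversarial points; since clean contributes $\le 2Am$ and adversarial contributes $>6Am$, such a $\theta_t$ always exists (an averaging argument), and this directly gives that the number of removed clean points is $O(1)\cdot|S_{\mathrm{adv}}|$, yielding the desired $|S\setminus S'|\le 4\eta m$ bound.
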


While the proof of \Cref{thm: with known head vars and weights contaminated} closely follows the proof of \Cref{thm: with known head vars and weights}, we provide its proof here for completeness.

\begin{proof}[Proof of \Cref{thm: with known head vars and weights contaminated}]
We define $\lambda(z) = \relu(1- z/\eps)$.
For some absolute constant $C$, the algorithm $\findregularcontaminated$
does the following
\begin{itemize}
\item Draw a $C\left(\frac{d}{\eps}\right)^{C}$-sized sample set $S$
from $D$.
\item Run the algorithm of \Cref{lemma:outlier-removal} on inputs $(\delta = 1/C, r = 1, S)$ and let $S'$ be its output.
\item Form a subset ${\Sout}$ of $S'$ as follows for $\phi_{\max} = C/\sqrt{\eps}$:
\[
{\Sout}\leftarrow\left\{ (\vect x,y)\in S':\ \abs{\vhead\cdot\vect x + \tau}\leq \phi_{\max}\right\} 
\]
\item Let ${\cal V} = \{\vect v\in \R^d: \|\vect v\|_2 \le 1 \text{ and } \vect v_{H} = 0\}$, where recall that $\vect v_H$ is a vector in $\R^{|H|}$ corresponding to the restriction of $\vect v$ to the coordinates in the set $H$.
\item Compute the following quantity:
\[
\vtailhat\leftarrow\argmin_{\vect v \in {\cal V}}\sum_{(\vect x,y)\in {\Sout}}\lambda\Bigr((\vhead\cdot\vect x+\vect v\cdot \vect x + \tau)\cdot y\Bigr),
\]
Note that the minimization task above can be performed in time $\poly(d/\eps)$
because the function $\lambda$ is convex, and therefore the minimization
above is a convex minimization task.
\item Output $\widehat{\vect v}:=\vtailhat+\vhead$. 
\end{itemize}

For the following, we let $\phi:\R^d \to \R$ be the function defined as follows. 
\begin{equation}
    \phi(\vect x) = \vhead\cdot \vect x + \tau\label{equation:definition-phi-contaminated}
\end{equation}
\begin{claim}
\label{claim: bound on hinge-like loss for ground truth tail contaminated}Let
$\vtail^{*}$ be a as specified in the statement of \Cref{thm: with known head vars and weights contaminated},
then with probability at least $0.999$, it is the case that 
\begin{equation}
Q:= \frac{1}{|S|}\sum_{(\vect x,y)\in {\Sout}}\left[\lambda\left((\phi(\vect x)+\vtail^{*}\cdot\vect x)y\right)\right]\leq O\left(\frac{\eta}{\eps^3}+\eps\right).\label{eq: bound on hingle-like loss contaminated}
\end{equation}
\end{claim}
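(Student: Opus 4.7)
The plan is to adapt the proof of the analogous label-noise statement (\Cref{claim: bound on hinge-like loss for ground truth tail}) by isolating the contribution of the adversarial corruption through a coupling. I would couple the sample $S\sim D^{|S|}$ with a clean sample $\tilde S\sim D_{\mathrm{clean}}^{|S|}$ so that corresponding pairs disagree with probability at most $\eta$. Call a sample \emph{good} if it agrees with its clean partner, and \emph{bad} otherwise. By Hoeffding, the number of bad samples satisfies $N_{\mathrm{bad}}\le 2\eta|S|$ with probability $\ge 1-1/C$. I split
\[
Q = Q_{\mathrm{good}} + Q_{\mathrm{bad}},
\]
summing $\lambda$ over the good and bad subsets of $\Sout$, respectively.

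For $Q_{\mathrm{good}}$: on every good sample $y=\sign(\phi(\vect x)+\vtail^*\cdot\vect x)$, hence $y(\phi(\vect x)+\vtail^*\cdot\vect x)\ge 0$, so $\lambda((\phi(\vect x)+\vtail^*\cdot\vect x)y)\le \ind_{|\phi(\vect x)+\vtail^*\cdot\vect x|\le\eps}$. Since $\vhead$ is supported on $H$ and $\vtail^*$ vanishes on $H$, under the uniform clean marginal the random variables $\phi(\vect x)$ and $\vtail^*\cdot\vect x$ are independent. Berry-Esseen (\Cref{fact: Berry-Esseen for boolean}), combined with $\|\vtail^*\|_2=1$ and $\|\vtail^*\|_4^2\le\eps$, gives $\Pr[|\vtail^*\cdot\vect x+a|\le\eps]=O(\eps)$ uniformly in $a$; averaging over $\phi(\vect x)$ bounds the expectation of the indicator by $O(\eps)$. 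Chebyshev's inequality on this $[0,1]$-valued random variable then yields $Q_{\mathrm{good}}\le O(\eps)$ with probability $\ge 0.9995$, exactly as in the label-noise proof.

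For $Q_{\mathrm{bad}}$, on $\Sout$ we have $|\phi(\vect x)|\le\phi_{\max}=C/\sqrt{\eps}$, and the crude bound $\lambda(z)\le |z|/\eps+1$ gives
\[
|S|\cdot Q_{\mathrm{bad}} \le N_{\mathrm{bad}}\!\left(\frac{\phi_{\max}}{\eps}+1\right) + \frac{1}{\eps}\sum_{(\vect x,y)\in\text{bad}\cap\Sout}|\vtail^*\cdot\vect x|.
\]
The first term contributes $O(\eta/\eps^{3/2})$. For the second, I invoke \Cref{lemma:outlier-removal} with $r=1$ and $A=\E_{\vect x\sim\cube{d}}[(\vtail^*\cdot\vect x)^2]=\|\vtail^*\|_2^2=1$, which gives $\sum_{\vect x\in S'}(\vtail^*\cdot\vect x)^2\le 8|S|$ with high probability; since $\Sout\subseteq S'$, Cauchy-Schwarz then yields $\sum_{\text{bad}\cap\Sout}|\vtail^*\cdot\vect x|\le\sqrt{N_{\mathrm{bad}}}\cdot\sqrt{8|S|}=O(\sqrt{\eta}\,|S|)$, so this second contribution is $O(\sqrt{\eta}/\eps)$. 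Two applications of AM-GM ($\eta/\eps^{3/2}\le\eta/\eps^3$ for $\eps<1$, and $\sqrt{\eta}/\eps=\sqrt{(\eta/\eps^3)\cdot\eps}\le\tfrac12(\eta/\eps^3+\eps)$) collapse these into the target $O(\eta/\eps^3+\eps)$. A union bound over the coupling-count event, the Chebyshev estimate for $Q_{\mathrm{good}}$, and the outlier-removal event concludes the proof.

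The main obstacle is balancing the two corruption-driven error terms against the target $\eta/\eps^3$: the choice $\phi_{\max}=C/\sqrt{\eps}$ made in Algorithm (the contaminated version of) $\findregularcoefficients$ is precisely tuned so that the per-bad-sample crudity factor $\phi_{\max}/\eps = O(1/\eps^{3/2})$ matches the $r=1$ second-moment outlier guarantee under AM-GM; a higher-degree moment bound from \Cref{lemma:outlier-removal} would only be needed if one wanted to reduce the exponent of $\eps$ further.
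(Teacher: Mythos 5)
Your proof is correct but takes a genuinely different route from the paper's. The paper decomposes the loss \emph{term-wise}: it bounds $\lambda\bigl((\phi(\vect x)+\vtail^*\cdot\vect x)y\bigr) \le 2\cdot\ind_{|\phi(\vect x)+\vtail^*\cdot\vect x|\le\eps} + (1+(|\phi(\vect x)|+|\vtail^*\cdot\vect x|)/\eps)\cdot\ind_{y\neq\sign(\phi(\vect x)+\vtail^*\cdot\vect x)}$, turning $Q$ into an ``in-margin'' part (Berry--Esseen plus a direct TV-distance comparison of $D$ to $D_{\mathrm{clean}}$) and a ``misclassified'' part, which it further splits via an auxiliary threshold $B=C/\eps^2$ on $|\vtail^*\cdot\vect x|$: below $B$ Hoeffding plus TV gives $(\phi_{\max}+B)\eta/\eps = O(\eta/\eps^3)$, and above $B$ Cauchy--Schwarz plus the outlier-removal second-moment bound gives $O(1/(\eps B)) = O(\eps)$. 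You instead decompose the \emph{sum sample-wise} via an explicit coupling between $D$ and $D_{\mathrm{clean}}$, with ``good'' and ``bad'' samples replacing the role of the TV argument; the good part is handled identically by Berry--Esseen, while the bad part is bounded crudely by $\lambda\le|z|/\eps+1$ and Cauchy--Schwarz against the outlier-removal bound, after which AM-GM repackages $O(\eta/\eps^{3/2}+\sqrt{\eta}/\eps)$ into the target $O(\eta/\eps^3+\eps)$. Your route avoids the auxiliary threshold $B$ and is arguably cleaner to parse; the paper's route yields a (harmlessly) tighter intermediate bound in which the $\eps$-term is genuinely independent of $\eta$, and hence needs no final AM-GM. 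Both arguments use the same three ingredients---Berry--Esseen anticoncentration, the outlier-removal moment bound, and Cauchy--Schwarz---and both require the same sample-size and $\phi_{\max}=C/\sqrt\eps$ calibration. Two small points worth tightening in your write-up: the Chernoff bound gives $N_{\mathrm{bad}}\le 2\eta|S|$ with the stated probability only when $\eta|S|=\Omega(\log C)$ (in the complementary regime one gets $N_{\mathrm{bad}}\le 2\eta|S|+O(\log C)$, absorbed by the large sample size); and you should state explicitly that $Q_{\mathrm{good}}$ is bounded by the sum of the in-margin indicator over \emph{all} clean coupled points $\tilde{\vect x}_i$---which is where the i.i.d.\ structure comes from---since the good subset itself is not an i.i.d.\ sample.
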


\begin{proof}
We have that $\lambda(z)\le \frac{|z|}{\eps}+1$ for all $z$ and $\lambda(z) = 0,$ for all $z \ge \eps$. Moreover, $(\phi(\vect x)+\vtail^{*}\cdot\vect x)y<0$
if and only if $\sign(\vhead \cdot \vect x+\vtail^{*}\cdot \vect x+ \tau)\neq y$, since we have defined $\phi(\vect x) = \vhead \cdot \vect x + \tau$. Therefore:

\begin{align*}
    Q &\le \frac{1}{|S|} \sum_{(\x,y)\in \Sout} \biggr( 2\cdot\ind_{\phi(\x) + \vtail^*\cdot \x \in [0,\eps]} + \biggr(1+ \frac{|\phi(\x)| + |\vtail^*\cdot \x|}{\eps}  \biggr) \ind_{y\neq \sign(\phi(\x) + \vtail^*\cdot \x)}\biggr)
\end{align*}
We split the right side of the above inequality in two terms, $Q_1$ and $Q_2$. We have:
\begin{align*}
    Q_1 &:= \frac{2}{|S|} \sum_{(\x,y)\in \Sout} \ind_{\phi(\x) + \vtail^*\cdot \x \in [0,\eps]} \\
    &\le \frac{2}{|S|} \sum_{(\x,y)\in S} \ind_{\phi(\x) + \vtail^*\cdot \x \in [0,\eps]} \\
    &\le 2\Pr_{(\x,y)\sim D}\bigr[\phi(\x) + \vtail^*\cdot \x \in [0,\eps] \bigr] + O(\eps) \\
    &\le 2\Pr_{\x\sim \cube{d}}\bigr[\phi(\x) + \vtail^*\cdot \x \in [0,\eps] \bigr] + O(\eps + \eta)\\
    &\le O(\eps+\eta)\,,
\end{align*}
where the second inequality follows from a union bound, combined with the fact that $|S| \ge C(d/\eps)^C$, the third inequality follows from the total variation distance bound between $D$ and $D_{\mathrm{clean}}$, and the final inequality follows from the application of the Berry-Esseen theorem (Fact \ref{fact: Berry-Esseen for boolean}). Specifically, since $\norm{\vtail^{*}}_{4}^2\leq\eps$
and $\norm{\vtail^{*}}_{2}=1$, for any fixed $a\in\R$ we
have 
\[
\Pr_{\vect x\sim\left\{ \pm1\right\} ^{d}}\left[\abs{\vtail^{*}\cdot\vect x+a}\leq\eps\right]\leq O(\eps)
\]
Further recalling that $\vtail^{*}$ and $\vhead$
are supported on disjoint indices in $[d]$, we see that $\vtail^{*}\cdot\vect x$
and $\phi(\vect x)$ are indipendent random variables
when $\vect x$ is drawn uniformly from $\left\{ \pm1\right\} ^{d}$.
This allows us to average over $a=\phi(\vect x)$ to
conclude that 
\[
\Pr_{\vect x\sim\left\{ \pm1\right\} ^{d}}\left[\abs{\vtail^{*}\cdot\vect x+\phi(\vect x)}\leq\eps\right]\leq O(\eps)
\]

For the second term $Q_2$ of $Q$ we have the following:

\begin{align*}
    Q_2 &:= \frac{1}{|S|} \sum_{(\x,y)\in \Sout} \biggr(1+ \frac{|\phi(\x)| + |\vtail^*\cdot \x|}{\eps}  \biggr) \ind_{y\neq \sign(\phi(\x) + \vtail^*\cdot \x)} \\
    &\le {(1+\phi_{\max}/\eps)} \Pr_{(\x,y)\sim S}[y\neq \sign(\phi(\x) + \vtail^*\cdot \x)] + \frac{1}{|S|} \sum_{(\x,y)\in \Sout} \biggr(\frac{|\vtail^*\cdot \x|}{\eps}  \biggr) \ind_{y\neq \sign(\phi(\x) + \vtail^*\cdot \x)}\,,
\end{align*}
where the inequality follows from the definition of $\Sout$, where we have removed the points such that $|\phi(\x)| > \phi_{\max}$. Let $B = C/\eps^2$. We split the summation of the second term of the above expression in two parts regarding the value of $|\vtail^*\cdot \x|$: let $S_1$ contain the points such that $|\vtail^*\cdot \x|\le B$ and $S_2 = \Sout \setminus S_1$.
\begin{align*}
    Q_2 &\le \Bigr(1+\frac{\phi_{\max}+B}{\eps}\Bigr) \Pr_{(\x,y)\sim S}\Bigr[y\neq \sign(\phi(\x) + \vtail^*\cdot \x)\Bigr] + \frac{1}{|S|} \sum_{(\x,y)\in \Sout} \biggr(\frac{|\vtail^*\cdot \x|}{\eps}  \biggr) \ind_{(\x,y)\in S_2}
\end{align*}
For the second term of the above expression, we apply the Cauchy-Schwarz inequality to obtain:
\begin{align*}
    \frac{1}{|S|} \sum_{(\x,y)\in \Sout} \biggr(\frac{|\vtail^*\cdot \x|}{\eps}  \biggr) \ind_{(\x,y)\in S_2} &\le \frac{1}{\eps |S|} \sqrt{\sum_{(\x,y)\in \Sout} (\vtail^*\cdot \x)^2 \cdot \sum_{(\x,y)\in \Sout} \ind_{|\vtail\cdot \x|> B}} \\
    &\le \frac{1}{\eps} \sqrt{\frac{8}{|S|} \sum_{(\x,y)\in \Sout} \ind_{|\vtail\cdot \x|> B} } \\
    &\le \frac{1}{\eps } \sqrt{\frac{8}{|S| B^2} \sum_{(\x,y)\sim \Sout} (\vtail^*\cdot \x)^2} \\
    &\le \frac{8}{\eps B} \le O(\eps)\,,
\end{align*}
where the first inequality follows from Cauchy-Schwarz, the second follows from the guarantee of the outlier removal procedure (see \Cref{lemma:outlier-removal}) and the fact that $\Sout\subseteq S'$, the third inequality follows from Markov's inequality, and the fourth inequality follows again by the guarantee of the outlier removal procedure.

For the first term of the upper bound for $Q_2$, we use Hoeffding's inequality, as well as the bound on the total variation distance between $D$ and $D_{\mathrm{clean}}$ to overall obtain:
\[
    Q_2 \le \Bigr(1 + \frac{\phi_{\max} + B}{\eps} \Bigr)\cdot \eta + O(\eps) \le O(\eta / \eps^3) + O(\eps)\,,
\]
as desired.
\end{proof}
Using the conclusion of Claim \ref{claim: bound on hinge-like loss for ground truth tail contaminated}, and the fact that $\lambda\left((\phi(\vect x)+\vtailhat\cdot\vect x)y\right)\geq\indicator_{\sign\left(\phi(\vect x)+\vtailhat\cdot\vect x\right)\neq y}$,
we see that
\begin{align}
O\left(\frac{\eta}{\eps^3}+\eps\right) & \geq\frac{1}{|S|}\sum_{(\vect x,y)\in {\Sout}}\left[\lambda\left((\phi(\vect x)+\vtail^{*}\cdot\vect x)y\right)\right] \geq\frac{1}{|S|}\sum_{(\vect x,y)\in {\Sout}}\left[\lambda\left((\phi(\vect x)+\vtailhat\cdot\vect x)y\right)\right] \notag \\
&\geq \frac{1}{|S|}\sum_{(\vect x,y)\in {\Sout}}\ind_{\sign\left(\phi(\vect x)+\vtailhat\cdot\vect x\right)\neq y} \ge \frac{1}{|S|}\sum_{(\vect x,y)\in {S'}}\ind_{\sign\left(\phi(\vect x)+\vtailhat\cdot\vect x\right)\neq y} \ind_{|\phi(\x)| \le \phi_{\max}} \notag\\
&\ge \frac{1}{|S|}\sum_{(\vect x,y)\in {S}}\ind_{\sign\left(\phi(\vect x)+\vtailhat\cdot\vect x\right)\neq y} \ind_{|\phi(\x)| \le \phi_{\max}} - 4\eta \notag\\
&= \Pr_{(\vect x,y)\in S}\Bigr[\left(\abs{\phi(\vect x)}\leq \phi_{\max}\right)\wedge\left(\sign\left(\phi(\vect x)+\vtailhat\cdot\vect x\right)\neq y\right)\Bigr] - 4\eta\,,\label{eq: bound on loss inside the band contaminated}
\end{align}
where the last inequality follows from the guarantee of the outlier removal procedure that $|S\setminus S'| \le 4\eta |S|$.

Since $\norm{\vtail^{*}}_{2}$ and $\norm{\vtailhat}_{2}$
are bounded by $1$, we conclude that:
\begin{multline*}
\eta\geq\Pr_{(\vect x,y)\sim D}\left[\left(\abs{\phi(\vect x)}>{\phi_{\max}  }\right)\land\left(\sign(\phi(\vect x)+\vtail^{*}\cdot\vect x)\neq y\right)\right]=\\
\Pr_{(\vect x,y)\sim D}\left[\left(\abs{\phi(\vect x)}>{\phi_{\max}  }\right)\land\left(\sign(\phi(\vect x))\neq y\right)\right]\pm\underbrace{\Pr_{(\vect x,y)\sim D}\left[\abs{\vtail^{*}\cdot\vect x}\geq{\phi_{\max}  }\right]}_{\leq \eta + O(\eps)},
\end{multline*}
and similarly
\begin{multline*}
\Pr_{(\vect x,y)\sim D}\left[\left(\abs{\phi(\vect x)}>{\phi_{\max}  }\right)\land\left(\sign(\phi(\vect x)+\vtailhat\cdot\vect x)\neq y\right)\right]=\\
\Pr_{(\vect x,y)\sim D}\left[\left(\abs{\phi(\vect x)}>{\phi_{\max}  }\right)\land\left(\sign(\phi(\vect x))\neq y\right)\right]\pm\underbrace{\Pr_{(\vect x,y)\sim D}\left[\abs{\vtailhat\cdot\vect x}\geq{\phi_{\max}  }\right]}_{\leq \eta + O(\eps)},
\end{multline*}
which together tell us that 
\[
\Pr_{(\vect x,y)\sim D}\left[\left(\abs{\phi(\vect x)}>{\phi_{\max}  }\right)\land\left(\sign(\phi(\vect x)+\vtailhat\cdot\vect x)\neq y\right)\right]\leq O(\eps + \eta).
\]
Using the above together with Chebyshev's inequality we see that if
$C$ is a sufficiently large absolute constant, then the $C\left(\frac{d}{\eps}\right)^{C}$-sized
sample set $S$ with probability at least $0.999$ satisfies
\[
\Pr_{(\vect x,y)\sim S}\left[\left(\abs{\phi(\vect x)}>{\phi_{\max}  }\right)\land\left(\sign(\phi(\vect x)+\vtailhat\cdot\vect x)\neq y\right)\right]\leq O(\eps + \eta),
\]
which together with \ref{eq: bound on loss inside the band contaminated} and the definition of $\phi$ \eqref{equation:definition-phi-contaminated}
implies that 
\begin{equation}
\Pr_{(\vect x,y)\sim S}\left[\left(\sign(\vhead\cdot\vect x+\vtailhat\cdot\vect x + \tau)\neq y\right)\right]\leq O\left(\frac{\eta}{\eps^3}{ }+\eps\right).\label{eq: hypothesis good on sample set contaminated}
\end{equation}
Finally, a standard uniform-convergence VC-dimension argument for
halfspaces tells us that if $C$ is a sufficiently large absolute
constant, then the $C\left(\frac{d}{\eps}\right)^{C}$-sized sample
set $S$ with probability at least $0.999$ satisfies the uniform convergence property for the class of halfspaces,
which together with Equation \ref{eq: hypothesis good on sample set contaminated}
implies Equation \ref{eq: bound with known head vars and weights contaminated}
finishing the proof of Theorem \ref{thm: with known head vars and weights contaminated}.
\end{proof}

Another important ingredient for the proof of \Cref{theorem:contaminated-oblivious} is the following analogue of \Cref{theorem:influential-main}, which ensures that we may consider the positions of the $k$-largest magnitude coefficients to be known, by only incurring an $O(k)$ error amplification, even in the presence of contamination.

\begin{thm}\label{theorem:influential-contaminated}
    There is an algorithm that, given sample access to a distribution over pairs $(\x,y)\sim \cube{d}\times \cube{}$ that has total variation distance at most $\eta$ from another distribution over $(\x,y)$ such that $\x$ is uniform over $\cube{d}$ and $y = \sign(\vect v^*\cdot \x + \tau^*)$ for some unknown $\vect v^*, \tau^*$, outputs, with probability at least $0.995$, a set of $k$ indices $\{j_1,\dots, j_k\}$ such that:
    \[
        \Pr_{\x\sim \cube{d}}\Bigr[ \sign(\vect v^*\cdot \x + \tau^*) \neq \sign(\tilde{\vect v}^*\cdot \x + \tau^*) \Bigr] \le 4\eta k\,,
    \]
    where $\tilde{\vect v}^*$ is such that the $k$ largest (in absolute value) coefficients correspond to the indices $j_1,\dots,j_k$. The algorithm has time and sample complexity $\poly(d,1/\eta)$.
\end{thm}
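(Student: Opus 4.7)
\textbf{Proof plan for \Cref{theorem:influential-contaminated}.}

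The approach is to adapt the proof of \Cref{theorem:influential} by replacing the label-noise argument with a total variation distance argument. The algorithm is identical: draw $m = C\sqrt{\log d}/\eta^2$ (or, more safely, $m = C \log(d)/\eta^2$) i.i.d.\ samples $S$ from $D$, compute the empirical Chow-like statistics $\hatI_i = \E_{(\x,y)\sim S}[y x_i]$ for each $i\in[d]$, and return the set $H_k = \{\hati_1,\dots,\hati_k\}$ of indices with the top $k$ values of $|\hatI_i|$ (breaking ties arbitrarily).

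The key step is to show that, with probability at least $0.995$, for every $i \in [d]$:
\begin{equation}
    \bigl|\hatI_i - \sign(v^*_i)\cdot \Inf_i[f^*]\bigr| \le O(\eta). \label{eq:chow-contam-estimate}
\end{equation}
This decomposes into three contributions. First, since $|yx_i|\le 1$, Hoeffding's inequality together with a union bound over $i\in[d]$ gives $|\hatI_i - \E_D[yx_i]| \le O(\eta)$ for all $i$ simultaneously, with probability at least $0.999$, for the chosen sample size. Second, because $y x_i$ is bounded in $[-1,1]$ and $\mathrm{dist}_{\mathrm{TV}}(D,D_{\mathrm{clean}})\le \eta$, we have $|\E_D[yx_i] - \E_{D_{\mathrm{clean}}}[yx_i]| \le 2\eta$. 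Third, under $D_{\mathrm{clean}}$ the label equals $f^*(\x) = \sign(\vect v^*\cdot \x + \tau^*)$ on the uniform marginal, so by \Cref{claim: influences as chow coefficients} applied to $f^*$ we get $\E_{D_{\mathrm{clean}}}[yx_i] = \E_{\x\sim\cube{d}}[f^*(\x) x_i] = \sign(v^*_i)\cdot \Inf_i[f^*]$. Combining the three bounds yields \eqref{eq:chow-contam-estimate}.

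From \eqref{eq:chow-contam-estimate}, the definition of the greedy top-$k$ selection in Eq.~\eqref{eq: how i hat are defined} of Algorithm~\ref{algorithm:learn-halfspace} applied to $\hatI$ implies that, for every $t\in[k]$,
\[
    \Inf_{\hati_t}[f^*] \ge \max_{i\in[d]\setminus\{\hati_1,\dots,\hati_{t-1}\}} \Inf_i[f^*] - O(\eta).
\]
By adjusting the hidden constant in the sample size $m$, the slack can be made at most $4\eta$; alternatively, one may absorb a fixed constant factor in $k$ (since $k$ is chosen sufficiently large in the main algorithm). Now \Cref{lem: almost-maximal influences are ok} applied to the halfspace $f^*$ with slack parameter $4\eta$ produces a permutation $\pi$, hence a vector $\tilde{\vect v}^* = (\vect v^*)^\pi$ whose $k$ largest-in-magnitude coordinates are exactly $j_1,\dots,j_k := \hati_1,\dots,\hati_k$, satisfying
\[
    \Pr_{\x\sim\cube{d}}\bigl[\sign(\vect v^*\cdot\x+\tau^*)\neq \sign(\tilde{\vect v}^*\cdot\x+\tau^*)\bigr]\le 4\eta k,
\]
which is the claimed bound. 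The sample and time complexity are $\poly(d,1/\eta)$ as in the uniform case.

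The only conceptual step beyond the label-noise proof is the replacement of the bound $\bigl|\E_D[yx_i]-\E_{D_{\mathrm{clean}}}[yx_i]\bigr|\le 2\Pr[y\neq f^*(\x)]$ by a TV-based bound; this works precisely because the estimator $yx_i$ is uniformly bounded in magnitude by $1$. No anti-concentration or structure of the marginal $D_X$ is required for this reduction, which is why the argument goes through verbatim for arbitrary (oblivious) contamination of the clean joint distribution. The main obstacle is just bookkeeping the three sources of error to land at the clean $4\eta k$ constant.
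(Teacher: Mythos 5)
Your proof is correct and follows essentially the same approach as the paper's: both hinge on the observation that $|yx_i|\le 1$, so the total variation distance between $D$ and $D_{\mathrm{clean}}$ directly bounds the error in estimating $\E[yx_i]$, which by \Cref{claim: influences as chow coefficients} equals the signed influence; the conclusion then follows by feeding the resulting approximate-sorting condition into \Cref{lem: almost-maximal influences are ok}. Your write-up is just more explicit than the paper's one-line argument about the decomposition into sampling error, TV error, and the influence identity.
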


\begin{proof}
    The algorithm draws a set $S$ of sufficiently many samples from the input distribution and computes the quantities $\widehat{I}_i = \E_{(\x,y)\sim S}[y x_i]$ and let $j_1,j_2,\dots,j_k$ be the $k$ indices $i$ corresponding to the $k$ largest values of $|\widehat{I}_i|$. Since $|y x_i|= 1$ and the total variation distance between the input distribution and the clean distribution is $\eta$, we have that the choice $(\widehat{i}_1, \dots, \widehat{i}_k) = (j_1,\dots,j_k)$ satisfies, with high probability, the premise of \Cref{lem: almost-maximal influences are ok} for the choice $\etatwo = 4\eta$, which concludes the proof.
\end{proof}

We are now ready to prove \Cref{theorem:contaminated-oblivious}.

\begin{proof}[Proof of \Cref{theorem:contaminated-oblivious}]
        As in the proof of \Cref{thm: main theorem}, we may, without loss of generality assume that $\eta \le \eps^{49}/C$ for some sufficiently large constant $C\ge 1$.
        The algorithm that achieves the guarantee of \Cref{theorem:contaminated-oblivious} is very similar to Algorithm \ref{algorithm:learn-halfspace}, with the main differences being that (1) the choice of the parameter $k$ for the critical index lemma is $\tilde{O}(1/\eps^{c_1})$ for $c_1 = 49/3$, and (2) in order to learn the regular coefficients (part II of the structured case), the algorithm uses $\findregularcontaminated$ in place of $\findregularcoefficients$. The proof of correctness closely follows the proof of \Cref{thm: main theorem}. Since, in this case, the noise is not only on the labels, but also on the features, we use \Cref{theorem:influential-contaminated} to find the influential coordinates. For the sparse case, as well as learning the heavy coefficients in the structured case, recall that we used appropriate total variation distance arguments to show that the sample complexity is small enough with respect to the noise rate, so that we can safely assume that the noise is negligible, and the performance of the algorithms will be, with high probability, identical to the case where the labels are realized by the corresponding sparse or structured halfspace. In fact, these total variation distance arguments are even stronger and they hold identically in the setting of learning with contamination. Finally, to learn the regular coefficients, we use \Cref{thm: with known head vars and weights contaminated}, which tolerates contamination but incurs a worse error dependence compared to its analogue for the label noise case. In particular, we overall achieve an error guarantee of $O(\eta^{1/49})+ \eps$. 
\end{proof}

\begin{remark}
    In order to obtain an error guarantee that nearly matches the guarantee of $O(\opt^{1/25})+\eps$ that we achieved in the label noise case, we may apply \Cref{lemma:outlier-removal} for some large enough constant $r$ and obtain a tighter version of \Cref{thm: with known head vars and weights contaminated}. However, this would also lead to a worse runtime of $O(d)^{4r}$. Here, we chose $r = 1$ for simplicity of presentation.
\end{remark}

\newpage
\bibliographystyle{alpha}
\bibliography{refs}

@article{diakonikolas2010bounded,
	author = {Diakonikolas, Ilias and Gopalan, Parikshit and Jaiswal, Ragesh and Servedio, Rocco A and Viola, Emanuele},
	journal = {SIAM Journal on Computing},
	number = {8},
	pages = {3441--3462},
	publisher = {SIAM},
	title = {Bounded independence fools halfspaces},
	volume = {39},
	year = {2010}}

@article{kalai2008agnostically,
	author = {Kalai, Adam Tauman and Klivans, Adam R and Mansour, Yishay and Servedio, Rocco A},
	journal = {SIAM Journal on Computing},
	number = {6},
	pages = {1777--1805},
	publisher = {SIAM},
	title = {Agnostically learning halfspaces},
	volume = {37},
	year = {2008}}

@book{o2014analysis,
  title={Analysis of boolean functions},
  author={O'Donnell, Ryan},
  year={2014},
  publisher={Cambridge University Press}
}

@inproceedings{dachman2014approximate,
  title={Approximate resilience, monotonicity, and the complexity of agnostic learning},
  author={Dachman-Soled, Dana and Feldman, Vitaly and Tan, Li-Yang and Wan, Andrew and Wimmer, Karl},
  booktitle={Proceedings of the twenty-sixth annual ACM-SIAM symposium on Discrete algorithms},
  pages={498--511},
  year={2014},
  organization={SIAM}
}

@article{diakonikolas2020non,
  title={Non-convex SGD learns halfspaces with adversarial label noise},
  author={Diakonikolas, Ilias and Kontonis, Vasilis and Tzamos, Christos and Zarifis, Nikos},
  journal={Advances in Neural Information Processing Systems},
  volume={33},
  pages={18540--18549},
  year={2020}
}

@inproceedings{diakonikolas2020learning,
  title={Learning halfspaces with massart noise under structured distributions},
  author={Diakonikolas, Ilias and Kontonis, Vasilis and Tzamos, Christos and Zarifis, Nikos},
  booktitle={Conference on Learning Theory},
  pages={1486--1513},
  year={2020},
  organization={PMLR}
}

@article{awasthi2017power,
  title={The power of localization for efficiently learning linear separators with noise},
  author={Awasthi, Pranjal and Balcan, Maria Florina and Long, Philip M},
  journal={Journal of the ACM (JACM)},
  volume={63},
  number={6},
  pages={1--27},
  year={2017},
  publisher={ACM New York, NY, USA}
}

@inproceedings{daniely2015ptas,
  title={A PTAS for agnostically learning halfspaces},
  author={Daniely, Amit},
  booktitle={Conference on Learning Theory},
  pages={484--502},
  year={2015},
  organization={PMLR}
}

@article{klivans2009learning,
  title={Learning Halfspaces with Malicious Noise.},
  author={Klivans, Adam R and Long, Philip M and Servedio, Rocco A},
  journal={Journal of Machine Learning Research},
  volume={10},
  number={12},
  year={2009}
}

@inproceedings{diakonikolas2018learning,
  title={Learning geometric concepts with nasty noise},
  author={Diakonikolas, Ilias and Kane, Daniel M and Stewart, Alistair},
  booktitle={Proceedings of the 50th Annual ACM SIGACT Symposium on Theory of Computing},
  pages={1061--1073},
  year={2018}
}

@inproceedings{awasthi2015efficient,
  title={Efficient learning of linear separators under bounded noise},
  author={Awasthi, Pranjal and Balcan, Maria-Florina and Haghtalab, Nika and Urner, Ruth},
  booktitle={Conference on Learning Theory},
  pages={167--190},
  year={2015},
  organization={PMLR}
}

@inproceedings{awasthi2016learning,
  title={Learning and 1-bit compressed sensing under asymmetric noise},
  author={Awasthi, Pranjal and Balcan, Maria-Florina and Haghtalab, Nika and Zhang, Hongyang},
  booktitle={Conference on Learning Theory},
  pages={152--192},
  year={2016},
  organization={PMLR}
}

@InProceedings{diakonikolas22hardness,
  title = 	 { Hardness of Learning a Single Neuron with Adversarial Label Noise },
  author =       {Diakonikolas, Ilias and Kane, Daniel and Manurangsi, Pasin and Ren, Lisheng},
  booktitle = 	 {Proceedings of The 25th International Conference on Artificial Intelligence and Statistics},
  pages = 	 {8199--8213},
  year = 	 {2022},
  editor = 	 {Camps-Valls, Gustau and Ruiz, Francisco J. R. and Valera, Isabel},
  volume = 	 {151},
  series = 	 {Proceedings of Machine Learning Research},
  month = 	 {28--30 Mar},
  publisher =    {PMLR},
  pdf = 	 {https://proceedings.mlr.press/v151/diakonikolas22a/diakonikolas22a.pdf},
  url = 	 {https://proceedings.mlr.press/v151/diakonikolas22a.html},
  abstract = 	 { We study the problem of distribution-free learning of a single neuron under adversarial label noise with respect to the squared loss. For a wide range of activation functions, including ReLUs and sigmoids, we prove hardness of learning results in the Statistical Query model and under a well-studied assumption on the complexity of refuting XOR formulas. Specifically, we establish that no polynomial-time learning algorithm, even improper, can approximate the optimal loss value within any constant factor. }
}

@article{yan2017revisiting,
  title={Revisiting perceptron: Efficient and label-optimal learning of halfspaces},
  author={Yan, Songbai and Zhang, Chicheng},
  journal={Advances in Neural Information Processing Systems},
  volume={30},
  year={2017}
}

@inproceedings{gopalan2010fooling,
  title={Fooling functions of halfspaces under product distributions},
  author={Gopalan, Parikshit and O'Donnell, Ryan and Wu, Yi and Zuckerman, David},
  booktitle={2010 IEEE 25th Annual Conference on Computational Complexity},
  pages={223--234},
  year={2010},
  organization={IEEE}
}

@article{vempala2010random,
  title={A random-sampling-based algorithm for learning intersections of halfspaces},
  author={Vempala, Santosh S},
  journal={Journal of the ACM (JACM)},
  volume={57},
  number={6},
  pages={1--14},
  year={2010},
  publisher={ACM New York, NY, USA}
}

@inproceedings{o2008chow,
  title={The chow parameters problem},
  author={O'Donnell, Ryan and Servedio, Rocco A},
  booktitle={Proceedings of the fortieth annual ACM symposium on Theory of computing},
  pages={517--526},
  year={2008}
}

@inproceedings{goel2024tolerant,
 author = {Goel, Surbhi and Shetty, Abhishek and Stavropoulos, Konstantinos and Vasilyan, Arsen},
 booktitle = {Advances in Neural Information Processing Systems},
 editor = {A. Globerson and L. Mackey and D. Belgrave and A. Fan and U. Paquet and J. Tomczak and C. Zhang},
 pages = {124979--125018},
 publisher = {Curran Associates, Inc.},
 title = {Tolerant Algorithms for Learning with Arbitrary Covariate Shift},
 url = {https://proceedings.neurips.cc/paper_files/paper/2024/file/e209210eae282e23e305df49fbb2769c-Paper-Conference.pdf},
 volume = {37},
 year = {2024}
}

@article{klivans2024learningac0,
  title={Learning Constant-Depth Circuits in Malicious Noise Models},
  author={Klivans, Adam R and Stavropoulos, Konstantinos and Vasilyan, Arsen},
  journal={arXiv preprint arXiv:2411.03570},
  year={2024}
}

@article{BSHOUTY2002255,
title = {PAC learning with nasty noise},
journal = {Theoretical Computer Science},
volume = {288},
number = {2},
pages = {255-275},
year = {2002},
note = {Algorithmic Learning Theory},
issn = {0304-3975},
doi = {https://doi.org/10.1016/S0304-3975(01)00403-0},
url = {https://www.sciencedirect.com/science/article/pii/S0304397501004030},
author = {Nader H. Bshouty and Nadav Eiron and Eyal Kushilevitz},
keywords = {PAC learning, Learning with noise, Nasty noise}
}

@inproceedings{servedio2006,
author = {Servedio, Rocco A.},
title = {Every Linear Threshold Function has a Low-Weight Approximator},
year = {2006},
isbn = {0769525962},
publisher = {IEEE Computer Society},
address = {USA},
url = {https://doi.org/10.1109/CCC.2006.18},
doi = {10.1109/CCC.2006.18},
abstract = {Given any linear threshold function f on n Boolean variables, we construct a linear threshold function g which disagrees with f on at most an in fraction of inputs and has integer weights each of magnitude at most sqrt n· 2 x O(1/ in^2). We show that the construction is optimal in terms of its dependence on n by proving a lower bound of O(sqrt n) on the weights required to approximate a particular linear threshold function.},
booktitle = {Proceedings of the 21st  Annual IEEE Conference on Computational Complexity},
pages = {18--32},
numpages = {15},
series = {CCC '06}
}

@article{de2014nearly,
  title={Nearly optimal solutions for the chow parameters problem and low-weight approximation of halfspaces},
  author={De, Anindya and Diakonikolas, Ilias and Feldman, Vitaly and Servedio, Rocco A},
  journal={Journal of the ACM (JACM)},
  volume={61},
  number={2},
  pages={1--36},
  year={2014},
  publisher={ACM New York, NY, USA}
}

@misc{kanade2018note,
    author = {Kanade, Varun},
    title = {Lecture notes: Learning real-valued functions},
    year = {2018}
}

@inproceedings{blanc2025adaptive,
  title={Adaptive and oblivious statistical adversaries are equivalent},
  author={Blanc, Guy and Valiant, Gregory},
  booktitle={Proceedings of the 57th Annual ACM Symposium on Theory of Computing},
  pages={2031--2042},
  year={2025}
}

@article{nelder1972generalized,
  title={Generalized linear models},
  author={Nelder, John Ashworth and Wedderburn, Robert WM},
  journal={Journal of the Royal Statistical Society Series A: Statistics in Society},
  volume={135},
  number={3},
  pages={370--384},
  year={1972},
  publisher={Oxford University Press}
}

@book{dobson2018introduction,
  title={An introduction to generalized linear models},
  author={Dobson, Annette J and Barnett, Adrian G},
  year={2018},
  publisher={Chapman and Hall/CRC}
}

@article{rosenblatt1958perceptron,
  title={The perceptron: a probabilistic model for information storage and organization in the brain.},
  author={Rosenblatt, Frank},
  journal={Psychological review},
  volume={65},
  number={6},
  pages={386},
  year={1958},
  publisher={American Psychological Association}
}

@article{kakade2011efficient,
  title={Efficient learning of generalized linear and single index models with isotonic regression},
  author={Kakade, Sham M and Kanade, Varun and Shamir, Ohad and Kalai, Adam},
  journal={Advances in Neural Information Processing Systems},
  volume={24},
  year={2011}
}

@article{chen2020classification,
  title={Classification under misspecification: Halfspaces, generalized linear models, and evolvability},
  author={Chen, Sitan and Koehler, Frederic and Moitra, Ankur and Yau, Morris},
  journal={Advances in Neural Information Processing Systems},
  volume={33},
  pages={8391--8403},
  year={2020}
}

@inproceedings{kalai2009isotron,
  title={The Isotron Algorithm: High-Dimensional Isotonic Regression.},
  author={Kalai, Adam Tauman and Sastry, Ravi},
  booktitle={COLT},
  volume={1},
  pages={9},
  year={2009}
}

@article{auer1995exponentially,
  title={Exponentially many local minima for single neurons},
  author={Auer, Peter and Herbster, Mark and Warmuth, Manfred KK},
  journal={Advances in neural information processing systems},
  volume={8},
  year={1995}
}

@inproceedings{klivans2017learning,
  title={Learning graphical models using multiplicative weights},
  author={Klivans, Adam and Meka, Raghu},
  booktitle={2017 IEEE 58th Annual Symposium on Foundations of Computer Science (FOCS)},
  pages={343--354},
  year={2017},
  organization={IEEE}
}

@article{ShalevShwartz2010LearningKH,
  title={Learning Kernel-Based Halfspaces with the Zero-One Loss},
  author={Shai Shalev-Shwartz and Ohad Shamir and Karthik Sridharan},
  journal={ArXiv},
  year={2010},
  volume={abs/1005.3681},
  url={https://api.semanticscholar.org/CorpusID:267830134}
}

@inproceedings{diakonikolas2022learning,
  title={Learning a single neuron with adversarial label noise via gradient descent},
  author={Diakonikolas, Ilias and Kontonis, Vasilis and Tzamos, Christos and Zarifis, Nikos},
  booktitle={Conference on learning theory},
  pages={4313--4361},
  year={2022},
  organization={PMLR}
}

@article{gollakota2023agnostically,
  title={Agnostically learning single-index models using omnipredictors},
  author={Gollakota, Aravind and Gopalan, Parikshit and Klivans, Adam and Stavropoulos, Konstantinos},
  journal={Advances in Neural Information Processing Systems},
  volume={36},
  pages={14685--14704},
  year={2023}
}

@inproceedings{diakonikolas2020approximation,
  title={Approximation schemes for relu regression},
  author={Diakonikolas, Ilias and Goel, Surbhi and Karmalkar, Sushrut and Klivans, Adam R and Soltanolkotabi, Mahdi},
  booktitle={Conference on learning theory},
  pages={1452--1485},
  year={2020},
  organization={PMLR}
}

@InProceedings{pmlr-v291-rohatgi25a,
  title = 	 {Computational-Statistical Tradeoffs at the Next-Token Prediction Barrier: Autoregressive and Imitation Learning under Misspecification (extended abstract)},
  author =       {Rohatgi, Dhruv and Block, Adam and Huang, Audrey and Krishnamurthy, Akshay and Foster, Dylan J.},
  booktitle = 	 {Proceedings of Thirty Eighth Conference on Learning Theory},
  pages = 	 {4831--4837},
  year = 	 {2025},
  editor = 	 {Haghtalab, Nika and Moitra, Ankur},
  volume = 	 {291},
  series = 	 {Proceedings of Machine Learning Research},
  month = 	 {30 Jun--04 Jul},
  publisher =    {PMLR},
  pdf = 	 {https://raw.githubusercontent.com/mlresearch/v291/main/assets/rohatgi25a/rohatgi25a.pdf},
  url = 	 {https://proceedings.mlr.press/v291/rohatgi25a.html},
}

@inproceedings{DiakonikolasKKT21,
  author       = {Ilias Diakonikolas and
                  Daniel M. Kane and
                  Vasilis Kontonis and
                  Christos Tzamos and
                  Nikos Zarifis},
  editor       = {Mikhail Belkin and
                  Samory Kpotufe},
  title        = {Agnostic Proper Learning of Halfspaces under Gaussian Marginals},
  booktitle    = {Conference on Learning Theory, {COLT} 2021, 15-19 August 2021, Boulder,
                  Colorado, {USA}},
  series       = {Proceedings of Machine Learning Research},
  volume       = {134},
  pages        = {1522--1551},
  publisher    = {{PMLR}},
  year         = {2021},
  url          = {http://proceedings.mlr.press/v134/diakonikolas21b.html},
  timestamp    = {Wed, 25 Aug 2021 17:11:16 +0200},
  biburl       = {https://dblp.org/rec/conf/colt/DiakonikolasKKT21.bib},
  bibsource    = {dblp computer science bibliography, https://dblp.org}
}

@inproceedings{DiakonikolasZ24,
  author       = {Ilias Diakonikolas and
                  Nikos Zarifis},
  editor       = {Amir Globersons and
                  Lester Mackey and
                  Danielle Belgrave and
                  Angela Fan and
                  Ulrich Paquet and
                  Jakub M. Tomczak and
                  Cheng Zhang},
  title        = {A Near-optimal Algorithm for Learning Margin Halfspaces with Massart
                  Noise},
  booktitle    = {Advances in Neural Information Processing Systems 38: Annual Conference
                  on Neural Information Processing Systems, NeurIPS},
  year         = {2024},
}

@inproceedings{DiakonikolasKS18a,
  author       = {Ilias Diakonikolas and
                  Daniel M. Kane and
                  Alistair Stewart},
  editor       = {Ilias Diakonikolas and
                  David Kempe and
                  Monika Henzinger},
  title        = {Learning geometric concepts with nasty noise},
  booktitle    = {Proceedings of the 50th Annual {ACM} {SIGACT} Symposium on Theory
                  of Computing, {STOC} 2018},
  pages        = {1061--1073},
  publisher    = {{ACM}},
  year         = {2018},
  doi          = {10.1145/3188745.3188754},
}

@InProceedings{pmlr-v235-zarifis24a,
  title = 	 {Robustly Learning Single-Index Models via Alignment Sharpness},
  author =       {Zarifis, Nikos and Wang, Puqian and Diakonikolas, Ilias and Diakonikolas, Jelena},
  booktitle = 	 {Proceedings of the 41st International Conference on Machine Learning},
  pages = 	 {58197--58243},
  year = 	 {2024},
  editor = 	 {Salakhutdinov, Ruslan and Kolter, Zico and Heller, Katherine and Weller, Adrian and Oliver, Nuria and Scarlett, Jonathan and Berkenkamp, Felix},
  volume = 	 {235},
  series = 	 {Proceedings of Machine Learning Research},
  month = 	 {21--27 Jul},
  publisher =    {PMLR},
  pdf = 	 {https://raw.githubusercontent.com/mlresearch/v235/main/assets/zarifis24a/zarifis24a.pdf},
  url = 	 {https://proceedings.mlr.press/v235/zarifis24a.html},
}

@article{Wang2025RobustlyLM,
  title={Robustly Learning Monotone Single-Index Models},
  author={Puqian Wang and Nikos Zarifis and Ilias Diakonikolas and Jelena Diakonikolas},
  journal={ArXiv},
  year={2025},
  volume={abs/2508.04670},
  url={https://api.semanticscholar.org/CorpusID:280536509}
}

@inproceedings{Guo2024AgnosticLO,
  title={Agnostic Learning of Arbitrary ReLU Activation under Gaussian Marginals},
  author={Anxin Guo and Aravindan Vijayaraghavan},
  booktitle={Annual Conference Computational Learning Theory},
  year={2024},
  url={https://api.semanticscholar.org/CorpusID:274165771}
}

@article{wang2024sample,
  title={Sample and computationally efficient robust learning of gaussian single-index models},
  author={Wang, Puqian and Zarifis, Nikos and Diakonikolas, Ilias and Diakonikolas, Jelena},
  journal={Advances in Neural Information Processing Systems},
  volume={37},
  pages={58376--58422},
  year={2024}
}

@inproceedings{
awasthi2023agnostic,
title={Agnostic Learning of General Re{LU} Activation Using Gradient Descent},
author={Pranjal Awasthi and Alex Tang and Aravindan Vijayaraghavan},
booktitle={The Eleventh International Conference on Learning Representations },
year={2023},
url={https://openreview.net/forum?id=EnrY5TOrbQ}
}

@InProceedings{pmlr-v75-dudeja18a,
  title = 	 {Learning Single-Index Models in Gaussian Space},
  author =       {Dudeja, Rishabh and Hsu, Daniel},
  booktitle = 	 {Proceedings of the 31st  Conference On Learning Theory},
  pages = 	 {1887--1930},
  year = 	 {2018},
  editor = 	 {Bubeck, S\'ebastien and Perchet, Vianney and Rigollet, Philippe},
  volume = 	 {75},
  series = 	 {Proceedings of Machine Learning Research},
  month = 	 {06--09 Jul},
  publisher =    {PMLR},
  pdf = 	 {http://proceedings.mlr.press/v75/dudeja18a/dudeja18a.pdf},
  url = 	 {https://proceedings.mlr.press/v75/dudeja18a.html}
}

@inproceedings{wang2023robustly,
  title={Robustly learning a single neuron via sharpness},
  author={Wang, Puqian and Zarifis, Nikos and Diakonikolas, Ilias and Diakonikolas, Jelena},
  booktitle={International conference on machine learning},
  pages={36541--36577},
  year={2023},
  organization={PMLR}
}

@inproceedings{diakonikolas2024super,
  title={Super non-singular decompositions of polynomials and their application to robustly learning low-degree ptfs},
  author={Diakonikolas, Ilias and Kane, Daniel M and Kontonis, Vasilis and Liu, Sihan and Zarifis, Nikos},
  booktitle={Proceedings of the 56th Annual ACM Symposium on Theory of Computing},
  pages={152--159},
  year={2024}
}

@inproceedings{zhang2017hitting,
  title={A hitting time analysis of stochastic gradient langevin dynamics},
  author={Zhang, Yuchen and Liang, Percy and Charikar, Moses},
  booktitle={Conference on Learning Theory},
  pages={1980--2022},
  year={2017},
  organization={PMLR}
}

@inproceedings{mangoubi2019nonconvex,
  title={Nonconvex sampling with the Metropolis-adjusted Langevin algorithm},
  author={Mangoubi, Oren and Vishnoi, Nisheeth K},
  booktitle={Conference on learning theory},
  pages={2259--2293},
  year={2019},
  organization={PMLR}
}

@inproceedings{diakonikolas2019degree,
  title={Degree-d chow parameters robustly determine degree-d PTFs (and algorithmic applications)},
  author={Diakonikolas, Ilias and Kane, Daniel M},
  booktitle={Proceedings of the 51st Annual ACM SIGACT Symposium on Theory of Computing},
  pages={804--815},
  year={2019}
}

\appendix
\section{Auxiliary Tools}

\begin{lem}[Berry--Esseen]\label{lemma:berry-esseen}
    Let $X_1, \dots, X_d$ be independent random bariables such that $\E[X_i] = 0$, $\sum_{i\in[d]}X_i^2 \le \sigma^2$, and $\sum_{i\in[d]} |X_i|^3 \le \beta$. Let $S = \frac{1}{\sigma}\sum_{i\in[d]}X_i$ and denote with $D_S$ the distribution of $S$. Then we have the following:
    \[
        \Bigr|\Pr_{S\sim D_S}[S\ge \tau] - \Pr_{z\in \mathcal{N}(0,1)}[z \ge \tau] \Bigr| \le \frac{\beta}{\sigma^3},\, \text{ for all }\tau\in\R
    \]
\end{lem}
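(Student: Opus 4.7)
The statement is a reformulation of the classical Berry--Esseen central limit theorem for sums of independent (non-identically distributed) random variables, so the plan is to reduce it to the textbook version rather than prove it from scratch. I interpret the hypotheses $\sum_i X_i^2 \le \sigma^2$ and $\sum_i |X_i|^3 \le \beta$ as bounds on the corresponding expectations (i.e.\ $\sum_i \E[X_i^2] \le \sigma^2$ and $\sum_i \E[|X_i|^3] \le \beta$); otherwise the claimed normalization of $S$ would not match the standard Gaussian variance.

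The classical non-identically distributed Berry--Esseen theorem states that for independent mean-zero $X_1,\dots,X_d$ with $\sigma^2 = \sum_i \E[X_i^2]$ and $\rho = \sum_i \E[|X_i|^3]$, one has
\[
    \sup_{\tau \in \R} \Bigr| \Pr\Bigr[\frac{1}{\sigma} \textstyle\sum_i X_i \le \tau\Bigr] - \Phi(\tau) \Bigr| \le C \cdot \frac{\rho}{\sigma^3},
\]
where $\Phi$ is the standard Gaussian CDF and $C$ is an absolute constant (the sharpest known value is roughly $0.47$). The plan is therefore to invoke this directly, substitute $\rho \le \beta$, and convert the CDF inequality to the stated tail inequality via $\Pr[S \ge \tau] = 1 - \Pr[S < \tau]$ together with continuity of $\Phi$.

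If a self-contained argument is desired, the natural route is Esseen's smoothing approach: compare the characteristic functions $\phi_S(t) = \prod_i \E[e^{it X_i / \sigma}]$ and $e^{-t^2/2}$ by expanding each factor $\E[e^{it X_i/\sigma}] = 1 - t^2 \E[X_i^2]/(2\sigma^2) + O(|t|^3 \E[|X_i|^3]/\sigma^3)$ to third order, showing $|\phi_S(t) - e^{-t^2/2}| \le O(|t|^3 \rho/\sigma^3) e^{-t^2/4}$ on the window $|t| \lesssim \sigma^3/\rho$, and then applying Esseen's smoothing lemma $\sup_\tau |F(\tau) - \Phi(\tau)| \lesssim \int_{-T}^T |\phi(t) - e^{-t^2/2}|/|t|\, dt + 1/T$ with $T \asymp \sigma^3/\rho$. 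The main technical obstacle in that route is the bookkeeping of the Taylor remainders and pairing them with the Gaussian damping factor; in the non-i.i.d.\ setting one also has to be slightly careful that the product expansion $\prod_i (1 + z_i) \approx \exp(\sum_i z_i)$ is applied only when each $|z_i|$ is small, which forces the restriction on $|t|$ above.

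Given that the paper uses Berry--Esseen elsewhere (see Fact~\ref{fact: Berry-Esseen for boolean}) as a black box and the lemma sits in an ``Auxiliary Tools'' appendix, I expect the intended proof is just a citation to a standard probability reference rather than a reproduction of either of the two arguments sketched above.
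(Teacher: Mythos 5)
Your proposal is correct in its reading: the paper does not supply a proof of \Cref{lemma:berry-esseen} and treats it as a textbook fact, exactly as you anticipate, so a citation to the classical non-identically-distributed Berry--Esseen theorem (e.g.\ Esseen 1945 / Shevtsova 2010) is the intended ``proof.''  Two small clarifications worth recording: (i) the implicit constant-$1$ bound in the lemma is only valid because the best-known absolute constant for the non-i.i.d.\ Lyapunov version is about $0.56$ (Shevtsova); the $\approx 0.47$ figure you quote is the i.i.d.\ constant, but both are below $1$ so the statement stands; and (ii) your remark about reading the moment conditions as expectation bounds is right in spirit but slightly underspecifies the issue --- even the expectation form $\sum_i \E[X_i^2] \le \sigma^2$ with strict inequality would leave $S$ with variance below $1$, breaking the comparison with $\N(0,1)$; the lemma is sound because in the paper's only application (\Cref{fact: Berry-Esseen for boolean}) the $X_i$ are rescaled Rademachers, so $X_i^2$ and $|X_i|^3$ are deterministic and $\sum_i X_i^2 = \sigma^2$ holds with equality.
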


\begin{corollary}\label{fact: Berry-Esseen for boolean}
    Let $\vect v\in \R^d$ such that $\|\vect v\|_4^2 \le \gamma \|\vect v\|_2^2$. Then, for every $\tau\in\R$ we have:
    \[
        \Bigr| \Pr_{\vect x \sim \{\pm 1\}^d}\Bigr[\frac{\vect x \cdot \vect v}{\|\vect v\|_2} \ge \tau\Bigr] - \Pr_{z\sim\N(0,1)}\left[z\geq \tau\right] \Bigr| \le {\gamma}
    \]
\end{corollary}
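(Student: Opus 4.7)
The plan is to deduce this corollary directly from \Cref{lemma:berry-esseen} applied to the independent summands $X_i := v_i x_i$ for $\vect x \sim \{\pm 1\}^d$, followed by a single Cauchy--Schwarz step to convert the third-moment bound into a fourth-moment bound.

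First I would verify the three hypotheses of Berry--Esseen for this choice. Since $x_i \in \{\pm 1\}$ is symmetric we have $\E[X_i]=0$; the second moment is $\E[X_i^2] = v_i^2$, giving $\sum_i \E[X_i^2] = \|\vect v\|_2^2$, so I may take $\sigma = \|\vect v\|_2$; and the third absolute moment is $\E[|X_i|^3] = |v_i|^3$, so $\beta = \|\vect v\|_3^3$. Under these identifications, $S = (\vect x \cdot \vect v)/\|\vect v\|_2$ is exactly the normalized sum appearing in the statement, and \Cref{lemma:berry-esseen} yields
\[
\Bigl| \Pr\bigl[(\vect x \cdot \vect v)/\|\vect v\|_2 \ge \tau\bigr] - \Pr_{z\sim\N(0,1)}[z\ge\tau] \Bigr| \le \frac{\|\vect v\|_3^3}{\|\vect v\|_2^3}.
\]

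The remaining step is to show $\|\vect v\|_3^3 / \|\vect v\|_2^3 \le \gamma$ under the hypothesis $\|\vect v\|_4^2 \le \gamma \|\vect v\|_2^2$. Writing $|v_i|^3 = |v_i| \cdot v_i^2$ and applying Cauchy--Schwarz to the sum,
\[
\|\vect v\|_3^3 = \sum_i |v_i| \cdot v_i^2 \le \Bigl(\sum_i v_i^2\Bigr)^{1/2} \Bigl(\sum_i v_i^4\Bigr)^{1/2} = \|\vect v\|_2 \cdot \|\vect v\|_4^2.
\]
Dividing by $\|\vect v\|_2^3$ gives $\|\vect v\|_3^3/\|\vect v\|_2^3 \le \|\vect v\|_4^2/\|\vect v\|_2^2 \le \gamma$, which combined with the Berry--Esseen bound finishes the proof.

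There is no serious obstacle here; the only thing to be careful about is that the Berry--Esseen statement uses uniform-over-$\tau$ bounds, so the resulting bound applies for every $\tau\in\R$ as claimed. The degenerate case $\vect v = 0$ is vacuous (both probabilities are trivially defined the same way, and the hypothesis becomes $0\le 0$), so we may assume $\|\vect v\|_2 > 0$ throughout.
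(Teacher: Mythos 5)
Your proposal is correct and follows essentially the same route as the paper: set $X_i = v_i x_i$, apply \Cref{lemma:berry-esseen} with $\sigma = \|\vect v\|_2$ and $\beta = \|\vect v\|_3^3$, and then bound $\|\vect v\|_3^3 \le \|\vect v\|_2 \|\vect v\|_4^2$ via Cauchy--Schwarz to conclude. Your handling of the degenerate case and the uniform-in-$\tau$ nature of the bound is also consistent with the paper's treatment.
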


\begin{proof}[Proof of Corollary \ref{fact: Berry-Esseen for boolean}]
    Let $X_i = v_i x_i$. We have that $\sum_i X_i^2 = \|\vect v\|_2^2 =: \sigma^2$. Moreover, by applying the Cauchy-Schwarz inequality, we obtain the following:
    \begin{align*}
        \sum_{i} |X_i|^3 \le \sqrt{\sum_{i}X_i^2} \cdot \sqrt{ \sum_{i} X_i^4} = \|\vect v\|_2 \cdot \|\vect v\|_4^2 \le {\gamma} \|\vect v\|_2^3 = {\gamma} \sigma^3 =: \beta
    \end{align*}
    We may now apply Lemma \ref{lemma:berry-esseen} to obtain the desired result for $S = \frac{\vect x\cdot \vect v}{\|\vect v\|_2}$.
\end{proof}

\section{Theorems on GLM learning}
This section contains all the theorems and proofs relevant to GLM learning that were omitted in \Cref{section:heavy}.
\subsection{Realizable GLM with truncated activation}
We show a weaker form of \Cref{thm: misspecified truncated GLM} which has no misspecification. This will be a helpful warmup to read before \Cref{thm: misspecified truncated GLM} and contains many of the ideas that lead to the improved dependence on $w_{\text{max}}$. Another feature is that this weaker theorem is sufficient for proving our result for learning sigmoidal GLMs (\Cref{theorem:sigmoid-main}). We are now ready to state the theorem on learning realizable truncated GLMs.
\begin{thm}
\label{thm:realize_truncated_glm}
Let $\zalphazero\in\R_{\geq1}$ and suppose $\sigma$ is a an activation
function $\R\rightarrow[-1,1]$ that satisfies the following three
conditions:
\begin{itemize}
\item $\sigma$ is monotone non-decreasing.
\item $\sigma$ is 1-Lipschitz.
\item $\sigma(z)=\sign(z)$ whenever $\abs z\geq\zalphazero$.
\end{itemize}
Then, there exists an algorithm $\AGLM^{\sigma}\left(\eps,w_{\max},\zalphazero,D\right)$ 
that
\begin{itemize}
\item Draws $O\left(\frac{\Delta\zalphazero^{2}}{\eps^{2}}\ln\frac{\Delta\zalphazero}{\eps}\right)$
samples $\left\{ x_{i}\right\} $ from a distribution $D$ over $\left\{ \vect x\in\R^{\Delta}:\norm{\vect x}_2\leq\sqrt{\Delta}\right\} \times\left\{ \pm1\right\} $
and outputs a vector $\vect v\in\R^{\Delta}$.
\item Runs in time $\poly\left(\frac{\zalphazero\Delta}{\eps}\log w_{\max}\right)$.
\item When $D$ is a GLM with an arbitrary $\R^{\Delta}$-marginal which is supported
on $\{ \vect x\in\R^{\Delta}:\norm{\vect x}_2\leq\sqrt{\Delta}\} $,
and an activation $\sigma(\vect{v^{*}}\cdot\vect x),$ where $\norm{\vect v^{*}}_2\leq w_{\max}$,
then with probability at least $0.99$ the algorithm $\AGLM^{\sigma}\left(\eps,w_{\max},\zalphazero,\dpairs\right)$
will output a vector $\vect v_{0}$ satisfying 
\begin{equation}
\E_{\vect x\sim D_{X}}\left[\left(\sigma(\vect x\cdot\vect{v^{*}})-\sigma(\vect x\cdot\vect v_{0})\right)^{2}\right]\leq\eps\label{eq: goal of thm for realizable truncated sigma}
\end{equation}
where $D_{X}$ is the $\R^{\Delta}$-marginal of $D$. (Note that no assumption
is made on $D_{X}$.)
\end{itemize}
\end{thm}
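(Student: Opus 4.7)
The plan is to minimize an empirical surrogate (matching) loss of the form
\[
\ell^{\sigma}(\vect v;\vect x,y)=\int_{0}^{\vect v\cdot\vect x}(\sigma(z)-y)\,dz,
\]
which is convex in $\vect v$ because $\sigma$ is monotone. The algorithm $\AGLM^{\sigma}$ will draw $m=O(\Delta\zalphazero^{2}/\eps^{2}\cdot\log(\Delta\zalphazero/\eps))$ samples, then use the ellipsoid method to solve $\vect v_{0}\leftarrow\argmin_{\|\vect v\|_{2}\le w_{\max}}\frac{1}{|S|}\sum_{(\vect x,y)\in S}\ell^{\sigma}(\vect v;\vect x,y)$. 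Ellipsoid gives the desired $\poly(\log w_{\max})$ runtime dependence, whereas iterative first-order methods (as in the classical GLMtron analysis) would give a $\poly(w_{\max})$ dependence.

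To analyze this and in particular to obtain a sample complexity that is independent of $w_{\max}$, I would introduce an auxiliary truncated version of the loss,
\[
\ell^{\sigma}_{\mathrm{trunc}}(\vect v;\vect x,y)=\int_{0}^{\mathrm{trunc}(\vect v\cdot\vect x)}(\sigma(z)-y)\,dz,
\]
where $\mathrm{trunc}(z)=\min(\zalphazero,\max(-\zalphazero,z))$. Three facts drive the analysis. First, $|\ell^{\sigma}_{\mathrm{trunc}}(\vect v;\vect x,y)|\le 2\zalphazero$ uniformly in $\vect v$, so a standard uniform-convergence argument (VC/Rademacher on the linear class composed with the bounded, monotone loss) gives $\sup_{\vect v}|\E_{D}[\ell^{\sigma}_{\mathrm{trunc}}]-\E_{S}[\ell^{\sigma}_{\mathrm{trunc}}]|\le\eps/4$ with probability $0.999$ at the stated sample complexity. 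Second, because $\sigma(z)=\sign(z)$ outside $[-\zalphazero,\zalphazero]$ and $y\in\{\pm1\}$, the integrand $\sigma(z)-y$ has the same sign as $\vect v^{*}\cdot\vect x$ outside the truncation interval, so $\ell^{\sigma}(\vect v^{*};\vect x,y)=\ell^{\sigma}_{\mathrm{trunc}}(\vect v^{*};\vect x,y)\in[-2\zalphazero,2\zalphazero]$; this lets Hoeffding control the empirical $\ell^{\sigma}$-loss of $\vect v^{*}$. Third, the same sign-matching shows $\ell^{\sigma}_{\mathrm{trunc}}(\vect v;\vect x,y)\le\ell^{\sigma}(\vect v;\vect x,y)$ for every $\vect v$.

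Combining the three facts with the empirical optimality of $\vect v_{0}$ gives
\[
\E_{D}[\ell^{\sigma}_{\mathrm{trunc}}(\vect v_{0})]-\E_{D}[\ell^{\sigma}_{\mathrm{trunc}}(\vect v^{*})]\le\E_{S}[\ell^{\sigma}(\vect v_{0})]-\E_{S}[\ell^{\sigma}(\vect v^{*})]+\eps/2\le\eps/2+O(\zalphazero/\sqrt{m}).
\]
To convert this surrogate-loss gap into an $L_{2}$ guarantee on $\sigma(\vect v_{0}\cdot\vect x)-\sigma(\vect v^{*}\cdot\vect x)$, I would use the pointwise Bregman-type identity, valid for $1$-Lipschitz monotone $\sigma$:
\[
\E_{y\sim D_{y\vert\vect x}}[\ell^{\sigma}_{\mathrm{trunc}}(\vect v;\vect x,y)-\ell^{\sigma}_{\mathrm{trunc}}(\vect v^{*};\vect x,y)]=\int_{\mathrm{trunc}(\vect v^{*}\cdot\vect x)}^{\mathrm{trunc}(\vect v\cdot\vect x)}(\sigma(z)-\sigma(\mathrm{trunc}(\vect v^{*}\cdot\vect x)))\,dz\ge\tfrac{1}{2}(\sigma(\vect v\cdot\vect x)-\sigma(\vect v^{*}\cdot\vect x))^{2},
\]
where the last step uses monotonicity plus the $1$-Lipschitz bound (together with the fact that $\sigma\circ\mathrm{trunc}=\sigma$ since $\sigma$ saturates outside $[-\zalphazero,\zalphazero]$). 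Taking expectations over $\vect x$ yields the target bound \eqref{eq: goal of thm for realizable truncated sigma}.

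The step I expect to be the main obstacle is the uniform convergence bound for $\ell^{\sigma}_{\mathrm{trunc}}$ over the ball $\|\vect v\|_{2}\le w_{\max}$, since a naive covering-number argument would introduce a $\log w_{\max}$ factor that we are willing to absorb but which must be carefully threaded through. The right way is to observe that $\ell^{\sigma}_{\mathrm{trunc}}(\vect v;\vect x,y)$ factors as a $2\zalphazero$-bounded, $1$-Lipschitz function of the scalar $\mathrm{trunc}(\vect v\cdot\vect x)$, so the relevant function class is the composition of linear functionals with a bounded contraction; Rademacher complexity of this class scales as $\zalphazero\sqrt{\Delta/m}$, independent of $w_{\max}$, which yields the stated sample complexity. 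Everything else is routine algebra and a Hoeffding bound on $\vect v^{*}$.
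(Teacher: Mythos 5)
Your proposal is correct and follows essentially the same route as the paper: same surrogate (matching) loss, same use of the Ellipsoid method for the $\log w_{\max}$ runtime, same auxiliary truncated loss $\ell^\sigma_{\mathrm{trunc}}$, the same three structural facts about it (bounded, equal to $\ell^\sigma$ at $\vect v^*$, and lower-bounding $\ell^\sigma$ everywhere), and the same Bregman-style pointwise inequality to convert the surrogate gap into an $L_2$ bound on $\sigma(\vect v_0\cdot\vect x)-\sigma(\vect v^*\cdot\vect x)$.

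One place where you should be a bit more careful is the step you flag as the main obstacle. The loss $u\mapsto\int_0^u(\sigma(z)-y)\,dz$ is $2$-Lipschitz, not $1$-Lipschitz, but that is only a constant. The more substantive point is that the $w_{\max}$-independence of the uniform convergence bound must come from the bounded pseudo-dimension (equivalently, VC dimension of the level sets) of the \emph{already-truncated} linear class $\{\mathrm{trunc}(\vect v\cdot\vect x)\}$, not from applying Talagrand contraction all the way down to the raw linear class $\{\vect v\cdot\vect x:\|\vect v\|_2\le w_{\max}\}$ — the latter would reintroduce a factor of $w_{\max}$. The paper sidesteps this by rewriting $\ell^\sigma_{\mathrm{trunc}}(\vect v;\vect x,y)=\int_{0}^{\zalphazero}(\sigma(z)-y)\indicator_{z\leq\vect v\cdot\vect x}\,dz-\int_{-\zalphazero}^{0}(\sigma(z)-y)\indicator_{z>\vect v\cdot\vect x}\,dz$ and then applying standard VC bounds to the halfspace indicators $\indicator_{z\le\vect v\cdot\vect x}$ in $\Delta+1$ parameters; your Rademacher / pseudo-dimension argument is a valid equivalent, as long as you invoke the pseudo-dimension of the truncated class directly rather than contracting through the unbounded linear functionals.
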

\begin{proof}
% The rest of the subsection gives a proof for this theorem.
We will follow closely the Varun Kanade's notes \cite{kanade2018note}. Define the surrogate loss $\ell^{\sigma}$ as
follows:

\[
\ell^{\sigma}(\vect v;\vect x,y)=\int_{0}^{\vect v\cdot\vect x}(\sigma(z)-y)\d z.
\]
The surrogate loss is convex because $\sigma$ is monotone. The algorithm
$\AGLM^{\sigma}\left(\eps,w_{\max},\dpairs\right)$
does the following:
\begin{itemize}
\item Draw a collection $S$ of $C\frac{\Delta\zalphazero^{2}}{\eps^{2}}\ln\frac{\Delta\zalphazero}{\eps}+C$
samples from $D$, where $C$ is a sufficiently large absolute constant.
\item Using the Ellipsoid algorithm, find
\[
\vect v_{0}\leftarrow\argmin_{\vect v:\ \norm{\vect v}\leq w_{\max}}\E_{(\vect x,y)\in S}\left[\ell^{\sigma}(\vect v;\vect x,y)\right]
\]
\item Output $\vect v_{0}$.
\end{itemize}
The run-time is dominated by the second step, which runs in time $\poly\left(\frac{\zalphazero\Delta}{\eps}\log w_{\max}\right)$.

To analyze the algorithm, we define the following auxilliary functions:
\[
\text{trunc}(z):=\min(\zalphazero,\max(-\zalphazero,z))
\]
\[
\ell_{\text{trunc}}^{\sigma}(\vect v;\vect x,y)=\int_{0}^{\text{trunc}(\vect v\cdot\vect x)}(\sigma(z)-y)\d z.
\]
Since $\sigma(z)=\sign(z)$ whenever $\abs z\geq\zalphazero$, for
every $z\in\R$ we have
\begin{equation}
\sigma(z)=\sigma(\text{trunc}(z)).\label{eq: sigma compatible with truncation}
\end{equation}
We now make the following observation:
\begin{claim}
\label{claim: truncation does nothing for ground truth surrogate loss}For
every $\left(\vect x,y\right)$ in the support of $D$ we have
\[
\ell^{\sigma}(\vect{v^{*}};\vect x,y)=\ell_{\text{trunc}}^{\sigma}(\vect{v^{*}};\vect x,y)\in[-2\zalphazero,2\zalphazero]
\]
\end{claim}

\begin{proof}
Since $\sigma(z)=\sign(x)$ whenever $\abs x\geq\zalphazero$, for
every $\left(\vect x,y\right)$ in the support of $D$, if $\abs{\vect x\cdot\vect{v^{*}}}>\zalphazero$
we have $y=\sign(\zalphazero)$. Therefore,
\begin{itemize}
\item If $\vect x\cdot\vect{v^{*}}>\zalphazero$, it has to be that $y=\sign(\vect x\cdot\vect{v^{*}})=1$
and therefore 
\[
\ell^{\sigma}(\vect{v^{*}};\vect x,y)=\int_{0}^{\vect{v^{*}}\cdot\vect x}(\sigma(z)-1)\d z=\int_{0}^{\zalphazero}(\sigma(z)-1)\d z=\int_{0}^{\text{trunc}(\vect{v^{*}}\cdot\vect x)}(\sigma(z)-1)\d z=\ell_{\text{trunc}}^{\sigma}(\vect{v^{*}};\vect x,y).
\]
\\
We also see that $\int_{0}^{\zalphazero}(\sigma(z)-1)\d z$ is in $[-2\zalphazero,0]$
since $\sigma(z)\in[-1,1]$ for all $z$.
\item Analogously, if $\vect x\cdot\vect{v^{*}}<-\zalphazero$, then analogously
$y=-1$ and therefore
\[
\ell^{\sigma}(\vect{v^{*}};\vect x,y)=\int_{0}^{\vect{v^{*}}\cdot\vect x}(\sigma(z)+1)\d z=\int_{0}^{-\zalphazero}(\sigma(z)+1)\d z=\int_{0}^{\text{trunc}(\vect{v^{*}}\cdot\vect x)}(\sigma(z)+1)\d z=\ell_{\text{trunc}}^{\sigma}(\vect{v^{*}};\vect x,y).
\]
We also see that $\int_{0}^{-\zalphazero}(\sigma(z)+1)\d z=-\int_{-\zalphazero}^{0}(\sigma(z)+1)\d z$
is in $[-2\zalphazero,0]$ since $\sigma(z)\in[-1,1]$ for all $z$.
\item In the remaining case $\vect x\cdot\vect{v^{*}}\in[-\zalphazero,\zalphazero]$
we have 
\[
\ell^{\sigma}(\vect{v^{*}};\vect x,y)=\int_{0}^{\vect{v^{*}}\cdot\vect x}(\sigma(z)-y)\d z=\int_{0}^{\text{trunc}(\vect{v^{*}}\cdot\vect x)}(\sigma(z)-1)\d z=\ell_{\text{trunc}}^{\sigma}(\vect{v^{*}};\vect x,y).
\]
\\
And again, we see that since $\vect x\cdot\vect{v^{*}}\in[-\zalphazero,\zalphazero]$
we have 
\[
\abs{\int_{0}^{\vect{v^{*}}\cdot\vect x}(\sigma(z)-y)\d z}\leq\int_{0}^{\vect{v^{*}}\cdot\vect x}\abs{\sigma(z)-y})\d z\leq2\alpha.
\]
\end{itemize}
\end{proof}
Now, since $\abs{\ell^{\sigma}(\vect{v^{*}};\vect x,y)}\leq2\zalphazero$,
we can use the Hoeffding bound to conclude that w.p. 0.99 over the
choice of our sample set $S$ satisfies 

\[
\E_{(\vect x,y)\sim S}\left[\ell^{\sigma}(\vect{v^{*}};\vect x,y)\right]=\E_{(\vect x,y)\sim D}\left[\ell^{\sigma}(\vect{v^{*}};\vect x,y)\right]\pm O\left(\frac{\zalphazero}{\sqrt{|S|}}\right),
\]
Since $\vect v_{0}$ minimizes the empirical surrogate loss, among all vectors
whose norm is at most $w_{\max}$, and $\norm{\vect v^{*}}_2\leq w_{\max}$
we also have 
\begin{multline}
\E_{(\vect x,y)\sim S}\left[\ell^{\sigma}(\vect v_{0};\vect x,y)\right]\leq\E_{(\vect x,y)\sim S}\left[\ell^{\sigma}(\vect{v^{*}};\vect x,y)\right]=\E_{(\vect x,y)\sim D}\left[\ell^{\sigma}(\vect{v^{*}};\vect x,y)\right]\pm O\left(\frac{\zalphazero}{\sqrt{|S|}}\right)=.\\
\E_{(\vect x,y)\sim D}\left[\ell^{\sigma}(\vect{v^{*}};\vect x,y)\right]\pm\frac{\eps}{4},\label{eq: the minimizing vector has good empirical surrogate loss}
\end{multline}
where in the last step we substituted $|S|=C\frac{\Delta\zalphazero^{2}}{\eps^{2}}\ln\frac{\Delta\zalphazero}{\eps}+C$
and took $C$ to be a sufficiently large absolute constant.

We make the following further observations regarding the truncated
loss $\ell_{\text{trunc}}^{\sigma}(\vect v;\vect x,y)$:
\begin{itemize}
\item For every $\vect v,\vect x$ and $y$ we have 
\[
\abs{\ell_{\text{trunc}}^{\sigma}(\vect v;\vect x,y)}\leq\int_{0}^{\text{trunc}(\vect v\cdot\vect x)}\abs{\sigma(z)-y}\d z\leq2\zalphazero.
\]
\item For every $\vect v,\vect x$ and $y$ we have 
\begin{equation}
\ell_{\text{trunc}}^{\sigma}(\vect v;\vect x,y)\leq\ell^{\sigma}(\vect v;\vect x,y)\label{eq: truncated surrate loss at most surrogate loss}
\end{equation}
because
\begin{itemize}
\item If $\vect v\cdot\vect x\in[-\zalphazero,\zalphazero]$, we have $\text{trunc}(\vect v\cdot\vect x)=\vect v\cdot\vect x$
and we have $\ell_{\text{trunc}}^{\sigma}(\vect v;\vect x,y)=\ell^{\sigma}(\vect v;\vect x,y)$.
\item If $\vect v\cdot\vect x>\zalphazero$ we have 
\[
\ell^{\sigma}(\vect v;\vect x,y)=\int_{0}^{\text{trunc}(\vect v\cdot\vect x)}(\sigma(z)-y)\d z+\int_{\text{trunc}(\vect v\cdot\vect x)}^{\vect v\cdot\vect x}(1-y)\d z\geq\int_{0}^{\text{trunc}(\vect v\cdot\vect x)}(\sigma(z)-y)\d z
\]
\item If $\vect v\cdot\vect x<-\zalphazero$ we have 
\[
\ell^{\sigma}(\vect v;\vect x,y)=\int_{0}^{\text{trunc}(\vect v\cdot\vect x)}(\sigma(z)-y)\d z+\int_{\text{trunc}(\vect v\cdot\vect x)}^{\vect v\cdot\vect x}(-1-y)\d z\geq\int_{0}^{\text{trunc}(\vect v\cdot\vect x)}(\sigma(z)-y)\d z
\]
\end{itemize}
\end{itemize}
We also need the following claim:
\begin{claim}
For a sufficiently large value of the absolute constant $C$, with
probability at least $0.999$ over the choice of the set $S$, we
have the following uniform convergence bound for $\ell_{\text{trunc}}^{\sigma}$:
\begin{equation}
\max_{\vect v'\in\R^{d}}\abs{\E_{(\vect x,y)\sim D}\left[\ell_{\text{trunc}}^{\sigma}(\vect v';\vect x,y)\right]-\E_{(\vect x,y)\sim S}\left[\ell_{\text{trunc}}^{\sigma}(\vect v';\vect x,y)\right]}\leq\tilde{O}\left(\zalphazero\sqrt{\frac{\Delta}{|S|}}\right)\leq\frac{\eps}{4}\label{eq: uniform convergence for truncated surrogate loss}
\end{equation}
\end{claim}

\begin{proof}
We have 
\begin{align*}
\int_{0}^{\text{trunc}(\vect v'\cdot\vect x)}(\sigma(z)-y)\d z&=\int_{0}^{\zalphazero}(\sigma(z)-y)\indicator_{z\leq\text{trunc}(\vect v'\cdot\vect x)}\d z-\int_{-\zalphazero}^{0}(\sigma(z)-y)\indicator_{z>\text{trunc}(\vect v'\cdot\vect x)}\d z\\
&=\int_{0}^{\zalphazero}(\sigma(z)-y)\indicator_{z\leq\vect v'\cdot\vect x}\d z-\int_{-\zalphazero}^{0}(\sigma(z)-y)\indicator_{z>\vect v'\cdot\vect x}\d z
\end{align*}
which implies
\begin{align}
&\left|\E_{(\vect x,y)\sim D}\left[\int_{0}^{\text{trunc}(\vect v'\cdot\vect x)}(\sigma(z)-y)\d z\right]-\E_{(\vect x,y)\sim S}\left[\int_{0}^{\text{trunc}(\vect v'\cdot\vect x)}(\sigma(z)-y)\d z\right]\right|\notag\\
\leq &\zalphazero\max_{\vect v'\in\R^{d},z\in\R}\left[\abs{\E_{(\vect x,y)\sim D}\left[(\sigma(z)-y)\indicator_{z\leq\vect v'\cdot\vect x}\right]-\E_{(\vect x,y)\sim S}\left[(\sigma(z)-y)\indicator_{z\leq\vect v'\cdot\vect x}\right]}\right]\notag\\
&+\zalphazero\max_{\vect v'\in\R^{d},z\in\R}\left[\abs{\E_{(\vect x,y)\sim D}\left[(\sigma(z)-y)\indicator_{z>\vect v'\cdot\vect x}\right]-\E_{(\vect x,y)\sim S}\left[(\sigma(z)-y)\indicator_{z>\vect v'\cdot\vect x}\right]}\right].\label{eq: triangle inequality for truncated loss}
\end{align}
Standard VC-dimension bounds for linear classifiers tell us that for
a sufficiently large absolute constant $C$, if $\abs S\geq C\frac{\Delta\zalphazero^{2}}{\eps^{2}}\ln\frac{\Delta\zalphazero}{\eps}+C$
then with probability 0.999 for all $\vect v'$ and $z$ we have
\begin{enumerate}
    \item $\abs{\E_{(\vect x,y)\sim D}\left[\indicator_{z\leq\vect v'\cdot\vect x}\right]-\E_{(\vect x,y)\sim S}\left[\indicator_{z\leq\vect v'\cdot\vect x}\right]}  \leq\frac{\eps}{16\zalphazero}$,
    \item $\abs{\E_{(\vect x,y)\sim D}\left[\indicator_{z>\vect v'\cdot\vect x}\right]-\E_{(\vect x,y)\sim S}\left[\indicator_{z>\vect v'\cdot\vect x}\right]}  \leq\frac{\eps}{16\zalphazero}$,
    \item $\abs{\E_{(\vect x,y)\sim D}\left[y\cdot\indicator_{z\leq\vect v'\cdot\vect x}\right]-\E_{(\vect x,y)\sim S}\left[y\cdot\indicator_{z\leq\vect v'\cdot\vect x}\right]}  \leq\frac{\eps}{16\zalphazero}$, and
    \item $\abs{\E_{(\vect x,y)\sim D}\left[y\cdot\indicator_{z>\vect v'\cdot\vect x}\right]-\E_{(\vect x,y)\sim S}\left[y\cdot\indicator_{z>\vect v'\cdot\vect x}\right]}  \leq\frac{\eps}{16\zalphazero}$.
\end{enumerate}
Therefore, combining the bounds above with Equation \ref{eq: triangle inequality for truncated loss},
using a triangle inequality and recalling $\abs{\sigma(z)}\leq1$
for every $z$, we obtain Equation \ref{eq: uniform convergence for truncated surrogate loss}
.
\end{proof}
Thus, we have

\begin{align}
\E_{y\sim D_{y\vert \x}}&\left[\ell_{\text{trunc}}^{\sigma}(\vect v;\vect x,y)\right]-\E_{y\sim D_{y\vert \x}}\left[\ell_{\text{trunc}}^{\sigma}(\vect{v^{*}};\vect x,y)\right]=\E_{y\sim D_{y\vert \x}}\left[\int_{\text{trunc}(\vect{v^{*}}\cdot\vect x)}^{\text{trunc}(\vect v\cdot\vect x)}(\sigma(z)-y)\d z\right]\notag\\
&=\int_{\text{trunc}(\vect{v^{*}}\cdot\vect x)}^{\text{trunc}(\vect v\cdot\vect x)}(\sigma(z)-\sigma(\vect{v^{*}}\cdot\vect x))\d z=\int_{\text{trunc}(\vect{v^{*}}\cdot\vect x)}^{\text{trunc}(\vect v\cdot\vect x)}(\sigma(z)-\sigma(\text{trunc}(\vect{v^{*}}\cdot\vect x)))\d z\notag\\
&\geq \frac{1}{2}\left(\sigma(\text{trunc}(\vect v\cdot\vect x))-\sigma(\text{trunc}(\vect{v^{*}}\cdot\vect x)))\right)^{2}=\frac{1}{2}\left(\sigma(\vect v\cdot\vect x)-\sigma(\vect{v^{*}}\cdot\vect x))\right)^{2}.\label{eq: kanade notes step}
\end{align}
 The first equality is due to linearity of expectation and the definition
of $\ell_{\text{trunc}}^{\sigma}$. The second equality is because $D_{x}$ is a GLM with vector $\vect{v^{*}}$
and activation $\sigma$. The third equality is due to Equation \ref{eq: sigma compatible with truncation}. The inequality $\int_{\text{trunc}(\vect{v^{*}}\cdot\vect x)}^{\text{trunc}(\vect v\cdot\vect x)}(\sigma(z)-\sigma(\text{trunc}(\vect{v^{*}}\cdot\vect x)))\d z\geq\frac{1}{2}\left(\sigma(\text{trunc}(\vect v\cdot\vect x))-\sigma(\text{trunc}(\vect{v^{*}}\cdot\vect x)))\right)^{2}$
is due to $\sigma$ being montonically increasing and $1$-Lipschitz
(same as in page 5 of \cite{kanade2018note}). The last equality is due to Equation \ref{eq: sigma compatible with truncation}.
We therefore have for any vector $v\in\R^{d}$
\begin{align*}
\frac{1}{2}\E_{(\vect x,y)\sim D}\left[\left(\sigma(\vect v\cdot\vect x)-\sigma(\vect{v^{*}}\cdot\vect x))\right)^{2}\right]&\leq\E_{(\vect x,y)\sim D}\left[\ell_{\text{trunc}}^{\sigma}(\vect v;\vect x,y)-\ell_{\text{trunc}}^{\sigma}(\vect{v^{*}};\vect x,y)\right]\\
&=\E_{(\vect x,y)\sim D}\left[\ell_{\text{trunc}}^{\sigma}(\vect v;\vect x,y)-\ell^{\sigma}(\vect{v^{*}};\vect x,y)\right]\\
&\leq\E_{(\vect x,y)\sim S}\left[\ell_{\text{trunc}}^{\sigma}(\vect v;\vect x,y)\right]-\E_{(\vect x,y)\sim D}\left[\ell^{\sigma}(\vect{v^{*}};\vect x,y)\right]+\eps/4\\
&\leq\E_{(\vect x,y)\sim S}\left[\ell^{\sigma}(\vect v;\vect x,y)\right]-\E_{(\vect x,y)\sim D}\left[\ell^{\sigma}(\vect{v^{*}};\vect x,y)\right]+\eps/4.
\end{align*}
The first step uses Equation \ref{eq: kanade notes step}. The second step uses Claim \ref{claim: truncation does nothing for ground truth surrogate loss}. The third step uses Equation \ref{eq: uniform convergence for truncated surrogate loss}. The fourth step uses Equation \ref{eq: truncated surrate loss at most surrogate loss}.
Finally, combining with Equation \ref{eq: the minimizing vector has good empirical surrogate loss},
we get the desired Equation \ref{eq: goal of thm for realizable truncated sigma}.
\end{proof}

\subsection{Learning Sigmoidal Activations}\label{appendix:sigmoid}
We are now ready to prove \Cref{theorem:sigmoid-main} on the learnability of sigmoidal activations.
\begin{proof}[Proof of \Cref{theorem:sigmoid-main}]
    The proof proceeds by reducing to the task of learning from GLMs with clipped activation. Let $\rho=C(\log \frac{\Delta}{\epsilon})$ be some parameter for a sufficiently large constant $C$. Let $D_X$ be the marginal distribution of $D$. We define the "clipped" activation $\sigmoidclipped$ as 
    \begin{equation}
    \sigmoidclipped(z)
    =
    \begin{cases}
        \sigmoid(z) & \text{if } \abs{z}\leq \clippingthreshold -1\\
        \sigmoid(\rho-1)+(1-\sigmoid(\rho-1))\cdot (z-(\rho-1)) & \text{if } z\in (\rho-1,\rho]\\
        \sigmoid(-(\rho-1))-(1+\sigmoid(-(\rho-1)))\cdot (-z-(\rho-1)) & \text{if } z\in [-\rho,-\rho+1)\\
        \sign(z) & \text{if } \abs{z} > \clippingthreshold 
    \end{cases}
\end{equation}
    Observe that for all $t\in \R$
    \begin{equation}
    \label{eqn:sigmoid_approx_clipped}
        |\sigmoid(t)-\sigmoidclipped(t)|\leq |1-(1-e^{-\rho})/(1+e^{-\rho})|\leq e^{-\rho}/2.
    \end{equation}
    Let $D_{\text{clipped}}$ be a GLM on $\R^{\Delta+1}\times \{\pm 1\}$ defined as follows:
    \begin{itemize}
        \item The marginal of $D_{\text{clipped}}$ is sampled as follows: Sample $\vect{x}$ from the marginal of $D$ and output $(\vect{x},1)$.
        \item The activation function is $\sigmoidclipped$.
    \end{itemize}
    
\begin{claim}
\label{claim: running sigmoid on truncated}
    Let $\widehat{\vect{v}}=(\widehat{\vect{w}},\hattau)$ be obtained by running $\AGLM^{\sigmoidclipped}(4\eps,w_{\max},D_{\text{clipped}})$. Then, with probability at least $0.99$, $(\widehat{\vect{w}},\hattau)$ satisfies the equation
     \begin{equation}
     \label{eqn:sigmoid_goal}
        \E_{(\vect{x},y)\sim D}\Bigr[\bigr(\sigmoid(\vect w^* \cdot \x + \tau^*) - \sigmoid(\widehat{\vect w} \cdot \x + \hattau)\bigr)^2\Bigr] \le C'\eps\,.
    \end{equation} for universal constant $C'$.
\end{claim}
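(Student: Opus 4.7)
The plan is to invoke \Cref{thm:realize_truncated_glm} with activation $\sigmoidclipped$ and truncation parameter $\zalphazero=\rho$, and then transfer the resulting squared-loss guarantee from $\sigmoidclipped$ back to the true activation $\sigmoid$ using the uniform approximation bound \eqref{eqn:sigmoid_approx_clipped}.

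First I would check that $\sigmoidclipped$ satisfies the three hypotheses of \Cref{thm:realize_truncated_glm}. Monotonicity is immediate from the definition, since $\sigmoid$ is monotone, the two affine interpolation pieces have non-negative slopes, and $\sign$ is monotone. The 1-Lipschitz property follows by cases: on $[-(\rho-1),\rho-1]$ the derivative of $\sigmoid$ is $(1-\sigmoid^{2})/2\le 1/2$; on the interpolation intervals $(\rho-1,\rho]$ and $[-\rho,-(\rho-1))$ the function is affine with slopes $1-\sigmoid(\rho-1)$ and $1+\sigmoid(-(\rho-1))$, both lying in $[0,1]$ because $\sigmoid$ takes values in $[-1,1]$; and outside $[-\rho,\rho]$ the function equals $\sign(\cdot)$, which has zero derivative. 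Finally, $\sigmoidclipped(z)=\sign(z)$ for $\abs{z}\ge\rho$ holds by construction. Moreover, $D_{\mathrm{clipped}}$ is a realizable GLM with activation $\sigmoidclipped$, ground-truth vector $\vect w^{*}_{\mathrm{full}}:=(\vect w^{*},\tau^{*})$ of norm at most $\norm{\vect w^{*}}_{2}+\abs{\tau^{*}}\le w_{\max}$, and marginal supported on $\{\vect z\in\R^{\Delta+1}:\norm{\vect z}_{2}\le\sqrt{\Delta+1}\}$, since appending a coordinate equal to $1$ inflates $\norm{\vect x}_{2}^{2}$ by at most $1$.

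Applying \Cref{thm:realize_truncated_glm} with accuracy parameter $4\eps$, with probability at least $0.99$ the output $\widehat{\vect v}=(\widehat{\vect w},\hattau)$ then satisfies
\[
\E_{\vect x\sim D_{X}}\Bigr[\bigr(\sigmoidclipped(\vect w^{*}\cdot\vect x+\tau^{*})-\sigmoidclipped(\widehat{\vect w}\cdot\vect x+\hattau)\bigr)^{2}\Bigr]\le 4\eps.
\]
To pass from $\sigmoidclipped$ to $\sigmoid$, I would use \eqref{eqn:sigmoid_approx_clipped} twice via the triangle inequality to obtain, for all $s,t\in\R$, the pointwise bound
\[
\abs{\sigmoid(t)-\sigmoid(s)}\le\abs{\sigmoidclipped(t)-\sigmoidclipped(s)}+e^{-\rho},
\]
so that $(\sigmoid(t)-\sigmoid(s))^{2}\le 2(\sigmoidclipped(t)-\sigmoidclipped(s))^{2}+2e^{-2\rho}$ by the inequality $(a+b)^{2}\le 2a^{2}+2b^{2}$. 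Instantiating $t=\vect w^{*}\cdot\vect x+\tau^{*}$ and $s=\widehat{\vect w}\cdot\vect x+\hattau$, taking expectations, and chaining with the previous display yields a bound of $8\eps+2e^{-2\rho}$; the choice $\rho=C\log(\Delta/\eps)$ with $C$ sufficiently large makes $2e^{-2\rho}\le\eps$, giving \eqref{eqn:sigmoid_goal} with $C'=9$.

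No genuinely hard step arises in this claim: the main care is in verifying the 1-Lipschitz property across all three pieces of $\sigmoidclipped$ (which is precisely why a buffer interpolation region of width $1$ is inserted, rather than clamping $\sigmoid$ directly to $\sign$ and risking a discontinuity), and in choosing $\rho$ large enough that $e^{-2\rho}=O(\eps)$ while keeping it small enough that the runtime $\poly(\zalphazero\Delta/\eps\cdot\log w_{\max})$ of $\AGLM^{\sigmoidclipped}$ remains $\poly(\Delta,1/\eps,\log w_{\max})$; both are handled by $\rho=\Theta(\log(\Delta/\eps))$.
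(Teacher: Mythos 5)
Your proof is correct and follows the same approach as the paper: verify that $\sigmoidclipped$ meets the three hypotheses of \Cref{thm:realize_truncated_glm}, apply that theorem to $D_{\text{clipped}}$ with the ground-truth vector $(\vect w^*,\tau^*)$, and transfer the squared-loss guarantee back to $\sigmoid$ via the uniform bound \eqref{eqn:sigmoid_approx_clipped}. If anything your argument is more explicit than the paper's in the transfer step (where you spell out the $(a+b)^2\le 2a^2+2b^2$ bookkeeping) and in noting that the appended coordinate changes the dimension to $\Delta+1$ and the norm bound to $\sqrt{\Delta+1}$.
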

\begin{proof}
We see that the choice of $\sigma=\sigmoidclipped$ satisfies the requirements on $\sigma$ in Theorem \ref{thm:realize_truncated_glm}. Indeed, it holds that : (1) $\sigmoidclipped$ is monotone non-decreasing, (2) the function $\sigmoidclipped$ is $1$-Lipschitz, and (3) by construction, we have $\sigmoidclipped(z)=\sign(z)$ whenever $z$ is not in $[-\rho, \rho]$. Thus, we have that 
\[
\E_{\vect{x}\sim D_{X}}\Bigr[\bigr(\sigmoidclipped(\vect w^* \cdot \x + \tau^*) - \sigmoidclipped(\widehat{\vect w} \cdot \x + \hattau)\bigr)^2\Bigr].
\]
Recall from \Cref{eqn:sigmoid_approx_clipped} that $\max_{t\in \R}|\sigmoid(t)-\sigmoidclipped(t)|\leq e^{-\rho/2}$. Thus, we have 
\[
\E_{\vect{x}\sim D_{X}}\Bigr[\bigr(\sigmoid(\vect w^* \cdot \x + \tau^*) - \sigmoid(\widehat{\vect w} \cdot \x + \hattau)\bigr)^2\Bigr]\leq 2\epsilon+2e^{-\rho}\leq C'\epsilon
\]
for $\rho=C\log (\Delta/\epsilon)$.
\end{proof}
Let $\tilde{D}$ be the distribution formed from $D$ by concatenating one to the marginal. A sample from $\tilde{D}$ is generated as: sample $(\vect{x},y)$ from $D$ and output $((\vect{x},1),y)$. We compare the distributions $\tilde{D}$ and $D_{\text{clipped}}$. First, for every specific $z$, the TV distance between a pair of $\{\pm 1\}$-valued random variables with expectations $\sigmoidclipped(z)$ and $\sigmoid(z)$ respectively is bounded by $\abs{\sigmoid(z)-\sigmoidclipped(z)}/2$, which is at most $e^{-\rho}$ by Equation \ref{eqn:sigmoid_approx_clipped}. Comparing this with the definitions of $\tilde{D}$ and $D_{\text{clipped}}$ we see that 
\[
\mathrm{dist}_{\mathrm{TV}}
(\tilde{D},D_{\text{clipped}})
\leq \frac{1}{2} \max_{z
    \in \R}
    \abs{\sigmoidclipped(z)-\sigmoid(z)} \leq e^{-\rho}.
\]
Theorem \ref{thm:realize_truncated_glm} tells us that the algorithm  $\AGLM^{\sigmoidclipped}$ uses at most $O\left(\frac{\Delta\rho^{2}}{\eps^{2}}\ln\frac{\Delta\rho}{\eps}\right)$
samples from the distribution it accesses. This, together with the data-processing inequality and the bound above, lets us conclude that
\begin{multline*}
\mathrm{dist}_{\mathrm{TV}}
\left(\AGLM^{\sigmoidclipped}(\eps,w_{\max},\rho,\tilde{D}),
\AGLM^{\sigmoidclipped}(\eps,w_{\max},\rho,D_{\text{clipped}})
\right)
\leq
O\left(\frac{\Delta\rho^{2}}{\eps^{2}}\ln\frac{\Delta\rho}{\eps}\right)
\mathrm{dist}_{\mathrm{TV}}
(\tilde{D},D_{\text{clipped}})
 \\
\leq O\left(\frac{\Delta\rho^{2}}{\eps^{2}}\ln\frac{\Delta\rho}{\eps}\right)
\cdot 
e^{-\rho}.
\end{multline*}
Substituting $\clippingthreshold=\setalphazero$, we see that for sufficiently large absolute constant $C$ the expression above is at most $0.05$. Combining this with Claim \ref{claim: running sigmoid on truncated}, we see that the vector $(\widehat{\vect{w}},\hattau)$ given by $\AGLM^{\sigmoidclipped}(\eps,\rho,w_{\max},\tilde{D})$ with probability at least $0.95$ satisfies \Cref{eqn:sigmoid_goal}. To complete the proof, we observe that samples from $\tilde{D}$ can be easy generated given samples from $D$.
\end{proof}

\end{document}